\newtheorem{thm}{Theorem}[section]
\newtheorem{cor}[thm]{Corollary}
\newtheorem{lem}[thm]{Lemma}
\newtheorem{conj}[thm]{Conjecture}
\newtheorem{defn}[thm]{Definition}
\newtheorem{ex}[thm]{Example}
\newtheorem{rem}[thm]{Remark}
\numberwithin{equation}{section}
\newcommand{\K}{\mathbb K}
\newcommand{\R}{\mathbb R}
\newcommand{\C}{\mathbb C}
\newcommand{\Z}{\mathbb Z}
\newcommand{\N}{\mathbb N}
\newcommand{\NN}{\mathbb{N}_0}
\newcommand{\bd}{\boldsymbol}
\newcommand{\captionC}[1]{\caption{\emph{#1}}}
\newcommand{\z}[6]{\left(
\setlength{\arraycolsep}{0.06cm}
\renewcommand{\arraystretch}{0.60}
\begin{array}{ccc}
 \ifnum#1=1 {\cdot} \else {\,}  \fi
&\ifnum#3=1 {\cdot} \else {\,}  \fi
&\ifnum#5=1 {\cdot} \else {\,}  \fi  \\
 \ifnum#2=1 {\cdot} \else {\,}  \fi
&\ifnum#4=1 {\cdot} \else {\,}  \fi
&\ifnum#6=1 {\cdot} \else {\,}  \fi \\
\vspace{-0.35cm}
\end{array}\right)_Q}
\newcommand{\zX}[6]{\left(
\setlength{\arraycolsep}{0.06cm}
\renewcommand{\arraystretch}{0.60}
\begin{array}{ccc}
 \ifnum#1=1 {\cdot} \else {\,}  \fi
&\ifnum#3=1 {\cdot} \else {\,}  \fi
&\ifnum#5=1 {\cdot} \else {\,}  \fi  \\
 \ifnum#2=1 {\cdot} \else {\,}  \fi
&\ifnum#4=1 {\cdot} \else {\,}  \fi
&\ifnum#6=1 {\cdot} \else {\,}  \fi \\
\vspace{-0.35cm}
\end{array}\right)_X}
\newcommand{\zz}[8]{\left(
\setlength{\arraycolsep}{0.06cm}
\renewcommand{\arraystretch}{0.60}
\begin{array}{cccc}
 \ifnum#1=1 {\cdot} \else {\,}  \fi
&\ifnum#3=1 {\cdot} \else {\,}  \fi
&\ifnum#5=1 {\cdot} \else {\,}  \fi
&\ifnum#7=1 {\cdot} \else {\,}  \fi  \\
 \ifnum#2=1 {\cdot} \else {\,}  \fi
&\ifnum#4=1 {\cdot} \else {\,}  \fi
&\ifnum#6=1 {\cdot} \else {\,}  \fi
&\ifnum#8=1 {\cdot} \else {\,}  \fi \\
\vspace{-0.35cm}
\end{array}\right)_Q}
\DeclareMathOperator{\arcsinh}{arcsinh}
\title{\vspace{-3cm}\begin{center}
\normalsize{DISS.~ETH No.~21748}
\end{center}
\vspace{1.5cm}
\textbf{Quantum Marginal Problem}\\
\textbf{and its Physical Relevance}}
\date{\,}
\author{\small{dissertation}\vspace{0.2cm}\\
\small{ETH Zurich}\\
\\
\\
\small{\,}\vspace{0.2cm}\\
\small{\textsc{\,}}\\
\\
\small{\,}\\
\\
\textbf{Christian Schilling}\\
\\
\small{\,}\\
\\
\\
\small{\,}\\
\small{\,}\\
\\
\small{}\\
\\
\\
\\
\\
\small{accepted on the recommendation of} \vspace{0.2cm}\\
\small{Prof.~Dr.~Matthias Christandl, examiner} \vspace{0.2cm}\\
\small{Prof.~Dr.~Dieter Jaksch, co-examiner} \vspace{0.2cm}\\
\small{Prof.~Dr.~Manfred Sigrist, co-examiner}\\
\\
\\
\\
\\
\\
2014
}
\begin{document}
\maketitle
\newgeometry{left=39mm, right=39mm, top=3.8cm, bottom=4.1cm}

\renewcommand{\abstractname}{}%
\begin{abstract}

\end{abstract}

\renewcommand{\abstractname}{Abstract}%
\begin{abstract}
The Pauli exclusion principle is a constraint on occupation numbers of fermionic quantum systems. It can be identified as a consequence of a much deeper
mathematical condition, the antisymmetry of the $N$-fermion wavefunction under particle exchange. Just recently, it was shown by Klyachko that this antisymmetry leads to further restrictions on natural occupation numbers. These so-called generalized Pauli constraints significantly strengthen Pauli's exclusion principle. Our first goal is to develop an understanding of the mathematical concepts behind Klyachko's work, in the context of quantum marginal problems. Afterwards, we explore the physical relevance of the generalized Pauli constraints and study concrete physical systems from that new viewpoint.

In the first part we review Klyachko's solution of the univariant quantum marginal problem. In particular we break his abstract derivation based on algebraic topology down to a more elementary level and reveal the geometrical picture behind it.

The second part explores the possible physical relevance of the generalized Pauli constraints.
We review the effect of pinning suggested by Klyachko. There one observes natural occupation numbers, which are pinned by the generalized Pauli constraints to the boundary of the allowed region. Although this effect would be quite spectacular and could imply strong restrictions for the corresponding system, we argue that pinning is unnatural. Instead, we conjecture the effect of quasi-pinning, defined by occupation numbers close to the boundary but not exactly on it. Furthermore, we find strong evidence that quasi-pinning as an effect in the $1$-particle picture corresponds to very specific and simplified structures of the corresponding $N$-fermion quantum state. In that sense quasi-pinning is highly physically relevant. After all, we develop the concept of a truncated pinning analysis, which allows to systematically investigate and quantify quasi-pinning.

In the third part we study concrete fermionic quantum systems from the new viewpoint of generalized Pauli constraints.
We compute the natural occupation numbers for the ground state of a family of interacting fermions in a harmonic potential.
Intriguingly, we find that the occupation numbers are strongly quasi-pinned, even up to medium interaction strengths. We identify this as an effect of the lowest few energy eigenstates, which provides first insights into the mechanism behind quasi-pinning. As a second model we analyze the Hubbard model with three electrons on three lattice sites and investigate the relation of symmetries and pinning. We find exact ground state pinning, which only seems possible whenever the physical model is very elementary and exhibits sufficiently many symmetries.
\end{abstract}

\renewcommand{\abstractname}{Zusammenfassung}%

\begin{abstract}
Das Pauli Ausschlussprinzip ist eine Bedingung an die Besetzungszahlen fermionischer Quantensysteme. Es kann als Konsequenz einer deutlich tieferen mathematischen Bedingung, der Antisymmetrie der $N$-Teilchen Wellenfunktion unter Teilchenaustausch, identifiziert werden. Erst k\"urzlich wurde von Klyachko gezeigt, dass diese Antisymmetrie noch zu weiteren Einschr\"ankungen der fermionischen Besetzungszahlen f\"uhrt. Diese sogenannten verallgemeinerten Pauli Bedingungen verst\"arken Paulis Ausschlussprinzip wesentlich. Unser erstes Ziel ist es ein Verst\"andniss f\"ur die mathematischen Konzepte Klyachkos Arbeit zu entwickeln, im Kontext des Quantenmarginalproblems. Danach untersuchen wir die physikalische Relevanz der verallgemeinerten Pauli Bedingungen und studieren konkrete physikalische Systeme von diesem neuen Gesichtspunkt.

Im ersten Teil reviewen wir Klyachkos L\"osung des univarianten Quantenmarginalproblems. Insbesondere brechen wir seine abstrakte Herleitung, basie-\\rend auf algebraischer Topologie, auf ein elementareres Level herunter und decken das geometrische Bild dahinter auf.

Der zweite Teil untersucht die m\"ogliche physikalische Relevanz der verallgemeinerten Pauli Bedingungen. Wir geben den Pinning Effek, vorgeschlagen von Klyachko, wieder. Dort beobachtet man nat\"urliche Besetzungszahlen, welche durch die verallgemeinerten Pauli Bedingungen auf den Rand der erlaubten Region gepinnt sind. Obwohl dieser Effekt sehr spektakul\"ar w\"are und starke Einschr\"ankungen f\"ur das entsprechende System implizieren k\"onnte argumentieren wir, dass Pinning unnat\"urlich ist. Stattdessen schlagen wir den Effekt des Quasi-Pinnings vor, definiert durch Besetzungszahlen nahe, aber eben nicht exakt auf dem Rand. Desweiteren finden wir starke Evidenz, dass Quasi-Pinning als Effekt im $1$-Teilchenbild sehr spezifischen und vereinfachten Strukturen des entsprechenden $N$-Fermionen Quantenzustand entspricht. In diesem Sinne ist Quasi-Pinning physikalisch h\"ochst relevant. Schlussendlich entwickeln wir noch das Konzept der trunkierten Pinning Analyse, welches erlaubt Quasi-Pinning systematisch zu untersuchen und zu quantifizieren.

Im dritten Teil betrachten wir konkrete fermionische Quantensysteme von dem neuen Gesichtspunkt der verallgemeinerten Pauli Bedingungen. Wir berechnen die nat\"urlichen Besetzungszahlen f\"ur den Grundzustand einer Famillie wehselwirkender Fermionen in einem harmonischen Potential. Interessanterweise finden wir, dass die Besetzungszahlen starkt quasi-gepinnt sind, sogar im Regime mittlerer Wechselwirkung. Wir identifizieren dies als Effekt der tiefsten Energiezust\"ande. Dies bietet erste Einsicht in den Mechanismus von Quasi-Pinning. Als zweites Modell analysieren wir das Hubbard-Modell mit drei Elektronen auf drei Gitterpl\"atzen und untersuchen die Relation von Symmetrien und Pinning. Wir finden exaktes Pinning f\"ur den Grundzustand, welches jedoch nur m\"oglich ist, wenn das physikalische Modell sehr elementar ist und gen\"ugend Symmetrien besitzt.
\end{abstract}

\renewcommand{\abstractname}{Acknowledgment}%

\begin{abstract}
In the first place I would like to thank very much my PhD supervisor, \emph{Prof.~Matthias Christandl}, for the guidance and all the scientific support I obtained during the last four years. Besides the scientific expertise I also benefited a lot from his interactive and professional research personality. I learned to efficiently present and publish scientific results and enjoyed very much collaborating with him and his co-workers.
I am also very grateful to him for having introduced to me the quantum marginal problem. By starting to work on the mathematical concepts behind that problem he gave me the great opportunity to work on the forehand of a new research field and to develop my own viewpoint on the physical relevance of the generalized Pauli constraints. I cannot imagine any more exciting research project.

My thanks also go to \emph{Prof.~David Gross} for all his scientific support and for having hosted me several times at his new place in Freiburg. I enjoyed very much the discussions with him and appreciated in particular his excellent intuitive understanding of involved mathematical concepts. The collaboration with him has always been very stimulating.

My long-time collaborators, \emph{Alex Lopes} and \emph{Dr.~James Whitfield}, I would like to express my gratitude for all their effort and stimulating ideas. I also would like to thank \emph{Dr.~James Whitfield} for the detailed remarks concerning my PhD thesis.

I also would like to thank \emph{Prof.~Dieter Jaksch} from the University of Oxford and \emph{Prof.~Manfred Sigrist} from our department for their willingness being co-referees of my PhD thesis.

Moreover, I would like to thank again my former teachers, \emph{Prof.~Volker Bach}, \emph{Prof.~Peter van Dongen} and \emph{Prof.~Martin Reuter} during my time at the Johannes-Gutenberg University of Mainz, as well as, \emph{Prof.~J\"urg Fr\"ohlich} at ETH Zurich for their great lectures, all the scientific support and the fruitful discussions.

In the same way I am very indebted to my father for all the inspiring scientific discussions and his advices concerning my academic career.

Finally, I would like to thank very much my family. Without the mental support of my parents a challenging project as a PhD thesis would not have been possible.
\end{abstract}
\newgeometry{left=30mm,right=37mm, top=3.8cm, bottom=4.1cm}
\tableofcontents
\chapter{Introduction and Background}\label{chap:Intro}
In the formulation of nonrelativistic quantum mechanics by Schr\"odinger the state of an $N$-particle system is described by a wave function $\Psi(\vec{x}_1,\ldots,\vec{x}_N)$ depending on the positions $\{\vec{x}_i\}$ of the particles in the $N$-particle configuration space.
Its time evolution is described by the Schr\"odinger equation,
\begin{equation}
i\hbar\partial_t \Psi_t = \hat{H}\Psi_t\,,
\end{equation}
where $\hat{H}$ is the Hamiltonian for that system. To understand the time evolution for any quantum state it suffices to determine the stationary states, the solutions of the time-independent Schr\"odinger equation
\begin{equation}
\hat{H}\Psi = E \Psi\,,
\end{equation}
which also provides the eigenenergies $E$. A prominent role is played by the ground state, the state with lowest energy. For instance, for macroscopic systems its properties have an influence on the excitations and therefore also on the behavior of that system. Consequently, much effort is put into the calculation of the ground state wave function. If the particles do not interact with each other, the ground state is just given by a product state of $N$ one-particle wave functions, i.e.~every particle is ``sitting'' in one state and the total system is described by a single configuration. Turning on an interaction, the ground state changes and cannot be described by a single configuration anymore. A well-known approximation scheme for calculating these more involved states is to use diagrammatic techniques, or as frequently be done in quantum chemistry, to use numerical methods.

For the case of indistinguishable particles, a quite different approach has been suggested by Hohenberg and Kohn \cite{Hohenberg}. Instead of solving the $N$-particle Schr\"odinger equation, they have proven that there exists a functional $E[n_1(\vec{x})]$ depending on the $1$-particle density $n_1(\vec{x})$ such that its minimizer $n_1^{(0)}(\vec{x})$ yields the ground state energy $E^{(0)}= E[n_1^{(0)}(\vec{x})]$ with corresponding one-particle density $n_1^{(0)}(\vec{x})$.

$n_1(\vec{x})$, as well as the $p$-particle density $n_p(\vec{x}_1,\ldots,\vec{x}_p)$ are the diagonal entries (in spatial representation) of the (kernel of the) so-called $p$-particle reduced density operator $\rho_p(\vec{x}_1,\ldots,\vec{x}_p,\vec{y}_1,\ldots,\vec{y}_p)$, which is obtained by integrating the $N$-particle density operator over the configuration space of $N-p$ particles,
\begin{eqnarray}
\rho_p(\vec{x}_1,\ldots,\vec{x}_p,\vec{y}_1,\ldots,\vec{y}_p) &\equiv& \int \mathrm{d}z_{p+1}\ldots \mathrm{d}z_{N}\,\Psi(\vec{x}_1,\ldots,\vec{x}_p,\vec{z}_{p+1},\ldots,\vec{z}_N)^\ast \nonumber \\
 &&\qquad \times \Psi(\vec{y}_1,\ldots,\vec{y}_p,\vec{z}_{p+1},\ldots,\vec{z}_N)\,.
\end{eqnarray}

In case that only one- and two-particle interactions exist, the energy expectation value in state $\Psi$ depends on $\rho_1$ and $\rho_2$, only. This is the reason why reduced density operators have been studied intensively \cite{Col2, Dav}. A couple of years ago they gained even more interest due to the development of the density matrix renormalization group theory \cite{Schoel}.


In 1925 the study of atomic transitions led to Pauli's famous exclusion principle \cite{Pauli1925}. It states that two identical fermions can never occupy the same quantum state at the same time. Mathematically, by introducing $n_i$ as the expectation value of the occupation number operator of a state labeled by $i$ we can formulate this principle as a linear inequality,
\begin{equation}\label{PauliConstraintIntro1}
0\leq n_i \leq1\,.
\end{equation}
The exclusion principle is particularly relevant for a concrete quantum system whenever one observes occupation numbers $n_i$ exactly or at least very close to the maximum $1$ or minimum $0$ of the allowed interval $[0,1]$. This incidence of occupation numbers pinned by the exclusion principle constraint to $1$ or $0$ appears for all systems of non-interacting fermions. As a consequence their theoretical description is quite simple. It suffices to consider the elementary $1$-particle picture and the energy eigenstates can be obtained by distributing the $N$ fermions to the available $1$-particle energy shells by respecting the exclusion principle. Also for interacting fermions one often observes occupation numbers close to $1$ or $0$. Famous examples are on the microscopic scale the electrons in an atom, and solid materials on the macroscopic level. Also on the astronomical scale the Pauli exclusion principle is important, since it is necessary for the stability of neutron stars. In that sense it plays an important role in nature.

However, already in 1926 Pauli's exclusion principle was identified by Dirac \cite{Dirac1926} and Heisenberg \cite{Heis1926} as a consequence of a much stronger mathematical statement, the antisymmetry of the $N$-fermion wave function under particle exchange. Hence, a natural question arises: Are there further restrictions on fermionic occupation numbers, beyond Pauli's exclusion principle? Since properties of atoms, molecules and even macroscopic systems like solid state materials mainly depend on the electronic degrees of freedom, this question is of broad relevance.

First evidence for the existence of such generalized Pauli constraints was provided in 1972 by Borland and Dennis \cite{Borl1972}. For the setting of three fermions with an underlying $6$-dimensional $1$-particle Hilbert space they found that only those natural occupation numbers (NONs) $\vec{\lambda}\equiv (\lambda_1,\ldots,\lambda_6)$, the eigenvalues of the $1$-particle reduced density operator, can arise from a pure and antisymmetric $3$-fermion state $|\Psi_3\rangle$, which fulfill
\begin{eqnarray}
&&\lambda_1+\lambda_6 = \lambda_2+\lambda_5 = \lambda_3+\lambda_4 = 1 \label{d=6a} \\
&&D^{(3,6)}(\vec{\lambda}) \equiv 2-(\lambda_1 +\lambda_2+\lambda_4) \geq 0 \label{d=6b} \,.
\end{eqnarray}
Here, we always order the $\lambda_i$ decreasingly and normalize their sum to the particle number. Notice that constraint (\ref{d=6b}) indeed strengthens Pauli's exclusion principle, which states $2-(\lambda_1+\lambda_2)\geq 0$.
It is also quite remarkable that even for this very small setting of three fermions and six dimensions it is very difficult to derive these conditions. Borland and Dennis also found the generalized Pauli constraints for $3$ fermions and the case of a $7$-dimensional $1$-particle Hilbert space,
\begin{eqnarray}
D_1^{(3,7)}(\vec{\lambda})\equiv 2-(\lambda_1+\lambda_2+\lambda_5+\lambda_6) \geq 0 &&\label{d=7a} \\
D_2^{(3,7)}(\vec{\lambda})\equiv 2-(\lambda_1+\lambda_3+\lambda_4+\lambda_6) \geq 0 &&\label{d=7b} \\
D_3^{(3,7)}(\vec{\lambda})\equiv 2-(\lambda_2+\lambda_3+\lambda_4+\lambda_5) \geq 0 &&\label{d=7c} \\
D_4^{(3,7)}(\vec{\lambda})\equiv 2-(\lambda_1+\lambda_2+\lambda_4+\lambda_7) \geq 0 &&\label{d=7d}\,.
\end{eqnarray}
They also expected that by increasing the dimension $d$ the generalized Pauli constraints are relaxing to the Pauli exclusion principle, which takes here the elegant form
 \begin{equation}\label{PauliConstraintIntro}
0\leq \lambda_i \leq1\,.
\end{equation}
However, just recently, in a ground-breaking mathematical work, Klyachko \cite{Kly3} has shown that this is not true. For each setting with $N$ fermions and a $d$-dimensional $1$-particle Hilbert space there exists generalized Pauli constraints, which are strengthening Pauli's exclusion principle. In addition, he provides an algorithm which allows for each fixed $N$ and $d$ to calculate these constraints. They always take the form of affine linear inequalities
\begin{equation}
D_j^{(N,d)}(\vec{\lambda}) \equiv \kappa_0^{(j)}+\kappa_1^{(j)} \lambda_1+\ldots + \kappa_d^{(j)} \lambda_d \geq 0\,\,\,\,,\,j=1,2,\ldots,r^{(N,d)}\,.
\end{equation}
and give rise to a polytope $\mathcal{P}_{N,d} \subset \R^d$ of possible NONs $\vec{\lambda}$, a proper subset of the ``Pauli hypercube'' (\ref{PauliConstraintIntro}).

Although our main goal is to explore the physical relevance of the generalized Pauli constraints it is important to understand the mathematical concepts behind them and their derivation. The task to determine all possible NONs can be reformulated as quantum marginal problem (QMP). A QMP asks when density operators (marginals) of subsystems are compatible in the sense that they can arise from a common state of the total system. One of the most important QMPs is the $r$-body pure/ensemble-$N$-representability problem \cite{Col2}: Which $r$-particle density operators can arise from a pure/ensemble $N$-fermion state? Although it was shown that this problem is not efficiently solvable for $r>1$ \cite{Liu} this is not true for the case $r=1$. There, the problem simplifies due to a unitary equivalence and one can immediately infer that only the eigenvalues $\vec{\lambda}$ of the $1$-particle reduced density operator are relevant. Hence, the case $r=1$ is identical to the original task of finding \emph{all} restrictions of fermionic occupation numbers. In 2004 by building on recent progress in invariant and representation theory Klyachko has solved the univariant QMPs (these are the QMPs, for which only the spectra of the marginals are relevant) \cite{Kly4}. Since his solution is very abstract it is our first goal to understand it on a more elementary level.

Due to the importance of Pauli's exclusion principle we expect that the generalized Pauli constraints also play an important role. Can we use them to improve our physical understanding? Do they have an influence on the structure of fermionic quantum states and can they restrict or even explain the behavior of fermionic systems? Klyachko claimed first evidence for their relevance for ground states by suggesting the so-called \emph{pinning-effect} \cite{Kly1}. He expected that NONs $\vec{\lambda}$ of at least some fermionic ground states are pinned by the generalized Pauli constraints to the boundary of the polytope. In that case some of the generalized Pauli constraints would be active for the energy minimization in the sense that any further minimization of the energy expectation value would violate them. He also provides \cite{Kly1} first numerical evidence for pinning by analyzing variational data \cite{Nakata2001} obtained for a specific Beryllium state. However, Klyachko's analysis is inconclusive since using data with finite accuracy (as e.g.~numerical data) does never allow to verify the effect of pinning. In addition, pinning seems to be also quite unnatural from an intuitive viewpoint. Why should NONs of some \emph{interacting} many-fermion system exactly saturate some of those $1$-particle constraints? It is an important goal of this thesis to verify or falsify Klyachko's conjecture of pinning. For this we need to study \emph{analytically} concrete quantum systems of interacting fermions. However, exploring possible pinning is very challenging since one does not only need to find the $N$-fermion ground state but also should calculate and diagonalize the $1$-particle reduced density operator.

Analyzing concrete physical systems, microscopic ones like atoms and molecules, as well as macroscopic solid state materials, from the new viewpoint of generalized Pauli constraints is an interesting but so far purely academic task. What can one conclude from possible pinning as effect in the $1$-particle picture? It is e.g.~well-known that NONs $\vec{\lambda}=(1,\ldots,1,0,\ldots)$ imply that the corresponding $N$-fermion state $|\Psi_N\rangle$ can be written as one single Slater determinant, $|\Psi_N\rangle = |i_1,\ldots,i_N\rangle$ of $1$-particle states $|i_k\rangle$.
Does the information of pinning provides in a similar way structural information on $|\Psi_N\rangle$?
Yes, that is the case as indicated in \cite{Kly1}. Whenever NONs $\vec{\lambda}$ are pinned to the boundary of the polytope, $|\Psi_N\rangle$ has a very specific and simplified structure. As an example consider a three fermion state $|\Psi_3\rangle \in \wedge^3[\mathcal{H}_1^{(6)}]$, which is pinned by (\ref{d=6b}). Then the structural relation implies the form
\begin{equation}
|\Psi_3\rangle = \alpha \,|1,2,3\rangle + \beta \,|1,4,5\rangle + \gamma \,|2,4,6\rangle\,
\end{equation}
with appropriate $1$-particle states $|i\rangle$. The structure of that state is indeed much simpler then that of generic states, linear combination of $\binom{6}{3} = 20$ Slater determinants. An important question that this thesis addresses is whether this relation of pinning and structure of $|\Psi_N\rangle$ is stable. Do NONs, which are approximately saturating a generalized Pauli constraint, correspond to $N$-fermion states $|\Psi_N\rangle$ with approximately this structure? Since quantum properties of atoms and molecules can be (approximately) obtained by truncation of the infinite-dimensional Hilbert space to a low-dimensional one
the study of the role of the generalized Pauli constraints for few fermions and a one-particle Hilbert space of dimension $O(1)$ is justified. Of course, the investigation of their role for solid state materials with a macroscopic number of electrons will be not straightforward but a great challenge.

As a consequence the present thesis contains the following three conceptually quite different main parts,
\begin{enumerate}
\item \textbf{The Quantum Marginal Problem.}\,\,We introduce in detail the quantum marginal problem and review Klyachko's abstract solution. Moreover, we break it down to a more elementary level and reveal the geometric picture behind it.
\item \textbf{Generalized Pauli Constraints and Concept of Pinning.}\,\,
     We shed some light on the pinning effect suggested by Klyachko. However, we argue that it is more natural to observe \emph{quasi-pinning}, a very similar effect but with a conceptually different origin. Moreover, we investigate the physical relevance of quasi-pinning, by investigating possible structural implications for the corresponding $N$-fermion quantum state $|\Psi_N\rangle$. For concrete applications we will develop the concept of a truncated pinning analysis, which will allow to systematically investigate possible pinning and to quantify it.
\item \textbf{Pinning Analysis for Specific Physical Systems.}
    In this part we study concrete physical systems from the new viewpoint of generalized Pauli constraints. Are ground states (quasi-)pinned?
    For this, we will intensively study a model of a few harmonically coupled fermions confined to a harmonic trap. As a second system we will study the Hubbard model for three fermions on three sites. We will also explore the connection between symmetries and possible pinning.
\end{enumerate}
To support the reader we present at the beginning of each of the corresponding three chapters a detailed motivation together with a summary of the main results we will find.

\chapter{The Quantum Marginal Problem}\label{chap:Math}
\section{Motivation and Summary}\label{sec:MotSumMath}
Most physical effect have their origin in the interaction between two or more physical systems. From a theoretical viewpoint one treats them as subsystems of a total multipartite quantum system. Examples are given by electrons and a nucleus forming an atom, macroscopically many atoms building up crystal or also a solid material, which is coupled to a heat bath or an external magnetic field.

Although all subsystems interact with each other and are therefore relevant for the physical behavior, some of them are of more interest than others. E.g.~by studying an atom $\mathcal{A}$ coupled to a heat bath $\mathcal{R}$ only the properties of the atom are of interest. If the total system is described by a quantum state $\rho_{\mathcal{A} \mathcal{R}}$ all relevant information on $\mathcal{A}$ is provided by the reduced density operator of the atom (also called marginal),
\begin{equation}
\rho_{\mathcal{A}} \equiv \mbox{Tr}_{\mathcal{R}}[\rho_{\mathcal{A} \mathcal{R}}]\,,
\end{equation}
obtained by tracing out system $\mathcal{R}$.

In that context a natural mathematical question arises: Given marginals for some subsystems of a multipartite quantum system. Are they compatible to each other in the sense that they can arise from a \emph{common} total state? The task to determine all compatible tuples of marginals and to describe this set in an elegant way is called \emph{quantum marginal problem} (QMP). The most prominent example of an QMP is the $2$-body $N$-representability problem. There, the total system is given by $N$-identical fermions and one asks whether a given $2$-particle reduced density operator can arise from an $N$-fermion state. Solving this problem would allow to significantly improve variational optimizations for fermionic ground states.


In this chapter we study in detail the prototype of a QMP. We consider a total system $AB$, which consists of two subsystems, $A$ and $B$. We ask which triple of spectra $(\vec{\lambda}_A,\vec{\lambda}_B,\vec{\lambda}_{AB})$ are compatible in the sense that there exists a total state $\rho_{AB}$ with eigenvalues $\vec{\lambda}_{AB}$ and corresponding marginals $\rho_A$ and $\rho_B$ with eigenvalues $\vec{\lambda}_A$ and $\vec{\lambda}_B$. Building up on
recent progress in invariant and representation theory Klyachko has solved this QMP \cite{Kly4}. He derived so-called \emph{marginal constraints}, \emph{linear} constraints on the eigenvalues of the marginals. Klyachko did also prove that they are not only necessary but also sufficient for the compatibility of $(\vec{\lambda}_A,\vec{\lambda}_B,\vec{\lambda}_{AB})$. We review his abstract derivation and break it down to a more elementary level. Moreover, we reveal the geometrical picture behind it.

As a first step we will find a variational principle, which is the origin of every marginal constraint. It takes the form
\begin{equation}\label{introVarPrin}
\lambda_{i_1}+\ldots +\lambda_{i_r} = \min_{V \in X_{\bd{i}}(\rho)}(\mbox{Tr}[P_V \rho])\,,
\end{equation}
where $\rho$ is a $d\times d$ -hermitian matrix with eigenvalues $\lambda_i$, $P_V$ denotes the orthogonal projection operator onto a linear subspace $V \leq \C^d$ belonging to the so-called \emph{Schubert variety} $X_{\bd{i}}(\rho)$, $\bd{i}\equiv (i_1,\ldots,i_r)$. Applying this variational principle in an appropriate way to a triple $(\rho_A,\rho_B,\rho_{AB})$ of compatible marginals with spectra $(\vec{\lambda}_A,\vec{\lambda}_B,\vec{\lambda}_{AB})$ allows to derive one marginal constraint whenever the three underlying Schubert varieties $X_{\bd{i}}(\rho_A)$, $X_{\bd{j}}(\rho_B)$ and $X_{\bd{k}}(\rho_{AB})$ fulfill a very specific and quite involved intersection property. Deriving all marginal constraints would require to check this intersection property for all triple of indices $(\bd{i},\bd{j},\bd{k})$. However, since this is too complicated one needs a more elegant approach. This is based on algebraic topology. We explain, that the Schubert varieties have a beautiful mathematical structure. They are complex projective algebraic varieties, i.e.~they are in particular manifolds and subvarieties of the so-called \emph{flag variety}. An algebraic variety is defined as the solution of polynomial equations, as e.g.~the unit circle given by points $(x,y)$ fulfilling $x^2+y^2-1 = 0$.
This allows to lift the intersection property to an algebraic level where one can check it systematically.
However, this lifting process is quite involved, since one needs to calculate the cohomology ring of the flag variety. We explain in detail how this works. As a central step we use that the flag variety can be obtained as a union of all its Schubert varieties. Moreover, we verify that this union has the substructure of a so-called cell complex. This abstract statement is verified by representing Schubert varieties as a family of matrices (reduced echelon form). Then, by using the cell structure of the flag variety elementary theorems of algebraic topology are used to calculate its cohomology ring.

After determining the cohomology ring one still needs to find the algebraic description of the intersection property. Since this turns out to be too involved we develop the geometric picture behind it. The three Schubert varieties of interest, $X_{\bd{i}}(\rho_A)$, $X_{\bd{j}}(\rho_B)$ and $X_{\bd{k}}(\rho_{AB})$, are submanifolds of the flag variety.
Since the intersection property is homeomorphically invariant we can deform all three Schubert varieties continuously without changing the intersection property. After applying an appropriate deformation the intersection property can be studied easily.
In that way we will derive marginal constraints for some settings with small dimensional Hilbert spaces for system $A$ and $B$.

\section{Definition of the problem}\label{sec:QMPdef}
In this part we define the quantum marginal problem (QMP) in its most general form.

\begin{defn}\label{QMPdefinition}
Given a multipartite quantum system. The quantum marginal problem is the problem of determining the mathematical conditions on density operators belonging to different subsystems of interest ensuring that all are belonging via partial trace to the same quantum state of the total system.
\end{defn}
\noindent Since the reduced density operators are called \emph{marginals}, the name quantum marginal problem is natural.
Now, we like to make Definition \ref{QMPdefinition} more precise from a mathematical perspective.
Let's consider finitely many quantum systems, labeled by indices $A,B,C,\ldots$ belonging to the finite set $\mathcal{J}=\{A,B,C,\ldots\}$
of indexes. Each system $J \in \mathcal{J} $ is represented by a separable Hilbert space $\mathcal{H}_{J}$, which might be a priori finite
or also infinite dimensional. We may think of $J \in \mathcal{J}$ as one particle, a few particles or a system of macroscopically many particles or also as just the spin degree of freedom of a single electron. Our label set $\mathcal{J}$ is chosen as a `basis', i.e.~every physical degree of interest is contained in exactly one of these systems $A,B,\ldots$.
The state of the system $J$ is then described by an element (density operator) in the state space $\mathcal{S}(\mathcal{H}_J)$, which is defined as
\begin{equation}
\mathcal{S}(\mathcal{H})= \{\rho \in \mathcal{B}(\mathcal{H})| \rho \ge 0 \wedge \mbox{Tr}[\rho]=1\} \,,
\end{equation}
where $\mathcal{B}(\mathcal{H})$ is the space of bounded linear operators on $\mathcal{H}$.
Every subset $\mathcal{I} \subset\mathcal{J}$ then labels a new physical system, namely the multipartite quantum system built up by all systems $I \in \mathcal{I}$. Let's denote the power set of $\mathcal{J}$ by $\mathcal{P}(\mathcal{J})$, which is then to be understood as the family of all physical subsystems contained in the total multipartite quantum system  $\mathcal{J}$. The system $\mathcal{I} \in \mathcal{P}(\mathcal{J})$ is then represented by the Hilbert space $\mathcal{H}_{\mathcal{I}}=\otimes_{I \in \mathcal{I}} \mathcal{H}_I$ and its states are described by elements in $\mathcal{S}(\mathcal{H}_{\mathcal{I}})$. Let $\mathcal{I}' \subset \mathcal{I} \subset \mathcal{J}$ and $\rho \in \mathcal{S}(\mathcal{H}_{\mathcal{I}})$. The partial trace
$\rho'$ of $\rho$ over the Hilbert spaces referring to the system $\mathcal{I} \setminus \mathcal{I}' \in \mathcal{P}(\mathcal{J})$ is denoted by $\rho'= \mbox{Tr}_{\mathcal{I} \setminus \mathcal{I}'}[\rho]$. Due to the properties of the partial trace it is clear that $\rho' \in \mathcal{S}(\mathcal{H}_{\mathcal{I}'})$ if $\rho \in \mathcal{S}(\mathcal{H}_{\mathcal{I}})$ .
For a given subset $\mathbf{\mathcal{K}} \subset \mathcal{P}(\mathcal{J}) $ the quantum marginal problem $\mathcal{M}_{\mathbf{\mathcal{K}}}$ is the problem of determining and describing the set $ \Sigma_{\mathbf{\mathcal{K}}} \subset \prod_{\mathcal{I} \in \mathbf{\mathcal{K}}} \mathcal{S}(\mathcal{H}_{\mathcal{I}})$ of tuples of compatible marginals. This is a subset of the cartesian product of the spaces $\mathcal{S}(\mathcal{H}_{\mathcal{I}})$ such that for each `point' $(\rho_{\mathcal{I}})_{\mathcal{I} \in \mathcal{K}} \in \Sigma_\mathcal{K}$ there exists a state $\rho_{\mathcal{J}} \in \mathcal{S}(\mathcal{H}_{\mathcal{J}})$ for the total system fulfilling
\begin{equation}
\rho_{\mathcal{I}}= \mbox{Tr}_{\mathcal{J} \setminus \mathcal{I}}[\rho_{\mathcal{J}}]   \qquad  \forall \mathcal{I} \in \mathcal{K} \, .
\end{equation}
\begin{floatingfigure}[r]{5cm}
\hspace{-0.5cm}
\centering
\includegraphics[width=4.9cm]{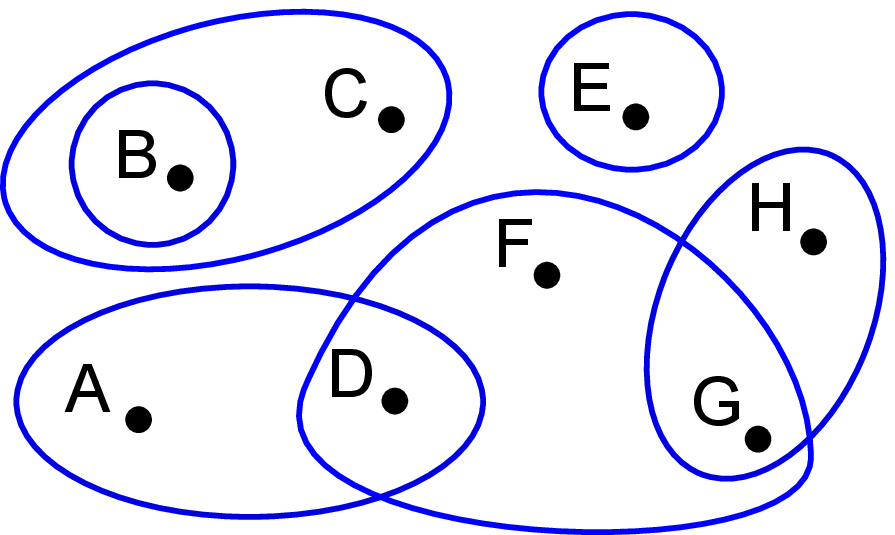}
\captionC{Illustration of the general marginal problem (see text).}
\label{QMPoverlapping1}
\end{floatingfigure}
The challenging task is not only to find this subset, but also to \emph{describe it in a compact way}.
This means that the quantum marginal problem is the problem of determining the conditions on density operators of subsystems of the total multipartite quantum systems arising from the condition that they all are compatible in the sense that there exists a state for the total system with these reduced density operators. This mathematical problem is also illustrated in Fig.~\ref{QMPoverlapping1}. There, every physical system $I  \in \mathcal{J}$ is symbolically described by a black dot.
Moreover, the family $\mathbf{\mathcal{K}}$ of certain subsets of $\mathcal{J}$ is illustrated by `blue islands' containing physical systems, where every island describes one subset $\mathcal{I} \in \mathbf{\mathcal{K}}$.
As emphasized in this figure these islands may overlap. If this is the case we call the corresponding marginal problem overlapping marginal problem and in the same way if two marginals are belonging to two
systems containing a common subsystem they are said to overlap. There is a first general observation:
\begin{lem}\label{convexsolution}
Given a multipartite quantum system described by $\mathcal{J}$ and for $\mathcal{K} \subset \mathcal{P}(\mathcal{J})$ we consider the marginal problem $\mathcal{M}_{\mathcal{K}}$. Assume that $\sigma^{(1)}=(\rho_{\mathcal{I}}^{(1)})_{\mathcal{I}\in\mathcal{K} }$ and $\sigma^{(2)}=(\rho_{\mathcal{I}}^{(2)})_{\mathcal{I}\in\mathcal{K} }$ are both compatible, i.e.~$\sigma^{(1)}, \sigma^{(2)} \in \Sigma_{\mathcal{K}}$. Then for all $0\leq \lambda \leq 1: \lambda \sigma^{(1)} + (1-\lambda) \sigma^{(2)} \in \Sigma_{\mathcal{K}}$. This means that $\Sigma_{\mathcal{K}}$ is a convex set.
\end{lem}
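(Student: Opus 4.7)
The plan is to exhibit, for any convex combination of two compatible tuples, an explicit total state that realizes that convex combination as its marginals. The two key ingredients are (i) the convexity of the state space $\mathcal{S}(\mathcal{H}_{\mathcal{J}})$ and (ii) the linearity of the partial trace map $\mathrm{Tr}_{\mathcal{J}\setminus\mathcal{I}}$. Both are standard facts about density operators that I would invoke without proof.

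First I would unpack the compatibility hypothesis: since $\sigma^{(1)},\sigma^{(2)}\in\Sigma_{\mathcal{K}}$, there exist total states $\rho_{\mathcal{J}}^{(1)},\rho_{\mathcal{J}}^{(2)}\in\mathcal{S}(\mathcal{H}_{\mathcal{J}})$ with $\rho_{\mathcal{I}}^{(k)}=\mathrm{Tr}_{\mathcal{J}\setminus\mathcal{I}}[\rho_{\mathcal{J}}^{(k)}]$ for all $\mathcal{I}\in\mathcal{K}$ and $k=1,2$. Then, given $\lambda\in[0,1]$, I would propose the candidate total state
\begin{equation}
\rho_{\mathcal{J}}\equiv \lambda\,\rho_{\mathcal{J}}^{(1)}+(1-\lambda)\,\rho_{\mathcal{J}}^{(2)}.
\end{equation}

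Next I would verify that $\rho_{\mathcal{J}}\in\mathcal{S}(\mathcal{H}_{\mathcal{J}})$: positivity follows because any convex combination of positive operators is positive, and $\mathrm{Tr}[\rho_{\mathcal{J}}]=\lambda+(1-\lambda)=1$. Then, using linearity of the partial trace, I would compute for each $\mathcal{I}\in\mathcal{K}$
\begin{equation}
\mathrm{Tr}_{\mathcal{J}\setminus\mathcal{I}}[\rho_{\mathcal{J}}]=\lambda\,\mathrm{Tr}_{\mathcal{J}\setminus\mathcal{I}}[\rho_{\mathcal{J}}^{(1)}]+(1-\lambda)\,\mathrm{Tr}_{\mathcal{J}\setminus\mathcal{I}}[\rho_{\mathcal{J}}^{(2)}]=\lambda\,\rho_{\mathcal{I}}^{(1)}+(1-\lambda)\,\rho_{\mathcal{I}}^{(2)},
\end{equation}
which is exactly the $\mathcal{I}$-component of $\lambda\sigma^{(1)}+(1-\lambda)\sigma^{(2)}$. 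Hence $\lambda\sigma^{(1)}+(1-\lambda)\sigma^{(2)}\in\Sigma_{\mathcal{K}}$.

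There is no real obstacle here; the statement is essentially a one-line consequence of linearity of $\mathrm{Tr}_{\mathcal{J}\setminus\mathcal{I}}$ together with convexity of $\mathcal{S}(\mathcal{H}_{\mathcal{J}})$. The only thing worth emphasizing is that one must use the \emph{same} $\lambda$ to build the lifted total state as is used for the marginals, so that a single common preimage serves all subsystems $\mathcal{I}\in\mathcal{K}$ simultaneously; this is what makes the argument work uniformly over overlapping or non-overlapping $\mathcal{K}$.
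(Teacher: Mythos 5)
Your proposal is correct and follows essentially the same argument as the paper: lift to the convex combination $\lambda\rho_{\mathcal{J}}^{(1)}+(1-\lambda)\rho_{\mathcal{J}}^{(2)}$ of the two total states and use linearity of the partial trace to read off the marginals. The extra check that this combination lies in $\mathcal{S}(\mathcal{H}_{\mathcal{J}})$ is a sensible (if routine) addition that the paper leaves implicit.
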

\begin{proof}
Let $\rho_{\mathcal{J}}^{(1)}$ and $\rho_{\mathcal{J}}^{(2)}$ be states of the total system with marginals $\sigma^{(1)}=(\rho_{\mathcal{I}}^{(1)})_{\mathcal{I}\in\mathcal{K} }$ and $\sigma^{(2)}=(\rho_{\mathcal{I}}^{(2)})_{\mathcal{I}\in\mathcal{K} }$, respectively.
Then, the linearity of partial traces ensures that the state $\rho_{\mathcal{J}}^{\lambda}:= \lambda \rho_{\mathcal{J}}^{(1)} + (1-\lambda) \rho_{\mathcal{J}}^{(2)}$ has the marginals $\lambda \sigma^{(1)} + (1-\lambda) \sigma^{(2)}= (\lambda \rho_{\mathcal{I}}^{(1)}+ (1-\lambda) \rho_{\mathcal{I}}^{(2)})_{\mathcal{I}\in\mathcal{K} } $.
\end{proof}

Additionally, one can restrict the QMP to a smaller set of solutions by demanding further constraints, as e.g
\begin{enumerate}
\item a pure total state
\item if all quantum systems are identical: total state is symmetric or antisymmetric under particle exchange
\item maximal or minimal values for entropies of some marginals
\end{enumerate}
The overlapping marginal problem is very difficult and not yet solved. In the next part we present a certain class of marginal problems that were solved by Klyachko in 2004 (see \cite{Kly1}, \cite{Kly2}, \cite{Kly3} and \cite{Kly4}). Further important contributions also came from Knutson \cite{Knut2}, Christandl and Mitchison \cite{MC} and from the work \cite{Daftuar200580} by Daftuar and Hayden.

\subsection{Pure univariant quantum marginal problem}\label{sec:QMPpureuni}
\begin{floatingfigure}[h]{5cm}
\hspace{-0.5cm}
\includegraphics[width=4.9cm]{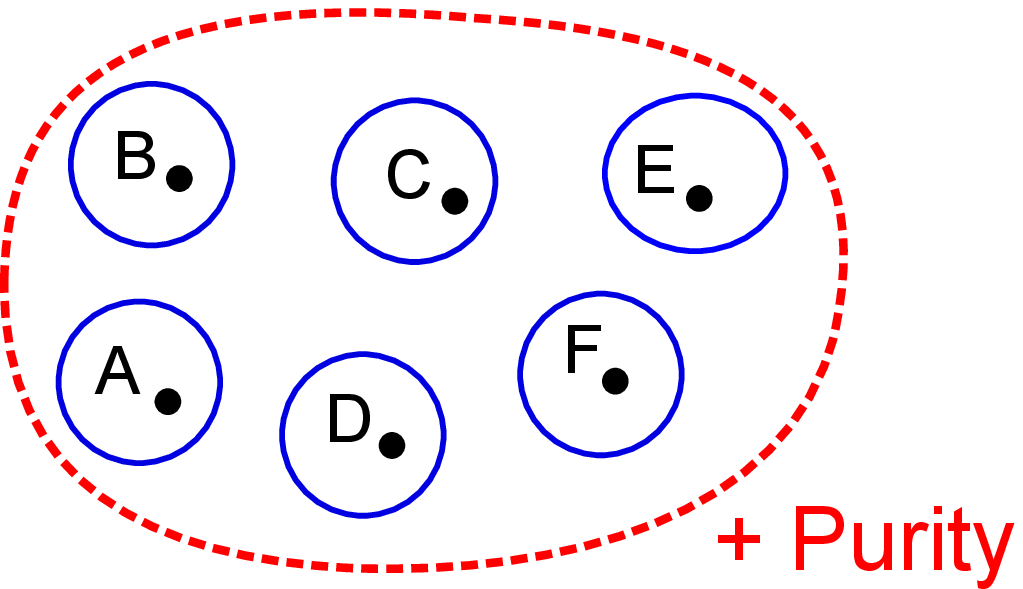}
\captionC{Illustration of the univariant marginal problem (see text).}
\label{QMPnonoverlapping}
\end{floatingfigure}
\noindent Due to the overwhelming complexity of the overlapping marginal problem, it belongs to the class of QMA complete problems \cite{Liu}, the quantum generalization of the NP-class, one in particular focussed to the special case of the so-called pure and non-overlapping QMP $\mathcal{M}_{\mathcal{K}}$. They have the property that the marginals are non-overlapping, i.e.~$\forall \mathcal{I}, \mathcal{I}' \in \mathcal{K}, \mathcal{I} \neq \mathcal{I}': \mathcal{I}\cap \mathcal{I}' = \emptyset $ and that the total state is pure. Fig.~\ref{QMPnonoverlapping} illustrates this:
The blue islands have no common black dots, i.e.~they do not overlap. Due to this structure we can formulate these marginal problems by drawing only one dot (instead of several black dots) in each blue island: If there were several black dots in one island we could define the union of them as a new black dot (physical system). The purity constraint as a property of the total state is indicated by a red dashed island. We are typically not interested in how this state looks like (hence, we don't draw a blue island) but we require purity as one of its properties.

The big advantages here is that the set of solutions can be described by spectral conditions only. This is an essential simplification compared to the overlapping version that refers to the large cartesian set $\prod_{\mathcal{I} \in \mathbf{\mathcal{K}}} \mathcal{S}(\mathcal{H}_{\mathcal{I}})$.
The reason for this is the unitary equivalences of non-overlapping marginals (also for the case without purity constraint):
\\
\begin{lem}\label{unitaryequiv}
 Let $\mathcal{J}$ be a multipartite quantum system and consider a non-overlapping QMP $\mathcal{M}_{\mathcal{K}}$ for a given $\mathcal{K} \in \mathcal{P}(\mathcal{J})$ (i.e.~all $\mathcal{I} \in \mathcal{K}$ are disjoint) and assume that the tuple $(\rho_{\mathcal{I}})_{\mathcal{I}\in\mathcal{K} }$ is compatible, i.e.~there exists a total state $\rho_{\mathcal{J}}$, such that $\forall \mathcal{I}\in\mathcal{K}:$ $\mbox{Tr}_{\mathcal{J}\setminus \mathcal{I} }[\rho_{\mathcal{J}}] = \rho_{\mathcal{I}}$. Then for every family $\{U_{\mathcal{I}}\}_{\mathcal{I}\in\mathcal{K}}$ of unitaries $U_{\mathcal{I}}: \mathcal{H}_{\mathcal{I}} \rightarrow \mathcal{H}_{\mathcal{I}}$
 the set $(U_{\mathcal{I}} \rho_{\mathcal{I}} U_{\mathcal{I}}^\dagger)_{\mathcal{I}\in\mathcal{K} }$ is also compatible.
\end{lem}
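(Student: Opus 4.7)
The plan is to exhibit an explicit global witness state for the transformed tuple by conjugating the given witness $\rho_{\mathcal{J}}$ with a suitably assembled global unitary. Because the sets $\mathcal{I} \in \mathcal{K}$ are pairwise disjoint, their tensor factors in $\mathcal{H}_{\mathcal{J}} = \bigotimes_{I \in \mathcal{J}} \mathcal{H}_I$ do not overlap, so I would define
\begin{equation}
U := \Big(\bigotimes_{\mathcal{I} \in \mathcal{K}} U_{\mathcal{I}}\Big) \otimes \mathds{1}_{\mathcal{R}}\,, \qquad \mathcal{R} := \mathcal{J} \setminus \bigcup_{\mathcal{I} \in \mathcal{K}} \mathcal{I}\,,
\end{equation}
and set $\rho'_{\mathcal{J}} := U \rho_{\mathcal{J}} U^\dagger$. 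Unitary conjugation preserves positivity and trace, so automatically $\rho'_{\mathcal{J}} \in \mathcal{S}(\mathcal{H}_{\mathcal{J}})$, which makes it the natural candidate for a total state with the desired reduced operators.

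It then remains to check, for each fixed $\mathcal{I} \in \mathcal{K}$, the identity
\begin{equation}
\mbox{Tr}_{\mathcal{J}\setminus \mathcal{I}}[\rho'_{\mathcal{J}}] = U_{\mathcal{I}} \rho_{\mathcal{I}} U_{\mathcal{I}}^\dagger\,.
\end{equation}
Regrouping tensor factors via the bipartition $\mathcal{H}_{\mathcal{J}} \cong \mathcal{H}_{\mathcal{I}} \otimes \mathcal{H}_{\mathcal{J}\setminus \mathcal{I}}$, the operator $U$ takes the block form $U_{\mathcal{I}} \otimes V$, with $V$ collecting all the remaining local unitaries together with the identity on $\mathcal{R}$. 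I would then pull $U_{\mathcal{I}}$ (which acts only on the untraced factor) outside $\mbox{Tr}_{\mathcal{J}\setminus \mathcal{I}}[\cdot]$ and invoke the fact that conjugation by an operator supported entirely on the traced-out system leaves the partial trace invariant, to eliminate $V$. This reduces the left-hand side to $U_{\mathcal{I}} \mbox{Tr}_{\mathcal{J}\setminus \mathcal{I}}[\rho_{\mathcal{J}}] U_{\mathcal{I}}^\dagger$, which equals $U_{\mathcal{I}} \rho_{\mathcal{I}} U_{\mathcal{I}}^\dagger$ by the compatibility hypothesis for the original tuple.

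I do not expect a serious obstacle here, as the lemma essentially expresses that local unitaries commute with partial traces; the only genuinely load-bearing step is the very first one, and it is the place where the non-overlap assumption is indispensable. If two labels $\mathcal{I}, \mathcal{I}' \in \mathcal{K}$ shared a subsystem $I$, then $U_{\mathcal{I}}$ and $U_{\mathcal{I}'}$ would both act on the same tensor factor $\mathcal{H}_I$, there would be no canonical way to combine them into a single global unitary on $\mathcal{H}_{\mathcal{J}}$, and the construction would already fail at the definition of $U$. Disjointness of the elements of $\mathcal{K}$ is thus precisely what makes the candidate witness well-defined, after which the verification is routine.
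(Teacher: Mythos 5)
Your proposal is correct and follows essentially the same route as the paper: the paper's proof likewise conjugates the global witness $\rho_{\mathcal{J}}$ by $\bigotimes_{\mathcal{I}\in\mathcal{K}} U_{\mathcal{I}}$ (tacitly extended by the identity on the remaining subsystems) and reads off the transformed marginals. Your version merely spells out the partial-trace verification and the role of disjointness more explicitly than the paper does.
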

\begin{proof}
The state $\tilde{\rho}_{\mathcal{J}} := (\otimes_{\mathcal{I}\in \mathcal{K}} U_{\mathcal{I}}) \, \rho_{\mathcal{J}} \, (\otimes_{\mathcal{I}\in \mathcal{K}} U_{\mathcal{I}}^\dagger) $ is in $\mathcal{S}(\mathcal{H}_{\mathcal{J}})$ and it has the marginals $(U_{\mathcal{I}}  \rho_{\mathcal{I}} U_{\mathcal{I}} ^\dagger)_{\mathcal{I}\in\mathcal{K} }$.
\end{proof}

\noindent Lem.~(\ref{unitaryequiv}) implies that the solution of pure and non-overlapping QMP is given by conditions on the spectra of the marginals, only. In contrast to the more general overlapping version the `orientation' of the marginals to each other, namely the orientation of their eigenspaces do not matter here anymore.
Moreover one often considers only finite dimensional Hilbert spaces $\mathcal{H}_{\mathcal{I}}, \mbox{dim}(\mathcal{H}_{\mathcal{I}})= d_{\mathcal{I}} < \infty$, which also holds for this thesis as far as it is not stated differently. In that case the solution of the QMP is a set in the high (but finite) dimensional Euclidean space $\prod_{\mathcal{I}\in \mathcal{K}}\mathbb{R}^{d_{\mathcal{I}}}$, the set of tuples of eigenvalues of all density operators of interest.

Due to normalization $\mbox{Tr}[\rho_{\mathcal{I}}]=1$ and by arranging the corresponding eigenvalues non-increasingly we can restrict to the cartesian product of simplexes $\Delta_{d}:= \{\lambda \in \mathbb{R}^d| 1\ge \lambda_1\ge\ldots\ge \lambda_d\ge 0, \,\sum_{j=1}^d \lambda_j=1\}$, $\prod_{\mathcal{I}\in \mathcal{K}}\Delta_{d_{\mathcal{I}}}$, but from time to time mathematical elegance requires a more general setting, namely the one of hermitian operators in the larger set $\mathcal{B}(\mathcal{H})$, without any positivity condition or trace normalization to unity. Thus, we'd like to keep our notation as general as possible.

\section{Variational principle}\label{variationalprin}
In this section we present the so-called Hersch-Zwahlen variational principle, which is later used to find conditions on the eigenvalues of the marginals.
A part of our presentation is similar to the one given in \cite{Daftuar200580}.
A simpler and well-known variational principle is the Ky-Fan principle:
\begin{lem}[Ky-Fan's Min Max principle]\label{KyFan} Let $\mathcal{H}$ be a d-dimensional Hilbert space, $\rho$ a hermitian operator (e.g.~$\rho \in
\mathcal{S}(\mathcal{H})$) with spectrum $\lambda$, which is arranged in non-increasing order. Then for all $k=0,1,\ldots d$:
\begin{eqnarray}
\sum_{j=1}^k \lambda_j &=& \max \limits_{V \leq \mathcal{H},\mbox{dim}(V)=k}(\mbox{Tr}[P_V \rho]) \label{KyFan1}\\
\sum_{j=d-k+1}^d \lambda_j &=& \min \limits_{V \leq \mathcal{H},\mbox{dim}(V)=k}(\mbox{Tr}[P_V \rho]) \label{KyFan2} \,.
\end{eqnarray}
\end{lem}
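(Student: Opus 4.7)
The plan is to pass to an eigenbasis of $\rho$, where $\mbox{Tr}[P_V \rho]$ becomes a weighted sum of the eigenvalues $\lambda_j$ with weights equal to the diagonal entries of the projection $P_V$, and then to solve the resulting trivial linear program.

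First I would fix an orthonormal eigenbasis $\{e_1,\ldots,e_d\}$ of $\rho$ with $\rho\, e_j = \lambda_j e_j$. For an arbitrary $k$-dimensional subspace $V \leq \mathcal{H}$, using $\rho\, e_j = \lambda_j e_j$ one immediately gets
$$\mbox{Tr}[P_V \rho] \;=\; \sum_{j=1}^d \langle e_j, P_V \rho\, e_j\rangle \;=\; \sum_{j=1}^d \mu_j \lambda_j, \qquad \mu_j \;:=\; \langle e_j, P_V e_j\rangle \;=\; \|P_V e_j\|^2.$$
The $\mu_j$ are the diagonal entries of the rank-$k$ orthogonal projection $P_V$, so they automatically satisfy $0 \leq \mu_j \leq 1$ and $\sum_j \mu_j = \mbox{Tr}[P_V] = k$.

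The second step is a one-line linear optimization. As $V$ ranges over all $k$-dimensional subspaces, the tuple $(\mu_j)$ ranges within the polytope $\Pi_k := \{\mu \in [0,1]^d : \sum_j \mu_j = k\}$. Since $\lambda_1 \geq \ldots \geq \lambda_d$, the linear functional $\mu \mapsto \sum_j \mu_j \lambda_j$ attains its maximum on $\Pi_k$ at the vertex $(1,\ldots,1,0,\ldots,0)$ and its minimum at the vertex $(0,\ldots,0,1,\ldots,1)$, yielding the upper bound $\sum_{j=1}^k \lambda_j$ in \eqref{KyFan1} and the lower bound $\sum_{j=d-k+1}^d \lambda_j$ in \eqref{KyFan2}.

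Finally, attainment is immediate: the subspace $V = \mbox{span}(e_1,\ldots,e_k)$ produces $\mu = (1,\ldots,1,0,\ldots,0)$ and saturates \eqref{KyFan1}, while $V = \mbox{span}(e_{d-k+1},\ldots,e_d)$ saturates \eqref{KyFan2}. There is no real obstacle in this proof---the content of the lemma lies entirely in recognizing $\mbox{Tr}[P_V\rho]$ as a linear functional on the ``diagonal-of-a-projection'' polytope $\Pi_k$, and noting that its extrema are attained at coordinate subspaces precisely because the $\lambda_j$ are monotonically ordered.
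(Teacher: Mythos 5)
Your proof is correct and is essentially the paper's argument in a cleaner packaging: the paper's weights $\sum_{j=1}^{k}|U_{ji}|^2$ (with $U$ the unitary relating an orthonormal basis of $V$ to the eigenbasis) are exactly your diagonal entries $\mu_i=\langle e_i,P_V e_i\rangle$, and its double-sum estimate is just a hands-on verification of your linear-programming bound over $\{\mu\in[0,1]^d:\sum_i\mu_i=k\}$, with the same maximizer $V=\mathrm{span}(e_1,\ldots,e_k)$. The only cosmetic difference is that the paper deduces the minimum statement by applying the maximum statement to $-\rho$, whereas you minimize the linear functional directly.
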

\noindent Here $P_V$ is the orthogonal projector onto the linear subspace $V$. The proof of Lem.~\ref{KyFan} is given in the Appendix \ref{trivial}.
Eq.\! \ref{KyFan1} states in particular that the sum of the $k$ largest diagonal elements of $\rho$ represented w.r.t.~to an arbitrary orthonormal basis is never larger than the sum of the $k$ largest eigenvalues of $\rho$.
For the generalization of Lem.~\ref{KyFan} we first introduce Schubert cells of the Grassmannian and the variety of complete flags (which are introduced with all details later in Sec.~\ref{flagvarieties} and Sec.~\ref{grassmannian}) that will play a very important role in solving the univariant QMP in an elegant way. We define
\begin{defn}\label{defflag}
Let $\mathcal{H}$ be a $d$-dimensional Hilbert space. A \emph{complete flag} $F_{\bullet}$ is a maximal sequence of nested linear subspaces, i.e.
\begin{equation}
F_{\bullet}:=[{0}= F_0 \lneq F_1 \lneq \ldots \lneq F_{d-1}\lneq F_d = \mathcal{H}]
\end{equation}
\end{defn}
\noindent An important concept is the one of complete flags induced by a non-degenerate hermitian operator.
\begin{defn}\label{flagoperator}
Given a hermitian operator $A$ on a $d$-dimensional Hilbert space with non-degenerate spectrum $a=(a_1,\ldots,a_d)$ arranged in decreasing order. $A$ then induces a complete flag $F_{\bullet}(A)$ according
\begin{equation}
F_{i}(A) = \langle v_1,\ldots, v_i\rangle\qquad,\, \forall i=0,1,\ldots,d\,,
\end{equation}
where $v_j$ is the eigenvector corresponding to the eigenvalue $a_j$ and $\langle \cdot\rangle$ denotes the span of vectors.
\end{defn}
\begin{defn}\label{SchubertcellGr}
Let $\mathcal{H}$ be a $d$-dimensional Hilbert space and $F_{\bullet}$ a complete flag. Then for every binary sequence $\pi \in \{0,1\}^d$ we define the Grassmannian Schubert cell $S_{\pi}^{\circ}(F_{\bullet})$ by
\begin{equation}
S_{\pi}^{\circ}(F_{\bullet}):= \{ V\leq\mathcal{H}\,|\, \forall i=1,\ldots, d: \mbox{dim}((V\cap F_i)/ (V \cap F_{i-1}))= \pi_i \}
\end{equation}
and for every permutation $\alpha=(\alpha_1,\ldots,\alpha_d)$ the complete flag variety Schubert cell $X_{\alpha}^{\circ}(F_{\bullet})$ by
\begin{equation}
X_{\alpha}^{\circ}(F_{\bullet}):= \{V_{\bullet}\,|\,\forall \, 1\leq p,q \leq d:\,\mbox{dim}(V_p\cap F_q)= \#\{i\leq p: \alpha_i \leq q\} \}\,.
\end{equation}
These cells are subsets of the Grassmannian $\mbox{Gr}_{\|\pi\|_1,d}$ and the flag variety $\mbox{\emph{Fl}}(\mathcal{H})$, respectively, which are defined as (see also Sec.~\ref{flagvarieties} and Sec.~\ref{grassmannian} for more details)
\begin{eqnarray}\label{GrFlagPredefinition}
\mbox{Gr}_{\|\pi\|_1,d} &:=& \{V\leq \mathcal{H} \,|\, \mbox{dim}(V)= \|\pi\|_1\} \\
\mbox{\emph{Fl}}_d \equiv \mbox{\emph{Fl}}(\mathcal{H}) &:=& \{F_{\bullet}\}
\end{eqnarray}
and $\|\pi\|_1 \equiv \sum_{k=1}^d \pi_k$\,.
\end{defn}
\begin{rem}
For the case of the Grassmannian Schubert cell the binary sequence defines the indices at which the components (vector spaces) of the sequence $V\cap F_0 \leq V\cap F_1 \leq \ldots \leq V\cap F_d$ increase their dimension. The label $^{\circ}$ indicates that the Schubert cells are open w.r.t.~a natural topology (see Sec.~\ref{flagvarieties} and Sec.~\ref{grassmannian}). The closures of these Schubert cells will later also play a very important role.
\end{rem}
\begin{rem}
In the following, by using the concept of Schubert cells for the solution of the QMP, the flags $F_{\bullet}$ will always arise in a natural way, namely induced by a density operator according to Definition \ref{flagoperator}.
\end{rem}
\noindent Now, we can express sums of \emph{arbitrary} eigenvalues by more advanced variational principles:
\begin{lem}[Hersch-Zwahlen 1]\label{HerschZwahlen} Let $\rho$ be a hermitian operator with non-degenerate spectrum $\lambda$ arranged in decreasing order, $\pi \in \{0,1\}^d$ a binary sequence of length $d$. Then
\begin{equation}
\sum_{j=1}^d \pi_j \lambda_j = \min \limits_{V \in S_{\pi}^{\circ}(\rho)}(\mbox{Tr}[P_V \rho])\,. \label{HerschZwahlen1}
\end{equation}
\end{lem}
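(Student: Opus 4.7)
The plan is to reduce the statement to a telescoping application of Ky--Fan's minimum principle (Lem.~\ref{KyFan}), now carried out along the flag $F_\bullet(\rho)$ induced by $\rho$. Let $v_1,\ldots,v_d$ be the orthonormal eigenbasis of $\rho$ with $\rho v_j=\lambda_j v_j$, so that $F_i(\rho)=\langle v_1,\ldots,v_i\rangle$, and let $i_1<i_2<\ldots<i_k$ denote the positions where $\pi_i=1$, with $k=\|\pi\|_1$. Then $\sum_j\pi_j\lambda_j=\sum_{l=1}^k\lambda_{i_l}$, and we must show that this number is the minimum of $\mathrm{Tr}[P_V\rho]$ over $V\in S_\pi^\circ(F_\bullet(\rho))$.

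First I would unpack the defining condition of the Schubert cell: for $V\in S_\pi^\circ(F_\bullet(\rho))$, the dimension of $V\cap F_i(\rho)$ jumps by one exactly at $i\in\{i_1,\ldots,i_k\}$. Consequently the subspaces $Q_l:=V\cap F_{i_l}(\rho)$ form a strictly increasing chain $Q_1\subsetneq Q_2\subsetneq\cdots\subsetneq Q_k=V$ with $\dim Q_l=l$. Choosing orthonormal vectors adapted to this chain, i.e.\ $u_l\in Q_l\cap Q_{l-1}^\perp$, produces an orthonormal basis $u_1,\ldots,u_k$ of $V$ such that $u_l\in F_{i_l}(\rho)$ for each $l$.

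Once this adapted basis is in place, the trace decomposes as
\begin{equation}
\mathrm{Tr}[P_V\rho]=\sum_{l=1}^k\langle u_l|\rho|u_l\rangle.
\end{equation}
Since $u_l$ is a unit vector lying in the invariant subspace $F_{i_l}(\rho)$, on which $\rho$ acts with eigenvalues $\lambda_1\geq\ldots\geq\lambda_{i_l}$, the minimum-expectation case of Ky--Fan (Eq.~(\ref{KyFan2}) with $k=1$) yields $\langle u_l|\rho|u_l\rangle\geq\lambda_{i_l}$. Summing over $l$ gives the lower bound $\mathrm{Tr}[P_V\rho]\geq\sum_{l=1}^k\lambda_{i_l}=\sum_{j=1}^d\pi_j\lambda_j$.

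To close the variational principle I would exhibit a minimizer: take $V_\ast:=\langle v_{i_1},\ldots,v_{i_k}\rangle$. A direct check shows $\dim(V_\ast\cap F_i(\rho))=|\{l:i_l\leq i\}|$, so the dimension increments match $\pi$ and $V_\ast\in S_\pi^\circ(F_\bullet(\rho))$, while $\mathrm{Tr}[P_{V_\ast}\rho]=\sum_{l=1}^k\lambda_{i_l}$ by direct computation. The only subtle point I anticipate is verifying that the Schubert cell as defined is nonempty and that the adapted orthonormal basis $(u_l)$ actually satisfies $u_l\in F_{i_l}(\rho)$; this follows from $Q_l\subset F_{i_l}(\rho)$ by definition, so no real obstacle arises. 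The proof is thus essentially a bookkeeping argument organizing Ky--Fan level by level along the flag.
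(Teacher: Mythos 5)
Your proof is correct and follows essentially the same route as the paper: bound each term of $\mathrm{Tr}[P_V\rho]=\sum_l\langle u_l,\rho u_l\rangle$ from below by $\lambda_{i_l}$ using that $u_l$ lies in $F_{i_l}(\rho)$, then exhibit the minimizer $\mathrm{span}\{v_{i_1},\ldots,v_{i_k}\}$. The only difference is that you spell out, via the chain $Q_l=V\cap F_{i_l}$ and Gram--Schmidt, why an orthonormal basis with $u_l\in F_{i_l}$ exists, which the paper simply asserts.
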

\begin{proof}
For given $\rho$ and $\pi$, let $i_1 < \ldots < i_k$ denote the indices with $\pi_{i}=1$, where $k = \|\pi\|_1$. For $V \in S_{\pi}^{\circ}(\rho) $ we choose an orthonormal basis $v_1, \ldots, v_k$ such that for all $l=1,\ldots, k$: $v_l \in V\cap F_{i_l}$, which means in particular $v_{l} \in F_{i_l}$. Then,
\begin{equation}
\mbox{Tr}[P_V \rho] = \sum_{l=1}^k \langle v_l, \rho v_l\rangle \geq \sum_{l=1}^k \min_{v \in F_{i_l}}\langle v, \rho v\rangle   =\sum_{l=1}^k \lambda_{i_l} = \sum_{j=1}^d \pi_j \lambda_j \,,
\end{equation}
where we used that $F_k$ is the eigenspace of $\rho$ corresponding to the $k$ largest eigenvalues.
To finish the proof we find a minimizer for the right handed side in (\ref{HerschZwahlen1}). By denoting the eigenvectors of $\rho$ by $e_k$ we consider $V_0 :=\mbox{span}(\{e_{i_l}\}_{l=1}^k) \in S_{\pi}^{\circ}(\rho)$, which leads to $\mbox{Tr}[P_{V_0} \rho] = \sum_{j=1}^d \pi_j \lambda_j$.
\end{proof}
\noindent Moreover
\begin{lem}[Hersch-Zwahlen 2]\label{HerschZwahlenAdv} Let $\rho$ be a hermitian operator with non-degenerate spectrum $\lambda$ arranged in decreasing order, $\alpha=(\alpha_1,\ldots, \alpha_d)$ a permutation and $a =(a_1,\ldots,a_d)\in \mathbb{R}^d, a_1> \ldots> a_d$ a so-called test spectrum. Then
\begin{equation}
\sum_{j=1}^d  \lambda_{\alpha_j} a_j = \min \limits_{\begin{array}{c}F_{\bullet}(A) \in X_{\alpha}^{\circ}(\rho)\\ spec(A)=a\end{array}}(\mbox{Tr}[\rho A])\,, \label{HerschZwahlen2}\,
\end{equation}
where on the rhs one minimizes over flags induced by hermitian operators $A$ with spectrum $a$ belonging to the corresponding flag variety Schubert cell $X_{\alpha}^{\circ}(\rho)$ (see also Definition \ref{flagoperator}).
\end{lem}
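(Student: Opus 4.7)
The plan is to reduce Lemma \ref{HerschZwahlenAdv} to Lemma \ref{HerschZwahlen} by decomposing $A$ along its own flag and applying the Grassmannian variational principle level by level. The operator $A$ is entirely determined by its spectrum $a$ and its induced flag $F_\bullet(A)$, so the trace $\mbox{Tr}[\rho A]$ should split into contributions indexed by the steps of that flag and each step should be controlled by the previous lemma.

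First I would expand $A$ by Abel (summation by parts). Writing $w_1,\ldots,w_d$ for the eigenvectors of $A$ and $V_p := F_p(A) = \langle w_1,\ldots,w_p\rangle$, I obtain
\begin{equation*}
A \;=\; a_d\,I \;+\; \sum_{p=1}^{d-1}(a_p-a_{p+1})\,P_{V_p}\,,
\end{equation*}
and hence $\mbox{Tr}[\rho A] = a_d\,\mbox{Tr}[\rho]+\sum_{p=1}^{d-1}(a_p-a_{p+1})\,\mbox{Tr}[\rho P_{V_p}]$. The essential point is that the coefficients $a_p-a_{p+1}$ are \emph{strictly positive} thanks to $a_1>\ldots>a_d$, which is precisely why this rewriting is useful for deriving an inequality.

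Next I would translate the flag variety Schubert condition $F_\bullet(A)\in X_\alpha^\circ(\rho)$ into Grassmannian Schubert conditions for each truncation $V_p$. Setting $\pi^{(p)}\in\{0,1\}^d$ to be the binary sequence with support $\{\alpha_1,\ldots,\alpha_p\}$, the defining identity
\begin{equation*}
\mbox{dim}(V_p\cap F_q(\rho))=\#\{i\leq p:\alpha_i\leq q\}
\end{equation*}
differentiates to $\mbox{dim}((V_p\cap F_q)/(V_p\cap F_{q-1}))=\#\{i\leq p:\alpha_i=q\}=\pi^{(p)}_q$, so $V_p\in S_{\pi^{(p)}}^\circ(F_\bullet(\rho))$. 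Lemma \ref{HerschZwahlen} then gives $\mbox{Tr}[\rho P_{V_p}]\geq \sum_{j=1}^p\lambda_{\alpha_j}$, and a second Abel summation delivers
\begin{equation*}
\mbox{Tr}[\rho A]\;\geq\; a_d\sum_{j=1}^d\lambda_{\alpha_j}+\sum_{p=1}^{d-1}(a_p-a_{p+1})\sum_{j=1}^p\lambda_{\alpha_j}\;=\;\sum_{j=1}^d\lambda_{\alpha_j}a_j\,,
\end{equation*}
which is the claimed lower bound on the right hand side of (\ref{HerschZwahlen2}).

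To certify that this bound is actually attained, I would exhibit an explicit minimizer. Denoting the eigenvectors of $\rho$ by $e_1,\ldots,e_d$, the operator $A_0:=\sum_{j=1}^d a_j\,|e_{\alpha_j}\rangle\langle e_{\alpha_j}|$ has spectrum $a$, its induced flag is $F_p(A_0)=\langle e_{\alpha_1},\ldots,e_{\alpha_p}\rangle$ and therefore lies in $X_\alpha^\circ(\rho)$ by direct inspection of the intersections with $F_q(\rho)$, and it realizes $\mbox{Tr}[\rho A_0]=\sum_j a_j\lambda_{\alpha_j}$. The main obstacle I anticipate is not analytic but combinatorial: correctly reading off the binary type $\pi^{(p)}$ from the truncated permutation $(\alpha_1,\ldots,\alpha_p)$ so that Lemma \ref{HerschZwahlen} can be invoked level-wise. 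Once that dictionary between $X_\alpha^\circ$ and the family $\{S_{\pi^{(p)}}^\circ\}_p$ is established, the rest of the proof is just Abel summation together with the construction of the explicit extremizer $A_0$.
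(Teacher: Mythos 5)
Your proof is correct, and it is organized differently from the paper's even though the two rest on the same underlying inequality. The paper works with the vector $\mu_i = \langle v_i,\rho v_i\rangle$ built from the eigenvectors $v_i$ of $A$, proves the partial-sum (majorization-type) bound $\sum_{i\leq k}\mu_i = \mbox{Tr}[\rho P_{V_k}] \geq \sum_{i\leq k}\lambda_{\alpha_i}$ \emph{from scratch} by decomposing $\rho$ as a positive combination $\sum_i \gamma_i P_{F_i}$ of projectors onto its own flag (with $\gamma_i=\lambda_i-\lambda_{i+1}$) and using the Schubert dimension condition, and then finishes with Abel summation in the test spectrum $a$. You instead perform the Abel summation on $A$ first, writing $A = a_d\mathds{1} + \sum_p (a_p-a_{p+1})P_{V_p}$, observe that the flag condition $F_\bullet(A)\in X_\alpha^\circ(\rho)$ differentiates (in $q$) to the Grassmannian cell condition $V_p\in S_{\pi^{(p)}}^\circ(\rho)$ for each truncation, and then simply invoke Lemma \ref{HerschZwahlen} level by level to get the very same bound $\mbox{Tr}[\rho P_{V_p}]\geq\sum_{j\leq p}\lambda_{\alpha_j}$; so your route makes the logical dependence on the first Hersch-Zwahlen principle explicit instead of re-deriving its content inline, at the modest cost of the combinatorial dictionary between $X_\alpha^\circ$ and the family $\{S_{\pi^{(p)}}^\circ\}$, which you handle correctly. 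One genuine advantage of your write-up: you exhibit the minimizer $A_0=\sum_j a_j|e_{\alpha_j}\rangle\langle e_{\alpha_j}|$ and check both that its flag lies in $X_\alpha^\circ(\rho)$ and that it attains the value $\sum_j a_j\lambda_{\alpha_j}$, whereas the paper's printed proof only establishes the lower bound $\mbox{Tr}[\rho A]\geq\sum_j\lambda_{\alpha_j}a_j$ and never verifies attainment, so your argument is actually the more complete of the two.
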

\begin{proof}
For given $A$ and $\rho$ we denote their eigenvectors by $v_1,\ldots,v_d$ and $f_1,\ldots,f_d$, respectively and define $F_j := \langle f_1,\ldots,f_j\rangle$. $A$ induces a flag according Definition \ref{flagoperator}.
We find
\begin{equation}
\mbox{Tr}[\rho A] = \sum_{i=1}^d \, \langle v_i, \rho v_i \rangle a_i =: \mu \cdot a\,.
\end{equation}
where $\mu := (\langle v_i, \rho v_i \rangle)$ is a probability vector, i.e.~$\mu_i \geq 0$ and $\sum_{i=1}^d \,\mu_i =1$. Moreover for all $k=0,1,\ldots,d$
\begin{equation}
\sum_{i=1}^k \mu_i = \mbox{Tr}[\rho P_{V_k}] \qquad , \, V_k:= \langle v_1,\ldots,v_k\rangle.
\end{equation}
Since $V_{\bullet} \in X_{\alpha}^{\circ}(\rho)$, $\lambda = \mbox{spec}(\rho)$ is arranged in decreasing order and by using $\gamma_d:=\lambda_d\geq 0$ and $\gamma_i := \lambda_i-\lambda_{i+1} > 0$ for $i=1,\ldots,d-1$ we find (recall Definition \ref{SchubertcellGr})
\begin{eqnarray}
\sum_{i=1}^k \mu_i &=& \mbox{Tr}[\rho P_{V_k}] \nonumber \\
&=&  \sum_{i=1}^d \gamma_i \, \mbox{Tr}[P_{F_i} P_{V_k}] \nonumber \\
&\geq&  \sum_{i=1}^d \gamma_i \, \mbox{dim}(F_i \cap V_k) \nonumber \\
&=&  \sum_{i=1}^d \gamma_i \, \#(j\leq k\,:\, \alpha_j\leq i) \nonumber \\
&=& \sum_{i=1}^d \lambda_i\, \left( \,\#(j\leq k\,:\, \alpha_j\leq i)-\#(j\leq k\,:\, \alpha_j\leq i-1)  \,\right) \nonumber \\
&=& \sum_{i=1}^d \lambda_i \, \chi(i \in \{\alpha_1,\ldots,\alpha_k\}) \nonumber \\
&=& \sum_{i=1}^k \, \lambda_{\alpha_i} \,,
\end{eqnarray}
where $\chi$ denotes the characteristic function and in the fourth step we have used Definition \ref{SchubertcellGr}. Hence, the vector $\lambda_{\alpha}:=(\lambda_{\alpha_1},\ldots,\lambda_{\alpha_d})$ is majorized by the vector $\mu$.
Then
\begin{equation}
(\mu- \lambda_{\alpha}) \cdot a = (\mu- \lambda_{\alpha}) \cdot (\sum_{i=1}^d b_i 1_i) \geq 0
\end{equation}
with $1_i =(\underbrace{1,\ldots 1}_i,0,\ldots,0)$ and $b_d := a_d$, for $k<d,$ $b_k := a_k- a_{k+1} \geq 0$ . Hence
\begin{equation}
\mu\cdot a \geq \lambda_{\alpha}\cdot a \Leftrightarrow \mbox{Tr}[\rho A] \geq \sum_{i=1}^d \, \lambda_{\alpha_i} a_i\,.
\end{equation}
\end{proof}
\noindent The Hersch-Zwahlen variational principles (\ref{HerschZwahlen}) and (\ref{HerschZwahlenAdv}) are the basic tools for deriving spectral conditions on reduced density operators. To show how this works we consider the two basic marginal problems, namely first $\{A,AB\}$ and later $\{A,B,AB\}$:
\begin{enumerate}
\item
Given two finite dimensional Hilbert spaces $\mathcal{H}^{(A)}$ and $\mathcal{H}^{(B)}$ with dimensions $d_A$ and $d_B$, respectively. The morphisms $\Phi_k, k=1,2,\ldots,d_A$ of Grassmannians are defined as
\begin{equation}\label{morphism1}
\Phi_k: \mbox{Gr}_{k,d_A} \rightarrow \mbox{Gr}_{k d_B,d_A d_B}\qquad,\,\,\, V \mapsto V \otimes \mathcal{H}^{(B)}\,.
\end{equation}
Given a state $\rho_{AB}$ for the total system $AB$ and let
$\rho_A = \mbox{Tr}_B[\rho_{AB}]$ be the marginal w.r.t system $A$. Moreover, $\lambda^{(AB)}:= \mbox{spec}(\rho_{AB})^{\downarrow}$, $\lambda^{(A)}:= \mbox{spec}(\rho_{A})^{\downarrow}$ and choose binary sequences $\pi \in \{0,1\}^{d_A}, \sigma \in \{0,1\}^{d_A d_B}$. Then, whenever the intersection property
(the \emph{dual binary sequence} $\hat{\sigma}$ of a sequence $\sigma$ of length $d$ is defined by $\hat{\sigma}_k := \sigma_{d-k+1}$)
\begin{equation}\label{intersectionprop1}
\Phi_{\|\pi\|_1}(S_{\pi}^{\circ}(\rho_A)) \cap S_{\hat \sigma}^{\circ}(\rho_{AB}) = \left(S_{\pi}^{\circ}(\rho_A) \otimes \mathcal{H}^{(B)} \right) \cap S_{\hat \sigma}^{\circ}(\rho_{AB}) \neq \emptyset
\end{equation}
holds, we obtain
\begin{eqnarray}\label{deviationspecineq1}
\lefteqn{\sum_{j=1}^{d_A}\pi_j \lambda_j^{(A)} - \sum_{i=1}^{d_A d_B} \sigma_i \lambda_i^{(AB)}}&&\nonumber \\
&=& \sum_{j=1}^{d_A} \pi_j \lambda_j^{(A)} + \sum_{i=1}^{d_A d_B} \sigma_i (-\lambda_i^{(AB)}) \nonumber \\
&=& \min \limits_{V \in S_{\pi}^{\circ}(\rho_A)}(\mbox{Tr}_A[P_V \rho_A]) + \min \limits_{W \in S_{\hat \sigma}^{\circ}(-\rho_{AB})}(\mbox{Tr}_{AB}[P_W (-\rho_{AB})]) \nonumber \\
&=& \min \limits_{V\otimes \mathcal{H}^{(B)} \in S_{\pi}^{\circ}(\rho_A)\otimes \mathcal{H}^{(B)}}(\mbox{Tr}_{AB}[P_{V\otimes \mathcal{H}^{(B)}} \rho_{AB}]) + \min \limits_{W \in S_{\hat \sigma}^{\circ}(-\rho_{AB})}(\mbox{Tr}_{AB}[P_W (-\rho_{AB})]) \nonumber \\
&\leq& \mbox{Tr}_{AB}[P_{W_0} \rho_{AB}] + \mbox{Tr}_{AB}[P_{W_0} (-\rho_{AB})] \nonumber \\
&=&0 \,,
\end{eqnarray}
where we applied $S_{\sigma}^{\circ}(-\rho) = S_{\hat{\sigma}}^{\circ}(\rho) $ in the third line and
(\ref{intersectionprop1}) was used in the second last line, with an element $W_0 \in  \left(S_{\pi}^{\circ}(\rho_A) \otimes \mathcal{H}^{(B)}\right) \cap S_{\hat \sigma}^{\circ}(\rho_{AB})$.
Hence, we obtain a spectral inequality
\begin{equation}\label{spectralineq1}
\boxed{
\sum_{j=1}^{d_A}\pi_j \lambda_j^{(A)} \leq \sum_{i=1}^{d_A d_B} \sigma_i \lambda_i^{(AB)}} \,.
\end{equation}
\item Given two finite dimensional Hilbert spaces $\mathcal{H}^{(A)}$ and $\mathcal{H}^{(B)}$ with dimensions $d_A$ and $d_B$, respectively and let $a \in \mathbb{R}^{d_A}$ and $b \in \mathbb{R}^{d_B}$ be two non-degenerate test spectra (always arranged in decreasing order) such that their sum $\{a+b\}:= \{a_i+b_j\,|\, 1\leq i\leq d_A\,,\,1\leq j\leq d_B\}$ is also non-degenerate. This pair of test spectra induces index maps $i_{a,b}$ and $j_{a,b}$
    \begin{eqnarray}\label{indexmaps}
    (i_{a,b},j_{a,b}) : && \{1,\ldots,d_A d_B\} \rightarrow \{1,\ldots,d_A\} \times \{1,\ldots,d_B\}  \nonumber \\
    && k<k' \Leftrightarrow a_{i_{a,b}(k)} + b_{j_{a,b}(k)} > a_{i_{a,b}(k')} + b_{j_{a,b}(k')}\,.
    \end{eqnarray}
\begin{rem}
Note that the index map is well-defined. For given $k \in \{1,\ldots, d_A d_B\}$ the indices $(i_{a,b}(k),j_{a,b}(k))$ indicate which two components of the test spectra $a$ and $b$ one has to sum up to get the $k$-th largest element in the list $\{a_i+b_j\}$ (namely the element $a_{i_{a,b}(k)}$ and the element $b_{j_{a,b}(k)}$).
\end{rem}
The morphisms $\Phi_{a,b}$ of flag varieties are defined as
\begin{eqnarray}\label{morphism2}
\Phi_{a,b}: &&\mbox{\emph{Fl}}(\mathcal{H}^{(A)}) \times \mbox{\emph{Fl}}(\mathcal{H}^{(B)}) \rightarrow \mbox{\emph{Fl}}(\mathcal{H}^{(A)}\otimes\mathcal{H}^{(B)}) \nonumber \\
&& (F_{\bullet},G_{\bullet}) \mapsto H_{\bullet} \qquad, \,\, H_k := F_{i_{a,b}(k)} \otimes G_{j_{a,b}(k)}\,\,.
\end{eqnarray}
Given a state $\rho_{AB}$ for the total system $AB$ and let
$\rho_A = \mbox{Tr}_B[\rho_{AB}]$ and $\rho_B = \mbox{Tr}_A[\rho_{AB}]$ be the marginal w.r.t system $A$ and system $B$, respectively. Moreover, $\lambda^{(AB)}:= \mbox{spec}(\rho_{AB})^{\downarrow}$, $\lambda^{(A)}:= \mbox{spec}(\rho_{A})^{\downarrow}$, $\lambda^{(B)}:= \mbox{spec}(\rho_{B})^{\downarrow}$ and choose non-degenerate test spectra $a, b$ such that their sum $a+b$ is also non-degenerate and permutations $\alpha \in \mathcal{S}_{d_A}$, $\beta \in \mathcal{S}_{d_B}$ and $\gamma \in \mathcal{S}_{d_A d_B}$. Then, whenever the intersection property
($\omega_0 :=(d_A d_B,\ldots,2,1)$ is the permutation of maximal length)
\begin{equation}\label{intersectionprop2}
\Phi_{a,b}(X_{\alpha}^{\circ}(\rho_A),X_{\beta}^{\circ}(\rho_B)) \cap X_{\gamma \omega_0}^{\circ}(-\rho_{AB}) \neq \emptyset
\end{equation}
holds, i.e.~there exists a hermitian operator
\begin{equation}
T_{AB}^{(0)} = T_A^{(0)}\otimes \mathds{1}_B + \mathds{1}_A \otimes T_B^{(0)}
\end{equation}
with
\begin{equation}
F(T_{AB}^{(0)}) \in \Phi_{a,b}(X_{\alpha}^{\circ}(\rho_A),X_{\beta}^{\circ}(\rho_B)) \cap X_{\gamma \omega_0}^{\circ}(-\rho_{AB})\,\,,
\end{equation}
we obtain
\begin{eqnarray}\label{deviationspecineq2}
\lefteqn{\sum_{i=1}^{d_A} \lambda_{\alpha_i}^{(A)}a_i + \sum_{j=1}^{d_B} \lambda_{\beta_j}^{(B)}b_j}&&\nonumber \\
&=& \min \limits_{\small{\begin{array}{c}F_{\bullet}(T_A) \in X_{\alpha}^{\circ}(\rho_A)\\ spec(T_A)=a\end{array}}}(\mbox{Tr}[\rho_A T_A])+\min \limits_{\small{\begin{array}{c}F_{\bullet}(T_B) \in X_{\beta}^{\circ}(\rho_B)\\ spec(T_B)=b\end{array}}}(\mbox{Tr}[\rho_B T_B]) \nonumber \\
&=& \min \limits_{\tiny{\begin{array}{c}(F_{\bullet}(T_A),F_{\bullet}(T_B)) \in X_{\alpha}^{\circ}(\rho_A) \times X_{\beta}^{\circ}(\rho_B)\\ spec(T_A)=a\,,\,spec(T_B)=b\end{array}}}(\mbox{Tr}[\rho_{AB} \left(T_A\otimes \mathds{1}_B + \mathds{1}_A \otimes T_B \right)]) \nonumber \\
&\leq& - \mbox{Tr}[(-\rho_{AB}) \left(T_A^{(0)}\otimes \mathds{1}_B + \mathds{1}_A \otimes T_B^{(0)}\right)] \nonumber \\
&\leq& -\min \limits_{\small{\begin{array}{c}F_{\bullet}(T_{AB}) \in X_{\gamma \omega_0}^{\circ}(\rho_{AB})\\ spec(T_{AB})=(a+b)^{\downarrow}\end{array}}}(\mbox{Tr}[(-\rho_{AB}) T_{AB}]) \nonumber \\
&=& - \sum_{k=1}^{d_A d_B} \left(\mbox{spec}(-\rho_{AB})^{\downarrow}\right)_{(\gamma \omega_0)_k}\, (a+b)^{\downarrow}_k \nonumber \\
&=&  \sum_{k=1}^{d_A d_B} \left(\mbox{spec}(\rho_{AB})^{\uparrow}\right)_{(\gamma \omega_0)_k}\, (a+b)^{\downarrow}_k \nonumber \\
&=&\sum_{k=1}^{d_A d_B} \left(\mbox{spec}(\rho_{AB})^{\downarrow}\right)_{\gamma_k}\, (a+b)^{\downarrow}_k\,.
\end{eqnarray}
Here, we used the second Hersch-Zwahlen variational principle \ref{HerschZwahlenAdv} in step 1 and 5 and for step 3 and 4 we used the fact that the intersection property (\ref{intersectionprop2}) is fulfilled. Moreover the symbols $\uparrow$ and $\downarrow$ indicate that the entries are arranged in increasing and decreasing order. Hence, whenever the intersection property (\ref{intersectionprop2}) holds we find an inequality
\begin{equation}\label{spectralineq2}
\boxed{\sum_{i=1}^{d_A} \lambda_{\alpha_i}^{(A)}a_i + \sum_{j=1}^{d_B} \lambda_{\beta_j}^{(B)}b_j \leq \sum_{k=1}^{d_A d_B} \lambda^{(AB)}_{\gamma_k}\, (a+b)^{\downarrow}_k }\,.
\end{equation}

\end{enumerate}
\textbf{For both basic versions $\{A,AB\}$ and $\{A,B,AB\}$ of the quantum marginal problem the main task is now to find all pairs of binary sequences $\pi $ and $\sigma$ and test spectra $a , b$ and permutations $\alpha, \beta, \gamma$, respectively, such that the corresponding intersection property (\ref{intersectionprop1}) and (\ref{intersectionprop2}), respectively, is fulfilled.}
In the following we will introduce an enormous machinery, the Schubert calculus, a discipline in algebraic topology to study these intersection properties more systematically. This significant additional effort is justified by the following reasons
\begin{enumerate}
\item An elegant and systematic approach is in general favorable and might lead to a deeper understanding of the structure behind the problem. This can also help to apply results from simpler versions of the QMP, as e.g.~$\mathcal{M}_{A,B,AB}$, to more general settings (those with a larger underlying multipartite quantum systems).
\item Since we should check the intersection property (\ref{intersectionprop1}) for all possible binary sequences occurring in the corresponding marginal problem, the effort in doing so increases rapidly as the dimensions of the underlying Hilbert spaces increase. Hence, a systematic approach that yields directly all tuples of binary sequences fulfilling the intersection property is preferable.
\item To find all inequalities for the problem $\{A,B,AB\}$ is even more difficult since there are uncountably many pairs of test spectra.
\item To verify (\ref{intersectionprop1}) for one single pair $(\sigma, \pi)$ is already difficult. The same also holds for (\ref{intersectionprop2}). It may be very convincing that (co)homology was developed exactly to determine dimensions of intersections of euclidian spaces embedded in a larger total euclidian space. But this task is very close to our question, whether two Schubert cells intersect or not. After all, the Schubert cells depend on given density operators and hence we cannot expect yet to find a solution of the QMP that can be applied to all possible tuples of density operators (but only for one single tupel/pair).
\item A systematic approach towards the solution of the QMP, i.e.~in particular an elegant way of describing the solution is necessary to develop computer programs which can calculate all marginal constraints
\end{enumerate}

\section{Generalized flag varieties}\label{genflagvarieties}
In this section we present the concept of generalized flag varieties in the language of Schubert calculus and later we will consider two special cases more detailed, the variety of complete flags (typically just called `flag variety') and the Grassmannian variety. For these two examples we also will prove several technical statements. We follow partially \cite{BlBr}, \cite{springerlink:10.1007/3-7643-7342-3_2} and \cite{Prag}.
In the following we consider the Hilbert space $\mathcal{H} \cong \mathbb{C}^d$ for some fixed $d$. For a given $r\leq d$ and sequence $(0=d_0< d_1< \ldots < d_r=d)$ we consider the family of nested linear subspaces
\begin{equation}
\mbox{\emph{Fl}}_{d_0,\ldots,d_r} := \{0=F_0 \lneq F_1 \lneq \ldots \lneq F_{r-1} \lneq F_r = \mathcal{H} \,|\, \mbox{dim}(F_k) = d_k \,,\, \forall k \}\,,
\end{equation}
which is called \textbf{generalized flag variety} and its elements
\begin{equation}
F_{\bullet} := [ F_1 \lneq \ldots \lneq F_r ]
\end{equation}
are called \textbf{generalized} or \textbf{partial flags}.
That these families of flags have the additional structure of a variety (see Appendix \ref{appvarieties} for a definition) is shown in Sec.\ref{pluecker} for the Grassmannian.
As already shown the variational principles \ref{HerschZwahlen} and \ref{HerschZwahlenAdv} for the two QMP $\{A,AB\}$ and $\{A,B,AB\}$ are closely related to two of these generalized flag varieties. In the first one we will deal with the \textbf{Grassmannian} $Gr_{k,d}$, defined by $(0<k<d)$
\begin{equation}
Gr_{k,d} := \mbox{\emph{Fl}}_{0,k,d}
\end{equation}
and in the second one with the \textbf{flag variety}
\begin{equation}
\mbox{\emph{Fl}}_d := \mbox{\emph{Fl}}_{0,1,\ldots,d} \,.
\end{equation}
The Grassmannian is the family of all $k$-dimensional linear subspaces of $\mathcal{H}$ and the flag variety $\mbox{\emph{Fl}}_d$ is the family of all complete flags $[ F_1 \lneq \ldots \lneq F_d ] $, where $\mbox{dim}(F_j)=j$.
Now, we explain the general concept how to equip these generalized flag varieties with the structure of a topological space and later with the one of a variety/manifold. Consider for fixed $r$ and $d_1<\ldots <d_r=d$ the group $Gl(d)$ of regular complex matrices. For all $k=1,\ldots,r$ the first $k$ column vectors of a given matrix $M \in Gl(d)$ define (w.r.t.~a fixed basis of $\mathcal{H}$) a linear subspace $F_k$ of dimension $k$. In that sense, every matrix $M$ defines a partial flag $F_{\bullet}(M)$. We introduce the equivalence relation $\sim$ (which of course depends on the fixed constants $r$ and $d_1,\ldots, d_r$) on $Gl(d)$ by saying that two regular matrices are equivalent if they define the same partial flag,
\begin{equation}
M \, \sim \, N \, :\Leftrightarrow F_{\bullet}(M) = F_{\bullet}(N) \,.
\end{equation}
Hence
\begin{equation}
\mbox{\emph{Fl}}_{d_0,\ldots,d_r} \equiv Gl(d)/ \sim \, .
\end{equation}
The last relation also defines the natural topology for the flag variety by using the natural topology for $Gl(d)$. These concepts are explained with all details in Secs.~\ref{flagvarieties}, \ref{grassmannian} for the flag variety and the Grassmannian, respectively.
It turns out that the equivalence relation $\sim$ can also be described by the corresponding parabolic subgroup $H \leq Gl(d)$ in the sense,
\begin{equation}
\mbox{\emph{Fl}}_{d_0,\ldots,d_r} \equiv Gl(d)/H \,,
\end{equation}
where $H \leq Gl(d)$ is the subgroup of all block upper triangle matrices with $r$ blocks of size $d_1, d_2-d_1, \ldots, d_r -d_{r-1}$.
In the next two sections, Secs.~\ref{flagvarieties}, \ref{grassmannian} we will study these concepts more detailed.

\subsection{Flag variety}\label{flagvarieties}
As already stated in (\ref{defflag}) in Sec.~\ref{variationalprin} a \emph{complete flag $F_{\bullet}$} is a nested sequence
\begin{equation}
F_{\bullet}= [0= F_0 \lneq F_1 \lneq \ldots \lneq F_d= \mathcal{H}]
\end{equation}
of linear subspaces $F_k$ with $\mbox{dim}(F_k)=k$. The family of all flags is called \emph{flag variety} $\mbox{\emph{Fl}}_d$. Later we will justify the term variety for this set. Two flags $F_{\bullet}$ and $G_{\bullet}$ are said to be \emph{transversal}, if for all $i,j=0,1,\ldots,d$: $\mbox{dim}(F_i \cap G_j)=\mbox{max}(i+j-d,0)$. This means two transversal flags have the property that each pair $(F_i, G_j)$ of linear subspaces has a minimal intersection dimension. Moreover flags are generically transversal. To illustrate this assumption consider the case $d=2$ and two flags $F_{\bullet}$ and $G_{\bullet}$. The only non-trivial intersection is $F_1 \cap G_1$. Since both subspaces have dimension $1$ and are embedded in a two dimensional space, it is clear that for generic choices for $F_1$ and $G_1$ their intersection has dimension $0$ and thus $F_{\bullet}$ and $G_{\bullet}$ are transversal, indeed. For a flag $F_{\bullet}$ we define its complementary flag $F_{\bullet}^{\perp}$ by setting $F_{k}^{\perp}:= (F_{d-k})^{\perp}$. This means that $F_{\bullet}$ and $F_{\bullet}^{\perp}$ are transverse. We define the subset $X_{\perp}(F_{\bullet})$ as the family of flags transverse to $F_{\bullet}$. Later it will be stated that these subsets are open sets in \emph{Fl}$_d$ w.r.t.~the natural topology that is introduced in Remark \ref{remtopology}. Intuitively, openness w.r.t.~to a reasonable topology is clear since deforming a flag a very little bit does not change its property to be transversal to a second given flag.

To work with flags it is helpful to represent them in a more concrete form. For the following we choose an orthonormal basis $\{e_1,\ldots,e_d\}$ that will be called standard basis and we express arbitrary vectors $v \in \mathcal{H}$ w.r.t to this basis in form of coordinate vectors $\vec{v}= (v_1,\ldots,v_d)^T$, i.e.~$v = v_1 e_1+ \ldots+ v_d e_d$ and in particular $(\vec{e_k})_i= \delta_{k i}$. Given a regular complex matrix $g \in Gl(d)$ we can understand it as built up by $d$ coordinate column vectors $\vec{g}_1,\ldots, \vec{g}_d$ representing vectors $g_1, \ldots, g_d \in \mathcal{H}$ w.r.t to this standard basis:
\begin{equation}
g = (\vec{g}_1,\ldots,\vec{g}_d) \,.
\end{equation}
Since $g$ is regular, these $d$ vectors $g_1,\ldots,g_d$ are linearly independent and therefore $g$ defines a (complete) flag $G_{\bullet}$ by setting for all $k=0,1,\ldots,d$: $G_k := \langle g_1, \ldots, g_k\rangle$. This defines a map $\Lambda$
\begin{eqnarray}
\Lambda: Gl(d) &\rightarrow & \mbox{\emph{Fl}}_d \,,\nonumber  \\
 g &\mapsto & G_{\bullet}(g) = [0 \lneq \langle g_1 \rangle \lneq \ldots \lneq \langle g_1, \ldots,g_d\rangle = \mathcal{H}] .
\end{eqnarray}
Obviously, $\Lambda$ is not injective and gives rise to an equivalence relation. We call elements $g, g' \in Gl(d)$ equivalent, $g \sim g'$, if
$\Lambda(g)= \Lambda(g')$, that means they represent the same flag. Thus,
\begin{equation}
\mbox{\emph{Fl}}_d \cong Gl(d)/\sim \,. \label{flagvarietyGL1}
\end{equation}
Mathematically, this equivalence relation means to take the $k$-th column of a given matrix $g \in Gl(d)$ modulo linear combinations of the first $k-1$ columns and normalization. The statement \ref{flagvarietyGL1} can also be expressed more elegantly. Let $B \subset Gl(d)$ be the set of regular (complex) upper triangle matrices. It is shown in the Appendix \ref{Schubertcalculus} that $B$ is also a subgroup of $Gl(d)$, the so called Borel subgroup of $Gl(d)$. We call two elements $g_1, g_2$ equivalent if $g_1 (g_2)^{-1} \in B$. This defines an equivalence relation on $Gl(d)$, which separates into disjoint equivalence classes, namely the left cosets $g B$ $(g \in Gl(d))$. Moreover, this equivalence relation is identical to the one mentioned above and (\ref{flagvarietyGL1}) can be rephrased as
\begin{equation}
\mbox{\emph{Fl}}_d \cong Gl(d)/B \,. \label{flagvarietyGL2}
\end{equation}
\begin{rem}\label{remtopology}
The flag variety $\mbox{\emph{Fl}}_d$ carries a natural topology induced by the natural bijection \ref{flagvarietyGL2}: The standard topology of $\mathbb{C}$ induces a topology on $Gl(d)$ and yields a natural topology (relative topology) on $Gl(d)/B$ and then also on $\mbox{\emph{Fl}}_d$. W.r.t.~this topology the $1-1$-correspondence in (\ref{flagvarietyGL1}) and (\ref{flagvarietyGL2}) is also a homeomorphism.
\end{rem}
\noindent The representation of flags by left cosets/equivalence classes is for our later purpose still too abstract and in the following we will find a `nice' rule for picking a unique representant from each left coset $gB$ to represent its flag. This is realized in the following.
Consider a matrix $g \in Gl(d)$ and express it with coordinate (column) vectors, as $g = \left(\vec{g}_1,\ldots,\vec{g}_d\right)$.
The method of transforming $g$ to its \emph{Column Echelon Form} (CEF) works as follows. In the first step we consider the first coordinate vector $\vec{g}_1$ and denote the position of the first (from below) non-vanishing coordinate by $\alpha_1$ and divide the whole vector $\vec{g}_1$ by the corresponding coordinate. This changes this coordinate vector to the new one $\vec{g}'_1$, that has the form
\begin{equation}
\vec{g}'_1= (\ast,\ldots,\ast,1,\underbrace{0,\ldots,0}_{d-\alpha_1})^T \, .
\end{equation}
Moreover, to finish the first step we add $\mathbb{C}$-multiples of this new vector $\vec{g}'_1$ to all the other $d-1$ coordinate vectors such that all
have a vanishing $(\alpha_1)$-components.
This first change of the matrix $g$ to the new matrix $g'$ can be expressed by multiplying $g$ from the right by an appropriate matrix $b_1 \in B$,
\begin{equation}
g' = (\vec{g}'_1,\ldots,\vec{g}'_d)= \left(\begin{array}{cccc} \vdots & & & \vdots \\ \ast &\ast&\cdots& \ast \\ 1& 0 & \cdots & 0 \\ 0&\ast& \cdots &\ast \\ \vdots &\vdots&&\vdots \\ 0&\ast&\cdots&\ast\end{array}\right) = g \,b_1
\end{equation}
In the second step we consider the vector $\vec{g}'_2$ of the new matrix $g'$ and denote the position of its first (from below) non-vanishing coordinate index by $\alpha_2$ and divide the whole vector by the corresponding coordinate. To finish the second step we add multiples of this new second coordinate vector to the
vectors $\vec{g}'_3,\ldots, \vec{g}'_d$ such that all of them have not only a vanishing $(\alpha_1)$-component but also a vanishing $(\alpha_2)$-component. This second step can also be rephrased by multiplying the matrix, here $g'$, by an appropriate matrix $b_2 \in B$. We end up with a new matrix $g''$ (in the following we present the case $\alpha_1 > \alpha_2$, for the other case $\alpha_1 < \alpha_2$ the matrix looks slightly different)
\begin{equation}
g'' =(\vec{g}''_1,\ldots,\vec{g}''_d)=  \left(\begin{array}{ccccc}
\,\,\vdots\,\, &\,\, \,\,&\,\,\,\,&\,\, \,\, &\,\, \vdots \,\,\\
\,\,\ast\,\, &\,\,\ast\,\,&\,\,\cdots\,\,&\,\, \,\,&\,\,\ast \,\,\\
\,\,1\,\,&\,\, 0 \,\,&\,\, \cdots \,\,&\, \,\,\,&\,\, 0\,\, \\
\,\,0\,\,&\,\,\ast\,\,&\,\, \cdots \,\,&\,\,\,\, &\,\,\ast\,\, \\
\,\,\vdots \,\,&\,\,\vdots\,\,&\,\,\,\,&\,\,\,\,&\,\,\vdots\,\, \\
\,\,0\,\, &\,\,1\,\,&\,\,0\,\,&\,\,\cdots \,\,&\,\, 0 \,\,\\
\,\,\,\,&\,\,0\,\,&\,\,\,\,&\,\,\,\,&\,\,\,\, \\
\,\,\vdots \,\,&\,\,\vdots\,\,&\,\,\,\,&\,\,\,\,&\,\,\vdots \,\,\\
\,\,0\,\,&\,\,0\,\,&\,\,\cdots\,\,&\,\,\,\,&\,\,\ast\,\,
\end{array}\right) = g \, b_1 \, b_2 \,.
\end{equation}
We go on with this procedure up to the very last column vector and end up with a final matrix $g^{c}$ given by
\begin{equation}
g^{c}= g \, b_1 \cdot \ldots \cdot b_{d} = g \, b
\end{equation}
with $b_1,\ldots, b_d \in B$ and hence $b_1 \cdot \ldots \cdot b_d =: b \in B$. It is worth making a comment on the possibility to finish this procedure up to the very last column vector. First of all, each of these $d$ steps leads to a unique result under the condition that the corresponding vector of investigation has still a non-vanishing component. That this is the case, in particular also for the last column vector (which consists of at least $d-1$ zeros) is clear since the procedure does not change the rank of the matrix $g$, which was maximal, namely $d$. Hence, none of the vectors could ever become identical to the zero vector.
\begin{floatingfigure}[h]{6cm}
\hspace{-0.6cm}
\includegraphics[width=5.9cm]{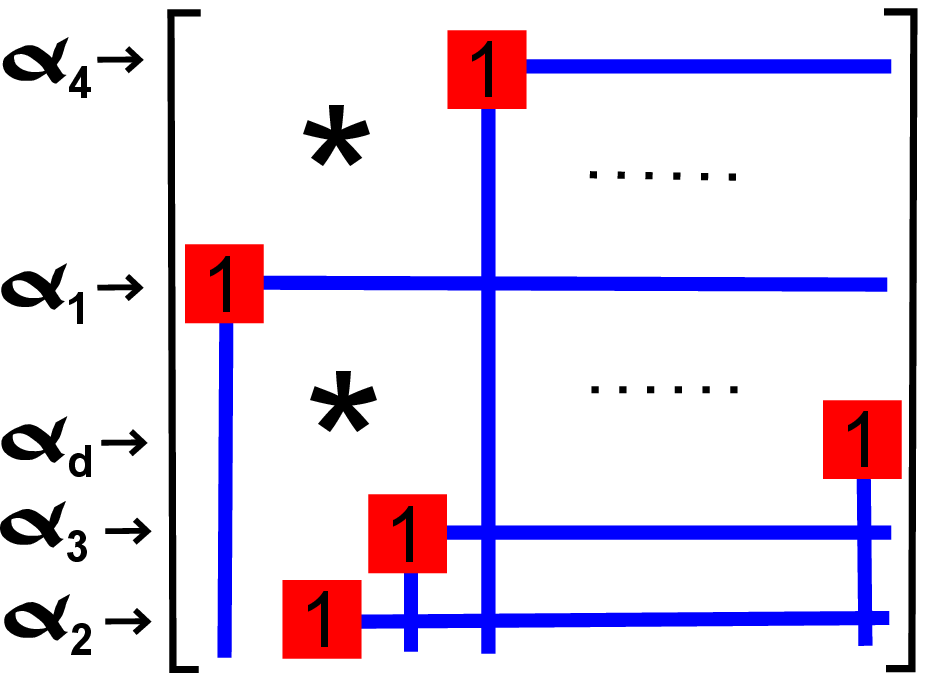}
\captionC{Illustration of the CEF (see text) and the concept of pivots.}
\label{figurepivotsFlag}
\end{floatingfigure}

The final form $g^C$ is called \emph{Column Echelon Form} (CEF) and has the property that every column has a characteristic $1$ that is followed only by zeros by going downwards in the column and also rightwards in the corresponding row. All the other entries are arbitrary $\mathbb{C}$-numbers (e.g.~also $1$ or $0$). The characteristic $1$'s are called \emph{pivots} and their row and column coordinate can be represented in form of a permutation
$\alpha = (\alpha_1, \ldots,\alpha_d)$, where $\alpha_k$ is the position of the pivot in the $k$-th column vector. All this aspects are also presented in Fig.~\ref{figurepivotsFlag}. There and in following we use squared brackets to refer to the equivalence class represented by the matrix form inside the brackets and round brackets to refer to matrices. The $d$ pivots are marked in red. They `look' to the right and also downwards, which is indicated by blue lines. All entries on these lines are zero.

Nevertheless the most important aspect is that this unique procedure maps a given matrix $g \in Gl(d)$ to another matrix $g^c \in Gl(d)$ belonging to the same equivalence class $g B$, i.e.~both represent the same flag! To conclude that the Column Echelon Form defines a \emph{unique} representant for every equivalence class $gB$ we still need the Lemma presented in the Appendix \ref{Schubertcalculus}, which states that all matrices belonging to the same equivalence class also have the same CEF, which then makes the $1-1$ correspondence (\ref{flagvarietyGL2}) more concrete. By introducing the subset $Gl(d)^{CEF} \subset Gl(d)$ of CEF's this means
\begin{equation}\label{flagvarietyGL3}
\mbox{\emph{Fl}}_d \cong Gl(d)^{CEF} \,.
\end{equation}
This means that every complete flag is in a $1-1$-correspondence to a CEF as shown in Fig.~\ref{figurepivotsFlag}, which is completely determined by a permutation $\alpha$ and $\mathbb{C}$-values for the free variables indicated by stars $\ast$.
\begin{rem}
The concept of CEF also allows us to determine the complex dimension of flag manifolds. $Gl(d)$ as a subspace of $\mathbb{C}^{d\times d}$ is $d^2$ dimensional.
Dividing by $B$ in form of the CEF means to reduce the dimension in the first column by 1, in the second column by 2 and so on. This leads to the dimension of $\emph{\mbox{Fl}}_d$, $\mbox{dim}(\emph{\mbox{Fl}}_d) = d^2-\binom{d+1}{2}=\binom{d}{2}$.
\end{rem}

\noindent Moreover we claim
\begin{lem}\label{manifoldatlas}
The flag variety $\mbox{\emph{Fl}}_d$ (with its natural topology) is a differentiable manifold and $A = \{X_{\perp}(F_{\bullet})|F_{\bullet} \in \mbox{\emph{Fl}}_d  \}$ is an atlas on $\mbox{\emph{Fl}}_d$ with complex dimension $\binom{d}{2}$.
\end{lem}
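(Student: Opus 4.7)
The plan is to use the column echelon form to exhibit each $X_\perp(F_\bullet)$ as an explicit affine chart homeomorphic to $\mathbb{C}^{\binom{d}{2}}$. Once this is done, the fact that every flag lies in at least one such chart, together with rationality of the transition maps, will simultaneously yield the atlas axioms, the complex manifold structure, and the dimension count.

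The central observation is that $X_\perp(F_\bullet)$ is precisely the ``big cell'' of the CEF decomposition carried out in a basis adapted to $F_\bullet$. Concretely, choose an orthonormal basis $(f_1,\ldots,f_d)$ with $F_k=\langle f_1,\ldots,f_k\rangle$, represent a second flag $G_\bullet$ by the columns of a matrix in $Gl(d)$ written in this basis, and run the CEF algorithm. The transversality condition $\mbox{dim}(F_i\cap G_j)=\max(i+j-d,0)$ then translates into the statement that the resulting pivot permutation $\alpha$ equals the unique value $\alpha^\ast$ for which the number of free $\ast$-entries is maximal. By the remark following (\ref{flagvarietyGL3}), that number is $\binom{d}{2}$, so one obtains a bijection $\varphi_{F_\bullet}:X_\perp(F_\bullet)\to\mathbb{C}^{\binom{d}{2}}$; via Remark \ref{remtopology} and the homeomorphism (\ref{flagvarietyGL3}), $\varphi_{F_\bullet}$ is a homeomorphism onto an open subset of $\mbox{Fl}_d$.

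To finish one verifies the two atlas axioms. Covering is immediate: for any $G_\bullet\in\mbox{Fl}_d$, the complementary flag $G_\bullet^\perp$ defined earlier in Section \ref{flagvarieties} satisfies $G_\bullet\in X_\perp(G_\bullet^\perp)$ by construction. For smoothness of transitions, given two reference flags $F_\bullet, F_\bullet'$ and a flag $G_\bullet$ in the overlap, the composition $\varphi_{F_\bullet'}\circ\varphi_{F_\bullet}^{-1}$ is obtained by applying the fixed invertible linear change of coordinates from $(f_i)$ to $(f_i')$ to the CEF representative of $G_\bullet$ and then re-reducing to CEF. Both operations use only additions, scalar multiplications, and divisions by pivot entries, and these pivot entries are nonzero precisely on the overlap; the transition map is therefore rational, hence holomorphic and in particular smooth, as a map between open subsets of $\mathbb{C}^{\binom{d}{2}}$. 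The main technical obstacle is the combinatorial step in the second paragraph, namely verifying that $\alpha^\ast$ is unique and that nonvanishing of the pivot entries in the chart $\varphi_{F_\bullet'}$ really does cut out $X_\perp(F_\bullet)\cap X_\perp(F_\bullet')$; once this matching between the transversality condition and the big cell is nailed down, the dimension count $\mbox{dim}_\mathbb{C}(\mbox{Fl}_d)=\binom{d}{2}$ and the smoothness of $A$ as an atlas follow directly from the CEF formulas.
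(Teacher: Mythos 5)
Your proposal follows essentially the same route as the paper: both identify $X_{\perp}(F_{\bullet})$ with the big cell of the CEF decomposition in a basis adapted to $F_{\bullet}$ (the paper does this via the explicit anti-triangular chart $\Phi$ for the standard flag, proved transversal in its appendix, and transports it by $Gl(d)$ elements $A_F\circ\Phi$), obtaining the covering from $G_{\bullet}\in X_{\perp}(G_{\bullet}^{\perp})$ and the dimension count $\binom{d}{2}$ from the free $\ast$-entries. Your explicit justification that the transition maps are rational in the CEF entries with pivot denominators nonvanishing on overlaps is a slightly more detailed version of the paper's bare assertion that the chart changes are $\mathcal{C}^{\infty}$, but it is not a different argument.
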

\noindent We present the main ideas of the proof.
\begin{proof}
Since $F_{\bullet} \in X_{\perp}(F_{\bullet}^{\perp})$, $\{X_{\perp}(F_{\bullet})|F_{\bullet} \in \mbox{\emph{Fl}}_d  \}$ is a covering of  $\mbox{\emph{Fl}}_d $ and it can be shown that $\{X_{\perp}(F_{\bullet})|F_{\bullet}\in \mbox{\emph{Fl}}_d  \}$ is open.  Now, we verify that $X_{\perp}(F_{\bullet})$ is homeomorphic to a subset of $\mathbb{C}^{\binom{d}{2}}$, by stating the homeomorphism explicitly. Let $\{e_1,\ldots,e_d\}$ be the orthonormal reference basis. Let $E_{\bullet}$ be the flag with linear subspaces $E_k=\langle e_1,\ldots,e_k \rangle$. The map
\begin{eqnarray}
\Phi:\mathbb{C}^{\binom{d}{2}} &\rightarrow&  X_{\perp}(F_{\bullet}) \nonumber \\
(\ast,\ldots,\ast) &\mapsto& \left( \begin{array}{cccc} \ast & \cdots & \ast & 1 \\ \vdots & &  $\reflectbox{$\ddots$}$  & 0 \\ \ast & 1 &   & \vdots \\ 1 & 0 & \cdots & 0 \end{array} \right)
\end{eqnarray}
is an bijection between $\mathbb{C}^{\binom{d}{2}}$ and $X_{\perp}(E_{\bullet})$ (in the spirit of (\ref{flagvarietyGL3})) and it turns out to be also homeomorphic. For given flags $F_{\bullet}, G_{\bullet}$ there exist $A_F, A_G \in Gl(d)$ such that $E_{\bullet} = A_F F_{\bullet} = A_G G_{\bullet}$ and the desired homeomorphisms are $A_F \circ \Phi : \mathbb{C}^{\binom{d}{2}}\rightarrow X_{\perp}(F_{\bullet})$ and $A_G \circ \Phi : \mathbb{C}^{\binom{d}{2}}\rightarrow X_{\perp}(G_{\bullet})$ and they are $\mathcal{C}^{\infty}$ on the preimage of the overlap of two open sets.
\end{proof}

As already pointed out in Sec.~\ref{variationalprin} we are in particular interested in Schubert cells of the flag variety, which play a central role in the variational principle \ref{HerschZwahlenAdv}. Therefore, we first would like to understand the concept of Schubert cells $X_{\alpha}^{\circ}(F_{\bullet})$ introduced in \ref{SchubertcellGr} in the language of CEF. It is obvious that $X_{\alpha}^{\circ}(F_{\bullet})$ is represented by the family of CEF with fixed pivot structure $\alpha$ (see Fig.~\ref{figurepivotsFlag}) and the stars are understood as complex variables. This means that after fixing $\alpha$ (pivot structure) in \ref{figurepivotsFlag} every flag $F_{\bullet} \in X_{\alpha}^{\circ}(F_{\bullet})$ is represented by a corresponding CEF with fixed complex values for the stars and every CEF with fixed values for the stars represents a concrete flag in this Schubert cell. Due to its presentational advantage we will often use the CEF-representation of Schubert cells.
\begin{lem}
Every Schubert cell $X_{\alpha}^{\circ}(F_{\bullet})$ is homeomorphic to $\mathbb{C}^{l(\alpha)}$, where $l$ denotes the length of permutations.
\end{lem}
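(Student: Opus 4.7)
The plan is to exploit the CEF representation built up in the previous pages to describe each Schubert cell as an explicit complex affine space, and then count the free parameters.

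First I would reduce to a reference flag. Since $\mbox{GL}(d)$ acts transitively on $\mbox{\emph{Fl}}_d$ and preserves all intersection dimensions, choosing $A\in\mbox{GL}(d)$ with $A(F_\bullet)=E_\bullet$ for the standard flag $E_q=\langle e_1,\ldots,e_q\rangle$ yields a homeomorphism $X^\circ_\alpha(F_\bullet)\to X^\circ_\alpha(E_\bullet)$. Hence it suffices to treat $F_\bullet=E_\bullet$.

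Next I would identify $X^\circ_\alpha(E_\bullet)$ with the set of CEF matrices whose pivot pattern is exactly $\alpha$. Given such a CEF with columns $\vec g_1,\ldots,\vec g_d$ generating $V_\bullet$, the submatrix formed by the first $p$ columns and the rows $q+1,\ldots,d$ has rank $\#\{i\le p:\alpha_i>q\}$: columns with $\alpha_i\le q$ contribute only zeros below row $q$, while the remaining columns form a lower-triangular block with unit diagonal (after reordering) due to the zeros forced by pivots to the left. This gives $\dim(V_p\cap E_q)=p-\#\{i\le p:\alpha_i>q\}=\#\{i\le p:\alpha_i\le q\}$, which is precisely the defining condition of the Schubert cell; the converse implication follows from the uniqueness of the CEF representative of a given flag, (\ref{flagvarietyGL3}).

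The parameter count is now a short combinatorial exercise. In column $k$ the pivot sits at row $\alpha_k$, entries at rows $>\alpha_k$ vanish by the CEF normalization, and entries at rows $\alpha_{k'}$ with $k'<k$ vanish because they lie to the right of an earlier pivot. The free entries are therefore indexed by $\{1,\ldots,\alpha_k-1\}\setminus\{\alpha_1,\ldots,\alpha_{k-1}\}$, a set of cardinality $(\alpha_k-1)-\#\{k'<k:\alpha_{k'}<\alpha_k\}$. Summing over $k$,
\[
\sum_{k=1}^d(\alpha_k-1)\;-\;\sum_{k=1}^d\#\{k'<k:\alpha_{k'}<\alpha_k\}\;=\;\binom{d}{2}-\Bigl(\binom{d}{2}-l(\alpha)\Bigr)\;=\;l(\alpha),
\]
using $\sum_k\alpha_k=d(d+1)/2$ and the fact that the double sum counts the non-inversions of $\alpha$.

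Finally I would verify that the resulting map $\C^{l(\alpha)}\to X^\circ_\alpha(E_\bullet)$, sending a vector of free parameters to the flag given by the corresponding CEF, is a homeomorphism. Continuity in one direction is immediate from the construction, and continuity of the inverse follows because the CEF of any representative matrix is produced by Gauss elimination—a continuous procedure once the pivot pattern is fixed—and the pivot pattern is locally constant on each Schubert cell. The main hurdle I anticipate is the dimension-versus-pivot equivalence in paragraph two; once that is in hand, the remainder is bookkeeping with the CEF calculus already set up.
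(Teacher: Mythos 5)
Your proposal is correct and follows essentially the same route as the paper: identify $X_{\alpha}^{\circ}(F_{\bullet})$ with the CEF matrices of fixed pivot pattern $\alpha$ and count the free entries (stars), which equals the number of inversions $l(\alpha)$. You merely spell out the steps the paper leaves implicit — the equivalence between pivot pattern and the Schubert-cell dimension conditions, the non-inversion bookkeeping $\binom{d}{2}-\bigl(\binom{d}{2}-l(\alpha)\bigr)=l(\alpha)$, and the continuity of the parametrization and its inverse — all of which check out.
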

The $1-1$-correspondence between $X_{\alpha}^{\circ}(F_{\bullet})$ and the CEF of Fig.~\ref{figurepivotsFlag} with stars as complex variables was already explained. Hence, it is clear that $X_{\alpha}^{\circ}(F_{\bullet})$ is homeomorphic to $\mathbb{C}^r$. It can easily be verified that the correct power is given by $r= l(\alpha)$, which is nothing else but the number of stars in the CEF for fixed pivot structure $\alpha$. Moreover we recall \ref{flagvarietyGL3},
\begin{lem}\label{partitionFl}
\begin{equation}
\mbox{\emph{Fl}}_d = \bigcup_{\alpha \in \mathcal{S}_d }^{\circ}  X_{\alpha}^{\circ}(F_{\bullet})\,.
\end{equation}
\end{lem}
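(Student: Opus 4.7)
The plan is to prove both the covering property and the disjointness by leveraging the Column Echelon Form (CEF) bijection established in (\ref{flagvarietyGL3}). Since the assignment of pivot positions to columns gives a permutation, every CEF canonically carries a label $\alpha \in \mathcal{S}_d$, and the cells $X_\alpha^\circ(F_\bullet)$ are indexed by exactly these permutations. So the strategy is to show that the CEF partition of $Gl(d)^{CEF}$ by pivot structure transports, via the bijection $\mbox{Fl}_d \cong Gl(d)^{CEF}$, to the decomposition of $\mbox{Fl}_d$ into Schubert cells.

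First I would fix a reference flag, which without loss of generality may be taken to be $F_\bullet$ induced by the standard basis $\{e_1,\ldots,e_d\}$ (any other flag is obtained by a $Gl(d)$-transformation that permutes the Schubert stratification accordingly). For a flag $V_\bullet \in \mbox{Fl}_d$, pick any $g \in Gl(d)$ with $\Lambda(g) = V_\bullet$ and reduce to its CEF $g^c$. By the discussion preceding (\ref{flagvarietyGL3}) the pivot sequence $\alpha = (\alpha_1,\ldots,\alpha_d)$ of $g^c$ is a permutation in $\mathcal{S}_d$ (distinct rows are occupied), and by the uniqueness statement from Appendix \ref{Schubertcalculus} this $\alpha$ depends only on $V_\bullet$, not on the representative $g$. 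This already yields the surjectivity/covering part of the claim: $V_\bullet$ lies in (at least) the cell labelled by its unique $\alpha$.

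Second, I would verify that this label $\alpha$ matches the Schubert cell in the sense of Definition \ref{SchubertcellGr}. Writing the CEF columns as $v_1,\ldots,v_d$, we have $V_p = \langle v_1,\ldots,v_p \rangle$ by construction. A linear combination $v = \sum_{k=1}^p c_k v_k$ lies in $F_q = \langle e_1,\ldots,e_q \rangle$ iff its coordinates in rows $q+1,\ldots,d$ vanish; using that $v_k$ has its pivot in row $\alpha_k$ and zeros in the previously-used pivot rows, one checks by induction on $k$ (reading the rows in the order prescribed by the CEF algorithm) that this forces $c_k = 0$ whenever $\alpha_k > q$. Hence $V_p \cap F_q = \langle v_k : k \leq p,\ \alpha_k \leq q \rangle$, which has dimension exactly $\#\{i\leq p : \alpha_i \leq q\}$. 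This is precisely the defining condition of $X_\alpha^\circ(F_\bullet)$, and it identifies the set of flags whose CEF has pivot structure $\alpha$ with $X_\alpha^\circ(F_\bullet)$.

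Finally, disjointness of the union is immediate: the pivot structure of a CEF is uniquely determined by the matrix, and by (\ref{flagvarietyGL3}) the CEF is uniquely determined by the flag, so each $V_\bullet$ belongs to exactly one $X_\alpha^\circ(F_\bullet)$. Combined with the covering step this gives the disjoint union $\mbox{Fl}_d = \bigcup_{\alpha \in \mathcal{S}_d}^{\circ} X_\alpha^\circ(F_\bullet)$. The only genuinely non-routine step is the dimension computation identifying the CEF pivot description with Definition \ref{SchubertcellGr}; the rest is a direct translation via (\ref{flagvarietyGL3}). The main subtlety I would be careful about is the ordering convention for rows (``first from below'' versus ``first from above'') inherited from the CEF algorithm, since it fixes which reference flag $F_\bullet$ the Schubert cells are being taken with respect to.
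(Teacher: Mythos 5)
Your proposal is correct and follows essentially the same route as the paper: the paper deduces the lemma directly from the bijection $\mbox{\emph{Fl}}_d \cong Gl(d)^{CEF}$ of (\ref{flagvarietyGL3}), with each flag's unique CEF carrying a unique pivot permutation $\alpha$, which is exactly your covering-plus-disjointness argument. The only difference is that you explicitly verify that a CEF with pivot structure $\alpha$ satisfies the dimension conditions $\mbox{dim}(V_p\cap F_q)=\#\{i\leq p:\alpha_i\leq q\}$ of Definition \ref{SchubertcellGr}, a step the paper declares obvious, so your write-up is if anything more complete.
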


\noindent For the later purpose (a motivation is given much later) we introduce the \emph{Flag Schubert varieties} $X_{\alpha}(F_{\bullet})$:
\begin{defn}\label{defnFlagSchubert}
The \emph{Flag Schubert varieties} $X_{\alpha}(F_{\bullet})$ are defined as the closures w.r.t.~to the natural topology of the
flag Schubert cells $X_{\alpha}^{\circ}(F_{\bullet})$.
\end{defn}

In the following we need to determine the closure of Schubert cells and present related concepts and results as e.g.~the Bruhat order. For the complete flag variety discussed in this part this is not that easy and we will keep the introduction short. In particular we will skip the proofs. Nevertheless, in the next section we will introduce similar concepts for the Grassmannian and there we will present some details and also some proofs.

Recall that the topology for $Gl(d)$ induces a topology for the flag variety according to Remark \ref{remtopology}. Nevertheless, e.g.~for given Schubert cell $X_{\alpha}^{\circ}(F_{\bullet})$ it is not that obvious how its closure looks like. We would like to give some intuitive understanding for the boundary of a set. Let us consider the family of regular matrices
\begin{equation}
\left( \begin{array}{cc} z & 1 \\ 1&0  \end{array}\right) \qquad,\,\, z\in \mathbb{C}\,,
\end{equation}
representing the elements of the Schubert cell $X_{(2,1)}^{\circ}(F_{\bullet})$. The boundary of this set can be obtained by considering the limit $|z| \rightarrow\infty$. This coincides with our understanding of closure and boundary for Euclidean spaces and hence also for $Gl(d)$. The flag/flags obtained by this limit are defined by all linear subspaces $F_0 \lneq F_1  \lneq F_2$. Here only $F_1$ is non-trivial and we find $F_1 = \left( \begin{array}{c} 1\\ 0 \end{array}\right)$. Hence
\begin{equation}
\lim_{z\rightarrow\infty} \,\left( \begin{array}{cc} z & 1 \\ 1&0  \end{array}\right) \,\,\,\mbox{mod}(B) =  \left( \begin{array}{cc} 1 & 0 \\ 0&1  \end{array}\right)\,\,\,\mbox{mod}(B)\,,
\end{equation}
and thus
\begin{equation}
\overline{\left[ \begin{array}{cc} \ast & 1 \\ 1&0  \end{array}\right]} = \left[ \begin{array}{cc} \ast & 1 \\ 1&0  \end{array}\right] \cup \left[ \begin{array}{cc} 1 & 0 \\ 0&1  \end{array}\right]\,.
\end{equation}
For Hilbert spaces with larger dimensions $d>2$ the form of the closure is not obvious anymore.
For later purpose we state an important but very technical result. Therefore, we first introduce the so-called Bruhat order $\leq$, which is a
partial order on the family of permutations $\alpha \in \mathcal{S}_d$ of $d$ elements. We follow essentially \cite{Fult}.
\begin{defn}\label{defbruhat}
For the group $\mathcal{S}_d$ of permutations we define the Bruhat order $\leq$ by one of the following equivalent definitions 1., 2. and 3.:
For $\alpha, \beta \in \mathcal{S}_d$ we define
\begin{equation}
\alpha \leq \beta \qquad \Leftrightarrow
\end{equation}
\begin{enumerate}
\item there exists a sequence of permutations,
\begin{equation}
\alpha \rightarrow \gamma_1\rightarrow\ldots \gamma_k \rightarrow \beta\,,
\end{equation}
where each permutations is given by applying an appropriate transposition to the previous one and the length of the permutation increases in every step exactly by $1$.
\item $\forall \,1\leq p\leq d:$
\begin{equation}
\{\alpha_1,\ldots,\alpha_p\}_{<}\leq \{\beta_1,\ldots,\beta_p\}_{<} \,,
\end{equation}
where both lists are arranged in increasing order (indicated by the index $<$)
and $\leq$ then means that the $i$-th element of the first set is smaller or equal than the $i$-th element of the second set for all $1\leq i\leq p$.
\item
\begin{equation}
r_{\alpha}(p,q) \geq r_{\beta}(p,q) \qquad \forall 1\leq p,q \leq d\,,
\end{equation}
where we defined
\begin{equation}
r_{\alpha}(p,q) := \#(i\leq p\,:\,\alpha_i \leq q)\,.
\end{equation}
\end{enumerate}
\end{defn}
\noindent We do not verify that all three definition of the Bruhat order given in Definition \ref{defbruhat} are equivalent.
To get an intuitive understanding for the Bruhat order we find possible relations for some concrete permutations.
First, note that
\begin{equation}
(1,2,\ldots,d)\leq \alpha \leq (d,d-1,\ldots,1)\qquad \forall  \alpha \in \mathcal{S}_d
\end{equation}
and
\begin{equation}
\alpha \leq \beta \Rightarrow l(\alpha) \leq l(\beta)\,.
\end{equation}
Consider now $\alpha = (2,1,4,3,5)$ and $\beta = (4,1,2,5,3)$. We find $l(\alpha)= 2 < 4 =l(\beta)$, which implies that if $\alpha$ and $\beta$ are related then
$\alpha \leq \beta$. By applying successively length-increasing transpositions we find the sequence
\begin{equation}
\alpha = (\underline{2},1,\underline{4},3,5) \rightarrow (4,1,2,\underline{3},\underline{5}) \rightarrow (4,1,2,5,3) = \beta\,\,,
\end{equation}
which means $\alpha \leq \beta$. Alternatively we can also confirm this by
\begin{equation}
(r_{\alpha}(p,q)) = \left(\begin{array}{ccccc} 0&1&1&1&1 \\ 1&2&2&2&2\\ 1&2&2&3&3 \\ 1&2&3&4&4\\ 1&2&3&4&5\\  \end{array}\right)      \geq   \left(\begin{array}{ccccc} 0&0&0&0&1 \\ 0&0&1&1&2\\ 0&0&1&2&3 \\ 1&1&2&3&4\\ 1&2&3&4&5\\  \end{array}\right)     = (r_{\beta}(p,q))\,.
\end{equation}
On the other hand if we consider $\gamma = (3,2,1,5,4)$ with length $4$ and the same $\alpha$ we find
\begin{equation}
(r_{\alpha}(p,q)) = \left(\begin{array}{ccccc} 0&1&1&1&1 \\ 1&2&2&2&2\\ 1&2&\boxed{2}&3&3 \\ 1&2&3&4&4\\ 1&2&3&4&5\\  \end{array}\right)      \ngeq   \left(\begin{array}{ccccc} 0&0&1&1&1 \\ 0&1&2&2&2\\ 1&2&\boxed{3}&3&3 \\ 1&2&3&3&4\\ 1&2&3&4&5\\  \end{array}\right)     = (r_{\gamma}(p,q))\,,
\end{equation}
which means $\alpha$ and $\beta$ are not related.
By the use of the Bruhat order we state (without any proof):
\begin{lem}\label{partitionFlSchubert}
Each Schubert variety $X_{\alpha}(F_{\bullet})$ can be decomposed according
\begin{equation}
X_{\alpha}(F_{\bullet}) = \bigcup_{\beta \leq \alpha}^{\circ}  X_{\beta}^{\circ}(F_{\bullet})\,.
\end{equation}
\end{lem}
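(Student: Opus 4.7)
The plan is to establish the claimed equality by proving both inclusions separately. Disjointness of the right-hand union is automatic from Lemma \ref{partitionFl}, which already partitions the entire flag variety into Schubert cells, so the real content is to identify which cells $X_\beta^\circ(F_\bullet)$ fall inside $X_\alpha(F_\bullet)$. The unifying tool is the rank function $r_\alpha(p,q) = \#\{i \leq p : \alpha_i \leq q\}$ appearing in Definition \ref{SchubertcellGr} and in characterization 3 of Definition \ref{defbruhat}: by construction $V_\bullet \in X_\alpha^\circ(F_\bullet)$ iff $\dim(V_p \cap F_q) = r_\alpha(p,q)$ for all $p,q$.

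For the inclusion ``$\subseteq$'', I would introduce the auxiliary set
\[
Y_\alpha := \{ V_\bullet \in \mbox{\emph{Fl}}_d \,\mid\, \dim(V_p \cap F_q) \geq r_\alpha(p,q),\ \forall\, 1 \leq p,q \leq d \}.
\]
Each condition $\dim(V_p \cap F_q) \geq k$ cuts out a closed subset of $\mbox{\emph{Fl}}_d$, since in any CEF-style matrix representation it amounts to the vanishing of finitely many determinantal minors. Hence $Y_\alpha$ is closed, and since it trivially contains $X_\alpha^\circ(F_\bullet)$ we obtain $X_\alpha(F_\bullet) \subseteq Y_\alpha$. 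Decomposing $Y_\alpha$ via Lemma \ref{partitionFl}, a cell $X_\beta^\circ(F_\bullet)$ meets $Y_\alpha$ precisely when $r_\beta(p,q) \geq r_\alpha(p,q)$ for every $p,q$, which by characterization 3 of Definition \ref{defbruhat} is exactly $\beta \leq \alpha$.

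For the reverse inclusion, I would show that $X_\beta^\circ(F_\bullet) \subseteq X_\alpha(F_\bullet)$ whenever $\beta \leq \alpha$. By characterization 1 of the Bruhat order it suffices to treat the elementary case where $\alpha$ is obtained from $\beta$ by a single length-increasing transposition, since iterated closure then yields the full statement. For this elementary step I would work inside the CEF representation of Figure \ref{figurepivotsFlag}: starting from a generic point of $X_\alpha^\circ(F_\bullet)$ and rescaling the star-entries lying on the rows and columns of the two swapped pivots along a one-parameter curve, I would show that the limit matrix (after renormalization to CEF) carries pivot positions matching exactly those of $\beta$. The two-dimensional calculation given in the text, where $[\begin{smallmatrix}z & 1\\ 1 & 0\end{smallmatrix}]$ degenerates to the identity cell as $|z|\to \infty$, is the prototype to be generalized.

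The main obstacle will be precisely this reverse inclusion: the deformation must be constructed so that the rescaling limit lies in the \emph{intended} Schubert cell, and one must verify that the renormalized limit matrix is genuinely in CEF with permutation $\beta$ rather than some other predecessor of $\alpha$. This is essentially the classical theorem of Ehresmann on Schubert cell closures, and the combinatorial bookkeeping is noticeably more delicate than the closed-condition argument that handles the first inclusion.
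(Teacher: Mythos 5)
The paper itself states this lemma explicitly ``without any proof'', so there is no in-text argument to compare yours against; your proposal has to stand on its own. Your forward inclusion is correct and essentially complete: the set $Y_\alpha$ of flags with $\dim(V_p\cap F_q)\geq r_\alpha(p,q)$ for all $p,q$ is cut out by rank conditions (vanishing of minors), hence closed and saturated under the equivalence defining $\mbox{\emph{Fl}}_d$, it contains $X_\alpha^{\circ}(F_{\bullet})$ by Definition \ref{SchubertcellGr}, and intersecting it with the cell decomposition of Lemma \ref{partitionFl} together with characterization 3 of Definition \ref{defbruhat} singles out exactly the cells with $\beta\leq\alpha$. This is the standard argument and is the flag-variety analogue of the inequality description appearing in Lemma \ref{Schubertvarieties} for the Grassmannian.

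The genuine gap is the reverse inclusion, which you correctly flag as the main obstacle but do not carry out, and your strategy as described is not yet sufficient even as a plan. Degenerating a generic point of $X_\alpha^{\circ}(F_{\bullet})$ along a one-parameter rescaling and checking that the renormalized limit has pivot pattern $\beta$ only shows that $\overline{X_\alpha^{\circ}(F_{\bullet})}$ \emph{meets} $X_\beta^{\circ}(F_{\bullet})$; the lemma requires $X_\beta^{\circ}(F_{\bullet})\subseteq \overline{X_\alpha^{\circ}(F_{\bullet})}$, i.e.\ that \emph{every} point of the smaller cell arises as such a limit. To close this you must either (a) keep all remaining star-entries as free parameters and verify, for each covering relation $\beta\lessdot\alpha$, that the limits of your one-parameter families sweep out the entire cell $X_\beta^{\circ}(F_{\bullet})$ (then chain closures as you indicate), or (b) invoke that each cell is a single orbit of the stabilizer of $F_{\bullet}$ (a Borel subgroup under $\mbox{\emph{Fl}}_d\cong Gl(d)/B$) and that the closure of an orbit is stable under that group, so meeting a cell implies containing it -- a fact the paper nowhere establishes and which you would have to prove separately. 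Either route is the actual content of Ehresmann's theorem, and until one of them is executed you have proved the equality only in one direction.
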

\begin{rem}
Lem.~\ref{partitionFlSchubert} in particular states
\begin{equation}
\beta \leq \alpha \qquad  \Rightarrow \qquad   X_{\beta}(F_{\bullet}) \subset X_{\alpha}(F_{\bullet})\,.
\end{equation}
\end{rem}

\subsection{Grassmannians}\label{grassmannian}
In this section we recall the definition of the \emph{complex Grassmannian} (see also (\ref{GrFlagPredefinition}) in Sec.~\ref{variationalprin}) and study its mathematical structure. We follow partially \cite{BlBr} and \cite{Blas}. There are different ways of introducing the Grassmannian depending on own preferences. They may be based on set theoretical aspects or also on further geometric and algebraic aspects.
\begin{defn}
Let $\mathcal{H}$ be a $d$-dimensional complex Hilbert space. The (complex) Grassmannian $Gr_{k,d}$ is defined as the set of $k$-dimensional subspaces in $\mathcal{H}$,
\begin{equation}
Gr_{k,d} := \{V \leq \mathcal{H} | \mbox{dim}(V) = k\} \,.
\end{equation}
To endow the Grassmannian with a geometric structure we can alternatively define $Gr_{k,d}:= Gl(d)/H$, where $H$ is the corresponding stabilizer of the transitive group action given by the general linear group $Gl(d)$ on $\mathcal{H}$ keeping the corresponding linear subspace invariant or alternatively  $Gr_{k,d}:= SU(d)/(SU(k) \times SU(d-k))$, where $SU$ is the special unitary group (over the complex field).
\end{defn}
It is convenient to find a concrete representation of elements in $Gr_{k,d}$ by regular matrices $M \in Gl(d)$ as already pointed out in the introduction of generalized flag varieties at the beginning of Sec.~\ref{genflagvarieties}. The first $k$ column coordinate vectors then span (w.r.t.~to a fixed basis) the $k$-dimensional subspace $V \in Gr_{k,d}$. The remaining $d-k$ column coordinate vectors $\vec{g}_{k+1},\ldots, \vec{g}_d$ then complete $V$ to the whole ambient space $\mathcal{H}$. To represent $V$ they are irrelevant and therefore we skip them and represent the Grassmannian by the family $\mathcal{M}_{k,d} \subset \mathbb{C}^{d\times k}$ of rank $k$ matrices.
Since we already introduced the CEF with all details and proofs to describe complete flags, we present shortly the analogous concepts for the Grassmannian. As indicated above, after fixing an orthonormal basis for $\mathcal{H}$, every $k$-dimensional subspace $V$ can be represented by a $d \times k$ matrix $g$ of the form
\begin{equation}
g = \left( \vec{g}_1 , \ldots,  \vec{g}_k \right) \,,
\end{equation}
with column coordinate vectors representing a basis $g_1,\ldots, g_k$ for $V$. For a given point $V \in Gr_{k,d}$ this description is not unique: Changing the column vectors $\vec{g}_j$ by scalar multiplication and addition of arbitrary linear combination of the other column vectors does not change the subspace $V$. Identifying two matrices $g$ and $h$, $g \sim h$, if they can be transformed to each other by this algebraic process the Grassmannian is given by
\begin{equation}\label{matrixgras}
Gr_{k,d} \cong \mathcal{M}_{k,d} / \sim \,.
\end{equation}
\\
\begin{floatingfigure}[h]{5cm}
\hspace{-0.5cm}
\includegraphics[width=4.9cm]{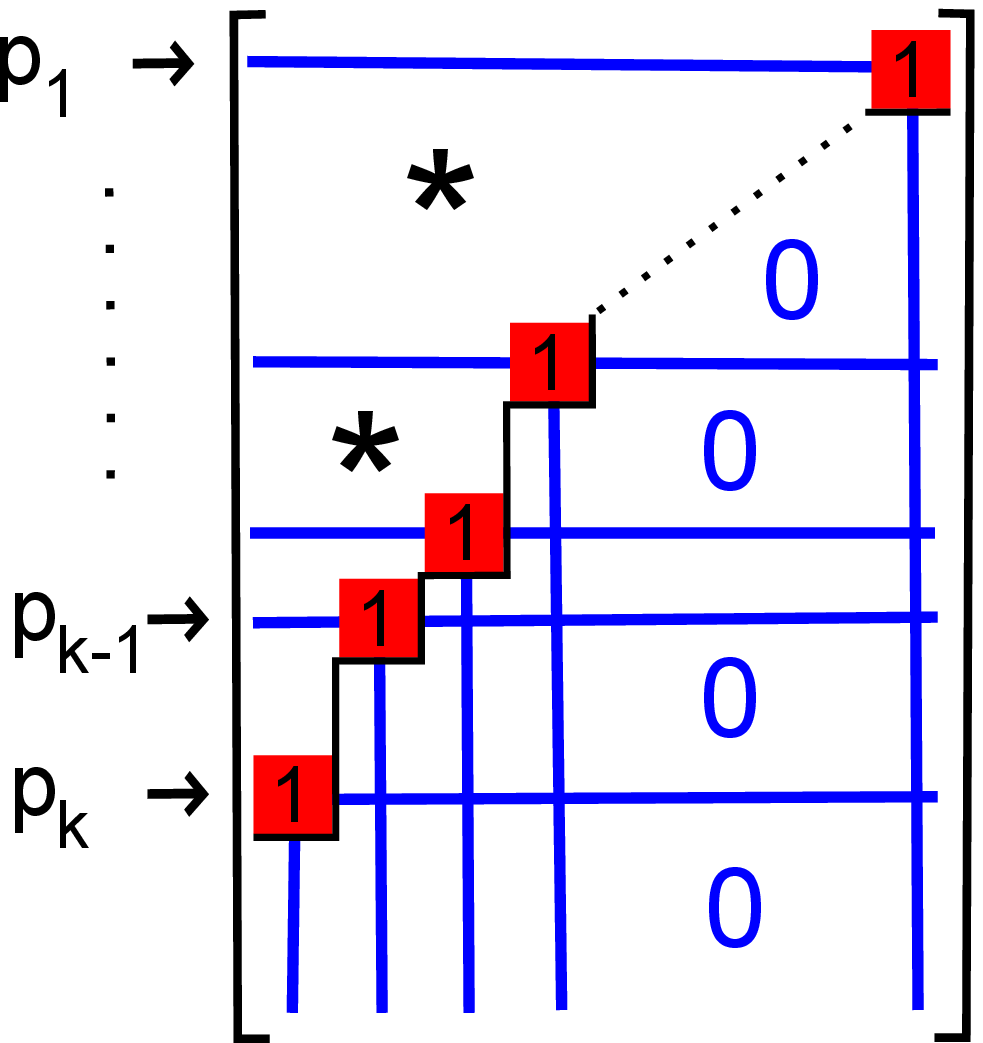}
\captionC{Illustration of the strict CEF (see text) and the concept of pivots.}
\label{figurepivots}
\end{floatingfigure}

We can define in a very similar way as done for the flag variety a column echelon form (see Fig.~\ref{figurepivots}). Here we call it \emph{strict Column Echelon Form} (sCEF) since we can additionally arrange the pivots in an increasing order. In every column $\vec{g}_i$ there is one characteristic $1$, called \emph{pivot}, at a position $p_{k-i+1} \in\{1,\ldots,d\}$, which is the first (from below) non-vanishing entry in the column. By adding column vector $\vec{g}_i$ with appropriate weights to all the other $k-1$ column vectors we can guarantee that they have a zero entry at the position $p_{k-i+1}$ of the pivot of vector $\vec{g}_i$. Moreover we can permute all column vectors to obtain the pivot structure shown in Fig.~\ref{figurepivots}. All these transformations of the representing matrix do not change the vector space that it is representing. We say that every pivot looks in three direction, to the right and left side and downwards (remember in the CEF the pivots were not looking to the left side), which should mean that in these directions the matrix entries are zeros.

It is easy to verify that the sCEF yields a unique representant for every equivalence class in $  \mathcal{M}_{k,d}/\sim$. The remaining entries are arbitrary fixed complex numbers represented by stars. In the following, for a given matrix in sCEF we describe its pivots structure by $k$ integers $1\leq p_1 < \ldots < p_k \leq d $, which means that the pivot of column $i$ is at position $p_{k-i+1}$. We simply call $p = (p_1,\ldots,p_k)$ pivot structure or position of the pivots.
Moreover we can define a natural forgetful map $P_k : \mbox{\emph{Fl}}_k \rightarrow Gr_{k,d},\,k\leq d$, which maps a flag $F_{\bullet}$ to its $k$-dimensional subspace $F_k$. This map is surjective and smooth and it also assigns a topology to the Grassmannian. Lem.~\ref{manifoldatlas} (see Sec.~\ref{flagvarieties}) then implies that the Grassmannian is also a differentiable manifold.

In the following we consider again the concept of (Grassmannian) Schubert cells introduced in Definition \ref{SchubertcellGr} and will see that all elements of a given Schubert cells represented in sCEF have the same pivot structure.
Consider an element $V \in Gr_{k,d}$ represented in sCEF with the pivot at positions $p_1,\ldots, p_k$, where the external orthonormal basis $\{f_1,\ldots,f_d\}$ is given by the complete flag $F_{\bullet}$ labeling the corresponding Schubert cell $S_{\pi}^{\circ}(F_{\bullet})$. This should mean that the vectors $g_i$ defining the $k$-dimensional subspace $V$ are given by
\begin{equation}
g_i = \sum_{j=1}^d \, (\vec{g}_i)_j \, f_j \,.
\end{equation}
By reconsidering Definition \ref{SchubertcellGr} we easily see that $V$ with the pivot structure $p = (p_1,\ldots,p_k)$ belongs to the Schubert cell $S_{\pi}^{\circ}(F_{\bullet})$ with $\pi = \pi(p)$,
\begin{equation}\label{pivotbinary1}
\pi_j=\begin{cases}
1,  & \mbox{if} \qquad j \in \{p_1,\ldots,p_k\}\\
  0 & \mbox{if} \qquad j \not \in \{p_1,\ldots,p_k\} \,.
\end{cases}
\end{equation}
Hence, all matrices with the same pivot structure belong to the same Schubert cell. The converse also holds since relation (\ref{pivotbinary1}) can be inverted
and we find
\begin{equation}
p_k = \mbox{position of the $k$-th $1$ in $\pi = (\pi_1,\ldots,\pi_d) $} \,,
\end{equation}
i.e.~the pivots have the same positions as the $1$'s in $\pi$.
By interpreting all the stars in the matrix shown in Fig.~\ref{figurepivots} as complex degrees of freedom instead of fixed numbers this matrix
is then exactly the sCEF-representation of the corresponding Schubert cell $S_{\pi(p)}^{\circ}(F_{\bullet})$.

For the later purpose we still introduce a bijection between all binary sequences $\pi = (\pi_1,\ldots,\pi_d)$ with fixed weight $k$ and the family of Young diagrams $\alpha= (\alpha_1,\ldots,\alpha_k)$ by
\begin{equation}
\pi \mapsto \alpha = (\alpha_1,\ldots, \alpha_k) \qquad,\, \mbox{with} \,\,\,\alpha_i := p_{k+1-i}(\pi) - (k+1-i) \,.
\end{equation}
This defines a $1-1$ map between $\{\pi \in \{0,1\}^d\,|\, \|\pi\|_1=k\}$ and the Young diagrams $\alpha$ that fit into a $k \times (d-k)$ rectangle. Then, we observe,
\begin{equation}
S_{\alpha}^{\circ}(F_{\bullet}):= S_{\pi(\alpha)}^{\circ}(F_{\bullet}) = \{ V\leq\mathcal{H}\,|\, \forall i, \alpha_{k+1-j}+j \leq i < \alpha_{k-j}+j+1: \mbox{dim}(V\cap F_i)= j \}
\end{equation}

Let us summarize these insights. We found a concrete description of Schubert cells by sCEF and fixed pivot structures and also a $1-1$ map between Young diagrams contained in the $k \times (d-k)$ rectangle and the family of Schubert cells of the Grassmannian $Gr_{k,d}$. Moreover we state
\begin{lem}\label{homeo}
Every Grassmannian Schubert cell $S_{\alpha}^{\circ}(F_{\bullet})$ is homeomorphic to $\mathbb{C}^{\|\alpha\|_1}$\,.
\end{lem}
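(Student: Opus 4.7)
The plan is to exploit the sCEF description worked out just before the lemma. First I would observe that the Schubert cell $S_{\alpha}^{\circ}(F_{\bullet})$ is in bijection with the set of sCEF-matrices whose pivot positions are $p=p(\alpha)$ determined via $p_{k+1-i}=\alpha_i+(k+1-i)$, the stars being viewed as free complex parameters. Since the sCEF supplies a \emph{unique} representative of every equivalence class in $\mathcal{M}_{k,d}/{\sim}$, inserting the stars into the template defines a bijection
\begin{equation}
\Phi:\mathbb{C}^{N_\alpha}\longrightarrow S_{\alpha}^{\circ}(F_{\bullet}),
\end{equation}
where $N_\alpha$ is the number of star entries in the fixed template.

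The second step is a dimension count showing $N_\alpha=\|\alpha\|_1$. Consider the $i$-th column of the sCEF; its pivot lies in row $p_{k-i+1}$. All $d-p_{k-i+1}$ entries below the pivot are zero (pivots look down). Among the $p_{k-i+1}-1$ positions above the pivot, exactly those in the pivot-rows of the columns to the right of column $i$ are zero (pivots in columns $i+1,\ldots,k$ look leftwards through rows $p_1,\ldots,p_{k-i}$). Hence the number of free stars in column $i$ is
\begin{equation}
(p_{k-i+1}-1)-(k-i)=p_{k-i+1}-(k+1-i)=\alpha_i,
\end{equation}
and summing over $i$ gives $N_\alpha=\sum_{i=1}^k\alpha_i=\|\alpha\|_1$.

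Third, I would promote the bijection $\Phi$ to a homeomorphism. Continuity of $\Phi$ is immediate because it factors as the affine embedding $\mathbb{C}^{\|\alpha\|_1}\hookrightarrow\mathcal{M}_{k,d}$ (inserting the stars into the template) followed by the quotient map $\mathcal{M}_{k,d}\to Gr_{k,d}$, both continuous in the natural topologies. For continuity of $\Phi^{-1}$, one argues that the reduction of any representative matrix to its sCEF is performed by elementary column operations whose coefficients are rational functions of the matrix entries, with denominators given by certain top-justified minors; on the locally closed subset of $\mathcal{M}_{k,d}$ whose image is $S_{\alpha}^{\circ}(F_{\bullet})$ these minors are nonzero, so the extraction of the stars is continuous and descends through the quotient to a continuous inverse.

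The main obstacle is the combinatorial bookkeeping: keeping the three indexings (binary sequence $\pi$, pivot tuple $p$, Young diagram $\alpha$) consistent so that the star-count really collapses to $\|\alpha\|_1$ and matches the conversion rule $\alpha_i=p_{k+1-i}-(k+1-i)$ stated above the lemma. The topological half is essentially routine once the combinatorial translation is nailed down.
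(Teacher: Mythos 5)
Your argument is correct and follows the same sCEF-template route the paper takes: your per-column star count $(p_{k-i+1}-1)-(k-i)=\alpha_i$ matches the paper's conversion rule $\alpha_i=p_{k+1-i}-(k+1-i)$ exactly, so the parameter space is indeed $\mathbb{C}^{\|\alpha\|_1}$. You in fact go further than the paper, which only recalls the already-established bijection and explicitly skips the proof that it is a homeomorphism; your factorization of $\Phi$ through the affine embedding and the quotient map, together with the continuity of the sCEF extraction on the cell's preimage (the relevant denominators being the $k\times k$ minors in the pivot rows, which are nonzero there), supplies precisely the topological step the paper omits.
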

The $1-1$ correspondence was already shown. The additional topological structure of the bijection is not that obvious and we skip its proof.
For the later purpose (a motivation is given much later) we introduce the \emph{Grassmannian Schubert varieties} $S_{\alpha}(F_{\bullet})$:
\begin{defn}\label{defnGrSchubert}
The \emph{Grassmannian Schubert varieties} $S_{\alpha}(F_{\bullet})$ are defined as the closures w.r.t.~to the natural topology of the
Grassmannian Schubert cells $S_{\alpha}^{\circ}(F_{\bullet})$.
\end{defn}
Since all Schubert cells are disjoint, the following lemma is trivial
\begin{lem}\label{partition}
\begin{equation}
Gr_{k,d} = \bigcup_{\begin{array}{c}\beta \subset k \times (d-k) 
\end{array}}^{\circ} S_{\beta}^{\circ}(F_{\bullet})
\end{equation}
\end{lem}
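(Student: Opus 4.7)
The plan is to verify the two things that ``$\bigcup^{\circ}$'' encodes: (i) every $V \in Gr_{k,d}$ lies in some $S_{\beta}^{\circ}(F_{\bullet})$ with $\beta \subset k\times (d-k)$, and (ii) the cells for distinct $\beta$ are pairwise disjoint. Both facts are already implicit in the sCEF discussion, so the proof is really just an accounting argument; I will make it explicit.

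For (i), I would invoke the sCEF representation established in Sec.~\ref{grassmannian}. Pick any $V \in Gr_{k,d}$ and write it as a $d \times k$ matrix whose columns, expressed in the orthonormal basis $\{f_1,\dots,f_d\}$ associated to the flag $F_{\bullet}$, form a basis of $V$. The sCEF procedure produces a unique representative with a well-defined pivot tuple $1 \leq p_1 < \dots < p_k \leq d$. The associated binary sequence $\pi(p)$ defined by (\ref{pivotbinary1}), and hence the associated Young diagram $\beta$ obtained via $\beta_i := p_{k+1-i} - (k+1-i)$, fits into the $k \times (d-k)$ rectangle because $p_i \leq d-(k-i)$ forces $\beta_i \leq d-k$ and $\beta_1 \geq \beta_2 \geq \dots \geq \beta_k \geq 0$. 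By the characterization of Schubert cells in terms of pivot structure given in the previous paragraphs, $V \in S_{\beta}^{\circ}(F_{\bullet})$.

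For (ii), disjointness is essentially tautological from Definition \ref{SchubertcellGr}: membership of $V$ in $S_{\pi}^{\circ}(F_{\bullet})$ is determined by the sequence $(\dim((V\cap F_i)/(V\cap F_{i-1})))_{i=1}^{d}$, which is a function of $V$ alone and equals $\pi$. Hence if $V \in S_{\pi}^{\circ}(F_{\bullet}) \cap S_{\pi'}^{\circ}(F_{\bullet})$ then $\pi = \pi'$, and consequently $\beta = \beta'$ under the bijection between binary sequences of weight $k$ and Young diagrams inside $k \times (d-k)$.

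Combining (i) and (ii) yields the disjoint-union decomposition. I do not expect a genuine obstacle here; the only minor subtlety is making sure the bijection between pivot structures and Young diagrams really sweeps out all $\beta \subset k \times (d-k)$ (which it does, by construction of the map $\pi \mapsto \beta$), so that the index set on the right-hand side is neither over- nor under-counted.
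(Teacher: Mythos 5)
Your argument is correct and matches the paper's reasoning: the paper dismisses this lemma as trivial precisely because disjointness follows from Definition \ref{SchubertcellGr} (the dimension-jump sequence is determined by $V$ alone) and the covering is implicit in the preceding sCEF/pivot discussion, which you simply make explicit. No gap; you have just written out the accounting the paper leaves to the reader.
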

\noindent Moreover, we state
\begin{lem}\label{Schubertvarieties}
\begin{eqnarray}
S_{\alpha}(F_{\bullet}) & =& \bigcup_{\beta \subset \alpha} S_{\beta}^{\circ}(F_{\bullet}) \nonumber \\
 &= & \{ V\leq\mathcal{H}\,|\, \forall j: \mbox{dim}(V\cap F_{\alpha_{k-1+j}+j}) \geq j \}
\end{eqnarray}
\end{lem}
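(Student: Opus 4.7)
The plan is to work throughout in the sCEF representation of $Gr_{k,d}$ and to handle the two displayed equalities separately. In the first equality the closure is the topologically nontrivial object; in the second the right-hand side is a clean combinatorial description. I will show that both sets coincide with the set of subspaces $V$ whose sCEF pivots $p_1(V)<\ldots<p_k(V)$ satisfy $p_j(V)\leq p_j(\alpha)=\alpha_{k+1-j}+j$ for all $j$, and this intermediate description is the workhorse.

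First I would address the second equality, which is essentially combinatorial. Since every $V\in Gr_{k,d}$ lies in exactly one cell $S_\beta^{\circ}(F_\bullet)$, and $V\in S_\beta^{\circ}(F_\bullet)$ iff the sCEF pivots of $V$ are at rows $p_j(\beta)=\beta_{k+1-j}+j$, the containment $\beta\subset\alpha$ is equivalent to $p_j(\beta)\leq p_j(\alpha)$ for every $j$. On the other hand, from the structure of the sCEF one checks that $\dim(V\cap F_i)=\#\{j:p_j(V)\leq i\}$, so the inequality $\dim(V\cap F_{\alpha_{k+1-j}+j})\geq j$ is exactly $p_j(V)\leq p_j(\alpha)$. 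Combining these two observations immediately gives the second equality.

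Next I would prove the first equality, i.e.\ $S_\alpha(F_\bullet)=\bigcup_{\beta\subset\alpha}S_\beta^{\circ}(F_\bullet)$, by a two-sided inclusion. For $\overline{S_\alpha^{\circ}(F_\bullet)}\subseteq\bigcup_{\beta\subset\alpha}S_\beta^{\circ}(F_\bullet)$, I would use that the map $V\mapsto\dim(V\cap F_i)$ is upper semicontinuous on $Gr_{k,d}$; hence every set $\{V:\dim(V\cap F_i)\geq j\}$ is closed, and so the right-hand side (an intersection of finitely many such sets by Step 1) is closed. Since it obviously contains $S_\alpha^{\circ}(F_\bullet)$ (take $\beta=\alpha$), it contains its closure. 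For the reverse inclusion I would show that every cell $S_\beta^{\circ}(F_\bullet)$ with $\beta\subset\alpha$ lies in $\overline{S_\alpha^{\circ}(F_\bullet)}$. By iterating, it suffices to do this when $\alpha$ is obtained from $\beta$ by adding one corner box, in which case exactly one pivot position $p_j(\beta)$ is replaced by $p_j(\alpha)=p_j(\beta)+1$. Given any $V\in S_\beta^{\circ}(F_\bullet)$, I would exhibit a one-parameter family $V_z\in S_\alpha^{\circ}(F_\bullet)$ in sCEF whose $z\to\infty$ limit (after right-multiplication by a suitable element of the $\sim$-equivalence used in (\ref{matrixgras})) equals $V$; concretely, one inserts a free parameter $z$ at a designated entry of the $\alpha$-sCEF so that column normalization produces the $\beta$-pivot pattern in the limit, exactly as in the $d=2$ example worked out in Section~\ref{flagvarieties}.

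The main obstacle is Step~4, the explicit deformation argument: it requires choosing the correct entry in which to send $z\to\infty$ and verifying that the remaining free entries of the $\alpha$-sCEF can be tuned so that an arbitrary prescribed element of $S_\beta^{\circ}(F_\bullet)$ is obtained in the limit. Once this single step between adjacent Young diagrams is established, iteration along any chain $\beta\subset\beta^{(1)}\subset\ldots\subset\alpha$ completes the proof, and combining Steps~2--4 yields the full statement.
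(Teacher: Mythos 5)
Your plan is correct, and it is in fact more complete than what the paper itself provides: the paper explicitly declines to prove this lemma, offering only the $Gr_{1,d}$ deformation $(\lambda d_1,\ldots,\lambda d_{\beta_1},\lambda,0,\ldots,0,1,0,\ldots)\to(d_1,\ldots,d_{\beta_1},1,0,\ldots)$ as motivation, an informal remark that cells with larger pivot positions cannot be reached asymptotically, the claim that this ``extends to arbitrary $k$,'' and the statement that the second equality is ``easy to show.'' Your reverse inclusion is essentially the paper's route made precise: you organize the deformation along covering relations $\beta\subset\beta^{(1)}\subset\ldots\subset\alpha$ in the Young lattice and insert the parameter $z$ at the row $p_j(\beta)=p_j(\alpha)-1$ of the single column whose pivot moves, which is exactly the mechanism behind both the $Gr_{1,d}$ example and the $d=2$ flag example in Section~\ref{flagvarieties}. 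Where you genuinely depart from the paper is the forward inclusion: instead of arguing case by case that higher cells cannot be reached, you first prove the second equality via the pivot-counting identity $\mbox{dim}(V\cap F_i)=\#\{j:p_j(V)\leq i\}$ (correctly reading the index in the statement as $\alpha_{k+1-j}+j$, the paper's $\alpha_{k-1+j}$ being a typo), and then observe that the resulting description is a finite intersection of rank conditions $\{V:\mbox{dim}(V\cap F_{p_j(\alpha)})\geq j\}$, each closed by upper semicontinuity, so it automatically contains $\overline{S_\alpha^\circ(F_\bullet)}$. This buys a uniform argument valid for all $k$ and simultaneously supplies the ``easy'' second equality the paper skips; the paper's version buys only intuition. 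Two points deserve explicit care when you execute Step~4: the semicontinuity claim should be justified (e.g.\ as vanishing of minors of the map $V\to\mathcal{H}/F_i$ in sCEF coordinates, or via the Pl\"ucker embedding of Section~\ref{pluecker}), and in the limit $z\to\infty$ the matrix you obtain need not be in strict sCEF --- the other columns of the $\alpha$-cell carry a forced zero at row $p_j(\alpha)$ but a free entry at row $p_j(\beta)$, so to hit an arbitrary prescribed $V\in S_\beta^\circ(F_\bullet)$ you must either allow non-normalized representatives of the subspace or re-reduce after the limit, adding suitable multiples of the deforming column; both fixes are routine since membership in a Schubert cell depends only on the subspace, not the representative.
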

\noindent We do not present a complete proof of this lemma, but a very strong motivation for it by referring to our intuitive understanding of the natural topology.
The idea for verifying the first line in Lem.~\ref{Schubertvarieties} is to think of the stars in Fig.~\ref{figurepivots}
as concrete complex numbers, vary some of them and understand which flags (represented in CEF) can
(at least asymptotically) be reached. To make it easier we restrict to the case $Gr_{1,d}$, i.e.~every CEF consists of a $1 \times d$ matrix and $\alpha =(\alpha_1)$.
Consider a point $(d_1,\ldots,d_{\beta_1},1,0,\ldots)$ in the Grassmannian with complex numbers $d_1,\ldots,d_{\beta_1}$ and a `pivot structure' described by the Young diagram $\beta =(\beta_1) \varsupsetneq \alpha$, i.e.~$\beta_1 > \alpha_1$. By varying the $\alpha_1$ complex entries $c_1,\ldots,c_{\alpha_1}$ in $(c_1,\ldots,c_{\alpha_1},1,0,\ldots)$ it won't be possible to reach asymptotically the vector $(d_1,\ldots,d_{\beta_1},1,0,\ldots)$. This is due to the characteristic $1$ at position $\beta_1+1>\alpha_1+1$. On the other hand, if $\beta_1\leq \alpha_1$ and again $d_1,\ldots, d_{\beta_1}$ complex numbers, we observe that
\begin{equation}
(\lambda d_1,\ldots,\lambda d_{\beta_1},\lambda,0,\ldots,0,1_{\alpha_1},0,\ldots) \rightarrow (d_1,\ldots,d_{\beta_1},1,0,\ldots)
\end{equation}
w.r.t.~the natural topology on $Gr_{1,d}$, as $\lambda \rightarrow \infty$.
This explanation can be extended to the Grassmannian $Gr_{k,d}$ for arbitrary $k$.
The second statement in Lem.~\ref{Schubertvarieties} is easy to show and we skip its proof.
\begin{ex} To illustrate the different ways of labeling Schubert cells and also Lem.~\ref{Schubertvarieties} we consider the Grassmannian $Gr_{2,4}$. There are in total six Schubert cells. They are listed in the Tab.~\ref{tab:Gr24}. There we show all the different ways of labeling them, namely by binary sequences $\pi$, partitions $\alpha$/Young diagrams and the strict Column Echelon Form.
\\
\\
\begin{table}[h]
\centering
\setlength{\arraycolsep}{0.2cm}
\renewcommand{\arraystretch}{2.7}
$\begin{array}{|c|c|c|c|c|}
\hline\pi & \alpha & \mbox{sCEF} & \mbox{Young} & \mbox{sCEF of closure} \\ \hline
(0,0,1,1)&(2,2)& \left[\setlength{\arraycolsep}{0.06cm}
\renewcommand{\arraystretch}{0.8
}\begin{array}{cc} \ast&\ast \\ \ast &\ast \\ 0& 1\\ 1&0\end{array}\right] & \begin{array}{c} \vspace{-0.2cm}\small{\yng(2,2)}\end{array} &
\left[\setlength{\arraycolsep}{0.06cm}\renewcommand{\arraystretch}{0.8}\begin{array}{cc} \ast& \ast\\ \ast &\ast \\ 0&1 \\  1&0  \end{array}\right] \cup \left[\setlength{\arraycolsep}{0.06cm}\renewcommand{\arraystretch}{0.8}
\renewcommand{\arraystretch}{0.8}\begin{array}{cc} \ast& \ast\\ 0 & 1\\ \ast& 0\\  1& 0 \end{array}\right]  \cup \left[\setlength{\arraycolsep}{0.06cm}\renewcommand{\arraystretch}{0.8}\begin{array}{cc} 0&1 \\ \ast&0 \\ \ast &0 \\ 1 &0  \end{array}\right] \cup \left[\setlength{\arraycolsep}{0.06cm}\renewcommand{\arraystretch}{0.8}\begin{array}{cc} \ast&\ast \\ 0 &1 \\ 1& 0\\ 0 &0  \end{array}\right]
\cup \left[\setlength{\arraycolsep}{0.06cm}\renewcommand{\arraystretch}{0.8}\begin{array}{cc} 0&1 \\  \ast&0 \\ 1& 0\\ 0 &0  \end{array}\right] \cup \left[\setlength{\arraycolsep}{0.06cm}\renewcommand{\arraystretch}{0.8}\begin{array}{cc} 0& 1\\ 1 &0 \\ 0& 0\\ 0 & 0 \end{array}\right]
\\\hline
(0,1,0,1)&(2,1)& \left[\setlength{\arraycolsep}{0.06cm}\renewcommand{\arraystretch}{0.8}\begin{array}{cc} \ast& \ast\\ \ast & 1\\ 0& 0\\ 1 & 0 \end{array}\right] & \begin{array}{c} \vspace{-0.2cm}\small{\yng(2,1)}\end{array}&
\left[\setlength{\arraycolsep}{0.06cm}\renewcommand{\arraystretch}{0.8}\begin{array}{cc} \ast&\ast \\  0&1 \\ \ast& 0\\ 1 &0  \end{array}\right] \cup \left[\setlength{\arraycolsep}{0.06cm}\renewcommand{\arraystretch}{0.8}\begin{array}{cc} 0&1 \\ \ast & 0\\ \ast& 0\\ 1 &0  \end{array}\right]\cup \left[\setlength{\arraycolsep}{0.06cm}\renewcommand{\arraystretch}{0.8}\begin{array}{cc} \ast&\ast \\ 0 &1 \\ 1&0 \\  0&0  \end{array}\right] \cup \left[\setlength{\arraycolsep}{0.06cm}\renewcommand{\arraystretch}{0.8}\begin{array}{cc} 0&1 \\  \ast& 0\\ 1&0 \\  0&0  \end{array}\right] \cup \left[\setlength{\arraycolsep}{0.06cm}\renewcommand{\arraystretch}{0.8}\begin{array}{cc} 0&1 \\ 1 & 0\\ 0&0 \\ 0 & 0 \end{array}\right]\\ \hline
(1,0,0,1)&(2,0)& \left[\setlength{\arraycolsep}{0.06cm}\renewcommand{\arraystretch}{0.8}\begin{array}{cc} \ast&1 \\ \ast&0 \\ 0&0 \\  1&0  \end{array}\right] & \begin{array}{c} \vspace{0.0cm}\small{\yng(2)}\end{array}&
\left[\setlength{\arraycolsep}{0.06cm}\renewcommand{\arraystretch}{0.8}\begin{array}{cc} 0&1 \\ \ast & 0\\ \ast& 0\\ 1 &0  \end{array}\right]\cup \left[\setlength{\arraycolsep}{0.06cm}\renewcommand{\arraystretch}{0.8}\begin{array}{cc} \ast&\ast \\ 0 &1 \\ 1&0 \\  0&0  \end{array}\right] \cup \left[\setlength{\arraycolsep}{0.06cm}\renewcommand{\arraystretch}{0.8}\begin{array}{cc} 0&1 \\  \ast& 0\\ 1&0 \\  0&0  \end{array}\right] \cup \left[\setlength{\arraycolsep}{0.06cm}\renewcommand{\arraystretch}{0.8}\begin{array}{cc} 0&1 \\ 1 & 0\\ 0&0 \\ 0 & 0 \end{array}\right]\\ \hline
(0,1,1,0)&(1,1)& \left[\setlength{\arraycolsep}{0.06cm}\renewcommand{\arraystretch}{0.8}\begin{array}{cc} \ast&\ast \\ 0 & 1\\ 1&0 \\ 0 & 0 \end{array}\right] & \begin{array}{c} \vspace{-0.17cm}\small{\yng(1,1)}\end{array}&
\left[\setlength{\arraycolsep}{0.06cm}\renewcommand{\arraystretch}{0.8}\begin{array}{cc} \ast&\ast \\ 0 &1 \\ 1&0 \\  0&0  \end{array}\right] \cup \left[\setlength{\arraycolsep}{0.06cm}\renewcommand{\arraystretch}{0.8}\begin{array}{cc} 0&1 \\  \ast& 0\\ 1&0 \\  0&0  \end{array}\right] \cup \left[\setlength{\arraycolsep}{0.06cm}\renewcommand{\arraystretch}{0.8}\begin{array}{cc} 0&1 \\ 1 & 0\\ 0&0 \\ 0 & 0 \end{array}\right] \\ \hline
(1,0,1,0)&(1,0)& \left[\setlength{\arraycolsep}{0.06cm}\renewcommand{\arraystretch}{0.8}\begin{array}{cc} 0&1 \\  \ast&0 \\ 1&0 \\  0&0  \end{array}\right] & \begin{array}{c} \vspace{-0.0cm}\small{\yng(1)}\end{array}& \left[\setlength{\arraycolsep}{0.06cm}\renewcommand{\arraystretch}{0.8}\begin{array}{cc} 0&1 \\  \ast& 0\\ 1&0 \\  0&0  \end{array}\right] \cup \left[\setlength{\arraycolsep}{0.06cm}\renewcommand{\arraystretch}{0.8}\begin{array}{cc} 0&1 \\ 1 & 0\\ 0&0 \\ 0 & 0 \end{array}\right] \\ \hline
(1,1,0,0)&(0,0)& \left[\setlength{\arraycolsep}{0.06cm}\renewcommand{\arraystretch}{0.8}\begin{array}{cc} 0&1 \\ 1 & 0\\ 0&0 \\ 0 & 0 \end{array}\right] & & \left[\setlength{\arraycolsep}{0.06cm}\renewcommand{\arraystretch}{0.8}\begin{array}{cc} 0&1 \\ 1 & 0\\ 0&0 \\ 0 & 0 \end{array}\right] \\\hline
\end{array}$
\captionC{All six Schubert cells of the Grassmannian $Gr_{2,4}$ (represented in all different ways) and their closures.}
\label{tab:Gr24}
\end{table}
\end{ex}

\subsection{The Pl\"ucker embedding}\label{pluecker}
In the following we would like to equip a homogeneous structure to the Grassmannian and to the flag variety. This is done by introducing the so-called Pl\"ucker embedding, i.e.~an embedding into a projective space (see also \cite{Prag}).
As stated in the last sections the flag variety and the Grassmannian variety are not only topological spaces but have in addition the structure of a complex, differentiable, orientated manifold. Moreover it turns out (shown in the recent section) that both have also the structure of a non-singular projective algebraic variety (over the field $\mathbb{C}$). To show this one has to embed both manifolds into a projective space and identify them as the zero locus of some appropriate ideal of polynomials. Then later we will use the fact that the closures of Schubert cells (in both cases) are indeed subvarieties. This has a deeper meaning from the point of view of (co)homology theory. Indeed, we can assign to each Schubert variety a distinct generator in the corresponding (additive) cohomology group of the Grassmannian and flag variety, respectively. Moreover also the ring structure of both cohomology theories could be analyzed by referring to the variety structure.

In the Appendix \ref{rest} several necessary definitions of the underlying mathematical objects, as e.g.~projective variety and Zariski topology are presented.
Let's fix a basis $B = \{e_1, \ldots, e_d\}$ for $V$. The \emph{Pl\"ucker embedding} is defined as the map $\theta$,\\
$
\begin{array}{cccccc}
\qquad \qquad&\theta &:& Gr_{k,d} & \rightarrow & \mathbb{P}[\bigwedge^k V] \\
\qquad \qquad&&& U &\mapsto & [u_1 \wedge \ldots \wedge u_k] \qquad ,
\end{array}$ \\
where $\{u_1 , \ldots , u_k\}$ is some basis for $U$ and $[w]$ is the equivalence class of $w \in \bigwedge ^k V$.
It can easily be proven that this map is well-defined and injective:
Let $B_1= \{u_1,\ldots,u_k\}$ and $B_2= \{v_1,\ldots,v_k\}$ be two basis sets for $U \in Gr_{k,d}$. Hence, there exists a $G \in Gl(k)$, such that $v_i = \sum_{j=1}^k G_{i j} u_j$. Herewith we find, by using explicit properties of the exterior product,
\begin{eqnarray}
v_1\wedge \ldots \wedge v_k &=& \sum_{j_1,\ldots,j_k=1}^d G_{1 j_1}\cdot \ldots \cdot G_{k j_k} u_{j_1}\wedge \ldots \wedge u_{j_k} \nonumber
\end{eqnarray}
\begin{eqnarray}
\qquad \qquad &=& \sum_{\pi \in \mathcal{S}^k} \mbox{sign}(\pi)G_{1 \pi(1)}\cdot \ldots \cdot G_{k \pi(k)} \, u_1\wedge \ldots \wedge u_k \nonumber\\
&=& \mbox{det}(G) \,u_1\wedge \ldots \wedge u_k
\end{eqnarray}
Since $G \in Gl(k)$, $\mbox{det}(G)\neq 0$ and thus $[v_1\wedge \ldots \wedge v_k] = [u_1\wedge \ldots \wedge u_k]$, which proves the well-definiteness of $\theta$. To show injectivity, assume $U,W \in Gr_{k,d}, \,U\neq W$. Since $U \neq W$, $\mbox{dim}(U \cap W) = r <k$ and we can find orthonormal bases $\{u_1,\ldots,u_k\}$ and
$\{u_1,\ldots,u_r, w_{r+1},\ldots,w_k\}$ for $U$ and $W$, i.e.~$U \cap W = \mbox{span}(u_1,\ldots,u_r)$, $u_{r+1},\ldots,u_k \in W^{\perp}$ and $w_{r+1},\ldots,w_k \in U^{\perp}$. Thus, $u_1\wedge \ldots u_r \wedge w_{r+1} \wedge \ldots \wedge w_k $ is not proportional to $u_1\wedge \ldots \wedge u_k $ and therefore $\theta(U) \neq \theta(W)$.

Now, we verify that $Gr_{k,d}$ has the structure of a projective variety. First, we still observe for some matrix $A = A(B_1, B)$
\begin{eqnarray}
\lefteqn{u_1\wedge \ldots \wedge u_d} && \nonumber \\
&=&\sum_{k_1, \ldots ,k_d =1}^n A_{k_1,1} \cdot\ldots \cdot A_{k_d,d}\, e_{k_1}\wedge \ldots \wedge e_{k_d}\\
&=& \sum_{1\leq k_1 \leq \ldots \leq k_d \leq n} \sum_{\pi \in \mathcal{S}^d} \mbox{sign}(\pi) A_{k_{\pi(1)},1} \cdot\ldots \cdot A_{k_{\pi (d)},d} \, e_{k_1}\wedge \ldots \wedge e_{k_d}
\end{eqnarray}
and by introducing the $(j_1,\ldots,j_k)$-minor $p_{j_1,\ldots,j_k}$ of $A$ we find
\begin{equation}
u_1\wedge \ldots \wedge u_k = \sum_{1\leq j_1 \leq \ldots \leq j_k \leq n} p_{j_1,\ldots,j_k} \,e_{j_1} \wedge \ldots \wedge e_{j_k} \, .
\end{equation}
These numbers  $p_{\underline{j}}(A) = p_{j_1,\ldots,j_k}(A)$ are called Pl\"ucker coordinates of the point $U \in Gr_{k,d}$ (w.r.t.~the given basis $B=\{e_1,\ldots,e_d\}$) and they are the homogeneous coordinates of $\theta(U)$. They also depend on the fixed basis $B$. Moreover the matrix $A$ is given by the matrix $(\vec{u}_1,\ldots,\vec{u}_k)$, where $\vec{u}_i$ is the coordinate vector of $u_i$ with respect to the basis $B$. Note that two regular matrices determine the same point in $Gr_{k,d}$ if and only if they are in the same $Gl(k)$-orbit and hence we find $Gr_{k,d}= M_{d \times k}/{Gl(k)}$, where $M_{d \times k}$ denotes the $d \times k$-matrices with rank $k$.
After all, we define for an arbitrary subset $\underline{j}=\{j_1,\ldots,j_k\} \subseteq \{1,\ldots,d\}$, $p_{\underline{j}}= 0$ if $|j|<k$ and $p_{\underline{j}}=\mbox{sign}(\pi) p_{\underline{j}^{\uparrow}}$ otherwise, where $\underline{j}^{\uparrow}$ means to increasingly order the components of $\underline{j}$ and $\pi$ is the corresponding permutation of $\underline{j}$ to $\underline{j}^{\uparrow}$.

By using the Pl\"ucker coordinates we state
\begin{lem}\label{zero}
The Grassmannian $Gr_{k,d}$ is the zero set of the system
\begin{equation}
\sum_{m=1}^{k+1} (-1)^m p_{i_1,\ldots,i_{k-1},j_m} p_{j_1,\ldots,\hat{j}_m,\ldots,j_{k+1}}\,
\end{equation}
where the hat means omitting the corresponding index and the sets $\{i_1,\ldots,i_{k-1}\}$ and $\{j_1,\ldots,j_{k+1}\}$ are subsets of $\{1,\ldots,d\}$.
\end{lem}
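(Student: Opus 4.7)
The plan is to verify that the two sides of the claimed equality coincide by checking separately (i) that every point in the image of the Plücker embedding $\theta(Gr_{k,d}) \subset \mathbb{P}[\bigwedge^k V]$ satisfies the relations (necessity), and (ii) that every $[\omega] \in \mathbb{P}[\bigwedge^k V]$ whose homogeneous coordinates satisfy the relations is \emph{decomposable}, i.e.~of the form $[u_1 \wedge \cdots \wedge u_k]$, so that it lies in the image of $\theta$ (sufficiency). Since $\theta$ has already been shown to be an injective well-defined map, step (ii) together with step (i) gives the claimed zero-set description.

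For step (i), I would fix a point $U \in Gr_{k,d}$ with basis $\{u_1,\dots,u_k\}$ and its associated $d\times k$ matrix $A$, so that $p_{\underline{j}}(A)$ is (up to sign) the $k\times k$ minor of $A$ on rows $\underline{j}$. The identity
\[
\sum_{m=1}^{k+1} (-1)^m \, p_{i_1,\dots,i_{k-1},j_m}\, p_{j_1,\dots,\hat{j}_m,\dots,j_{k+1}} \;=\; 0
\]
is a purely multilinear statement, and I would derive it as an instance of the antisymmetry of $\bigwedge^{k+1} V$: expanding $\omega \wedge \omega$ (viewed via the contraction map $\bigwedge^{k-1} V^* \to \mathrm{Hom}(\bigwedge^k V, V)$) produces exactly the alternating sum above. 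Concretely, if $\alpha := e_{i_1}^* \wedge\cdots\wedge e_{i_{k-1}}^*$ and $\omega = u_1 \wedge\cdots\wedge u_k$, then the element $(\alpha\lrcorner\omega)\wedge\omega \in \bigwedge^{k+1}V$ vanishes because it is a decomposable $(k+1)$-vector with one wedge-factor repeated (here $\lrcorner$ denotes interior product); reading off its coefficient on $e_{j_1}\wedge\cdots\wedge e_{j_{k+1}}$ gives the desired Plücker relation.

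For step (ii), given $\omega \in \bigwedge^k V$ whose coordinates $p_{\underline{j}}(\omega)$ satisfy all Plücker relations, I would prove decomposability via the subspace
\[
U_\omega \;:=\; \{\,v \in V \;:\; v \wedge \omega = 0\,\}.
\]
The claim is that $\dim U_\omega \geq k$, in which case $\omega \in \bigwedge^k U_\omega$ forces $\omega = c\, u_1 \wedge \cdots \wedge u_k$ for any basis $\{u_i\}$ of a $k$-dimensional subspace of $U_\omega$ (the dimension of $\bigwedge^k U_\omega$ being $1$ when $\dim U_\omega = k$, and the case $\dim U_\omega > k$ forcing $\omega = 0$, contradicting $[\omega]\in\mathbb{P}[\bigwedge^k V]$). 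To extract $\dim U_\omega \geq k$ from the Plücker relations, I would fix an index set $\underline{j}^\circ = (j_1^\circ,\dots,j_k^\circ)$ with $p_{\underline{j}^\circ}(\omega)\neq 0$ and show that for each $i \in \underline{j}^\circ$ the vector
\[
v_i \;:=\; \sum_{\ell=1}^d \, p_{\,\underline{j}^\circ \setminus \{i\} \,\cup\, \{\ell\}}(\omega)\, e_\ell
\]
lies in $U_\omega$; the Plücker relation with $(i_1,\dots,i_{k-1}) = \underline{j}^\circ\setminus\{i\}$ and $(j_1,\dots,j_{k+1}) = (\ell_1,\dots,\ell_k,\ell)$ is exactly what is needed to force $v_i \wedge \omega = 0$. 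The vectors $\{v_i\}_{i\in \underline{j}^\circ}$ are $k$ in number and are manifestly linearly independent (their coordinates form a matrix with a nonzero $k\times k$ minor equal to $p_{\underline{j}^\circ}(\omega)^{k-1}$ times a sign, by Laplace expansion), so $\dim U_\omega \geq k$.

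The main obstacle is the decomposability argument in step (ii): producing enough explicit elements of $U_\omega$ from the Plücker relations and verifying their linear independence requires careful combinatorial bookkeeping with the signs in $p_{\underline{j}}$ (recall the sign convention $p_{\underline{j}} = \mathrm{sign}(\pi)\, p_{\underline{j}^\uparrow}$). Step (i) is essentially a routine multilinear calculation, whereas step (ii) is where the geometric content of the Plücker embedding enters; once decomposability is established, the link $\omega \mapsto U_\omega$ provides the explicit inverse to $\theta$ on the zero locus and closes the proof.
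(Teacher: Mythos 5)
Your plan is correct, but it goes a genuinely different way than the paper, most visibly in the sufficiency direction. The paper proves necessity by writing each $p$ as a minor of the representing matrix $A$, Laplace-expanding $p_{i_1,\ldots,i_{k-1},j_m}$ along the column containing $j_m$, and recognizing the resulting inner sum as the determinant of a $(k+1)\times(k+1)$ matrix with two equal columns; your interior-product argument ($(\alpha$ contracted into $\omega)\wedge\omega=0$ for decomposable $\omega$, then reading off the coefficient of $e_{j_1}\wedge\cdots\wedge e_{j_{k+1}}$) is the same vanishing-by-repetition phenomenon packaged coordinate-freely. For sufficiency the paper normalizes $q_{\underline l}=1$, builds the explicit matrix $a_{ik}=q_{l_1,\ldots,l_{k-1},i,l_{k+1},\ldots,l_d}$, and proves $p_{\underline k}(A)=q_{\underline k}$ for \emph{all} $\underline k$ by decreasing induction on $\#(\underline k\cap\underline l)$, using the quadratic relations at each step; you instead prove decomposability abstractly via $U_\omega=\{v: v\wedge\omega=0\}$, exhibiting the $k$ vectors $v_i$ and invoking the divisibility fact that $\dim U_\omega\geq k$ forces $\omega\in\bigwedge^k U_\omega$. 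Amusingly, your $v_i$ (with $\underline j^{\circ}=\underline l$) are exactly the columns of the paper's matrix $A$, so both arguments construct the same candidate subspace; the difference is that you verify only $v_i\wedge\omega=0$ plus independence and then lean on the divisibility lemma, while the paper verifies every Plücker coordinate of $A$ directly, which is more computational but fully self-contained. Your route is cleaner and identifies the inverse of $\theta$ intrinsically, but to make it complete you must (a) actually prove the divisibility statement (a short basis-extension argument, not stated in the paper or your plan), and (b) fix two small slips: the diagonal $k\times k$ minor of the matrix of the $v_i$ on the rows $\underline j^{\circ}$ equals $\pm p_{\underline j^{\circ}}(\omega)^{k}$, not $p_{\underline j^{\circ}}(\omega)^{k-1}$ (harmless for independence), and the sign bookkeeping in extracting the coefficient of $e_{j_1}\wedge\cdots\wedge e_{j_{k+1}}$ must use the paper's convention $p_{\underline j}=\mathrm{sign}(\pi)p_{\underline j^{\uparrow}}$, $p_{\underline j}=0$ for repeated indices, exactly as you anticipate.
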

\noindent The proof can be found in the Appendix \ref{Schubertcalculus}.

The flag variety is indeed also a projective algebraic variety. The natural embedding into a projective space is induced by the corresponding embedding of the Grassmannian:
\begin{equation}
\mbox{\emph{Fl}}_d \rightarrow Gr_{1,d} \times \ldots \times Gr_{d-1,d} \rightarrow \prod_{j=1}^{d-1} \mathbb{P} \left( \bigwedge ^j [V] \right) \rightarrow \mathbb{P}\left[ \bigotimes_{j=1}^{d-1}\bigwedge ^j [V]  \right]\,,
\end{equation}
where the last map is the so-called Segre embedding (see also \cite{Knut1}).

\section{Introduction to homology and cohomology theory}\label{sec:introHomCohom}
For this section we followed \cite{Sato}, \cite{Blas} and \cite{BlBr} and recommend \cite{Hatch}, \cite{Wall1} and \cite{Wall2} as additional literature to the reader. Topological spaces are playing an important role in different mathematical fields since they are one of the most general
spaces with a minimal, but non-trivial structure (namely the notion of open sets). The fruitful field of algebraic topology has the aim of
analyzing these spaces from an algebraic point of view, i.e.~one attaches topologically invariant algebraic structure to them. Topologically invariant
means that homeomorphic spaces will have isomorphic algebraic structures.
\begin{floatingfigure}[!hb]{5cm}
\hspace{-0.5cm}
\includegraphics[width=4.9cm]{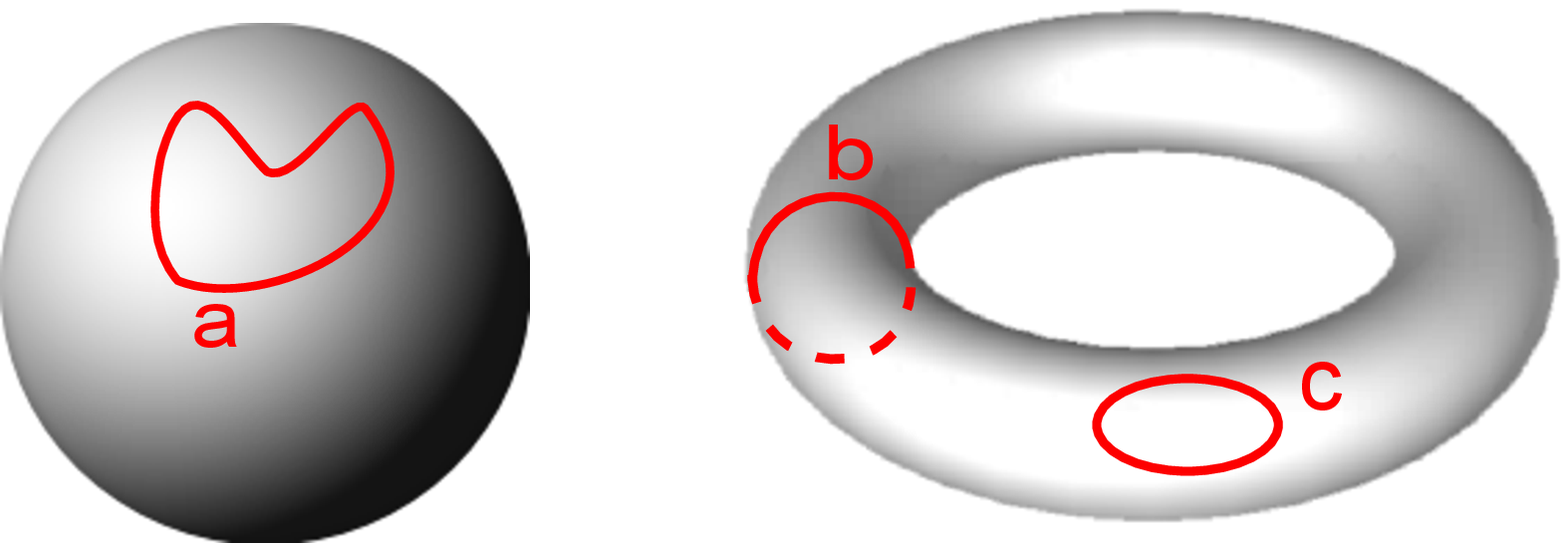}
\captionC{Sphere and Torus (see text).}
\label{spheretorus}
\end{floatingfigure}
A very primitive example is presented in Fig.~\ref{spheretorus}. We consider two 2-dimensional manifolds, a sphere and a torus as topological spaces governed with the relative topology of $\mathbb{R}^3$. Consider a closed and non-crossing curve on each of these manifolds and then cut them along those paths. The sphere has the algebraic property that independent of the path $a$ we end up with two pieces. Although one can also find paths on the torus, which lead to two parts (e.g.~path $c$) this is not the case for all paths (see e.g.~$b$).

Another example is the fundamental group for simple connected  and sufficiently `nice' subsets of $\mathbb{R}^2$ that appears in homotopy theory. This concept also analyzes these topological spaces. The fundamental group refers to the number of holes in the space and is closely related to the idea of winding numbers.
\par

A much deeper structure is the one given by the \emph{homology and cohomology theories of topological space}. There are several different theories, but all of them are fulfilling the standard axioms (e.g.~presented in \cite{Sato} and \cite{Wall1}):
\\
\\
\begin{defn}\label{homology}
A homology theory is a mathematical theory with the following properties:
\begin{enumerate}
\item It assigns to each pair of topological spaces $(E,F)$ and for all $p \in \mathds{N}_0$ an Abelian group $H_p(E,F)$ and to every continuous
map $f: (E,F) \rightarrow (E',F')$ a homomorphism $f_p: H_p(E,F) \rightarrow H_p(E',F')$  .
\item For a composition $f \circ g$ of continuous maps $f, g$ the formula
\begin{equation}
(f \circ g)_p = f_p \circ g_p
\end{equation}
holds for all $p$ and if $f$ is the identity then $f_p$ is also the identity.
\item If $E = pt$ is the singleton space, i.e.~consists of only one point and $F = \emptyset$ is empty then $H_p(E,F)= 0 \, \forall\, p>0$ and $H_0(E,F)= \mathbb{Z}$.
\item If $f,g: (E,F) \rightarrow (E',F')$ are homotopic then $f_p = g_p$.
\item For $E \supset F \supset G$ there exists a homomorphism $\partial_p : H_p(E,F)\rightarrow H_{p-1}(F,G)$, which commutes with the homomorphism associate  with continuous maps and the sequence
    \begin{equation}
    \rightarrow H_p(F,G) \xrightarrow{i_p} H_p(E,G) \xrightarrow{j_p} H_p(E,F) \xrightarrow{\partial_p} H_{p-1}(F,G)\rightarrow
    \end{equation}
    is exact, where $i, j$ are the natural inclusions.
\item If $(E,F)$ is a pair of spaces and $A \subset E$ such that $\overline A  \subset \mathring{F}$ then the homomorphism
\begin{equation}
i_p: H_p(E \setminus A, F \setminus A) \rightarrow H_p(E,F)
\end{equation}
induced by the inclusion map $i$ is an isomorphism for all $p$.
\end{enumerate}
\end{defn}

\begin{rem}
We denote the category of pairs of topological spaces by $\textbf{Top}^2$ and the one of graded modules by $\textbf{GradMod}$. A homology theory is a covariant functor $H_*: \mbox{Top}^2 \rightarrow \mbox{GradMod}$.
\end{rem}
\begin{rem}
There are several different ways of constructing such a homology theory. The most common are singular, CW and simplicial homology. For the class of so-called triangulated topological spaces it can be shown that the definition of a homology leads to a (up to isomorphisms) unique theory (see e.g.~\cite{Sato}).
\end{rem}
\begin{rem}
If the topological space $E$ is a subset of $\mathbb{R}^2$ and $F$ the empty set then $H_1(E,F)$ is isomorphic to the fundamental group of $E$.
\end{rem}
\begin{rem}
Despite the fact that the reader already wonders about the motivation for these a priori `strange' and abstract homology theories it is even more strange that
the homology groups depend on a pair of topological spaces instead of one single topological space. The motivation for this is a powerful generalization. Indeed, we can choose in particular the pair $(E, \emptyset)$ to deal with homologies of single spaces. The advantage of the generalization simplifies the computation
of homology groups. By expressing a topological space $H$ of interest as $H \cong E/F$ its homology groups $H_p(H)$ are strongly related to $H_p(E,F)$ (see e.g.~\cite{Sato}, p32).
\end{rem}

The concept of cohomology is very similar to the one of homology. The important difference is that cohomology theories are contravariant functors, i.e.~the composition $f \circ g$ of two morphisms $f, g$ will be mapped to the homomorphism $f^* \circ g^*$. Indeed, the axioms for cohomology theories read
\\
\begin{defn}\label{cohomology}
A cohomology theory is a mathematical theory with the following properties:
\begin{enumerate}
\item It assigns to each pair of topological spaces $(E,F)$ and for all $p \in \mathds{N}_0$ an Abelian group $H^p(E,F)$ and to every continuous
map $f: (E,F) \rightarrow (E',F')$ a homomorphism $f^p: H^p(E',F') \rightarrow H^p(E,F)$  .
\item For a composition $g \circ f$ of continuous maps $f, g$ the formula
\begin{equation}
(f \circ g)^p = g^p \circ f^p
\end{equation}
holds for all $p \in \mathds{N}_0$ and if $f$ is the identity then $f^p$ is also the identity.
\item If $E = pt$ is the singleton space, i.e.~consists of only one point and $F = \emptyset$ is empty then $H^p(E,F)= 0 \, \forall\, p>0$ and $H^0(E,F)= \mathbb{Z}$.
\item If $f,g: (E,F) \rightarrow (E',F')$ are homotopic then $f^p = g^p$.
\item For $E \supset F \supset G$ there exists a homomorphism $\partial^p : H^{p-1}(F,G)\rightarrow H^{p}(E,F)$, which commutes with the homomorphism associate with continuous maps and the sequence
    \begin{equation}
    \leftarrow H^p(F,G) \xleftarrow{i^p} H^p(E,G) \xleftarrow{j^p} H^p(E,F) \xleftarrow{\partial^p} H^{p-1}(F,G)\leftarrow
    \end{equation}
    is exact, where $i, j$ are the natural inclusions.
\item If $(E,F)$ is a pair of spaces and $A \subset E$ such that $\overline A  \subset \mathring{F}$ then the homomorphism
\begin{equation}
i^p:  H^p(E,F) \rightarrow H^p(E \setminus A, F \setminus A)
\end{equation}
induced by the inclusion map $i$ is an isomorphism for all $p$.
\end{enumerate}
\end{defn}
\begin{rem}
A cohomology theory is a contravariant functor $H^*: \mbox{Top}^2 \rightarrow \mbox{GradMod}$.
\end{rem}
\begin{rem}
Cohomology theories carry an additional (hidden) structure: There is a natural product of elements in the cohomology groups´, induced by the continuous map $E \rightarrow E \times E \,,\, x\mapsto (x,x)$. This induced map, called cup product gives the cohomology the structure of a ring (see e.g.~\cite{Wall1}).
\end{rem}

\subsection{Integral cohomology of Grassmannians and flag varieties}\label{sec:CohomGrass}
In this section we determine the cohomology ring of the Grassmannian, whose structure as manifold and algebraic variety
will be essential for this purpose, in particular their structure of a CW complex. First, we roughly present the main steps for the calculation of the cohomology ring:
\begin{enumerate}
\item We verify that the Grassmannian has the structure of a CW complex, and that the cell structure is given by the Schubert cells and Schubert varieties, respectively.
\item For CW complexes we determine the so-called CW-homology the most natural realization of the homology axioms
\item This CW-homology is isomorphic to the singular homology
\item The structure of the Grassmannian leads to a zero boundary homomorphism and thus the homology groups
      are very easy to determine
\item The Poincar\'e duality leads immediately to the cohomology groups
\item E.g.~by using the concept of Chern classes and fibre bundles is used to also calculate the multiplicative (i.e.~ring) structure of the cohomology.
\end{enumerate}
We follow step by step this outline.
\begin{enumerate}
\item We verify that the Grassmannian $Gr_{k,d}$ is a CW-complex. Lem.~\ref{partition} states that $Gr_{k,d}$ is a disjoint union of all Schubert cells $S_{\alpha}^{\circ}$ with $\alpha \subset k \times (d-k)$. Lem.~\ref{homeo} states that each Schubert cell is homeomorphic to $\mathbb{C}^{\|\alpha\|_1} \cong \mathbb{R}^{2 \|\alpha\|_1}$. Moreover, Lem.~\ref{Schubertvarieties} ensures that the boundary of every Schubert cell is contained in lower dimensional Schubert cells. Hence, (see Appendix \ref{CW-complex} and Definition \ref{defcwcomplex2}) the Grassmannian has the structure of a CW complex.
\item We follow \cite{Sato}. Let $X$ be a CW-complex of dimension $n$ and $X^r$ its $r$-skeleton, $X^n = X$. We introduce the relevant ideas to define the so-called CW-homology and state some elementary results without verifying them (for proofs see e.g.~\cite{Sato}).
    For a fixed homology theory and for all $k\leq n$ we find a long exact sequence for the triple $X^{k-2} \subset X^{k-1} \subset X^k $ according to the axioms of homology theories with boundary homomorphism $\partial^{(k)} = \oplus_j \partial_j^{(k)}$. By defining the Abelian `chain groups'
    \begin{equation}\label{chaingroup}
    C_k := H_k(X^k,X^{k-1})
    \end{equation}
    we obtain a sequence
    \begin{equation}\label{chaincomplex}
    \rightarrow C_k \xrightarrow{\partial^{(k)}} C_{k-1} \xrightarrow{\partial^{(k-1)}} \ldots \xrightarrow{\partial^{(2)}} C_1 \xrightarrow{\partial^{(1)}} C_0 \,.
    \end{equation}
    Since $\partial^{(k-1)} \circ \partial^{(k)} = 0$ (which is non-trivial), we find
    \begin{equation}
    \mbox{Im}(\partial^{k-1}) \leq \mbox{Ker}(\partial^{k})
    \end{equation}
    and thus the sequence in (\ref{chaincomplex}) is a chain complex, namely the chain complex $C = C(X)$ induced by the CW-complex $X$.
    We define
    \begin{defn}\label{Chain-homology}
    Let $C$ be a chain complex with chain groups ${C_k}$ and boundary homomorphisms $\partial^{(k)}: C_k \rightarrow  C_{k-1}$.
    The chain complex homology is defined by the homology groups
    \begin{equation}
    H_k^{cc}:=  \mbox{Ker}(\partial^{k-1}) / \mbox{Im}(\partial^{k-1}) \,.
    \end{equation}
    \end{defn}
    and
     \begin{defn}\label{CW-homology}
    The CW-homology $H^{CW}(X)$ of a CW complex $X$ is defined as the chain complex homology (see Definition \ref{Chain-homology}) of the chain complex
    $C = C(X)$ as constructed above (see in particular (\ref{chaingroup})).
    \end{defn}
    We emphasize the strength of the concept of cell complexes by
    \begin{lem}\label{CW-groups}
    Let $X$ be a CW-complex. Then
    \begin{enumerate}
     \item $H_k(X^n, X^{n-1})= 0$ whenever $k \neq n$, and is free Abelian for $k=n$, with a set of generators, which is in a $1-1$ correspondence with the $n$-cells of $X$.
     \item $H_k(X^n)=0$, whenever $k>n$.
     \item The inclusion $i_k: X^n \rightarrow X$ induces an isomorphism $H_k(X^n) \rightarrow H_k (X)$
     \end{enumerate}
    \end{lem}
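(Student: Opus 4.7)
The plan is to treat the three parts in order, using the standard machinery of the excision and long-exact-sequence axioms that were just laid out in Definition \ref{homology}.

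For part (a), I would exploit the fact that for a CW complex the pair $(X^n,X^{n-1})$ is, up to homotopy and excision, a disjoint union of pairs $(D^n,S^{n-1})$, one for each $n$-cell. Concretely, let $\{\Phi_\alpha : D^n \to X^n\}$ be the characteristic maps of the $n$-cells and let $A \subset X^{n-1}$ be a small open collar of the image of each $S^{n-1}$ so that $\overline{A} \subset \mathring{X}^{n-1}$. Excision (axiom 6) yields $H_k(X^n,X^{n-1}) \cong H_k(X^n \setminus A, X^{n-1}\setminus A)$, and by homotopy invariance (axiom 4) the latter deformation-retracts onto the disjoint union $\bigsqcup_\alpha (D^n,S^{n-1})$. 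The long exact sequence of a single pair $(D^n,S^{n-1})$ together with the contractibility of $D^n$ gives $H_k(D^n,S^{n-1}) \cong \widetilde{H}_{k-1}(S^{n-1})$, which vanishes unless $k=n$, in which case it is $\mathbb{Z}$. Summing over cells gives (a).

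For part (b), I would argue by induction on $n$, with base case $H_k(X^0)=0$ for $k>0$ since $X^0$ is discrete. For the inductive step, fix $k>n$ and look at the long exact sequence of $(X^n,X^{n-1})$:
\begin{equation}
H_k(X^{n-1}) \longrightarrow H_k(X^n) \longrightarrow H_k(X^n,X^{n-1}).
\end{equation}
The left group vanishes by the induction hypothesis (since $k>n>n-1$), and the right group vanishes by part (a) (since $k\neq n$). Exactness then forces $H_k(X^n)=0$.

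For part (c), I would use the same tool to climb up the skeleta. For each $m \geq n$ and each $k<n$ the long exact sequence of $(X^{m+1},X^m)$ contains the segment
\begin{equation}
H_{k+1}(X^{m+1},X^m) \longrightarrow H_k(X^m) \longrightarrow H_k(X^{m+1}) \longrightarrow H_k(X^{m+1},X^m),
\end{equation}
whose two outer groups vanish by part (a) because $k+1 \leq n \leq m < m+1$ and $k < m+1$. Hence the inclusion $X^m \hookrightarrow X^{m+1}$ induces an isomorphism in degree $k$, and iterating gives $H_k(X^n)\cong H_k(X^N)$ for every $N\geq n$. When $X$ is finite-dimensional (which is the only case needed for the Grassmannian, Lem.~\ref{partition} having exhibited finitely many cells) we may take $N=\dim X$ and conclude. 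For an infinite-dimensional CW complex a small additional argument is required: any singular cycle or boundary has compact image, hence lies in some finite skeleton, and this allows one to pass to the direct limit $H_k(X)=\varinjlim_N H_k(X^N)$; since the transition maps are eventually isomorphisms the colimit stabilizes. The main obstacle in the whole argument is really part (a)—the identification of $(X^n,X^{n-1})$ with a wedge of disks modulo spheres via excision is the only step that uses the cell structure in an essential way; once (a) is in hand, (b) and (c) are formal diagram chases in the long exact sequence.
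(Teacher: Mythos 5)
The paper never actually proves this lemma (it is waved off as an ``instructive exercise'' with a pointer to the literature), so your argument must stand on its own; parts (b) and (c) do, but the excision step in part (a) fails as written. The excision axiom you invoke (axiom 6 of Definition \ref{homology}) requires $\overline{A}\subset\mathring{X}^{n-1}$ with the interior taken inside $X^n$, and that interior is typically empty: already for $S^n$ with one $0$-cell and one $n$-cell, $X^{n-1}$ is a single point with empty interior in $X^n$, so no nonempty ``collar'' $A\subset X^{n-1}$ can be excised in the way you propose. Moreover, even granting such an $A$, the pair $(X^n\setminus A,\,X^{n-1}\setminus A)$ still contains the unexcised part of $X^{n-1}$ with all the $n$-cells attached to it, so it does not deformation-retract onto a disjoint union $\bigsqcup_\alpha (D^n,S^{n-1})$. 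The standard repair excises in the opposite direction: choose one interior point $z_\alpha$ in each $n$-cell, use that $X^{n-1}$ is a deformation retract of $X^n\setminus\{z_\alpha\}_\alpha$ (this is where the cell structure, i.e.\ condition 3 of Definition \ref{def:cellstructure} and the homeomorphism on open cells, really enters) to replace $(X^n,X^{n-1})$ by $(X^n,\,X^n\setminus\{z_\alpha\}_\alpha)$ via homotopy invariance and the long exact sequence, and then excise $X^{n-1}$ itself, which now sits inside the open set $X^n\setminus\{z_\alpha\}_\alpha$. This leaves $\bigoplus_\alpha H_k\bigl(\mathring{D}^n,\mathring{D}^n\setminus\{0\}\bigr)\cong\bigoplus_\alpha H_k(D^n,S^{n-1})$, and your long-exact-sequence computation of $H_k(D^n,S^{n-1})$ via contractibility of $D^n$ then finishes (a). Equivalently one may quote that CW pairs are ``good pairs'' and identify $H_k(X^n,X^{n-1})$ with the reduced homology of the wedge of $n$-spheres $X^n/X^{n-1}$; either way, some such argument is needed in place of your collar excision.

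Granting (a), your inductive proofs of (b) and (c) are correct and are exactly the intended formal diagram chases. Two of your side remarks are worth keeping: the restriction to $k<n$ in (c) is necessary (as literally stated in the lemma, with no condition on $k$, part 3 fails for $k\geq n$, since $H_k(X^n)=0$ for $k>n$ while $H_k(X)$ need not vanish), and for infinite-dimensional complexes the passage from the skeleta to $X$ does require either the compact-supports property of singular chains or a colimit argument beyond the listed axioms. For the Grassmannians and flag varieties to which the paper applies the lemma, the cell decomposition is finite, so the skeleta stabilize and this subtlety is harmless.
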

    The proof of Lem.~\ref{CW-groups} is elementary and it is a nice and instructive exercise (or can alternatively be found in \cite{Sato}, p.40f).
    The last remaining step for the calculation of the CW-homology groups is to determine the boundary homomorphisms $\partial^{(k)}$. Fortunately, due to
    the instance described in point 4 of our outline this is not relevant for our objective.
\item We state a very important result that finally explains why one resorts to CW-homology to calculate e.g.~singular homology groups:
    \begin{thm}\label{CWcoincide}
    Let $X$ be a CW complex. Then
    \begin{equation}
    H_k(X) \cong H_k^{CW}(C(X)) \qquad \forall k \in \mathds{N}_0 \,,
    \end{equation}
    where $H$ is here some homology theory and $C(X)$ the corresponding chain complex of this homology theory constructed as explained above (see in particular \ref{chaingroup}).
    \end{thm}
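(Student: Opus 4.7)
The plan is to prove the isomorphism degree by degree via diagram chases in the long exact sequences of the pairs $(X^n, X^{n-1})$ and $(X^{n+1}, X^n)$, using all three structural facts collected in Lem.~\ref{CW-groups}. The two tasks are to identify the cycles $\mathrm{Ker}(\partial^{(n)})$ of $C(X)$ with $H_n(X^n)$, and to identify the boundaries $\mathrm{Im}(\partial^{(n+1)})$ with the kernel of $H_n(X^n) \to H_n(X)$.

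For the first identification I would write down the long exact sequence of the pair $(X^n, X^{n-1})$. Since $n > n-1$, Lem.~\ref{CW-groups}(b) gives $H_n(X^{n-1}) = 0$, so the natural map $j_n \colon H_n(X^n) \to H_n(X^n, X^{n-1}) = C_n$ is injective with image equal to the kernel of the connecting homomorphism $\partial \colon C_n \to H_{n-1}(X^{n-1})$. Composing with the injection $H_{n-1}(X^{n-1}) \hookrightarrow H_{n-1}(X^{n-1}, X^{n-2}) = C_{n-1}$, which is injective by the same argument applied one degree lower, yields exactly the chain-level boundary $\partial^{(n)}$ of the triple filtration. Hence $H_n(X^n) \cong \mathrm{Ker}(\partial^{(n)})$.

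For the second identification I would use the long exact sequence of $(X^{n+1}, X^n)$. Lem.~\ref{CW-groups}(a) applied with $k = n \neq n+1$ gives $H_n(X^{n+1}, X^n) = 0$, so the sequence
\begin{equation}
H_{n+1}(X^{n+1}, X^n) \xrightarrow{\partial^{(n+1)}} H_n(X^n) \to H_n(X^{n+1}) \to 0
\end{equation}
is exact and produces $H_n(X^{n+1}) \cong H_n(X^n)/\mathrm{Im}(\partial^{(n+1)})$. Finally, Lem.~\ref{CW-groups}(c) delivers $H_n(X^{n+1}) \cong H_n(X)$, and assembling the three isomorphisms I obtain
\begin{equation}
H_n(X) \;\cong\; H_n(X^{n+1}) \;\cong\; H_n(X^n)/\mathrm{Im}(\partial^{(n+1)}) \;\cong\; \mathrm{Ker}(\partial^{(n)})/\mathrm{Im}(\partial^{(n+1)}) \;=\; H_n^{CW}(C(X)).
\end{equation}

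The main obstacle will be the bookkeeping in the first step: one must carefully check that the composite $C_n \xrightarrow{\partial} H_{n-1}(X^{n-1}) \hookrightarrow C_{n-1}$ coming from the pair sequence is literally the boundary homomorphism $\partial^{(n)}$ that the excerpt extracts from the long exact sequence of the triple $X^{n-2} \subset X^{n-1} \subset X^n$; this is a naturality argument but has to be done explicitly. A secondary subtlety arises when $X$ is infinite dimensional, since Lem.~\ref{CW-groups}(c) is stated between a skeleton and $X$; I would handle this by passing to the colimit over skeleta, invoking compact support of singular chains so that every cycle and every bounding chain already lives in some finite $X^N$.
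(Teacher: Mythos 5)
Your argument is correct, but there is nothing in the thesis to compare it against: the text states Thm.~\ref{CWcoincide} without any proof, deferring implicitly to the cited literature (\cite{Sato}, \cite{Hatch}), so your proposal fills a gap rather than reproduces an argument. What you wrote is the standard derivation: injectivity of $H_n(X^n)\to C_n$ and of $H_{n-1}(X^{n-1})\to C_{n-1}$ from Lem.~\ref{CW-groups}, hence $H_n(X^n)\cong \mathrm{Ker}(\partial^{(n)})$; then the exact sequence of $(X^{n+1},X^n)$ together with $H_n(X^{n+1},X^n)=0$ to divide out $\mathrm{Im}(\partial^{(n+1)})$ (understood inside $C_n$ via the first injection); then Lem.~\ref{CW-groups}, part 3, to pass from $X^{n+1}$ to $X$. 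The "bookkeeping" you flag in the first step is genuinely just naturality of the connecting homomorphism: the boundary map of the triple $X^{n-2}\subset X^{n-1}\subset X^n$, which is how the thesis defines $\partial^{(n)}$ in (\ref{chaingroup})--(\ref{chaincomplex}), factors by construction as $H_n(X^n,X^{n-1})\to H_{n-1}(X^{n-1})\to H_{n-1}(X^{n-1},X^{n-2})$, so no extra diagram chase is needed. The one caveat worth recording is your final remark: the compact-support/colimit argument for infinite-dimensional $X$ is a property of \emph{singular} homology, whereas the theorem as stated concerns an arbitrary theory satisfying Definition \ref{homology}; for a general Eilenberg--Steenrod theory one needs an additivity-type hypothesis or finite dimensionality there. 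This is harmless in context, since the thesis applies the theorem only to finite CW complexes (Grassmannians and flag varieties), and note also that your convention $\mathrm{Ker}(\partial^{(n)})/\mathrm{Im}(\partial^{(n+1)})$ is the intended one, Definition \ref{Chain-homology} in the text containing an obvious index typo.
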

\item Now, we consider again the complex Grassmannian $Gr_{k,d}$ and apply the last two steps (step 2 and 3) to $X = Gr_{k,d}$. First, we determine the chain groups $C_k(Gr_{k,d})$ of the chain complex induced by the cell-structure of $Gr_{k,d}$. As noted in point 1 the presented cell decomposition of $Gr_{k,d}$
    by Schubert cells contains only cells with even (real) dimension. This means that all chain groups $C_k$ with $k$ odd vanish. Hence, the boundary homomorphisms $\partial^{(k)}$ are trivial, namely $\partial^{(k)} = 0$ and with Definition \ref{Chain-homology} and \ref{CW-homology} and Theorem \ref{CWcoincide} we find
    \begin{lem}\label{homologyGr}
    The complex Grassmannian $Gr_{k,d}$ has the following integral homology groups:
    \begin{equation}
    H_m(Gr_{k,d}; \mathbb{Z}) = \begin{cases}
     \mathbb{Z}^{\nu_m},  & \text{if } m \text{ is even,}\\
     0, & \text{if } m \text{ is odd}
\end{cases}\,,
    \end{equation}
    where $\nu_m$ is the number of Schubert cells $S_{\alpha}^{\circ}$ of (real) dimension $2 \nu_m$ in the standard cell decomposition of $Gr_{k,d}$.
    \end{lem}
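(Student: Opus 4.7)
The plan is to assemble the lemma directly from the CW-technology developed just above, with the single crucial input being a parity observation about the Schubert cell dimensions.

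First I would record that the Schubert cell decomposition of Lem.~\ref{partition}, combined with Lem.~\ref{homeo}, equips $Gr_{k,d}$ with the structure of a CW complex whose cells $S_\alpha^\circ(F_\bullet)$ are indexed by Young diagrams $\alpha \subset k \times (d-k)$, with $S_\alpha^\circ \cong \mathbb{C}^{\|\alpha\|_1}$ of real dimension $2\|\alpha\|_1$; the required attaching condition (boundary of a cell lies in the union of strictly lower-dimensional cells) is supplied by Lem.~\ref{Schubertvarieties}. The key observation is then that \emph{every} Schubert cell has even real dimension, so the skeleta satisfy $X^{2m} = X^{2m+1}$ for all $m \geq 0$.

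Next I would feed this into the CW-chain complex of Sec.~\ref{sec:introHomCohom}. By part (a) of Lem.~\ref{CW-groups}, the chain group $C_m = H_m(X^m, X^{m-1})$ is free abelian of rank equal to the number of $m$-cells. For $m$ odd there are no cells, hence $C_m = 0$. Consequently, for every $m$ at least one of $C_m$ and $C_{m-1}$ vanishes, which forces the boundary homomorphism $\partial^{(m)} : C_m \to C_{m-1}$ to be identically zero.

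Combining these facts with Definition \ref{Chain-homology} and Definition \ref{CW-homology}, the CW-homology in degree $m$ is just $H_m^{CW}(C(Gr_{k,d})) = \ker(\partial^{(m)})/\operatorname{Im}(\partial^{(m+1)}) = C_m$. Applying Theorem \ref{CWcoincide} to transfer this to singular homology yields
\begin{equation}
H_m(Gr_{k,d}; \mathbb{Z}) \cong C_m = \mathbb{Z}^{\nu_m}
\end{equation}
with $\nu_m$ the number of Schubert cells of real dimension $m$, which is $0$ when $m$ is odd and equal to the number of Young diagrams $\alpha \subset k \times (d-k)$ with $2\|\alpha\|_1 = m$ when $m$ is even.

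Since the paper has already established all the nontrivial ingredients (the CW structure, the vanishing lemmas for chain groups, and the comparison with singular homology), there is no genuine obstacle; the only point that deserves care is verifying that the Schubert stratification really satisfies the CW axioms (closure-finiteness and the characteristic map condition), which follows from Lem.~\ref{Schubertvarieties} together with the explicit sCEF parametrization from Sec.~\ref{grassmannian} extended continuously to a closed disk by compactifying the affine cell.
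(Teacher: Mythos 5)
Your proposal is correct and follows essentially the same route as the paper: the Schubert-cell CW structure (Lem.~\ref{partition}, Lem.~\ref{homeo}, Lem.~\ref{Schubertvarieties}), the observation that all cells have even real dimension so the odd chain groups vanish and every boundary homomorphism is zero, and then Lem.~\ref{CW-groups} together with Theorem \ref{CWcoincide} to identify $H_m(Gr_{k,d};\mathbb{Z})$ with the free abelian chain group $C_m \cong \mathbb{Z}^{\nu_m}$. Your closing remark about verifying the CW axioms via the sCEF parametrization is exactly the point the paper delegates to its step 1 and Appendix \ref{CW-complex}, so there is no substantive difference.
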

    \begin{rem}\label{countcells}
    For all $k$ the integer $\nu_k$ is given by the number of Young diagrams $\alpha$ contained in the rectangle $k \times (d-k)$ and with fixed weight $\|\alpha\|_1 = k$.
    \end{rem}
    To recap these first four steps, that lead to the homology groups of the complex Grassmannian we determine explicitly the homology groups for the example $Gr_{2,4}$.
    \begin{ex} We consider the complex Grassmannian $Gr_{2,4}$, whose Schubert cells were already presented at the end of Sec.~\ref{grassmannian}.
    To determine its homology groups we only need to count the number of $k$-cells in the corresponding cell decomposition of $Gr_{2,4}$ or even easier due to Remark \ref{countcells} the number of Young diagrams with weight $2$, contained in the $2\times 2$ rectangle. This is trivial. The young diagrams that we find corresponding to $k=0,1,2,3,4$ are shown in Tab.~\ref{tab:Gr24weights}.
    \begin{table}[h]
    \centering
    $
    \setlength{\arraycolsep}{0.2cm}
\renewcommand{\arraystretch}{2.0}
    \begin{array}{|c|c|c|c|c|c|}
    \hline
    k & 4&3&2&1&0 \\ \hline
    \mbox{Young diagrams}&
    \begin{array}{c} \vspace{-0.15cm}\small{\yng(2,2)}\end{array}&
    \begin{array}{c} \vspace{-0.15cm}\small{\yng(2,1)}\end{array}&
    \begin{array}{c} \vspace{-0.0cm}\small{\yng(2)}\end{array} \,\,\begin{array}{c} \vspace{-0.15cm}\small{\yng(1,1)}\end{array}&
    \begin{array}{c} \vspace{-0.0cm}\small{\yng(1)}\end{array}&\\\hline
    \nu_{2k}&1&1&2&1&1 \\\hline
    \end{array}$
    \captionC{Cell decomposition of the Grassmannian $Gr_{2,4}$ with corresponding weights.}
    \label{tab:Gr24weights}
    \end{table}

    \noindent Thus, we find
    \begin{equation}
    H_*(Gr_{2,4};\mathbb{Z}) = \mathbb{Z} \oplus 0 \oplus  \mathbb{Z} \oplus 0 \oplus  \mathbb{Z}^2 \oplus 0 \oplus  \mathbb{Z} \oplus 0 \oplus  \mathbb{Z}
    \end{equation}
    \end{ex}
\item The so-called Poincar\'e-duality, which is presented later more detailed states
     that for every compact, closed and orientated $n$-manifold $X$ we find for all $k=0,1,\ldots,n$:
     \begin{equation}
     H_{n-k} \cong H^k \,.
     \end{equation}
    Since the Grassmannian $Gr_{k,d}$ is indeed a compact closed and orientated manifold and since
    its cell decomposition has obviously the property $\nu_j = \nu_{2 k (d-k)-j}$, we find
    \begin{equation}
    H^j(Gr_{k,d}) \cong H_j(Gr_{k,d}) \qquad \forall j \in \mathds{N}_0 \,.
    \end{equation}
\item An important concept that we need to determine the cohomology ring is the one of fundamental classes, e.g.~presented in \cite{Hatch} p.235f. The relevant theorem states
    \begin{thm}\label{fundamentalclass}
     Let $X$ be a compact, closed and orientated n-manifold. Then, for any $x_0 \in X$
     \begin{equation}
     H_n(X;\mathbb{Z}) \cong H_n(X,X\setminus {x_0};\mathbb{Z}) \cong \mathbb{Z} \,.
     \end{equation}
     \end{thm}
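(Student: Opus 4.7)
The plan is to establish the two isomorphisms separately, handling the local one first by excision and the global one via orientability.

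First I would establish the local part $H_n(X,X\setminus\{x_0\};\mathbb{Z})\cong\mathbb{Z}$. Since $X$ is an $n$-manifold, $x_0$ admits an open neighborhood $U$ homeomorphic to $\mathbb{R}^n$. The complement $X\setminus U$ is closed and lies inside $X\setminus\{x_0\}$, so the excision axiom (item 6 in Definition \ref{homology}) yields
\begin{equation}
H_n(X,X\setminus\{x_0\};\mathbb{Z})\;\cong\;H_n(U,U\setminus\{x_0\};\mathbb{Z}).
\end{equation}
Now $U\cong\mathbb{R}^n$ is contractible while $U\setminus\{x_0\}\simeq S^{n-1}$ by a straight-line deformation retraction onto a small sphere. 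Plugging these into the long exact sequence of the pair (item 5 in Definition \ref{homology}) and using the homotopy invariance (item 4) reduces everything to the well-known computation $\tilde H_{n-1}(S^{n-1};\mathbb{Z})\cong\mathbb{Z}$, which itself follows from the CW machinery already developed in Section \ref{sec:introHomCohom} by giving $S^{n-1}$ its standard two-cell decomposition.

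Second I would produce the isomorphism $H_n(X;\mathbb{Z})\cong H_n(X,X\setminus\{x_0\};\mathbb{Z})$. The natural candidate is the map induced by the inclusion of pairs $(X,\emptyset)\hookrightarrow(X,X\setminus\{x_0\})$. The strategy I would follow is to prove the more general fact that for every compact subset $K\subset X$ one has $H_i(X,X\setminus K;\mathbb{Z})=0$ for $i>n$, and that the restriction homomorphism $H_n(X;\mathbb{Z})\to H_n(X,X\setminus K;\mathbb{Z})$ sends classes to compatible families of local orientations at the points of $K$. This is done inductively: first verify the claim when $K$ sits inside a Euclidean chart (where excision reduces everything to the local computation of Step 1), then use a Mayer--Vietoris argument to glue the statement for $K=K_1\cup K_2$ from its validity on $K_1$, $K_2$ and $K_1\cap K_2$. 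Specializing to $K=X$ (legal because $X$ is compact), the chosen orientation provides a distinguished global class $[X]\in H_n(X;\mathbb{Z})$ that restricts to the generator of $H_n(X,X\setminus\{x_0\};\mathbb{Z})$ at every point, yielding the desired isomorphism $H_n(X;\mathbb{Z})\cong\mathbb{Z}$.

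The main obstacle is the second step: the excision/contractibility argument gives the local statement almost for free, but the global identification $H_n(X;\mathbb{Z})\cong\mathbb{Z}$ genuinely uses orientability. Without it one cannot hope to patch the local generators into a single global class; the inductive Mayer--Vietoris bootstrap over compact subsets, together with a consistent choice of local orientations, is the real content of the theorem. For the specific applications in this thesis, $X$ will be the Grassmannian or flag variety, which are complex (and hence canonically oriented) projective manifolds, so orientability is automatic and the cited result applies directly.
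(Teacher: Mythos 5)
The paper does not prove this theorem at all: it is quoted as a known result with a citation to Hatcher (p.~235f), and the only remark made there is that the local statement $H_n(X,X\setminus\{x_0\};\mathbb{Z})\cong\mathbb{Z}$ is elementary. Your proposal essentially reconstructs the standard proof from that cited source, and it is sound: the excision--long-exact-sequence reduction to $\tilde H_{n-1}(S^{n-1})$ handles the local part, and the induction over compact subsets (vanishing of $H_i(X,X\setminus K)$ for $i>n$ plus uniqueness/existence of classes with prescribed local restrictions, glued by Mayer--Vietoris and then specialized to $K=X$) is exactly the mechanism by which orientability produces the fundamental class. One small caveat you should make explicit: to conclude that the restriction map $H_n(X;\mathbb{Z})\to H_n(X,X\setminus\{x_0\};\mathbb{Z})$ is injective (and hence an isomorphism onto $\mathbb{Z}$ rather than onto a proper subgroup or from a larger group), you need $X$ connected --- for a closed oriented manifold with $k$ components one gets $H_n(X;\mathbb{Z})\cong\mathbb{Z}^k$. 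This hypothesis is implicit in the theorem as stated in the thesis and is harmless for the intended applications (Grassmannians and flag varieties are connected complex projective manifolds, hence canonically oriented), but your write-up should record where connectedness enters.
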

     This theorem holds for every coefficient ring, but we already restricted to the integer ring. The statement $H_n(X,X\setminus {x_0};\mathbb{Z}) \cong \mathbb{Z}$ is an elementary result and is not part of the theorem as such.
     Due to Theorem \ref{fundamentalclass} there is a (modulo $\mathbb{Z}_2$) unique element in $H_n(X;\mathbb{Z})$ that (additively) generate $H_n(X;\mathbb{Z})$. Due to Poincar\'e's duality there is also a unique element in $H^0(X;\mathbb{Z}) \cong \mathbb{Z}$, denoted by $[X]$. It is called the \emph{fundamental class} $[X]$ of $X$.
     This leads to one of the most important results of algebraic topology (one can also say that (co)homology was constructed such that this holds)
     \begin{rem}\label{submanifold}
     Let $Y \subset X$ be a compact, closed and orientated submanifold of the manifold X. Then as stated above there exists a specific element $[Y] \in H(Y)$ and due to the covariant structure of homology theories also a unique element in H(X) that we also denote by $[Y]$. Due to Poincare's duality the same also holds for the contravariant cohomology theories.
     \end{rem}
     This means e.g.~that the closed pathes $b$ and $c$ in Fig.~\ref{spheretorus}, as compact closed manifolds give rise to elements $[a], [b] \in H_*(T)$ in the homology of the torus $T$ (or alternatively also to elements in the cohomology $H^*(T)$).

     Using these insights one can conclude that the subvarieties of any flag variety are in a $1-1$-correspondence to the generators of the corresponding
     cohomology ring. In particular, one can show \cite{BlBr} that the cohomology ring of the flag variety can be represented by Schubert polynomials.
\end{enumerate}

\section{From Schubert cells to Schubert varieties}\label{sec:Schuberts}
Now, we would like to  apply the strong machinery provided by Schubert calculus combined with a powerful intersection theory (see Sec.~\ref{GeometricApproach}) to study intersection properties of type (\ref{intersectionprop1}) and (\ref{intersectionprop2}). Note that there is still a subtle problem. The intersection properties (\ref{intersectionprop1}) and (\ref{intersectionprop2}) refer to Schubert cells of the Grassmannian and the flag variety, respectively. In contrast to their closures, the Schubert varieties, they do not have the structure of closed manifolds which is required to use the concepts developed in the previous sections. Fortunately, as we will show in this section we can reformulate the intersection properties into the required form.
Let us first consider the intersection property (\ref{intersectionprop1}) for the QMP $\{A,AB\}$.
In the Appendix \ref{Schubertcalculus} it is shown that the map
\begin{eqnarray}
\mbox{Tr}[\rho P_{\bullet}]:\qquad \mbox{Gr}_{k,d}  &\rightarrow& \mathbb{R} \nonumber \\
 V &\mapsto& \mbox{Tr}[\rho P_V] \label{tracecontinuity1}
\end{eqnarray}
is continuous and it is also shown that this together with the definition of the topology for the Grassmannian yields
\begin{lem}\label{intersection1varieties}
Consider the QMP $\mathcal{M}_{A, AB}$. Whenever the intersection property
\begin{equation}\label{intersectionprop3}
\Phi(S_{\pi}(\rho_A)) \cap S_{\hat \sigma}(\rho_{AB}) :=S_{\pi}(\rho_A) \otimes \mathcal{H}^{(B)} \cap S_{\hat \sigma}(\rho_{AB}) \neq \emptyset
\end{equation}
is fulfilled the following spectral inequality holds:
\begin{equation}\label{spectralineq3}
\boxed{
\sum_{j=1}^{d_A}\pi_j \lambda_j^{(A)} \leq \sum_{i=1}^{d_A d_B} \sigma_i \lambda_i^{(AB)}}    \,.
\end{equation}
\end{lem}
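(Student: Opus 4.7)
The plan is to reduce the closure-level statement to the already-proven open-cell inequality~(\ref{spectralineq1}) by combining the cell decomposition of Schubert varieties (Lemma~\ref{Schubertvarieties}) with a monotonicity argument for the spectral sums under the Young-diagram order.

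First I would unpack the hypothesis: pick any $W_0$ in $S_\pi(\rho_A)\otimes \mathcal{H}^{(B)} \cap S_{\hat{\sigma}}(\rho_{AB})$ and write $W_0=V_0\otimes \mathcal{H}^{(B)}$ with $V_0\in S_\pi(\rho_A)$. Lemma~\ref{Schubertvarieties} decomposes $S_\pi(\rho_A)$ as the disjoint union of open cells $S_{\pi'}^{\circ}(\rho_A)$ over $\pi'\leq \pi$, so $V_0\in S_{\pi'}^{\circ}(\rho_A)$ for some unique $\pi'\leq \pi$; similarly, $W_0\in S_{\hat{\sigma}'}^{\circ}(\rho_{AB})$ for some $\hat{\sigma}'\leq \hat{\sigma}$. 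The open-cell intersection condition~(\ref{intersectionprop1}) is therefore satisfied for the pair $(\pi',\sigma')$, and the already-established inequality~(\ref{spectralineq1}) gives
\[
\sum_{j=1}^{d_A}\pi'_j \lambda_j^{(A)}\leq \sum_{i=1}^{d_A d_B}\sigma'_i\lambda_i^{(AB)}.
\]

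Next I would prove the two monotonicity bounds $\sum_j \pi_j \lambda_j^{(A)}\leq \sum_j \pi'_j \lambda_j^{(A)}$ and $\sum_i \sigma'_i \lambda_i^{(AB)} \leq \sum_i \sigma_i \lambda_i^{(AB)}$, which together with the previous chain yield~(\ref{spectralineq3}). The pivot/Young-diagram correspondence from Sec.~\ref{grassmannian} identifies $\pi'\leq \pi$ with the componentwise inequality $p'_j\leq p_j$ of pivot positions, and since $\lambda^{(A)}$ is ordered decreasingly, $\lambda_{p'_j}^{(A)}\geq \lambda_{p_j}^{(A)}$; summation then gives the first bound. The second bound is analogous but requires tracking the reflection $\sigma_i = \hat{\sigma}_{d_A d_B - i + 1}$: the componentwise inequality on the pivots of $\hat{\sigma}$ reverses under this reflection, so the pivots of $\sigma'$ sit componentwise to the \emph{right} of those of $\sigma$, and again by the decreasing order of $\lambda^{(AB)}$ the $\sigma'$-sum is the smaller one.

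The main subtlety is precisely this bookkeeping in the second monotonicity bound, where the duality $\sigma \leftrightarrow \hat{\sigma}$ flips the direction of the pivot inequality; once that is handled carefully, every remaining step is a direct invocation of Lemma~\ref{Schubertvarieties} and~(\ref{spectralineq1}). An alternative route would exploit the continuity~(\ref{tracecontinuity1}) of $V\mapsto \mbox{Tr}[\rho P_V]$ on the Grassmannian, concluding via the observation that the minimum of $\mbox{Tr}[P_V \rho_A]$ over the closure $S_\pi(\rho_A)$ still equals $\sum_j \pi_j \lambda_j^{(A)}$; but that statement itself reduces to the same monotonicity claim.
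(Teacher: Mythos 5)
Your argument is correct, but it follows a genuinely different route from the paper. The paper proves Lemma~\ref{intersection1varieties} with a topological argument: the map $V \mapsto \mbox{Tr}[\rho P_V]$ is continuous on the Grassmannian (estimate~(\ref{matrixmetric1}) in Appendix~\ref{Schubertcalculus}), and a boundary point of a Schubert cell has zero distance to the cell, so the Hersch--Zwahlen minimum $\sum_j \pi_j\lambda_j = \min_{V\in S_\pi^\circ(\rho)}\mbox{Tr}[P_V\rho]$ remains a lower bound on $\mbox{Tr}[P_{W}\rho]$ for every $W$ in the closure; the chain~(\ref{deviationspecineq1}) then goes through verbatim with $W_0$ in the intersection of closures. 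You instead stay entirely combinatorial: you locate $V_0$ and $W_0$ in unique open cells $S_{\pi'}^\circ(\rho_A)$, $S_{\hat\sigma'}^\circ(\rho_{AB})$ with $\pi'\leq\pi$, $\hat\sigma'\leq\hat\sigma$ via Lemma~\ref{Schubertvarieties}, invoke the already-proved cell inequality~(\ref{spectralineq1}) for $(\pi',\sigma')$, and then dominate both sides using the pivot-position monotonicity (correctly handling the order reversal under $\sigma\leftrightarrow\hat\sigma$). Your route buys a slightly sharper statement (the inequality for the dominating pair $(\pi',\sigma')$) and needs no metric estimates, but it leans on Lemma~\ref{Schubertvarieties}, whose proof the paper only sketches, whereas the paper's continuity argument needs nothing beyond the definition of the closure; you should also state explicitly that all cells in the closure of a Grassmannian Schubert cell carry the same weight, so that $\|\pi'\|_1=\|\pi\|_1$ and the morphism $\Phi_{\|\pi\|_1}$ and the dual sequence are the right ones. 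One small correction to your closing remark: the continuity route does not ``reduce to the same monotonicity claim''---it concludes directly from closure points being limits of cell points together with continuity of the trace map, which is exactly how the paper argues.
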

Now, let us consider the intersection property (\ref{intersectionprop2}), which is relevant for the QMP $\{A,B,AB\}$.
In the Appendix \ref{Schubertcalculus} it is shown that for every fixed test spectrum $a$ the map
\begin{eqnarray}
\mbox{Tr}[\rho A(\bullet)]:\qquad \mbox{\emph{Fl}}_d  &\rightarrow& \mathbb{R} \nonumber \\
 F_{\bullet} &\mapsto& \mbox{Tr}[\rho A(F_{\bullet})] \label{tracecontinuity2}
\end{eqnarray}
is continuous where $A(\bullet)$ is an operator uniquely determined by its spectrum $a$ and the corresponding eigenspaces described by the flag $F_{\bullet}$.
Moreover it is also shown that this continuity condition together with the definition of the topology for the flag variety yields
\begin{lem}\label{intersection2varieties}
Consider the QMP $\mathcal{M}_{A,B,AB}$. Whenever the intersection property
\begin{equation}\label{intersectionprop4}
\Phi_{a,b}(X_{\alpha}(\rho_A)),X_{\beta}(\rho_{B})) \cap X_{\gamma \omega_0}(-\rho_{AB}) \neq \emptyset
\end{equation}
is fulfilled the following spectral inequality holds:
\begin{equation}\label{spectralineq4}
\boxed{\sum_{i=1}^{d_A} \lambda_{\alpha_i}^{(A)}a_i + \sum_{j=1}^{d_B} \lambda_{\beta_j}^{(B)}b_j \leq \sum_{k=1}^{d_A d_B} \lambda^{(AB)}_{\gamma_k}\, (a+b)^{\downarrow}_k }\,.
\end{equation}
\end{lem}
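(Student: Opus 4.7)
The strategy is to upgrade the cell-version inequality (\ref{spectralineq2}), which was derived in (\ref{deviationspecineq2}) under the stronger hypothesis (\ref{intersectionprop2}), to the closure-version stated here, by means of a continuity and density argument anchored in the continuity of $F_{\bullet}\mapsto \mbox{Tr}[\rho A(F_{\bullet})]$ asserted in (\ref{tracecontinuity2}).

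The central step is to extend the second Hersch-Zwahlen principle (Lem.~\ref{HerschZwahlenAdv}) from the open Schubert cell to its closure, i.e.~to establish
\begin{equation*}
\sum_{j=1}^d \lambda_{\alpha_j} a_j \;=\; \min_{\begin{array}{c} F_{\bullet}(A) \in X_{\alpha}(\rho) \\ \mbox{spec}(A)=a \end{array}} \mbox{Tr}[\rho A].
\end{equation*}
The direction $\geq$ follows from $X_{\alpha}^{\circ}(\rho)\subset X_{\alpha}(\rho)$ by restricting the minimization and invoking the original (\ref{HerschZwahlen2}); the substantive $\leq$ direction runs as follows. By Def.~\ref{defnFlagSchubert} the Schubert variety is the topological closure of its cell, so any $F_{\bullet} \in X_{\alpha}(\rho)$ is the limit of some sequence $F_{\bullet}^{(n)} \in X_{\alpha}^{\circ}(\rho)$. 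Via Def.~\ref{flagoperator}, each $F_{\bullet}^{(n)}$ together with the fixed decreasing spectrum $a$ determines a unique hermitian $A_n$ with $F_{\bullet}(A_n)=F_{\bullet}^{(n)}$ and $\mbox{spec}(A_n)=a$, and $A_n$ converges in operator norm to the corresponding $A$ with $F_{\bullet}(A)=F_{\bullet}$. The cell-level inequality $\mbox{Tr}[\rho A_n] \geq \sum_j \lambda_{\alpha_j} a_j$ then transports to the limit by continuity of (\ref{tracecontinuity2}).

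With this extension in hand, the proof is completed by repeating the chain (\ref{deviationspecineq2}) verbatim, with every open Schubert cell replaced by its Schubert variety. The hypothesis (\ref{intersectionprop4}) supplies an operator $T_{AB}^{(0)} = T_A^{(0)}\otimes\mathds{1}_B + \mathds{1}_A\otimes T_B^{(0)}$ whose flag lies in both $\Phi_{a,b}(X_{\alpha}(\rho_A),X_{\beta}(\rho_B))$ and $X_{\gamma \omega_0}(-\rho_{AB})$; this is precisely the object needed to bridge the two $\leq$-steps in lines three and four of (\ref{deviationspecineq2}). Applying the extended variational principle to $(\rho_A,T_A^{(0)})$ on $X_{\alpha}(\rho_A)$, to $(\rho_B,T_B^{(0)})$ on $X_{\beta}(\rho_B)$, and to $(-\rho_{AB},T_{AB}^{(0)})$ on $X_{\gamma \omega_0}(-\rho_{AB})$ with test spectrum $(a+b)^{\downarrow}$, the remaining steps of (\ref{deviationspecineq2}) deliver (\ref{spectralineq4}).

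The main obstacle is the delicate point in the extension step where one must simultaneously approximate the flag in the closure and keep the operator spectrum pinned at $a$. Definition \ref{flagoperator} resolves this: once the decreasing order on $a$ is fixed, the assignment flag $\mapsto$ operator is well-defined and continuous (via an orthonormal basis chosen continuously from the successive one-dimensional quotients $F_k \cap F_{k-1}^{\perp}$), so the convergence $F_{\bullet}^{(n)} \to F_{\bullet}$ forces $A_n \to A$ with $\mbox{spec}(A_n)=\mbox{spec}(A)=a$ throughout. A combinatorial shortcut is also available via Lem.~\ref{partitionFlSchubert}, which expresses $X_{\alpha}(\rho)$ as a disjoint union of open cells labelled by $\beta \leq \alpha$; but the continuity route appears cleaner and parallels exactly the analogous passage from cells to varieties underlying Lem.~\ref{intersection1varieties}.
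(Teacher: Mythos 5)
Your proposal is correct and follows essentially the same route as the paper: the paper also reduces the closure version to the cell version by showing, via the continuity estimate $|\mbox{Tr}[\rho A]-\mbox{Tr}[\rho B]|\leq \mbox{Tr}[\rho]\,\|A-B\|_{\mbox{Op}}$ and the quotient topology of the flag variety, that boundary flags of a Schubert cell cannot lower the Hersch--Zwahlen minimum, and then reruns the chain (\ref{deviationspecineq2}) with cells replaced by their closures. Your sequential-approximation phrasing versus the paper's zero-distance/infimum argument in Appendix \ref{Schubertcalculus} is only a cosmetic difference.
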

Finally, it is worth noting
\begin{rem}
If we replace the Schubert cells in the intersection conditions (\ref{intersectionprop1}) and (\ref{intersectionprop2}) by its closures, i.e.~by larger sets, we definitively do not reduce the number of possible intersection incidences and hence by this transition from Schubert cells to Schubert variety derived in this section we do not reduce the family of spectral marginal constraints.
\end{rem}

\section{The geometric approach}\label{GeometricApproach}
\begin{floatingfigure}[]{5cm}
\hspace{-0.5cm}
\includegraphics[width=4.8cm]{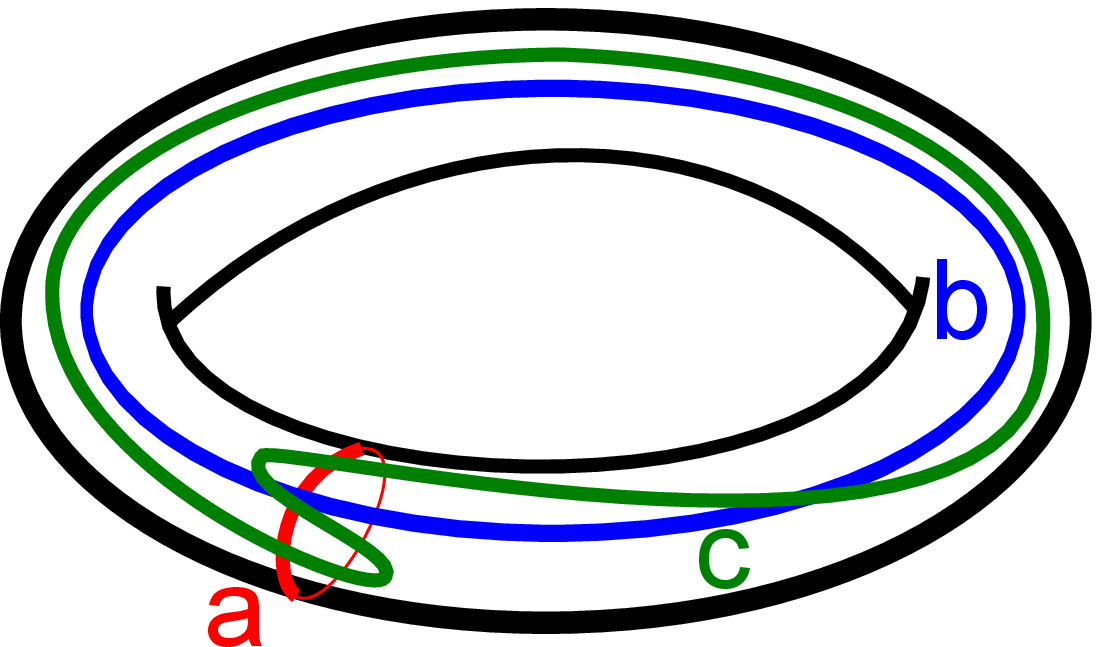}
\captionC{Torus and homologous cycles(see text).}
\label{torus}
\end{floatingfigure}
Since the algebraic methods in the intersection theory are quite difficult, we introduce the geometric motivation of this powerful algebraic machinery. Let us therefore consider Fig.~\ref{torus}: There is drawn a torus $T$. The closed pathes $a, b$ and $c$ are compact, closed and orientated submanifolds of $T$. Remark \ref{submanifold} states the existence of elements $[a], [b], [c] \in H_*(T)$. In that sense one calls these closed pathes cycles. In the same way, the Schubert varieties $S_{\alpha}(F_{\bullet})$ and $X_{\beta}(F_{\bullet})$ are irreducible subvarieties (this means at the end compact, closed and orientated submanifolds) of the Grassmannian variety $Gr_{k,d}$ and the flag variety $\mbox{\emph{Fl}}_d$, respectively and thus give rise to cycles $ [S_{\alpha}(F_{\bullet})] \in H_*(Gr_{k,d})$ and $[X_{\beta}(F_{\bullet})] \in H_*(\mbox{\emph{Fl}}_d)$, respectively. In Fig.~\ref{torus} the pathes $b$ and $c$ are homotopic equivalent. We can continuously deform path $b$ to path $c$. This is not possible for path $a$ and $b$. Due to the properties of (co)homology theory the fundamental classes of homotopic equivalent submanifolds are homologous, i.e.~$[b] = [c]$.

Now, we ask whether two closed pathes on $T$ intersect. E.g.~$a$ and $b$ intersect in exactly one point. If we deform $b$ continuously, e.g.~to path $c$ this new
path has more intersection points with $a$, namely $3$. But independent how we deform path $a$ and $b$ we never reach an even number of intersection points. We either obtain an odd number or an infinite number. The latter scenario is not that pleasant and we  would like to rule it out.
\begin{defn}
Let $X$ be a r-dimensional manifold and $M, N \subset X$ two submanifolds of dimension $m$ and $n$. If $M$ and $N$ intersect in a point $x \in X$ we say that they intersect \textbf{transversally} if their tangent spaces $T_x M, T_x N$ at that point span the whole tangent space $T_x X$ and $T_x M \cap T_x N = \{0\}$.
\end{defn}
\begin{rem}
Two submanifold $M, N \subset X$ of the manifold $X$ can only intersect transversally if their dimensions $\mbox{dim}(M)=m$ and $\mbox{dim}(N)=n$ add up to the dimension $r$ of the manifold $X$.
\end{rem}
By using this concept we observe
\begin{rem}
Let $X$ be a r-dimensional manifold and $M, M', N, N' \subset X$ submanifolds, where $M, M'$ have dimension $m$ and $N, N'$ have $n$ , with $m+n = r$ such that $M$ is homotopic equivalent to $M'$ and $N$ to $N'$. Then if $M$ intersects $N$ and $M'$ intersects $N'$ only transversally and finitely many times the number of intersection points modulo $2$ is the same for $M, N$ and $M', N'$.
\end{rem}
If there are given two fixed submanifolds e.g.~path $a$ and $c$ and we would like know whether they intersect we can deform both pathes continuously to new closed pathes $a'$ and $b'$. If we can end up in a situation where $a'$ and $b'$ intersect only in one point (and also transversally) then we know that $a \cap b \neq \emptyset$! This provides the tool for studying the intersection properties (\ref{intersectionprop3}) and (\ref{intersectionprop4}).

\section{Solving the quantum marginal problem \{A,AB\}}\label{sec:QMPAvsAB}
In this section we combine all the ideas and results from Schubert calculus to analyze the intersection property (\ref{intersectionprop3}) systematically.
Let's choose $\alpha \subset k \times (d_A-k)$ with weight $k$. The relevant Grassmannians are $Gr_{k,d_A}$ and $Gr_{k d_B, d_A d_B}$.
The map (see (\ref{morphism1}))
\begin{equation}
\Phi_k : Gr_{k,d_A} \rightarrow Gr_{k d_B, d_A d_B}
\end{equation}
induces a homomorphism
\begin{equation}
\Phi_k^* : H^*(Gr_{k d_B, d_A d_B}) \rightarrow H^*(Gr_{k,d_A}) \,.
\end{equation}
It is the goal to determine this homomorphism $\Phi^*$ and later by using intersection theory we will investigate the intersection property (\ref{intersectionprop3}) systematically.

We consider binary sequences $\pi$ and $\hat{\sigma}$ that lead to subvarieties $\Phi_{\|\pi\|_1}(S_{\pi}(\rho_A)) = S_{\pi}(\rho_A) \otimes \mathcal{H}^{(B)} $ and $ S_{\hat \sigma}(\rho_{AB})$ with complementary dimensions i.e.~they add up to the dimension of the Grassmannian $Gr_{k,d}$. Concretely speaking we restrict to the case
\begin{equation}
\mbox{dim}(S_{\pi}(\rho_A) \otimes \mathcal{H}^{(B)} ) +\mbox{dim}(S_{\hat \sigma}(\rho_{AB}))=\|\alpha(\pi)\|_1 + \|\alpha(\hat{\sigma})\|_1 = k (d-k) = \mbox{dim}(Gr_{k,d}) \,.
\end{equation}
Note, that we may miss some pairs $(\pi, \sigma)$ with nonempty intersection $S_{\pi}(\rho_A) \otimes \mathcal{H}^{(B)} \cap S_{\hat \sigma}(\rho_{AB})$, but at the moment we ignore this aspect.

We present the example $d_a = d_b =2$. A priori we have to consider three cases, namely $\|\pi\|_1 =0,1,2$. In the first case $\|\pi\|_1 =0$, i.e.~$\pi = (0,0)$ the corresponding Schubert cell is given by $S_{(0,0)}(\rho_A) =\{0\}$ and we find an intersection with $S_{\hat{\sigma}}(\rho_{AB})$ if $\sigma = (0,0,0,0)$, i.e.
$S_{\hat{\sigma}}(\rho_{AB}) =\{0\}$. This non-empty intersection yields a trivial inequality, namely $0\leq 0$. The case $\|\pi\|_1=2$ is similar. In that case we obtain $S_{\hat{\sigma}}(\rho_{AB}) \otimes \mathcal{H}^{(B)} = \{\mathcal{H}^{(AB)}\}$ and by choosing $\sigma = (1,1,1,1)$, i.e.~$S_{\hat{\sigma}}(\rho_{AB}) =\{\mathcal{H}^{(AB)}\}$ we find a non-empty intersection, which only yields a trivial inequality,
\begin{equation}
\lambda_1^{(A)} + \lambda_2^{(A)} \leq \lambda_1^{(AB)} + \lambda_2^{(AB)}+\lambda_3^{(AB)} + \lambda_4^{(AB)} \,,
\end{equation}
which is always true since both sides are identical to $1$ (trace normalization of the state $\rho_A$ and $\rho_{AB}$).
Only the case $\|\pi\|_1 =1$ is nontrivial. There are two subcases, $\pi = (1,0)$ and $\pi = (0,1)$. First, we consider the case $\pi = (1,0)$. To apply the geometric deformation concepts we still need some notation. The state $\rho_A$ defines a flag (of eigenspaces) and we denote the corresponding orthonormal basis by $\{a_1,a_2\}$. This means that $a_1$ is the eigenvector corresponding to the largest eigenvalue of $\rho_A$. Moreover, let $\{b_1,b_2\}$ be an orthonormal basis for $\mathcal{H}^{(B)}$ and $\{c_1,c_2,c_3,c_4\}$ the orthonormal basis induced by $\rho_{AB}$. By the use of the sCEF representation we find
\begin{equation}
S_{(1,0)}(\rho_A) = \left[\setlength{\arraycolsep}{0.06cm}\renewcommand{\arraystretch}{0.8}\begin{array}{c} 1 \\ 0  \end{array}\right]   \qquad \mbox{w.r.t} \, \{a_1,a_2\}\,.
\end{equation}
The specification `w.r.t $\{\ldots \}$' denotes the external reference bases of the sCEF. Furthermore,
\begin{equation}\label{example22}
S_{(1,0)}(\rho_A)\otimes \mathcal{H}^{(B)} = \left[\setlength{\arraycolsep}{0.09cm}\renewcommand{\arraystretch}{0.9}\begin{array}{cc} 0&1 \\  1& 0\\ 0&0 \\  0&0  \end{array}\right]   \,\,\mbox{w.r.t}\, \{a_1\otimes b_1,a_1\otimes b_2,a_2\otimes b_1,a_2\otimes b_2\}\,.
\end{equation}
To meet this variety transversally, the variety $S_{\hat{\sigma}}(\rho_{AB})$ must have maximal dimension, i.e.~dimension $4$. This means $\hat{\sigma}=(0,0,1,1)$ and thus
\begin{equation}
S_{(0,0,1,1)}(\rho_{AB}) = \left[\setlength{\arraycolsep}{0.09cm}\renewcommand{\arraystretch}{0.9}\begin{array}{cc} \ast& \ast \\  \ast& \ast\\ 0&1 \\  1&0  \end{array}\right] + \mbox{its boundary}   \,\,\mbox{w.r.t}\, \{c_1,c_2,c_3,c_4\}\,.
\end{equation}
The boundary is here not that relevant. For its conceptual description in the REF/CEF-representation we refer to Sec.~\ref{flagvarieties}. Since the group $Gl(4)$ is path connected and all elements have a non-vanishing determinant, we can deform the variety of the last equation by multiplying with $g \in Gl(4)$ from the right side. E.g.~we can continuously deform
\begin{equation}
\left[\setlength{\arraycolsep}{0.09cm}\renewcommand{\arraystretch}{0.9}\begin{array}{cc} \ast&\ast \\  \ast& \ast\\ 0&1 \\  1&0  \end{array}\right] + b. \rightarrow \left[\setlength{\arraycolsep}{0.09cm}\renewcommand{\arraystretch}{0.9}\begin{array}{cc} 0&1 \\ 1& 0\\ \ast&\ast \\  \ast&  \ast  \end{array}\right]+b. \, \,\,\mbox{w.r.t} \, \{c_1,c_2,c_3,c_4\}\,.
\end{equation}
The right side is not a sCEF but of course it is still well defined and describes a subset of $Gr_{2,4}$. In a second step one can continuously transform the basis $\{c_1,c_2,c_3,c_4\}$ to the basis $ \{a_1\otimes b_1,a_1\otimes b_2,a_2\otimes b_1,a_2\otimes b_2\}$, this means effectively
\begin{eqnarray}
\lefteqn{\left[\setlength{\arraycolsep}{0.09cm}\renewcommand{\arraystretch}{0.9}\begin{array}{cc} 0& 1 \\ 1&0 \\ \ast&\ast \\  \ast&\ast  \end{array}\right] + b. \, \,\,\mbox{w.r.t} \, \{c_1,c_2,c_3,c_4\}\,} &&\\ \nonumber
&\rightarrow& \left[\setlength{\arraycolsep}{0.09cm}\renewcommand{\arraystretch}{0.9}\begin{array}{cc} 0&1 \\  1& 0\\ \ast&\ast \\  \ast&\ast  \end{array}\right] +b. \, \,\,\mbox{w.r.t} \, \{a_1\otimes b_1,a_1\otimes b_2,a_2\otimes b_1,a_2\otimes b_2\}\,.
\end{eqnarray}
The variety on the right side of the last equation intersects with the variety shown in (\ref{example22}) only if all four complex stars take the value $0$. Then, the number of intersection points is $1$. This means that for $\pi=(1,0)$ and $\hat{\sigma}=(0,0,1,1)$ the corresponding intersection in (\ref{intersectionprop3}) is non-empty and we obtain the non-trivial spectral inequality (recall $\hat{\sigma} = (\hat{\sigma}_4,\hat{\sigma}_3,\hat{\sigma}_2,\hat{\sigma}_1)$)
\begin{equation}
\lambda_1^{(A)}  \leq \lambda_1^{(AB)} + \lambda_2^{(AB)}\,.
\end{equation}
It turns out that the remaining case $\pi = (0,1)$ can only lead to an inequality already implicitly contained in the previous one.
The spectral quantum marginal problem $\{A,AB\}$ with dimensions $d_A=d_B=2$ has the solution $\lambda_1^{(A)}  \leq \lambda_1^{(AB)} + \lambda_2^{(AB)}$.
Note, that due to the elegance of the deformation and intersection concepts it was at the end not relevant to know the states $\rho_A$ and $\rho_{AB}$.

\section{Solving the quantum marginal problem \{A,B,AB\}}\label{sec:QMPAvsBvsAB}
In this section we combine all the ideas and results from Schubert calculus to analyze the intersection property (\ref{intersectionprop4}) systematically.
One arising problem is the uncountability of the family $\Omega$ of test spectra $(a,b)$. Now, we first show how to get rid of this. The idea is to show that several of the spectral inequalities (\ref{spectralineq4}) are linearly dependent. Let us analyze the intersection property (\ref{intersectionprop4}) for given test spectrum $(a,b)$. We introduce an equivalence relation on the family $\Omega$ of pairs of test spectra by
\begin{equation}\label{testspecequiv}
(a,b) \sim  (a',b') \qquad :\Leftrightarrow \qquad \Phi_{a,b} = \Phi_{a',b'} \,,
\end{equation}
where $\Phi_{a,b}$ is the morphism of flag variety introduced in (\ref{morphism2}).
According to (\ref{testspecequiv}) two pairs of spectra are equivalent if and only if their induced index maps $i_{\bullet}$ and $j_{\bullet}$ are identical (recall their definition in (\ref{indexmaps})) or in other words equivalence means $1\leq k,m \leq d_A,\, 1\leq l,n\leq d_B\,:$
\begin{equation}
 a_k + b_m > a_l + b_n  \qquad \Leftrightarrow \qquad a'_k + b'_m > a'_l + b'_n \,.
\end{equation}
In particular we find for all $\lambda \in \mathbb{R}^{+}$ and $c, d \in \mathbb{R}$ that
\begin{equation}\label{spectraweyl}
(a,b)\,\, \sim \,\,(\lambda \,a + c\, 1_{d_A},\lambda \,b + d\, 1_{d_B})
\end{equation}
with $1_r = (1,\ldots,1)$ containing $r$ $1$'s.
Hence, $\Omega$ separates into disjoint subsets $Q_{a,b}$, where $(a,b)$ is a representant for the class $Q_{a,b}$.
For given permutations $\alpha, \beta$ and $\gamma$ we find intersection incidences for $(a,b) \in Q$ of intersection condition (\ref{intersectionprop4}) if and only if this is also the case for all the equivalent test spectra $(a',b')$. Hence, for given equivalence class (\ref{intersectionprop4}) holds for each element in this class or for none of them. This reduces the effort of checking (\ref{intersectionprop4}). Moreover independent of the question of intersection
we always have (due to trace normalization)
\begin{equation}\label{specineqtrace}
1_{d_A}\cdot \lambda^{(A)} + 1_{d_B}\cdot \lambda^{(B)} = 2 \, 1_{d_A d_B}\cdot \lambda^{(AB)} = (1_{d_A}+ 1_{d_B})^{\downarrow}\cdot\lambda^{(AB)}\,,
\end{equation}
where in this line the sum $(a+b)^{\downarrow}$ of two vectors is to be understood as $({a_i+b_j})^{\downarrow}$.
Assume that (\ref{intersectionprop4}) holds for given $(a,b)$ then
\begin{equation}
\sum_{i=1}^{d_A} \lambda_{\alpha_i}^{(A)}a_i + \sum_{j=1}^{d_B} \lambda_{\beta_j}^{(B)}b_j \leq \sum_{k=1}^{d_A d_B} \lambda^{(AB)}_{\gamma_k}\, (a+b)^{\downarrow}_k
\end{equation}
which can be reformulated by combining it with (\ref{specineqtrace}) as
\begin{equation}
\sum_{i=1}^{d_A} \lambda_{\alpha_i}^{(A)} (a_i-c) + \sum_{j=1}^{d_B} \lambda_{\beta_j}^{(B)} (b_j-d) \leq \sum_{k=1}^{d_A d_B} \lambda^{(AB)}_{\gamma_k}\, ((a- c \,1_{d_A}) + (b-d\,1_{d_B}))^{\downarrow}_k \,,
\end{equation}
for all $c,d \in \mathbb{R}$.
This means that the spectral solution set of the univariant QMP given by Lem.~\ref{intersection2varieties} is also obtained by checking (\ref{intersectionprop4}) only for test spectra $(a,b)$ fulfilling
\begin{equation}\label{testspecwyle}
\sum_i^{d_A} a_i = \sum_j^{d_B} b_j = 0 \,.
\end{equation}
We denote the subset of $\Omega$ of test spectra fulfilling (\ref{testspecwyle}) by $\Omega^{(0)}$ which also separates according to the equivalence relation (\ref{testspecequiv}) into disjoint subsets $Q^{(0)}_{a,b}$ called \textbf{cubicles}.
In the following we will study the geometry of the cubicles as subsets of Euclidean spaces, $Q_{a,b}^{(0)} \subset \mathbb{R}^{d_A+ d_B}$.
Each cubicle is concretely given by
\begin{eqnarray}\label{qubicle}
Q^{(0)}_{a,b} &=& \{(a,b) \in \mathbb{R}^{d_A+ d_B}\,|\, \sum_i^{d_A} a_i = \sum_j^{d_B} b_j = 0 \, \nonumber \\
&&\wedge\, a_{i(1)}+ b_{j(1)}> a_{i(2)}+ b_{j(2)}> \ldots > a_{i(d_A d_B)}+ b_{j(d_A d_B)} \}\,,
\end{eqnarray}
and $i$ and $j$ are the index maps given by (\ref{indexmaps}) (and we skipped their index since both only depend on the cubicle and not on the single representant). Obviously, every cubicle is not only a convex set but in particular has also the form of a high dimensional open polyhedral cone. Note that since $(a,b) \sim \lambda (a,b)$ it is not a polytope. Moreover it has dimension $d_A+ d_B -2$ and with respect to the subspace topology it is open. This is clear since a little perturbation of $a$ and $b$ under the condition that  (\ref{testspecwyle}) still holds does not change the hierarchy $a_{i(1)}+ b_{j(1)} > a_{i(2)}+ b_{j(2)} > \ldots > a_{i(d_A d_B)}+ b_{j(d_A d_B)}$.
Every polyhedral cones $Q$ has the apex point $(0,\ldots,0)$ and is then uniquely defined by its finitely many extremal edges. Every edge is given by a set of the structure shown in \ref{qubicle} but with several `$=$'-signs instead of `$>$'-signs in the hierarchy, which then reduces the dimension from $d_A+d_B-2$ to $1$. These extremal edges can at least in principle be determined for every cubicle. But are they relevant for the solution of the QMP? Yes, since all eigenvalues enter linearly into the spectral inequalities (\ref{spectralineq4}), every inequality corresponding to some spectrum $(a,b)$ inside of the cubicle/polyhedral cone $Q^{(0)}$ depends linearly on finitely many inequalities corresponding to points (i.e.~test spectra) belonging to the extremal edges of this cubicle.
Finally we summarize these insights.
To determine the family of spectral inequalities (\ref{spectralineq4}) describing the solution of the QMP according Lem.~\ref{intersection2varieties}
it suffices to follow the strategy consisting of these finite processes:
\begin{enumerate}
\item Find all cubicles
\item Determine their edges
\item Check for each cubicle and for all possible triples $(\alpha,\beta,\gamma)$ of permutations whether the intersection property (\ref{intersectionprop4}) is fulfilled. If this is the case we find spectral inequalities (\ref{spectralineq4}) for every extremal edge belonging to the given cubicle.
\end{enumerate}
Note that the first two steps do neither depend on these permutations nor on the density operators. They are purely combinatorial and only depend on both dimensions $d_A$ and $d_B$. To illustrate the first two steps we present the example $d_A = d_B = 2$.
All test spectra (a,b) have the form
\begin{equation}
a= (a_0,-a_0) \qquad,\, b=(b_0,-b_0)\qquad,\, a_0,b_0 \in \mathbb{R}_0^+\,.
\end{equation}
Since $(a+b)$ has only four entries and since $a_1+b_1$ is always the largest and $a_2 + b_2$ is always the smallest, there are only two cubicles $Q_1$ and $Q_2$ in this setting
the first one (e.g.) is then characterized by $a_1+b_2 > a_2+b_1$ and the second one by $a_2+b_1 > a_1+b_2$.
Moreover it is quite easy here to determine all extremal edges. Note that since edges are $1$-dimensional objects and all start at the apex $(0,\ldots,0)$ of the open polyhedral cone and go to infinity they are uniquely defined by one single point (vector) of this edge/line. We find
\begin{eqnarray}
\mbox{edge} \,E_1\,:\qquad (a,b) &=& ((1,-1),(0,0)) \nonumber \\
\mbox{edge} \,E_2\,:\qquad (a,b) &=& ((0,0),(1,-1)) \nonumber \\
\mbox{edge} \,E_3\,:\qquad (a,b) &=& ((1,-1),(1,-1)) \,.
\end{eqnarray}
After all, we observe that $Q_1$ has the extremal edges $E_1$ and $E_3$ and $Q_2$ has the extremal edges $E_2$ and $E_3$. This finishes the first two steps of the strategy. The third step is the most difficult one, namely the verification of the intersection property (\ref{intersectionprop4}) for all triples of permutations
\begin{equation}
(\alpha,\beta,\gamma) \in \mathcal{S}_2 \times \mathcal{S}_2 \times \mathcal{S}_4 \,.
\end{equation}
Analogously to the previous section on the solution of the QMP $\{A,AB\}$ we use the geometric (and slightly primitive) approach introduced in Sec.~\ref{GeometricApproach}.
Considering the intersection property (\ref{intersectionprop4}) we are dealing with subvarieties $\Phi_{a,b}(X_{\alpha}(\rho_A)),X_{\beta}(\rho_{B}))$ and $X_{\gamma \omega_0}(-\rho_{AB})$ of the flag variety $\mbox{\emph{Fl}}_4$. Since $\mbox{\emph{Fl}}_4$ has complex dimension $6$, we can only consider triples $(\alpha, \beta,\gamma)$ of permutations ensuring that $\Phi_{a,b}(X_{\alpha}(\rho_A)),X_{\beta}(\rho_{B}))$ and $X_{\gamma \omega_0}(-\rho_{AB})$ have complementary dimensions, i.e.~they add up to $6$. Only for these cases it might be possible that both varieties intersect transversally, which is necessary for the geometric deformation approach (see Sec.~\ref{GeometricApproach}).
Concretely speaking since $X_{\alpha}$ has dimension $l(\alpha)$, we restrict to the case $(\omega_0 = (4,3,2,1))$
\begin{equation}
6= l(\alpha)+l(\beta)+l(\gamma \omega_0) = l(\alpha)+l(\beta)+ 6- l(\gamma)\,,
\end{equation}
i.e.
\begin{equation}\label{lengthtransversal}
l(\alpha)+l(\beta)= l(\gamma)\,.
\end{equation}
Note, that we may miss some triples $(\alpha, \beta, \gamma)$ leading to nonempty intersection, but at the moment we ignore this aspect.
We consider four main cases namely
\begin{equation}
(\alpha,\beta)= ((1,2),(1,2)),\, ((1,2),(2,1)),\,((2,1),(1,2)),\,((2,1),(2,1)),\,
\end{equation}
but since the second and third one are equivalent, i.e.~all possible spectral inequalities for one of them can be obtained by taking all inequalities obtained for the other one by swapping all labels $A$ and $B$. We are left with three cases.
Let us now consider the first one, i.e.~$(\alpha,\beta)= ((1,2),(1,2))$. Condition (\ref{lengthtransversal}) then yields $\gamma =(1,2,3,4)$ and $\gamma \omega_0 =(4,3,2,1)$.
We denote the eigenvectors of $\rho_A, \rho_B$ and $-\rho_{AB}$ by $(f_1,f_2), (g_1,g_2)$ and $(h_1,h_2,h_3,h_4)$.
By referring to the CEF we find
\begin{equation}
X_{(1,2)}(\rho_A) = \left[ \begin{array}{cc} 1& 0 \\ 0&1 \end{array}\right]   \,\,\mbox{w.r.t}\, \{f_1,f_2\}
\end{equation}
and
\begin{equation}
X_{(1,2)}(\rho_B) = \left[ \begin{array}{cc} 1& 0 \\ 0&1 \end{array}\right]   \,\,\mbox{w.r.t}\, \{g_1,g_2\}\,,
\end{equation}
where the specification at the end of the equation refers the external basis set.
Then we find
\begin{equation}
\Phi_{Q_1}(X_{(1,2)}(\rho_A),X_{(1,2)}(\rho_B)) = \left[ \begin{array}{cccc} 1& 0 &0&0\\ 0&1&0&0 \\0& 0 &1&0\\ 0&0&0&1\end{array}\right]   \,\,\mbox{w.r.t}\, \{f_1\otimes g_1,f_1\otimes g_2,f_2\otimes g_1,f_2\otimes g_2\}\,
\end{equation}
\begin{equation}
\Phi_{Q_2}(X_{(1,2)}(\rho_A),X_{(1,2)}(\rho_B)) = \left[ \begin{array}{cccc} 1& 0 &0&0\\ 0&0&1&0 \\0& 1 &0&0\\ 0&0&0&1\end{array}\right]   \,\,\mbox{w.r.t}\, \{f_1\otimes g_1,f_1\otimes g_2,f_2\otimes g_1,f_2\otimes g_2\}
\end{equation}
and
\begin{equation}
X_{(4,3,2,1)}(\rho_{AB}) = \overline{\left[ \begin{array}{cccc} \ast & \ast  &\ast &1\\ \ast &\ast &1&0 \\\ast & 1 &0&0\\ 1&0&0&0\end{array}\right] }  \,\,\mbox{w.r.t}\, \{h_1,h_2,h_3,h_4\}= \mbox{\emph{Fl}}_4\,.
\end{equation}
Hence, the intersection property is trivial, i.e.~we find
\begin{eqnarray}\label{identityintersections}
\Phi_{Q_1}(X_{(1,2)}(\rho_A),X_{(1,2)}(\rho_B)) \cap X_{(4,3,2,1)}(\rho_{AB}) &=& \Phi_{Q_1}(X_{(1,2)}(\rho_A),X_{(1,2)}(\rho_B)) \nonumber \\
\Phi_{Q_2}(X_{(1,2)}(\rho_A),X_{(1,2)}(\rho_B)) \cap X_{(4,3,2,1)}(\rho_{AB}) &=& \Phi_{Q_2}(X_{(1,2)}(\rho_A),X_{(1,2)}(\rho_B)) \nonumber \\
&&
\end{eqnarray}
and both intersections contain exactly one point/flag. Hence, we are done without applying any deformation schema at all.
According to our strategy of determining spectral inequalities we can now write down an inequality for every extremal edge of each of both cubicles. This yields the following three spectral inequalities:
\begin{eqnarray}
\lambda_1^{(A)} -\lambda_2^{(A)} &\leq& \lambda_1^{(AB)} +\lambda_2^{(AB)}-\lambda_3^{(AB)}-\lambda_4^{(AB)} \nonumber \\
\lambda_1^{(B)} -\lambda_2^{(B)} &\leq& \lambda_1^{(AB)} +\lambda_2^{(AB)}-\lambda_3^{(AB)}-\lambda_4^{(AB)} \nonumber \\
\lambda_1^{(A)} -\lambda_2^{(A)}+\lambda_1^{(B)} -\lambda_2^{(B)} & \leq& 2 \lambda_1^{(AB)}-2 \lambda_4^{(AB)} \,.
\end{eqnarray}
They coincide with three of the four spectral inequalities found by S.Bravyi who has solved this version of the QMP in \cite{Brav}.
A.Klyachko called these three inequalities the basic ones because they are obtained by setting $\alpha,\beta,\gamma = id$. For all dimensions $d_A$ and $d_B$
one finds an intersection incidence for this specific triple of permutations! To obtain the remaining fourth spectral inequality for our setting $d_A= d_B= 2$ we have to consider the other two main cases. The problem there is that we have to deal with the closure of Schubert cells. We studied the Bruhat order in Sec.~\ref{flagvarieties} that allows us to determine the expansion of Schubert varieties in terms of Schubert cells (recall (\ref{partitionFlSchubert})). Nevertheless to study the intersection of two varieties is still very difficult in this geometric picture and therefore we will skip it here.
The remaining fourth inequality reads
\begin{equation}
|\lambda_1^{(A)}-\lambda_1^{(B)}| \,\,\leq\,\, \min(\lambda_1^{(AB)}-\lambda_3^{(AB)},\lambda_2^{(AB)}-\lambda_4^{(AB)})\,.
\end{equation}
Note that this single spectral inequality is obtained by combining several spectral inequalities of the form \ref{spectralineq4} (to get the minimum on the rhs and the absolute value on the lhs).

\chapter{Generalized Pauli Constraints and Concept of Pinning}\label{chap:Fermions}
\section{Motivation and summary}\label{sec:MotSumFerm}
The famous Pauli exclusion principle \cite{Pauli1925} states that no two electrons can occupy the same quantum state at the same time.
This restriction of fermionic occupation numbers plays an important role for the understanding of several quantum systems, from
few-electron systems as atoms up to macroscopic systems as solid materials. In 1926, already one year later, it was identified as a consequence
of another mathematical statement, the antisymmetry of the many-fermion wave function under particle exchange \cite{Dirac1926,Heis1926}.
Since the antisymmetry is much stronger than the exclusion principle, a natural question arises: Are there additional restrictions on fermionic
occupation numbers beyond Pauli's exclusion principle? This question can be reformulated as a (univariant) quantum marginal problem. It asks for the set of $1$-particle reduced density operators $\rho_1$, which can arise (via partial trace) from pure antisymmetric $N$-particle states $|\Psi_N\rangle$. Due to the unitary equivalence (see also Lem.~\ref{unitaryequiv}) only the natural occupation numbers (NONs) $\lambda_i$, the eigenvalues of $\rho_1$
play a role.

In this chapter we present the solution of that problem found by Klyachko \cite{Kly3, Kly2}, also shown qualitatively in Fig.~\ref{fig:Mot3}.
The NONs $\vec{\lambda}=(\lambda_1,\ldots,\lambda_d) \equiv \mbox{spec}(\rho_1)$ compatible to some
$N$-fermion state $|\Psi_N\rangle$ first of all fulfill
\begin{figure}[h!]
\includegraphics[width=10cm]{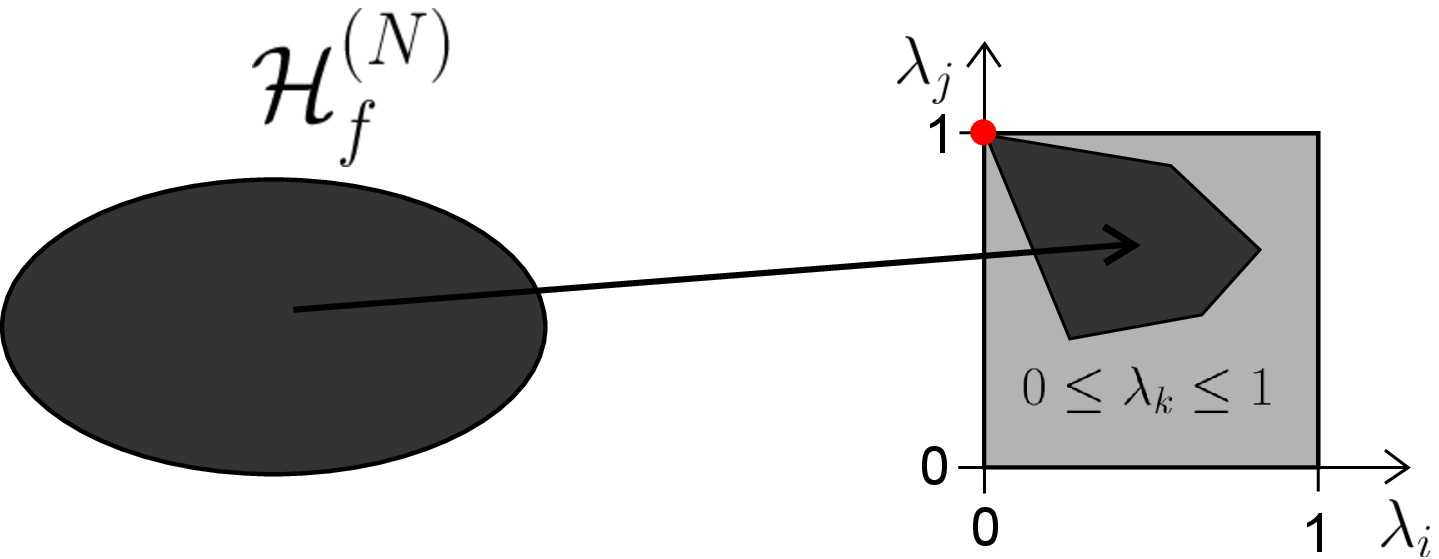}
\centering
\captionC{Schematic illustration of how the family of antisymmetric $N$-particle states $|\Psi_N\rangle$ maps to their vectors $\vec{\lambda}$ of natural occupation numbers forming a polytope (dark-gray), a proper subset of the light-gray hypercube describing Pauli's exclusion principle.
Single Slater determinants are mapped to the Hartree-Fock point $\vec{\lambda}_{HF}\equiv (1,\ldots,1,0,\ldots)$, a vertex (red dot) of that polytope.}
\label{fig:Mot3}
\end{figure}
Pauli's exclusion principle, which can be formulated as
\begin{equation}\label{PauliConstraint}
	0\leq \lambda_i \leq 1\qquad,\,\forall i\,.
\end{equation}
However, for each fixed particle number $N$ and dimension $d$ of the $1$-particle Hilbert space there are further restrictions, strengthening the exclusion principle. These so-called \emph{generalized Pauli constraints} take the form of affine inequalities,
\begin{equation}
D_j^{(N,d)}(\vec{\lambda}) = \kappa_0^{(j)} + \kappa_1^{(j)} \lambda_1+\ldots + \kappa_d^{(j)} \lambda_d \geq 0\qquad,\,j=1,2,\ldots,r^{(N,d)}
\end{equation}
with $\kappa_k^{(j)} \in \Z$ and give rise to a polytope $\mathcal{P}_{N,d}$, a proper subset of the ``Pauli hypercube'' (see Fig.~\ref{fig:Mot3}).

We address the question of their physical relevance. Do the generalized Pauli constraints limit the structure of fermionic states and can they restrict the behavior of fermionic systems? First evidence for some relevance for ground states was conjectured in form of the \emph{pinning-effect} by Klyachko \cite{Kly1}. He expected that NONs $\vec{\lambda}$ of (at least some) fermionic ground states lie \emph{exactly} on the boundary of the polytope.
From our viewpoint this effect is quite unnatural. Why should the NONs of some \emph{interacting} many-fermion system saturate exactly some of those $1$-particle constraints? We postulate the effect of \emph{quasi-pinning}, defined by NONs very close to the boundary of the polytope but not exactly on it. This will be justified by Chap.~\ref{chap:Physics} providing strong evidence that realistic fermionic ground states are indeed quasi-pinned.

Moreover, it turns out that (quasi-)pinning is highly physically relevant. Whenever the NONs $\vec{\lambda}$ are exactly pinned to the boundary of the polytope one can conclude that the corresponding many-fermion state has a very specific and simplified structure. As an example consider a three fermion state $|\Psi_3\rangle \in \wedge^3[\mathcal{H}_1^{(6)}]$, which is exactly pinned. Then the structural relation implies the form
\begin{equation}
|\Psi_3\rangle = \alpha \,|1,2,3\rangle + \beta \,|1,4,5\rangle + \gamma \,|2,4,6\rangle\,
\end{equation}
with some $1$-particle states $|i\rangle$ and $|i_1,i_2,i_3\rangle$ denotes a Slater determinant. The structure of that state is indeed much simpler then that of generic states, linear combination of $\binom{6}{3} = 20$ Slater determinants. This remarkable relation of pinning as an effect in the $1$-particle picture and the structure of $|\Psi_N\rangle$ as object in the $N$-particle picture seems to be also stable, i.e.~it still holds approximately for NONs close but not exactly on the polytope boundary. Therefore, quasi-pinning is also physically relevant.

Although Klyachko provides an algorithm for calculating the generalized Pauli constraints for each fixed $N$ and $d$ this is still quite involved and not feasible for settings with $d \gg 10$. So far the polytopes $\mathcal{P}_{N,d}$ are known only for settings with $d\leq 10$. They seem to be useless for realistic physical systems which are typically based on a very large or even infinite-dimensional $1$-particle Hilbert space. So how should one investigate possible (quasi-)pinning for given NONs $\vec{\lambda} \in \R^{d'}$, $10 <d' \in \N \cup\{\infty\}$? For concrete fermionic states one often observes NONs $\vec{\lambda}=(\lambda_i)_{i=1}^{d'}$, where most of them are very small, $\lambda_{d+1},\lambda_{d+2},\ldots \approx 0$.
We develop the concept of truncation, which allows us to consider only the first $d$ NONs $\vec{\lambda}^{tr}=(\lambda_1,\ldots,\lambda_{d})$ and
check their position inside of the polytope $\mathcal{P}_{N,D}$. The minimal distance $D^{tr}$ of $\vec{\lambda}^{tr}$ to the boundary
of $\mathcal{P}_{N,D}$ can then be related to the distance $D$ of the complete NONs $\vec{\lambda}$ to the boundary of the unknown polytope $\mathcal{P}_{N,d'}$,
\begin{equation}
D = D^{tr} + O(\lambda_{d+1})\,.
\end{equation}
Hence, we can investigate possible pinning whenever the NONs have main weight on a subspace $\R^d$ with $d\leq 10$. After all, the truncation error is given by the largest neglected eigenvalue $\lambda_{d+1}$.

\section{$N$-fermion concepts}\label{sec:fermionicconcepts}
In this section we introduce some basic concepts for the description of $N$-fermion quantum systems. This in particular contains the definition of antisymmetric states, $p$-particle reduced density operators ($p$-RDOs) and the role of symmetries.
Consider a particle described by states in a $1$-particle Hilbert space $\mathcal{H}_1^{(d)}$. If its dimension $d \in \N\cup\{\infty\}$ is not essential we will omit the superscript $d$. For most of the technical concepts the concrete form and possible substructure of $\mathcal{H}_1$ is not relevant. There it does not make a difference whether $\mathcal{H}_1$ corresponds e.g.~to orbital or spin degrees of freedom (or both). A system of $N$ such particles is described by states in the corresponding $N$-particle Hilbert space
\begin{equation}\label{HilbertN}
\mathcal{H}_N \equiv \left(\mathcal{H}_1\right)^{\otimes^N}\,.
\end{equation}
Below, we may also consider mixed states $\rho_N$, which are defined by
\begin{equation}
\rho_N\,:\, \mathcal{H}_N \overset{linear}{\longrightarrow} \mathcal{H}_N\,,\,\,\,\, \mbox{Tr}[\rho_N] = 1\,,\,\,\,\, \rho_N^\dagger= \rho_N\,,\,\,\,\,\rho_N\geq 0\,.
\end{equation}
If the $N$-particle system is described by a state $\rho_N$ the subsystem containing particles ${\bd{i}} = (i_1,\ldots,i_r)$ is then described by the corresponding reduced density operator
\begin{equation}\label{partialtrace}
\rho_{\bd{i}} = \mbox{Tr}_{\bd{i}^C}[\rho_N]\,.
\end{equation}
Here, $\bd{i}^C$ is the complement of $\bd{i}$ and $\mbox{Tr}_{\bd{j}}$ denotes the partial trace w.r.t.~the $1$-particle Hilbert spaces $j_k \in \bd{j}$.

Since $N$ identical particles cannot be distinguish from the quantum mechanical viewpoint, their mathematical description should reflect this. This leads to two particle types, fermions and bosons. Fermionic systems are described by states in the Hilbert space
\begin{equation}\label{HilbertNf}
\mathcal{H}_N^{(f)} \equiv  \wedge^N[\mathcal{H}_1]\leq \left(\mathcal{H}_1\right)^{\otimes^N}\,,
\end{equation}
which are antisymmetric under particle exchange.
Bosonic systems are identified with the Hilbert space of symmetric states,
\begin{equation}\label{HilbertNb}
\mathcal{H}_N^{(b)} \equiv  S^N[\mathcal{H}_1]\leq \left(\mathcal{H}_1\right)^{\otimes^N}\,.
\end{equation}

The exchange symmetry is at the heart of several interesting features of physical systems. Essentially fermionic quantum systems are often significantly dominated by their exchange symmetry. As a necessary tool to study this we consider permutations and their realization on $N$-particle Hilbert spaces.
Consider a \textbf{bijective} map
\begin{equation}
P:\,\,\{1,2,\ldots,N\} \rightarrow \{1,2,\ldots,N\}\,\,\,\,,\,\,\,k\mapsto P(k)\,.
\end{equation}
This map is a permutation of the elements $1,2,\ldots,N$. We interpret it actively: The $N$ numbers $1,2,\ldots,N$ as abstract ``objects'' are shuffled  according
\begin{equation}\label{perm}
(1,2,\ldots,N) \mapsto (P(1),P(2),\ldots,P(N))\,.
\end{equation}
This interpretation is equivalent saying that the element $k$ is mapped to the new position $P^{-1}(k)$. The collection of all permutations of $N$ elements defines a group,  denoted by $\mathcal{S}_N$. Instead of considering numbers $1,2,\ldots,N$ we can also consider $N$-particles which should be permuted. This defines a unitary representation $\Lambda$ of $\mathcal{S}_N$ on the $N$-particle Hilbert space $\mathcal{H}_N$ (\ref{HilbertN}), given by
\begin{equation}\label{permrep}
\Lambda(P) |\varphi_1\rangle\otimes\ldots\otimes |\varphi_N\rangle = |\varphi_{P^{-1}(1)}\rangle\otimes\ldots\otimes |\varphi_{P^{-1}(N)}\rangle\,,
\end{equation}
where $\{|\varphi_k\rangle\}_{k=1}^N$ are arbitrary $1$-particle states. (\ref{permrep}) can again be interpreted as an active permutation. The $N$ particles sitting at the beginning on the ordered ``seats'' $(\varphi_1,\ldots,\varphi_N)$ are shuffled according $P$. By assigning the $k$-th $1$-particle Hilbert space  in (\ref{HilbertN}) to the $k$-th particle we obtain (\ref{permrep}).
Moreover, note that we use the inverse of the permutation $P$ on the rhs in (\ref{permrep}) to fulfill $\Lambda(P_1) \Lambda(P_2) = \Lambda(P_1 P_2)$. Since it is convenient, we will skip in the following the symbol $\Lambda$ and denote the permutation and its representation on $\mathcal{H}_N$ by the same symbol.

Since $\mathcal{H}_N^{(f)}$ is a linear subspace of $\mathcal{H}_N$, we can define the orthogonal projection operator $\mathcal{A}_N$ projecting every state $|\Psi_N\rangle \in \mathcal{H}_N$ to $\mathcal{H}_N^{(f)}$. We find
\begin{equation}\label{antisymop}
\mathcal{A}_N = \frac{1}{\sqrt{N!}}\,\sum_{P \in \mathcal{S}_N}\, \mbox{sign}(P)\, P\,,
\end{equation}
where $\mbox{sign}(P)$ is the signature of the permutation $P$.
Due to its form the operator (\ref{antisymop}) is also called antisymmetrizing operator.
Notice that an $N$-particle density operators $\rho_N$ is antisymmetric whenever $\mathcal{A}_N \rho_N = \rho_N = \rho_N \mathcal{A}_N$. The condition $[\rho_N,\mathcal{A}_N]=0$ is necessary but not sufficient since e.g.~$\rho_N = |\Psi_N\rangle  \langle \Psi_N|$ with $|\Psi_N\rangle$ symmetric also fulfills $[\rho_N,\mathcal{A}_N]=0$. Since we will focus in the following on antisymmetric density operators, we observe
\begin{lem}\label{lem:permsign}
Recall (\ref{permrep}) and (\ref{antisymop}). Given a permutation $P \in \mathcal{S}_N$ and an antisymmetric $N$-particle density operator $\rho_N$. Then
\begin{equation}
P \mathcal{A}_N = \mbox{sign}(P) \mathcal{A}_N\,,
\end{equation}
which implies in particular
\begin{equation}
P \rho_N = \mbox{sign}(P) \rho_N = \mbox{sign}(P^\dagger) \rho_N = P^\dagger \rho_N  \,.
\end{equation}
\end{lem}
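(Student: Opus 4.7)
The plan is to reduce everything to the first identity $P \mathcal{A}_N = \mbox{sign}(P)\,\mathcal{A}_N$ via a straightforward reindexing of the group sum defining $\mathcal{A}_N$, and then propagate this to an arbitrary antisymmetric $\rho_N$ by using that $\mathcal{A}_N$ acts as the identity on the antisymmetric subspace $\mathcal{H}_N^{(f)}$.

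For the first identity I would substitute the definition (\ref{antisymop}) to obtain
\[
P\, \mathcal{A}_N \;=\; \frac{1}{\sqrt{N!}}\, \sum_{Q \in \mathcal{S}_N}\, \mbox{sign}(Q)\, PQ.
\]
The map $Q \mapsto PQ$ is a bijection of $\mathcal{S}_N$ onto itself, so I can relabel $Q' := P Q$, which is equivalent to $Q = P^{-1} Q'$. By multiplicativity of the signature and $\mbox{sign}(P^{-1}) = \mbox{sign}(P)$ this gives $\mbox{sign}(Q) = \mbox{sign}(P)\, \mbox{sign}(Q')$. Pulling the factor $\mbox{sign}(P)$ out of the sum and recognising $\mathcal{A}_N$ again yields the claim.

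For the second chain of equalities I would first observe that antisymmetry of $\rho_N$, as used in the text, is precisely the statement $\mathcal{A}_N \rho_N = \rho_N = \rho_N \mathcal{A}_N$. Therefore
\[
P \rho_N \;=\; P\, \mathcal{A}_N\, \rho_N \;=\; \mbox{sign}(P)\, \mathcal{A}_N\, \rho_N \;=\; \mbox{sign}(P)\, \rho_N .
\]
To finish, I would identify $P^\dagger$ with $P^{-1}$: the representation (\ref{permrep}) permutes an orthonormal product basis of $\mathcal{H}_N$, hence is unitary, so $P^\dagger = P^{-1}$. Since the signature is invariant under inversion, applying the first identity to $P^{-1}$ in place of $P$ yields $P^\dagger \rho_N = \mbox{sign}(P^{-1}) \rho_N = \mbox{sign}(P)\rho_N$, and $\mbox{sign}(P^\dagger) := \mbox{sign}(P^{-1}) = \mbox{sign}(P)$ makes the middle equality trivial.

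I do not anticipate any real obstacle: conceptually the lemma is just the statement that $\mathcal{A}_N$ is the projector onto the sign representation of $\mathcal{S}_N$ inside $\mathcal{H}_1^{\otimes N}$, and the only thing to verify beyond the reindexing trick is the unitarity of $P$ on $\mathcal{H}_N$, which is manifest from (\ref{permrep}).
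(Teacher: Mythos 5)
Your proof is correct and is exactly the argument the paper has in mind: the paper simply declares the lemma immediate from the form of (\ref{antisymop}), and your reindexing of the group sum plus the use of $\mathcal{A}_N\rho_N=\rho_N$ and the unitarity of the representation ($P^\dagger=P^{-1}$, $\mbox{sign}(P^{-1})=\mbox{sign}(P)$) is the standard filling-in of that "trivial" step.
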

The proof is trivial since it immediately follows from the form of (\ref{antisymop}). To relate the concept of indistinguishability with that of exchange symmetry we state
\begin{lem}\label{lem:rrdounique}
Consider an antisymmetric $N$-particle density operator $\rho_N$. Then, for each fixed $r=1,\ldots,N-1$, all $r$-RDOs of $\rho_N$ are identical. The same also holds for bosonic density operators.
\end{lem}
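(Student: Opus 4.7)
The plan is to verify the lemma by reducing the statement $\rho_{\bd{i}}=\rho_{\bd{j}}$ (for any two index tuples $\bd{i},\bd{j}$ of the same length $r$) to the exchange symmetry of $\rho_N$, using the defining property of the partial trace together with Lem.~\ref{lem:permsign}.

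First, I would recall that the partial trace $\rho_{\bd{i}}=\mbox{Tr}_{\bd{i}^C}[\rho_N]$ is uniquely characterized by the relation
\begin{equation}
\mbox{Tr}\!\left[\rho_{\bd{i}}\, A\right] \;=\; \mbox{Tr}\!\left[\rho_N\,\bigl(A_{\bd{i}}\otimes \mathds{1}_{\bd{i}^C}\bigr)\right]\,\quad\forall A\in \mathcal{B}(\mathcal{H}_1^{\otimes r})\,,
\end{equation}
where $A_{\bd{i}}$ denotes the operator $A$ acting on the $r$-particle subsystem labelled by $\bd{i}$. Therefore it suffices to show that the right hand side is independent of which $r$-element index tuple $\bd{i}\subset\{1,\ldots,N\}$ we pick.

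Given two $r$-tuples $\bd{i}$ and $\bd{j}$, I would choose a permutation $P\in \mathcal{S}_N$ that bijectively maps the positions $\bd{i}$ onto $\bd{j}$ (and hence $\bd{i}^C$ onto $\bd{j}^C$). Because of (\ref{permrep}), conjugation by $P$ shifts local operators accordingly,
\begin{equation}
A_{\bd{j}}\otimes \mathds{1}_{\bd{j}^C} \;=\; P\,\bigl(A_{\bd{i}}\otimes \mathds{1}_{\bd{i}^C}\bigr)\,P^{\dagger}\,.
\end{equation}
Next I would exploit Lem.~\ref{lem:permsign}: it yields $P\rho_N=\mbox{sign}(P)\rho_N$, and taking the adjoint (using $\rho_N^{\dagger}=\rho_N$) also $\rho_N P^{\dagger}=\mbox{sign}(P)\rho_N$. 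Combining these gives the central equality
\begin{equation}
P^{\dagger}\rho_N P \;=\; \mbox{sign}(P)^2\,\rho_N \;=\; \rho_N\,.
\end{equation}

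Putting the pieces together with the cyclicity of the trace,
\begin{equation}
\mbox{Tr}\!\left[\rho_N\,\bigl(A_{\bd{j}}\otimes \mathds{1}_{\bd{j}^C}\bigr)\right]
= \mbox{Tr}\!\left[\rho_N\, P\bigl(A_{\bd{i}}\otimes \mathds{1}_{\bd{i}^C}\bigr)P^{\dagger}\right]
= \mbox{Tr}\!\left[(P^{\dagger}\rho_N P)\bigl(A_{\bd{i}}\otimes \mathds{1}_{\bd{i}^C}\bigr)\right]
= \mbox{Tr}\!\left[\rho_N\,\bigl(A_{\bd{i}}\otimes \mathds{1}_{\bd{i}^C}\bigr)\right]\,,
\end{equation}
which by the characterizing property above forces $\rho_{\bd{j}}=\rho_{\bd{i}}$. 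The bosonic case is completely analogous: there Lem.~\ref{lem:permsign} is replaced by the symmetric analogue $P\rho_N=\rho_N$, so $P^{\dagger}\rho_N P=\rho_N$ holds again and the same chain of equalities applies.

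There is no real obstacle here; the only point requiring mild care is the adjoint step that turns $P\rho_N=\mbox{sign}(P)\rho_N$ into $\rho_N P^{\dagger}=\mbox{sign}(P)\rho_N$, so that the two signs combine to $+1$ in $P^{\dagger}\rho_N P$. Everything else is bookkeeping about how a permutation of particle labels acts on local observables and on the partial trace.
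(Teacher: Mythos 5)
Your proof is correct and follows essentially the same route as the paper: both pick a permutation carrying one index set onto the other and use that the signs cancel, so that $P\rho_N P^{\dagger}=\rho_N$, after which the reduced operators coincide. The only cosmetic difference is that you justify the relabelling step via the duality characterization of the partial trace and cyclicity of the full trace, whereas the paper invokes the covariance $\mbox{Tr}_{\bd{i}^C}[P\rho_N P^{\dagger}]=\mbox{Tr}_{\bd{j}^C}[\rho_N]$ directly.
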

\begin{proof}
First, we consider antisymmetric states. Let $r=1,\ldots,N-1$ be arbitrary, but fixed and choose two (different) subsets $\bd{i},\bd{j} \subset \{1,2,\ldots,N\}$ of length $r$. Let $P_{\bd{i},\bd{j}}$ be some permutation of the elements $1,2,\ldots,N$, which maps $\bd{j}$ to $\bd{i}$, $P_{\bd{i},\bd{j}}^{-1}(i_k)= j_k$, for all $k=1,2,\ldots,r$ (recall (\ref{perm}) and (\ref{permrep})). For the corresponding $r$-RDOs $\rho_{\bd{i}}$ and $\rho_{\bd{j}}$ we find by using $P_{\bd{i},\bd{j}} \rho_N = \mbox{sign}(P_{\bd{i},\bd{j}}) \rho_N$ and $\mbox{sign}(P_{\bd{i},\bd{j}}) = \mbox{sign}(P_{\bd{i},\bd{j}}^\dagger)$
\begin{eqnarray}
\rho_{\bd{i}} &=& \mbox{Tr}_{\bd{i}^C}[\rho_N]\nonumber \\
&=& \mbox{Tr}_{\bd{i}^C}[P_{\bd{i},\bd{j}} \rho_N P_{\bd{i},\bd{j}}^\dagger]\nonumber \\
&=& \mbox{Tr}_{\bd{j}^C}[ \rho_N ]\nonumber \\
&=& \rho_{\bd{j}}\,.
\end{eqnarray}
Here $\bd{i}^C$ stands for the complement of $\bd{i}$ w.r.t.~the total set $\{1,2,\ldots,N\}$.
The statement for symmetric density operators $\rho_N$ follows in a similar way.
\end{proof}
According to Lem.~\ref{lem:rrdounique} for each fixed $r=1,\ldots,N-1$  all $r$-RDO of $\rho_N$ are identical and we denote them by the symbol $\rho_r$.
Historically, Lem.~\ref{lem:rrdounique} led to the exchange symmetries. Since it is impossible to distinguish between identical particles, all their $1$-RDOs should be identical. This property is only compatible with $N$-particle states of specific exchange symmetry, which finally yields (\ref{HilbertNb}) and (\ref{HilbertNf}).

For the fermionic Hilbert space $\mathcal{H}_N^{(f)}$ bases of single Slater determinants are quite useful. In particular they allow us for a given $N$-particle state to determine its $r$-RDO just by symbolical calculation. First, we choose an orthonormal basis for $\mathcal{H}_1^{(d)}$ ($d$ may be infinite),
\begin{equation}\label{basis1}
\mathcal{B}_1 =\{|i\rangle\}_{i=1}^d\,.
\end{equation}
This basis gives rise to a natural $N$-fermion basis $\mathcal{B}_N$ spanned by all the Slater determinants
\begin{equation}
|\bd{i}\rangle \equiv \mathcal{A}_N[|i_1\rangle\otimes \ldots \otimes |i_N\rangle]\,,
\end{equation}
$1\leq i_1<\ldots <i_N \leq d$.
We still introduce the creation and annihilation operators $a_i^{\dagger}, a_i$, creating or annihilating a fermion in the state $|i\rangle$. They fulfill the standard anticommutator relations
\begin{equation}
\{a_i,a_j\}= 0\qquad\{a_i^{\dagger},a_j^{\dagger}\}= 0 \qquad\{a_i^{\dagger},a_j\}= \delta_{ij}\,\,\,.
\end{equation}
For a given pure $N$-fermion state $|\Psi_N\rangle$ normalized to $1$ the corresponding $r$-RDO can elegantly be expressed via
\begin{equation}\label{rrdomatrix}
\langle i_1,\ldots,i_r|  \rho_r |j_1,\ldots,j_r \rangle =\langle \Psi_N|a_{i_1}^{\dagger}\ldots a_{i_r}^{\dagger} a_{j_r}\ldots a_{j_1}|\Psi_N\rangle\,.
\end{equation}
Here and in the following we trace-normalize the $r$-RDO to $\binom{N}{r}$ (see Sec.~\ref{sec:Nrepresentability} for the motivation).
Given an expansion of $|\Psi_N\rangle$ in Slater determinants the corresponding $r$-RDO can easily be calculated via (\ref{rrdomatrix}). Notice that this can be done symbolically by just referring to the orthogonal character of the $1$-particle states (\ref{basis1}). Since the $1$-fermion picture plays the central role for Chap.~\ref{chap:Fermions} and Chap.~\ref{chap:Physics}, we define
\begin{defn}\label{defNOandNON}
Consider a system of $N$-identical fermions in a state $\rho_N$. The eigenvectors of the $1$-RDO $\rho_1$ are called \emph{natural orbitals} (NOs) and their occupancies, the eigenvalues of $\rho_1$ are called \emph{natural occupation numbers} (NONs).
\end{defn}

Now, we still explore the influence of $N$-particle symmetries on the $1$-RDO. First, we state
\begin{lem}\label{lem:globTolocSym}
Given an antisymmetric $N$-particle density operator $\rho_N$ and a unitary operator $U$ on the $1$-particle Hilbert space. Then,
\begin{equation}
[\rho_N,U^{\otimes^N}] =0 \qquad \Rightarrow \qquad [\rho_1,U] =0\,.
\end{equation}
\end{lem}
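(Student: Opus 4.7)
The plan is to start from the equivalent formulation $U^{\otimes N}\rho_N (U^\dagger)^{\otimes N} = \rho_N$ and then trace out all particles except the first one, using the standard algebraic properties of the partial trace. Concretely, I would apply $\mbox{Tr}_{2,\ldots,N}$ to both sides of this equation; the right-hand side gives $\rho_1$ directly (well-defined as a single object thanks to Lem.~\ref{lem:rrdounique}), and the left-hand side needs to be rewritten so that the $U$ factor on the first tensor slot comes out of the partial trace while the remaining $U^{\otimes(N-1)}$ factors cancel against their adjoints inside $\mbox{Tr}_{2,\ldots,N}$.

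More precisely, I would write the unitary as $U^{\otimes N} = U \otimes U^{\otimes(N-1)}$ and use the two elementary identities: (i) for any operator $A$ on $\mathcal{H}_1$ and $X$ on $\mathcal{H}_N$,
\begin{equation}
\mbox{Tr}_{2,\ldots,N}\!\left[(A \otimes \mathds{1})\, X\, (A^\dagger \otimes \mathds{1})\right] = A\, \mbox{Tr}_{2,\ldots,N}[X]\, A^\dagger,
\end{equation}
and (ii) cyclicity of the partial trace with respect to operators supported on the traced-out factors, so that
\begin{equation}
\mbox{Tr}_{2,\ldots,N}\!\left[(\mathds{1}\otimes U^{\otimes(N-1)})\, \rho_N\, (\mathds{1}\otimes (U^\dagger)^{\otimes(N-1)})\right] = \mbox{Tr}_{2,\ldots,N}[\rho_N] = \rho_1.
\end{equation}
Combining (i) and (ii) gives $\rho_1 = U\rho_1 U^\dagger$, which is equivalent to $[\rho_1,U]=0$.

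There is no real obstacle here; the only subtlety worth flagging is why ``the'' $1$-RDO $\rho_1$ is well-defined at all in the statement, and this is precisely where the antisymmetry hypothesis enters via Lem.~\ref{lem:rrdounique}: without it, one would have to specify which single-particle subsystem is retained, and although the same manipulation still goes through slot-by-slot, the resulting statement would be about each $\rho_{\{k\}}$ individually rather than about a common $\rho_1$.
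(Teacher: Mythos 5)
Your proposal is correct and follows essentially the same route as the paper's proof: both reduce $[\rho_1,U]=0$ to $U\rho_1 U^\dagger=\rho_1$ by taking $\mbox{Tr}_{2,\ldots,N}$, pulling the first-slot $U$ out of the partial trace, cancelling the $U^{\otimes(N-1)}$ factors on the traced-out slots via cyclicity, and invoking $[\rho_N,U^{\otimes^N}]=0$. The paper merely presents the chain in the reverse order (starting from $U\rho_1 U^\dagger$ and inserting $\mathds{1}=U^\dagger U$), which is the same argument up to direction and the trace normalization factor $N$.
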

\begin{proof}
By using $\mathds{1}= U^\dagger U$, the cyclicity of the trace and $[\rho_N,U^{\otimes^N}] =0$ we observe
\begin{eqnarray}
U \rho_1 U^\dagger &=& N\,Tr_{2 \ldots N}[U \otimes \mathds{1}^{\otimes^{N-1}} \rho_N U^\dagger \otimes \mathds{1}^{\otimes^{N-1}} ] \nonumber \\
&=& N\,Tr_{2 \ldots N}[U^{\otimes^N} \rho_N \left(U^\dagger\right)^{\otimes^N} ] \nonumber \\
&=& N\,Tr_{2 \ldots N}[\rho_N  ] \nonumber \\
&=& \rho_1\,,
\end{eqnarray}
which immediately proves Lem.~\ref{lem:globTolocSym}.
\end{proof}
Lem.~\ref{lem:globTolocSym} is quite important for concrete physical applications. Consider an operation on single particles, as e.g.~the flipping of the electron spin w.r.t.~the $3$-axis or the translation of a particle on a lattice with periodic boundary conditions. Its representation on $\mathcal{H}_1$ is given by a unitary operator $U$. In a natural way, we can apply the same transformation to each of $N$ fermions simultaneously, e.g.~we flip the spin of all $N$ electrons simultaneously or translate the total $N$-fermion system on the lattice. This transformation is then represented on $\mathcal{H}_N^{(f)}$ by $U^{\otimes^N}$. So-called $U^{\otimes^N}$-\emph{symmetry-adapted} states $|\Psi_N\rangle$, defined as eigenstates of $U^{\otimes^N}$,
\begin{equation}\label{symadapted}
U^{\otimes^N} |\Psi_N\rangle = e^{i \varphi} |\Psi_N\rangle \,\,\,,\, \varphi \in \R
\end{equation}
automatically fulfill the assumption of Lem.~\ref{lem:globTolocSym}, $[\rho_N, U^{\otimes^N}]=0$. As a consequence their NOs are always adapted to the symmetry $U$. This becomes quite important by considering a Hamiltonian $H$ with a symmetry,
\begin{equation}
[H,U^{\otimes^N}] = 0\,.
\end{equation}
The energy eigenstates $|\Psi_N\rangle$ can always be chosen as eigenstates of the unitary operator $U^{\otimes^N}$ and are thus symmetry-adapted and their NOs are eigenstates of $U$, as well.

\begin{ex}\label{ex:translationBloch}
Consider a finite $1$-dimensional lattice $\mathcal{L}_X$ with $d$ sites and a translationally invariant Hamiltonian for $N$-fermions,
\begin{equation}
[H,T^{\otimes^N}] = 0\,,
\end{equation}
where $T$ is the translation of a fermion from one site to the next one.
Hence, the energy eigenstates $|\Psi_N\rangle$ are (or can be chosen as) $T^{\otimes^N}$-symmetry adapted states, with a $T^{\otimes^N}$-eigenvalue of the form $e^{i \frac{2\pi K}{d}}$, $K=0,1,\ldots,d-1$. The NOs of $|\Psi_N\rangle$ are $T$-symmetry adapted states, i.e.~Bloch states $|k\rangle_Q$, $k=0,1,\ldots,d-1$.
In addition, we can make a related statement on the structure of $|\Psi_N\rangle$. By expanding $|\Psi_N\rangle$ w.r.t.~Slater determinants $|k_1,\ldots,k_N\rangle_Q$ built up from $1$-particle Bloch states $|k_i\rangle_Q$ only those Slater determinants can show up, which respect the ``conservation law'' $k_1+\ldots+k_N = K$ $\mod{d}$.
\end{ex}
\begin{ex}\label{ex:Spincomponent}
Consider $N$ electrons in an eigenstate $|\Psi_N\rangle$ of the $N$-particle operator $S_m^{(N)} \equiv \hat{m}\cdot \vec{S}^{(N)}$, the projection of the spin vector to the axis $m$ defined by the normalized vector $\hat{m}$. How is this related to symmetry adaption? Well, being an eigenstate of $S_m^{(N)}$ is equivalent to be an eigenstate of the $N$-electron spin rotation operator $D_{\hat{m}}(\varphi)^{\otimes^N}$, where
\begin{equation}
D_{m}(\varphi) = e^{\frac{i}{\hbar} \varphi \hat{m}\cdot \vec{S}^{(1)}} = \cos{\big(\frac{\varphi}{2}\big)} \mathds{1} +  2 i \sin{\big(\frac{\varphi}{2}\big)}\,\hat{m}\cdot \frac{\vec{S}^{(1)}}{\hbar}\,
\end{equation}
is the $1$-electron spin rotation by an angle $\varphi$ w.r.t.~the $m$-axis.
Therefore, $|\Psi_N\rangle$ is $D_{\hat{m}}(\varphi)^{\otimes^N}$-symmetry adapted and the corresponding NOs are $D_{m}(\varphi)$-symmetry adapted. After all, this also implies that the NOs are $1$-particle eigenstates of $S_{m}^{(1)}$, as well.
\end{ex}

\section{$N$-representability problem}\label{sec:Nrepresentability}
The present section will introduce the central mathematical problem on whose solution this thesis is built up. This will also connect Chap.~\ref{chap:Math} with Chap.~\ref{chap:Fermions} and Chap.~\ref{chap:Physics}.
Let us consider a system of $N$ identical fermions described by a Hamiltonian $H_N$. The indistinguishability means (recall Sec.~\ref{sec:fermionicconcepts})
\begin{equation}
[H_N,P] = 0\,
\end{equation}
for any permutation of the $N$ fermions. Moreover, we assume that $H_N$ has a fermionic ground state and contains only $1$- and $2$-particle interaction terms,
\begin{equation}\label{Hamiltoniangeneral12}
H_N = \sum_{i=1}^N h^{(i)}_1 +  \sum_{1\leq i < j \leq N} h^{(i,j)}_2\,.
\end{equation}
Here, $h^{(i)}_1$ acts on the $1$-particle Hilbert space of the $i$-th particle and $h^{(i,j)}_2$ acts on the two $1$-particle Hilbert spaces of particles $i$ and $j$. Moreover, we skip several identity operators in Eq.~(\ref{Hamiltoniangeneral12}) for the other $1$-particle Hilbert spaces $k \neq i,j$.
The ground state $|\Psi_0\rangle$ of (\ref{Hamiltoniangeneral12}) and its energy $E_0$ can be obtained by a minimization,
\begin{eqnarray}\label{energyGSminim1}
E_0 &=& \underset{\Psi_N \in \mathcal{H}_N^{(f)}}{\min}(\langle \Psi_N|H_N|\Psi_N\rangle)
\nonumber\\
&=& \underset{\Psi_N \in \mathcal{H}_N^{(f)}}{\min}( \mbox{Tr}[|\Psi_N\rangle \langle \Psi_N|H_N]) \nonumber\\
&=& \underset{\rho_N \in \mathcal{B}(\mathcal{H}_N^{(f)})}{\min}( \mbox{Tr}[\rho_N H_N])\,.
\end{eqnarray}
In the last step of (\ref{energyGSminim1}) we used the fact that the ground state energy does not reduce by increasing the optimization area to fermionic mixed states. Due to the structure of the Hamiltonian $H_N$ (recall (\ref{Hamiltoniangeneral12}) contains only $1$-body and $2$-body interactions) we can simplify (\ref{energyGSminim1}). By using the definition of the $1$-RDO $\rho_1$ and $2$-RDO $\rho_2$ (recall Lem.~\ref{lem:rrdounique} and remark thereafter) we find
\begin{equation}\label{energyGSminim2}
E_0 = \underset{\rho_2 \in D_2^{(e)}}{\min}( \mbox{Tr}[\rho_1 h_1] + \mbox{Tr}[\rho_2 h_2])\,.
\end{equation}
Here $\rho_1$ is the $1$-RDO following from $\rho_2$, $\rho_1 = \frac{2}{N-1}\,\mbox{Tr}_1[\rho_2]$ and $D_r^{(e)}$ with $r=2$ denotes the set of all antisymmetric density operators, which are \emph{ensemble-$N$-representable}. This means that for each of its elements $\rho_r$ there exists a mixed $N$-fermion state $\rho_N$, which leads via partial trace to $\rho_r$. We also introduce the set $D_r^{(p)}$ of \emph{pure-$N$-representable} $r$-RDOs by restricting to $\rho_N$ pure. $D_2^{(e)}$ is (obviously) convex in contrast to $D_2^{(p)}$. The computational advantages following from this property motivated the relaxing of the optimization in Eq.~(\ref{energyGSminim1}) to all mixed $N$-fermion states. Although the minimization (\ref{energyGSminim2}) over just $2$-particle density operators seem to be much easier than that over $N$-fermion states there is one fundamental problem: One does not know the set $D_2^{(p/e)}$. This gives rise to the following definition.
\begin{defn}\label{def:Nrepresentability}
Consider $N$ fermions. For each fixed $r=1,2,\ldots,N-1$ the problem of determining the set $D_r^{(p/e)}$ of possible $r$-RDOs $\rho_r$ arising via partial trace from a corresponding pure/ensemble $N$-fermion state is known as the $r$-body pure/ensemble-$N$-representability problem.
\end{defn}
\begin{rem}\label{rem:NrepQMP}
The $r$-body $N$-representability problem (Def.~\ref{def:Nrepresentability}) is a quantum marginal problem. Although it looks univariant for all $r=1,2,\ldots,N-1$ this is the case just for $r=1$. For $r>1$ it is not possible to treat the system of $r$ of the $N$ fermions as one single quantum system since their reduced state has still a substructure (antisymmetry) that one needs to take into account (see also Sec.~\ref{sec:QMPdef}, in particular Sec.\ref{sec:QMPpureuni}).
\end{rem}
Due to its strong importance for physics and quantum chemistry the $2$-body $N$-representability problem was studied intensively (see e.g.~\cite{Col2,Dav,Col,MC}). However, it was shown in \cite{Liu} that this problem is QMA-complete: By using a quantum computer it is not possible to even verify of falsify in polynomial time a suggested description of $D_2^{(p)}$. Practically, this means that one cannot find $D_2^{(p)}$ for arbitrary particle numbers $N$ and dimensions $d$ of the $1$-particle Hilbert space. However, the form (\ref{energyGSminim2}) can still be used as a significant simplification of the minimization (\ref{energyGSminim1}). Using the positivity of $\rho_N$ and that the partial trace is positivity preserving allows to derive some elementary outer approximation of $D_2^{(e)}$ (defined by so-called $P,Q,G, T_1,T_2$ conditions; see also \cite{PQGTZhao}). Optimizing (\ref{energyGSminim2}) over this larger set leads to a lower bound for the ground state energy, which is often remarkably close to the correct one \cite{PQGTZhao}.

Although the $r$-body pure/ensemble $N$-representability problem seems to be not solvable for $r>1$ this is not true for the case $r=1$.
We will study the case $r=1$ for ensembles right now. The description of $D_1^{(e)}$ is well-known and easy to derive in contrast to that of $D_1^{(p)}$ found recently by Klyachko et al.~\cite{Kly3,Altun} and presented in the next section, Sec.~\ref{sec:generalizedPC}.

First, note that according to Remark \ref{rem:NrepQMP} and the unitary equivalence Lem.~\ref{unitaryequiv} the set $D_1^{(e)}$ (as well as $D_1^{(p)}$) depends only on the eigenvalues $\vec{\lambda}=(\lambda_1,\ldots,\lambda_d)$ of the $1$-RDO (NONs). Moreover we state
\begin{lem}\label{lem:Pauli}
Recall Def.~\ref{def:Nrepresentability}. The set $D_1^{(e)}$ is given by all $1$-particle density operators with eigenvalues $\lambda_i$ satisfying
Pauli's exclusion principle \cite{Pauli1925}
\begin{equation}\label{PauliexclusionP}
0\leq \lambda_i \leq 1\qquad,\,\,\forall i=1,2,\ldots,d\,.
\end{equation}
\end{lem}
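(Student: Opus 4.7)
The plan is to verify the two inclusions separately, the necessity of the bounds $0\leq \lambda_i\leq 1$ being a direct consequence of positivity and the fermionic anticommutation relations, and the sufficiency following from a convex-geometric decomposition of the admissible spectra into characteristic vectors of Slater determinants.

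For necessity, suppose $\rho_1$ arises as the $1$-RDO of an ensemble $N$-fermion state $\rho_N$. Since partial traces preserve positivity, $\rho_1\geq 0$, whence $\lambda_i\geq 0$ for all $i$. For the upper bound I would pick an eigenbasis $\{|i\rangle\}$ of $\rho_1$ and use the representation (\ref{rrdomatrix}), giving $\lambda_i=\langle a_i^\dagger a_i\rangle_{\rho_N}$. Purifying $\rho_N$ if necessary and using the canonical anticommutator $\{a_i,a_i^\dagger\}=\mathds{1}$ together with the nonnegativity of $\langle a_ia_i^\dagger\rangle_{\rho_N}=\|a_i^\dagger\sqrt{\rho_N}\|^2\geq 0$ yields $\lambda_i=\langle a_i^\dagger a_i\rangle_{\rho_N}\leq 1$. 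The normalization $\sum_i\lambda_i=N$ is automatic from $\mbox{Tr}[\rho_1]=N$.

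For sufficiency, given any spectrum $\vec{\lambda}\in\R^d$ with $0\leq\lambda_i\leq 1$ and $\sum_i\lambda_i=N$, I would fix an orthonormal basis $\{|i\rangle\}_{i=1}^d$ of $\mathcal{H}_1^{(d)}$ diagonalizing $\rho_1$ and construct an explicit ensemble-representation. The key geometric input is the fact that the hypersimplex
\begin{equation*}
\Delta_{N,d}=\set{\vec{\lambda}\in[0,1]^d\,\bigl|\,\textstyle\sum_i\lambda_i=N}
\end{equation*}
coincides with the convex hull of the characteristic vectors $\chi_S\in\{0,1\}^d$ of the $N$-element subsets $S\subset\{1,\ldots,d\}$. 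Assuming this, write $\vec{\lambda}=\sum_{|S|=N}p_S\,\chi_S$ with $p_S\geq 0$, $\sum_S p_S=1$, and set
\begin{equation*}
\rho_N=\sum_{|S|=N}p_S\,|S\rangle\langle S|,\qquad |S\rangle\equiv\mathcal{A}_N\bigl[|i_1\rangle\otimes\cdots\otimes|i_N\rangle\bigr].
\end{equation*}
A short calculation via (\ref{rrdomatrix}) shows that the $1$-RDO of each Slater determinant $|S\rangle$ is $\sum_{i\in S}|i\rangle\langle i|$, so linearity gives $\mbox{Tr}_{2\ldots N}[\rho_N]\cdot N=\sum_i\lambda_i|i\rangle\langle i|=\rho_1$, as required.

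The only nontrivial step is the hypersimplex decomposition, which I would establish by induction on the number of non-integral components of $\vec{\lambda}$. If $\vec{\lambda}$ has no fractional entry it is itself a vertex $\chi_S$; otherwise, since $\sum_i\lambda_i=N$ is an integer, there must be at least two indices $i\ne j$ with $0<\lambda_i,\lambda_j<1$. Choosing $\varepsilon=\min(\lambda_i,1-\lambda_j,1-\lambda_i,\lambda_j)>0$ and moving mass between coordinates $i,j$ in opposite directions by $\pm\varepsilon$ produces two admissible spectra $\vec{\lambda}^{\pm}\in\Delta_{N,d}$ whose convex combination returns $\vec{\lambda}$ and strictly reduces the count of non-integral entries. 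Iterating gives the desired finite convex decomposition into vertices $\chi_S$, completing the proof.
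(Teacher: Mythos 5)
Your proof follows essentially the same route as the paper's: necessity via positivity and the anticommutation relation (the paper phrases it as the operator inequality $0\leq \hat{n}_i\leq \mathds{1}$), and sufficiency by decomposing $\vec{\lambda}$ into a convex combination of $\{0,1\}$-vectors of weight $N$ and mixing the corresponding Slater determinants. The one place you go beyond the paper is that you try to actually prove the hypersimplex decomposition, which the paper simply attributes to (a corollary of) the Birkhoff--von Neumann theorem.

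That added inductive step is, however, not correct as written. With a single symmetric step size $\varepsilon=\min(\lambda_i,1-\lambda_j,1-\lambda_i,\lambda_j)$ only the endpoint in the direction of the binding constraint acquires a new integral coordinate; the other endpoint generically keeps both entries $i,j$ fractional, so your claim that both $\vec{\lambda}^{\pm}$ ``strictly reduce the count of non-integral entries'' fails, and the iteration need not terminate in finitely many steps. Concretely, for $d=2$, $N=1$, $\vec{\lambda}=(0.3,0.7)$ one gets $\varepsilon=0.3$, $\vec{\lambda}^-=(0,1)$ (a vertex) but $\vec{\lambda}^+=(0.6,0.4)$, which has exactly as many fractional entries as $\vec{\lambda}$; iterating on such children can even revisit earlier points. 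The standard fix is to use two different step lengths, $\varepsilon^{+}=\min(1-\lambda_i,\lambda_j)$ and $\varepsilon^{-}=\min(\lambda_i,1-\lambda_j)$, and to write
\begin{equation*}
\vec{\lambda}=\frac{\varepsilon^{-}}{\varepsilon^{+}+\varepsilon^{-}}\,\bigl(\vec{\lambda}+\varepsilon^{+}(e_i-e_j)\bigr)+\frac{\varepsilon^{+}}{\varepsilon^{+}+\varepsilon^{-}}\,\bigl(\vec{\lambda}-\varepsilon^{-}(e_i-e_j)\bigr)\,;
\end{equation*}
then each endpoint saturates one of its constraints, so both have strictly fewer fractional entries and the induction terminates. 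With that repair (or by simply citing the vertex characterization of the hypersimplex, as the paper does), the rest of your argument—the computation of the $1$-RDO of a Slater determinant and the linearity step, with the paper's normalization of $\rho_1$ to $N$—is fine.
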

\begin{proof}
The proof that Pauli's exclusion principle is a necessary condition is quite elementary. We choose some arbitrary orthonormal basis $\{|i\rangle\}_{i=1}^d$ for the $1$-particle Hilbert space and introduce the corresponding creation and annihilation operators $a_i^\dagger, a_j$. By using the anticommutation relation we find for the particle number operator $\hat{n}_i \equiv a_i^\dagger a_i$ of the state $|i\rangle$ the relation $0\leq a_i^\dagger a_i = \mathds{1} - a_i a_i^\dagger$\,. Since $a_i a_i^\dagger \geq 0$, this finally yields
\begin{equation}
0\leq \hat{n}_i \leq \mathds{1}\,.
\end{equation}
This implies that every occupation number (e.g.~any NON) is bounded by Eq.~(\ref{PauliexclusionP}). To show that these constraints are also sufficient for
the compatibility of NONs to a mixed $N$-fermion state consider $\vec{\lambda}$ fulfilling (\ref{PauliexclusionP}).
It can be shown (e.g.~as a corollary of the Birkhoff-von Neumann theorem) that $\vec{\lambda}$ is a convex sum $\vec{\lambda} = \sum_i q_i \vec{v}_i$ of vectors $\vec{v}=(v_1,\ldots,v_d) \in \{0,1\}^d$ with $\|\vec{v}\|_1=N$. Then $\vec{\lambda}$ is compatible to the mixed state
\begin{equation}
\rho_N = \sum_i q_i \,|\vec{v}_i\rangle \langle \vec{v}_i|\,.
\end{equation}
Here $|\vec{v}_i\rangle$ denotes a Slater determinant in occupation number representation.
\end{proof}

\section{Generalized Pauli constraints}\label{sec:generalizedPC}
In contrast to the $1$-body ensemble $N$-representability problem (see previous section and Lem.~\ref{lem:Pauli}) determining the set $D_1^{(p)}$ of pure $N$-representable $1$-RDOs is highly non-trivial. Although the set $D_1^{(p)}\subset D_1^{(e)}$ was found for some smaller settings as e.g.~$\wedge^3[\mathcal{H}_1^{(6)}]$ already some decades ago \cite{Borl1972} it was not clear at all how to determine $D_1^{(p)}$ for general fermion numbers $N$ and dimensions $d$. In particular, for generic settings $\wedge^N[\mathcal{H}_1^{(d)}]$ it was even not known whether there are further restrictions on $D_1^{(p)}$, beyond Pauli's exclusion principle. Just recently, in the framework of the univariant quantum marginal problem (recall Sec.~\ref{sec:QMPpureuni}) A.Klyachko has solved the $1$-body pure $N$-representability problem \cite{Kly3}. His solution is illustrated in Fig.~\ref{fig:QMPfermSolut}.
\begin{figure}[h!]
\includegraphics[width=10.0cm]{MapNON}
\centering
\captionC{Schematic illustration of how the family of antisymmetric $N$-particle states $|\Psi_N\rangle$ maps to their vectors $\vec{\lambda}$ of natural occupation numbers forming a polytope (dark-gray), a proper subset of the light-gray hypercube describing Pauli's exclusion principle.
Single Slater determinants are mapped to the Hartree-Fock point $\vec{\lambda}_{HF}\equiv (1,\ldots,1,0,\ldots)$, a vertex (red dot) of that polytope.}
\label{fig:QMPfermSolut}
\end{figure}
The NONs $\vec{\lambda}\equiv (\lambda_1,\ldots,\lambda_d)$ compatible to antisymmetric pure states do not only lie inside the high-dimensional ``Pauli hypercube'', but are further restricted to a polytope $\mathcal{P}_{N,d}$. This polytope is defined by a \emph{finite} family of so-called generalized Pauli constraints, conditions of the form,
\begin{equation}\label{generalizedPC}
D_j^{(N,d)}(\vec{\lambda}) \equiv \kappa_0^{(j)} + \kappa_1^{(j)}  \lambda_1+\ldots + \kappa_d^{(j)}  \lambda_d \geq 0\qquad,\,j=1,\ldots,r^{(N,d)}\,,
\end{equation}
where $\kappa_i^{(j)} \in \Z$.
\begin{rem}\label{rem:genPCpolytope}
We always order the NONs decreasingly and normalize their sum to the particle number $N$. Consequently, the polytope $\mathcal{P}_{N,d}\subset \R^d$ is
given by all those NONs $\vec{\lambda}$ fulfilling all generalized Pauli constraints (\ref{generalizedPC}) and
\begin{eqnarray}
&&\lambda_1\geq \lambda_2\geq \ldots \geq \lambda_d \nonumber \\
&& \lambda_1+\ldots +\lambda_d = N\,.
\end{eqnarray}
The boundary of the polytope is given by all those $\vec{\lambda}$ saturating either a generalized Pauli constraint or an ordering constraint. However, in the following we mean by boundary $\partial\mathcal{P}_{N,d}$ of the polytope just the part corresponding to saturation of a generalized Pauli constraints.
\end{rem}
Summarizing the geometric picture shown in Fig.~\ref{fig:QMPfermSolut}, a vector $\vec{\lambda}$ of NONs is compatible to some antisymmetric pure $N$-fermion quantum state $|\Psi_N\rangle \in \wedge^N[\mathcal{H}_1^{(d)}]$ \emph{if and only if} $\vec{\lambda}$ fulfills the constraints $D_j^{(N,d)}(\vec{\lambda})\geq 0$ \emph{for all} $j = 1,2,\ldots,r^{(N,d)}$.

For each fixed $N$ and $d$ the generalized Pauli constraints (\ref{generalizedPC}) can in principle be calculated following Klyachko's algorithm \cite{Kly3}. However, since this is based on involved concepts as e.g.~cohomology theory of flag varieties, applying the algorithm is still very challenging. In particular it seems  unfeasible for those who are not familiar with algebraic topology. In general, the mathematics behind the generalized Pauli constraints is the same as that presented in Chap.~\ref{chap:Math}. Hence, to calculate the generalized Pauli constraints one may be tempted to circumvent cohomology theory by using the same elementary brute force method developed in Chap.~\ref{chap:Math}. There we succeeded in determining some of the constraints for the quantum marginal problem $\{A, AB\}$ (see Sec.~\ref{sec:QMPAvsAB}). Unfortunately, this turns out to be not systematic enough and the more profound approach based on cohomology theory is necessary. We skip this in the present thesis and refer to \cite{Kly3}.

In the following we present the generalized Pauli constraints for the smallest few settings $\wedge^N[\mathcal{H}_1^{(d)}]$.
As a first result we state
\begin{lem}\label{lem:partholegenPC}
Given the finite family $\{D_j^{(N,d)}(\cdot)\geq 0\}_{j=1}^{r^{(N,d)}}$ of generalized Pauli constraints for the setting $\wedge^N[\mathcal{H}_1^{(d)}]$. Then the constraints on the NONs $\vec{\lambda}$ for the setting $\wedge^{d-N}[\mathcal{H}_1^{(d)}]$
are given by $\{D_j^{(d-N,d)}(\cdot)\geq 0 \}_{j=1}^{r^{(d-N,d)}}$ with $r^{(d-N,d)} = r^{(N,d)}$ and $\forall j=1,\ldots, r^{(d-N,d)}$
\begin{equation}
D_j^{(d-N,d)}(\lambda_1, \ldots,\lambda_d) = D_j^{(N,d)}(1-\lambda_d,\ldots,1-\lambda_1)\,.
\end{equation}
\end{lem}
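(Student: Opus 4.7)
The plan is to exploit particle–hole duality to establish a bijection between the sets of admissible NONs for $\wedge^N[\mathcal{H}_1^{(d)}]$ and $\wedge^{d-N}[\mathcal{H}_1^{(d)}]$. Fix an orthonormal basis $\{|i\rangle\}_{i=1}^d$ of $\mathcal{H}_1^{(d)}$ and define a $\C$-linear map $\Phi: \wedge^N[\mathcal{H}_1^{(d)}] \to \wedge^{d-N}[\mathcal{H}_1^{(d)}]$ on the Slater-determinant basis by
\begin{equation}
\Phi(|i_1,\ldots,i_N\rangle) := \epsilon(I,I^c)\,|j_1,\ldots,j_{d-N}\rangle\,,
\end{equation}
where $I=\{i_1<\ldots<i_N\}$, $I^c=\{j_1<\ldots<j_{d-N}\}=\{1,\ldots,d\}\setminus I$, and $\epsilon(I,I^c)\in\{\pm 1\}$ is the sign determined by $a_{i_1}^\dagger\cdots a_{i_N}^\dagger a_{j_1}^\dagger\cdots a_{j_{d-N}}^\dagger|0\rangle = \epsilon(I,I^c)\,a_1^\dagger\cdots a_d^\dagger|0\rangle$. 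Extended linearly, $\Phi$ is a unitary isomorphism between the pure $N$-fermion and $(d-N)$-fermion state spaces.

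The central step is to verify the operator identity
\begin{equation}
\rho_1^{\Phi(\Psi_N)} = \mathds{1} - \rho_1^{\Psi_N}\,,
\end{equation}
using the matrix-element formula (\ref{rrdomatrix}) together with the anticommutator $\{a_k,a_l^\dagger\}=\delta_{kl}$. Equivalently, $\Phi$ implements the particle–hole automorphism $a_k\leftrightarrow a_k^\dagger$ of the CAR algebra, so
\begin{equation}
\langle\Phi(\Psi_N)|a_k^\dagger a_l|\Phi(\Psi_N)\rangle = \langle\Psi_N|a_l a_k^\dagger|\Psi_N\rangle = \delta_{kl}-\langle\Psi_N|a_k^\dagger a_l|\Psi_N\rangle\,.
\end{equation}
The diagonal case ($k=l$) is immediate from $\chi(k\in I^c)=1-\chi(k\in I)$, and the off-diagonal case reduces to careful sign bookkeeping from the definition of $\epsilon(I,I^c)$.

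Given the identity, if $\vec{\lambda}=(\lambda_1,\ldots,\lambda_d)$ is the decreasingly ordered spectrum of $\rho_1^{\Psi_N}$, then the spectrum of $\mathds{1}-\rho_1^{\Psi_N}$ consists of the numbers $\{1-\lambda_i\}$, which when sorted decreasingly equal $\vec{\mu}:=(1-\lambda_d,1-\lambda_{d-1},\ldots,1-\lambda_1)$. Since $\Phi$ is a bijection on the pure antisymmetric sectors, $\vec{\lambda}\in\mathcal{P}_{N,d}$ holds if and only if $\vec{\mu}\in\mathcal{P}_{d-N,d}$. Therefore the two polytopes are affinely equivalent under the involutive substitution $\mu_i=1-\lambda_{d-i+1}$, which forces equality $r^{(d-N,d)}=r^{(N,d)}$ of their facet counts; the defining inequality $D_j^{(N,d)}(\vec{\lambda})\geq 0$ pulls back to $D_j^{(N,d)}(1-\mu_d,\ldots,1-\mu_1)\geq 0$, which is by definition the $j$-th constraint $D_j^{(d-N,d)}(\vec{\mu})\geq 0$ of the dual setting.

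The main obstacle is precisely the sign bookkeeping required to promote the trivial diagonal relation $\chi(k\in I^c)=1-\chi(k\in I)$ to the full off-diagonal operator identity $\rho_1^{\Phi(\Psi_N)}=\mathds{1}-\rho_1^{\Psi_N}$; once that identity is established, the remaining steps are a transparent relabeling argument that requires no detailed knowledge of the constraints $D_j^{(N,d)}$ beyond the definition of the polytope $\mathcal{P}_{N,d}$ in Remark \ref{rem:genPCpolytope}.
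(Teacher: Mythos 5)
Your proposal is correct and follows exactly the route the paper indicates---the paper itself only remarks that the statement follows from the particle--hole isomorphism $\wedge^N[\mathcal{H}_1^{(d)}]\cong\wedge^{d-N}[\mathcal{H}_1^{(d)}]$ and skips the details, which you have supplied. One small refinement: with your \emph{linear} $\Phi$ the matrix identity comes out as $\rho_1^{\Phi(\Psi_N)}=\mathds{1}-\overline{\rho_1^{\Psi_N}}$ (off-diagonal entries get complex conjugated, since $\overline{c_I}c_J\mapsto \overline{c_J}c_I$ in the relevant terms), which has the same spectrum as $\mathds{1}-\rho_1^{\Psi_N}$, so the spectral conclusion and the lemma are unaffected.
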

\noindent The proof is based on a particle-hole transformation, a natural isomorphism for $\wedge^N[\mathcal{H}_1^{(d)}] \cong \wedge^{d-N}[\mathcal{H}_1^{(d)}]$. Since this is straightforward, we skip it.
Due to Lem.~\ref{lem:partholegenPC} we will restrict in the following on the settings with $d\geq 2 N$.

The settings with $N=2$ particles are all elementary and solved by
\begin{lem}\label{lem:setting2}
The generalized Pauli constraints on the ordered NONs $\vec{\lambda}$ for the setting $\wedge^2[\mathcal{H}_1^{(d)}]$ are given by
\begin{equation}
\lambda_i = \lambda_{i+1}\,,
\end{equation}
for all $i$ odd, i.e.~all non-zero NONs are evenly many times degenerate.
\end{lem}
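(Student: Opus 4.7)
My plan is to derive the stated constraints from the Slater (or Zumino) decomposition of a two-fermion state, which is the antisymmetric analog of the Schmidt decomposition. The key structural fact is that every $|\Psi_2\rangle \in \wedge^2[\mathcal{H}_1^{(d)}]$ admits, after an appropriate unitary change of the $1$-particle basis, a canonical form
\begin{equation*}
|\Psi_2\rangle \,=\, \sum_{k=1}^{\lfloor d/2\rfloor} c_k\,|2k-1,\,2k\rangle\,,
\end{equation*}
with $c_k \geq 0$ and $\sum_k c_k^2 = 1$. From this form the generalized Pauli constraints $\lambda_{2k-1}=\lambda_{2k}$ will follow immediately by symbolic computation of the $1$-RDO via Eq.~(\ref{rrdomatrix}).

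First I would establish the canonical form. Identify $|\Psi_2\rangle$ with the antisymmetric $d\times d$ coefficient matrix $A=(A_{ij})$ defined by $|\Psi_2\rangle = \sum_{i<j} A_{ij}|i,j\rangle$ with respect to a fixed orthonormal basis $\{|i\rangle\}_{i=1}^d$. A unitary change of basis $|i\rangle \mapsto U|i\rangle$ induces the transformation $A \mapsto U A U^T$. By the Autonne--Takagi (Youla) normal form for complex antisymmetric matrices, there exists a unitary $U$ such that $UAU^T$ is block-diagonal with $2\times 2$ blocks $\bigl(\begin{smallmatrix}0 & c_k \\ -c_k & 0\end{smallmatrix}\bigr)$, $c_k\geq 0$, plus a trailing zero row/column if $d$ is odd. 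Expressing $|\Psi_2\rangle$ in the rotated basis gives the claimed canonical form.

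Next I would compute the $1$-RDO directly in the rotated basis. Using (\ref{rrdomatrix}) with $r=1$, the orthogonality of the Slater determinants $\{|2k-1,2k\rangle\}_k$ and the anticommutation relations yield
\begin{equation*}
\rho_1 \,=\, \sum_{k=1}^{\lfloor d/2\rfloor} c_k^2 \bigl(|2k-1\rangle\langle 2k-1| + |2k\rangle\langle 2k|\bigr).
\end{equation*}
Hence the rotated basis already diagonalises $\rho_1$, and each eigenvalue $c_k^2$ appears with multiplicity (at least) two. After decreasingly ordering, this gives $\lambda_{2k-1}=\lambda_{2k}$ for every odd index, which is the asserted necessary condition.

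Finally I would verify sufficiency: given any vector $\vec\lambda$ with $\lambda_1 \geq\ldots\geq \lambda_d \geq 0$, $\sum_i \lambda_i = 2$, and $\lambda_{2k-1}=\lambda_{2k}$ for all odd indices, the explicit state $|\Psi_2\rangle = \sum_k \sqrt{\lambda_{2k-1}}\,|2k-1,2k\rangle$ reproduces $\vec\lambda$ as its NON vector, so every such vector is realised. The main technical step is invoking the Autonne--Takagi factorization; I do not expect a conceptual obstacle, since for $N=2$ the generalized Pauli constraints coincide with the well-known pairing of Slater occupancies, and Pauli's exclusion principle $\lambda_i \leq 1$ is automatic from $c_k^2 \leq \sum_j c_j^2 = 1$.
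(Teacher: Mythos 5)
Your proposal is correct, but it runs along a different track than the paper's own argument. The paper proves necessity by viewing $|\Psi_2\rangle$ as a bipartite vector in $\mathcal{H}_1^{(d)}\otimes\mathcal{H}_1^{(d)}$ and invoking the ordinary Schmidt decomposition $|\Psi_2\rangle=\sum_i\sqrt{\lambda_i}\,|i\rangle_A\otimes|i\rangle_B$: both Schmidt bases diagonalise the (common) $1$-RDO with the same weights, and antisymmetry forces ${}_A\langle i|i\rangle_B=0$, so every non-zero NON carries at least two mutually orthogonal natural orbitals, giving the even degeneracy without ever writing down a canonical form of the state. You instead go through the Autonne--Takagi (Youla) normal form of the antisymmetric coefficient matrix, i.e.\ the Slater/Zumino decomposition $|\Psi_2\rangle=\sum_k c_k|2k{-}1,2k\rangle$, and then read off $\rho_1=\sum_k c_k^2\bigl(|2k{-}1\rangle\langle 2k{-}1|+|2k\rangle\langle 2k|\bigr)$. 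The two routes are mathematically close (the Slater form can itself be obtained from Schmidt plus antisymmetry), but yours has to import the matrix factorization theorem, while the paper's gets necessity with less machinery; in exchange, your canonical form hands you the sufficiency direction for free -- the explicit state $\sum_k\sqrt{\lambda_{2k-1}}\,|2k{-}1,2k\rangle$ realising any admissible even-degenerate spectrum -- which the paper's proof leaves implicit. Your version is, if anything, the more complete statement-and-converse argument; just make sure the trace normalization of $\rho_1$ to $N=2$ (the paper's convention in Sec.~\ref{sec:fermionicconcepts}) is stated consistently when you write $\sum_k c_k^2=1$ and $\sum_i\lambda_i=2$.
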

\begin{proof}
The proof is elementary \cite{DGpriv}. Consider for a given possible $\vec{\lambda} \in \R^d$, $d$ finite, the corresponding $2$-particle state $|\Psi_2\rangle$. According to the Schmidt decomposition there exist two orthonormal bases $\{|i\rangle_A\}_{i=1}^d$ and $\{|j\rangle_B\}_{j=1}^d$ both for the same $1$-particle Hilbert space $\mathcal{H}_1^{(d)}$ such that
\begin{equation}
|\Psi_2\rangle = \sum_{i=1}^d\,\sqrt{\lambda_i}\,|i\rangle_A\otimes |i\rangle_B\,.
\end{equation}
For $\lambda_i >0$ both $|i\rangle_A$ and $|i\rangle_B$ are corresponding eigenvectors of $\rho_1$. On the other hand we also find by using the antisymmetry of $|\Psi_2\rangle$
\begin{equation}
{}_A\langle i|i\rangle_B = \frac{1}{\sqrt{\lambda_i}} ({}_A\langle i|\otimes{}_A\langle i|) |\Psi_2\rangle =0.
\end{equation}
\end{proof}
Moreover, since each equality $A=B$ can be written as two conditions $A \geq B $ and $B \geq A$, the constraints $\lambda_i = \lambda_{i+1}$ for the settings with $N=2$ have indeed the general form of (\ref{generalizedPC}).

The so-called Borland-Dennis setting $\wedge^3[\mathcal{H}_1^{(6)}]$ is the smallest setting where some of the generalized Pauli constraints do not effectively take the form of equalities. The constraints were already found in 1972 \cite{Borl1972}.
They read
\begin{eqnarray}\label{set36}
\lambda_1+\lambda_6=\lambda_2+\lambda_5=\lambda_3+\lambda_4=1\nonumber \\
D^{(3,6)}(\vec{\lambda}) = 2-(\lambda_1+\lambda_2+\lambda_4) \geq 0\,.
\end{eqnarray}
Note that the inequality in (\ref{set36}) strengthens Pauli's exclusion principle, which here states $2-(\lambda_1+\lambda_2)\geq 0$. Moreover, it can also be written as $\lambda_5+\lambda_6-\lambda_4 \geq 0$.
For the setting $\wedge^3[\mathcal{H}_1^{(7)}]$ the generalized Pauli constraints are given by
\begin{eqnarray}\label{set37}
D_1^{(3,7)}(\vec{\lambda}) &=& 2-(\lambda_1+\lambda_2+\lambda_5+\lambda_6)\geq 0\nonumber \\
D_2^{(3,7)}(\vec{\lambda}) &=& 2-(\lambda_1+\lambda_3+\lambda_4+\lambda_6) \geq 0\nonumber \\
D_3^{(3,7)}(\vec{\lambda}) &=& 2-(\lambda_2+\lambda_3+\lambda_4+\lambda_5)\geq 0\nonumber \\
D_4^{(3,7)}(\vec{\lambda}) &=& 2-(\lambda_1+\lambda_2+\lambda_4+\lambda_7) \geq 0\,.
\end{eqnarray}
The 31 constraints for the larger setting $\wedge^3[\mathcal{H}_1^{(8)}]$ can be found in the Appendix \ref{app:GPClargsettings}.
It is important to notice that the number $r^{(N,d)}$ of constraints significantly increases if $N$ and $d$ increases. This is also the reason
why Altunbulak and Klyachko determined those constraints only for the settings with $N\leq 5$ and $d\leq 10$.

\section{Concept of (quasi-)pinning}\label{sec:conceptPinning}
In the previous section we have learned that the antisymmetry of $N$-fermion quantum states leads to so-called generalized Pauli constraints. These constraints on NONs $\vec{\lambda}$ are significantly stronger than the famous Pauli exclusion principle (see e.g.~Fig.~\ref{fig:QMPfermSolut}).
\begin{figure}[h!]
\centering
\includegraphics[width=2.6cm]{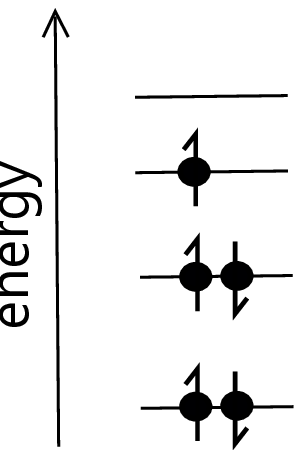}\hspace{2.0cm}
\includegraphics[width=3.9cm]{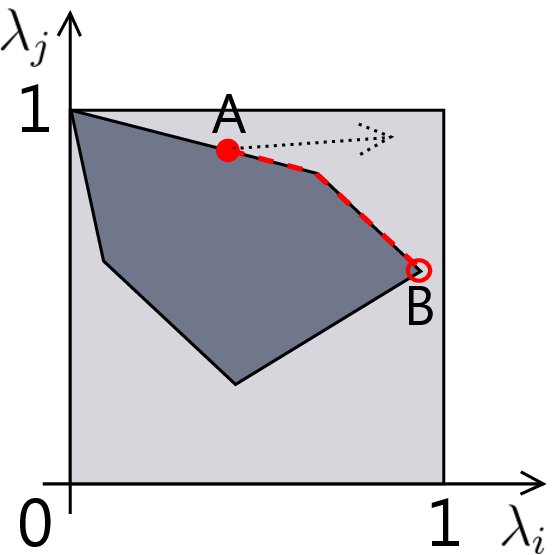}
\centering
\captionC{Right: Initial NONs on the boundary (A) of the polytope. Any time evolution that would like to drive them out of the polytope (according dashed arrow) is dominated by the geometry of the polytope rather than by the Hamiltonian. This kinematical influence by the generalized Pauli constraints generalizes that by the Pauli exclusion principle (on the left). There e.g.~the electron in the highest shell cannot decay to a lower one.}
\label{fig:PCpinned}
\end{figure}
The underlying $1$-particle picture is often the starting point for developing an understanding for given physical systems. Therefore, we may expect that
the knowledge of the generalized Pauli constraints could lead to new conceptual insights.
Nevertheless, except that some vectors $\vec{\lambda}$ are mathematically impossible it is not clear how to use the information about the polytope for the
understanding of physical systems. A first physical relevance is suggested by the following definition due to Klyachko \cite{Kly1}
\begin{defn}\label{def:pinning}
Recall Remark \ref{rem:genPCpolytope} and consider an $N$-fermion quantum state $|\Psi_N\rangle \in \wedge^N[\mathcal{H}_1^{(d)}]$ with NONs $\vec{\lambda}$ on the boundary of the polytope $\mathcal{P}_{N,d}$. This incidence as well as the possible mechanism behind it is called \emph{pinning}. Moreover, we then say that the NONs are \emph{pinned by} the generalized Pauli constraints \emph{to the boundary} $\partial\mathcal{P}_{N,d}$ of the polytope $\mathcal{P}_{N,d}$.
\end{defn}
There are two reasons why the effect of pinning would be interesting. The first one was mentioned by Klyachko in \cite{Kly1}.
\begin{enumerate}
\item Given an $N$-fermion Hamiltonian. Its ground state $|\Psi_N\rangle$ can be obtained via a minimization of the energy expectation value (recall (\ref{energyGSminim1})). If the ground state turns out to be pinned to the boundary of the polytope a generalized Pauli constraint (\ref{generalizedPC}) is active for the minimization in the sense that any further minimization of the energy would violate it. In that case the corresponding constraint would  have a strong influence on the ground state and its energy.
\item Consider an $N$-fermion system initially prepared in a quantum state $|\Psi_N\rangle$ with NONs $\vec{\lambda}$ pinned to the boundary $\partial\mathcal{P}_{N,d}$ of the polytope. By switching on a unitary time evolution for that system the NONs $\vec{\lambda}(t)$ restricted to the polytope will begin to move. For some very specific time evolutions $\vec{\lambda}(t)$ may like to leave the polytope. Such a scenario is illustrated on the right side of Fig.~\ref{fig:PCpinned}. Then, since $\vec{\lambda}(t)$ cannot leave the polytope, it will move along the boundary from the initial point $A$ to the final point $B$. In that case the time evolution is dominated by the geometry of the polytope rather than by the Hamiltonian. This kinematical effect on time evolutions is a generalization of a similar more elementary effect shown on the left side of Fig.~\ref{fig:PCpinned}. There we can see some (non-interacting) electrons occupying low-lying energy shells. Coupling this systems to photons may in principle lead to a decaying of the electrons in the higher energy shells to the lowest one. However, this is impossible due to Pauli's exclusion principle. In the same way the physics of solid bodies at low temperatures is dominated by the electrons close to the Fermi level.
\end{enumerate}
In the spirit of the mechanism explained in point 1., Klyachko expected that typical ground states may be pinned. In \cite{Kly1} he claimed first numerical evidence for that.
Analyzing the NONs obtained by Nakata et al.~in a variational calculation for some (artificial) Beryllium state and presented in \cite{Nakata2001} with six digits accuracy Klyachko found pinning. However, by reproducing the Nakata results and taking more than just six digits into account we found that the distance to the polytope boundary is indeed very small (order $10^{-7}$) but finite \cite{CS2013QChem}. This leads us to the definition of a new physical phenomenon called \emph{quasi-pinning}.
\begin{defn}\label{def:quasipinning}
Consider an $N$-fermion quantum state with NONs $\vec{\lambda}$ in a distance $\mbox{dist}(\vec{\lambda},\partial \mathcal{P}_{N,d})$ to the boundary of the polytope $\mathcal{P}_{N,d}$. If this distance is surprisingly small compared to the natural size $1$ of the polytope,
\begin{equation}
\mbox{dist}(\vec{\lambda},\partial \mathcal{P}_{N,d}) \ll 1
\end{equation}
we say that the NONs are \emph{quasi-pinned} to the boundary of the polytope. Moreover $\mbox{dist}(\vec{\lambda},\partial \mathcal{P}_{N,d})$ is a measure for the \emph{quasi-pinning}.
\end{defn}
In Chap.~\ref{chap:Physics} we will find strong evidence that ground states of few fermions confined to some trap exhibit strong quasi-pinning. Moreover, the following section on the physical relevance of (quasi-)pinning strongly emphasizes that exact pinning is unnatural and should not appear for realistic physical systems.

Why may quasi-pinning be physically relevant and why may it show up? The two statements above (1. and 2.) providing an answer for the phenomenon of \emph{exact} pinning do not apply for quasi-pinning. There it was essential that the NONs are \emph{exactly on} the boundary of the polytope. Hence, quasi-pinning has so far no physical relevance at all. Fortunately, this will change in the next section.

As a caveat for Chap.~\ref{chap:Physics}, where we will investigate possible pinning for ground states of concrete systems we observe.

\section{Physical relevance of (quasi-)pinning}\label{sec:physRelPinning}
In this section we will learn that pinning of NONs $\vec{\lambda}$ as effect in the $1$-particle picture corresponds to a very specific and simplified structure of the corresponding $N$-fermion quantum state $|\Psi_N\rangle$. Moreover, we find strong evidence that this statement is stable under small deviations of the NONs. Hence, quasi-pinning is also highly physically relevant since any quasi-pinned $|\Psi_N\rangle$ has approximately this simplified structure.

To introduce this relation of pinning and structure of $|\Psi_N\rangle$ we start with a special case. Consider for the setting with $N$ fermions and a $d$-dimensional $1$-particle Hilbert space NONs
\begin{equation}\label{NONsHF}
\vec{\lambda}_{HF} \equiv (\underbrace{1,\ldots,1}_N,0,\ldots,0)\,.
\end{equation}
This vector is called Hartree-Fock point. The following Lemma and its proof will justify this name.
\begin{lem}\label{lem:HFNONtoSlater}
Given a state $|\Psi_N\rangle \in \wedge^N[\mathcal{H}_1^{(d)}]$ with NOs $\{|i\rangle\}_{i=1}^d$ and NONs $\vec{\lambda} \in \mathcal{P}_{N,d}$ close to the Hartree-Fock point (\ref{NONsHF})
\begin{equation}\label{HFdistance}
(\lambda_{HF,1}-\lambda_1)+\ldots+(\lambda_{HF,N}-\lambda_N) = N-(\lambda_1+\ldots+\lambda_N) =:\delta \ll 1\,.
\end{equation}
Then we find
\begin{equation}\label{HFstability}
1-\frac{1}{N}\delta \geq |\langle 1,2,\ldots,N|\Psi_N\rangle |^2 \geq 1-\delta\,.
\end{equation}
Moreover, by optimizing over $1$-particle states $|\tilde{i}\rangle$ the overlap $|\langle \tilde{1},\tilde{2},\ldots,\tilde{N}|\Psi_N\rangle |^2$
can exceed the upper bound in (\ref{HFstability}) only on the scale $\delta^2$.
\end{lem}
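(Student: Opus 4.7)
The plan is to pass to the Slater expansion of $|\Psi_N\rangle$ in the NO basis and group Slater determinants by the size of their overlap with the Hartree-Fock index set. Writing $|\Psi_N\rangle = \sum_{\bd{I}} c_{\bd{I}}|\bd{I}\rangle$, where $\bd{I}=(i_1<\ldots<i_N)$ ranges over ordered $N$-subsets of $\{1,\ldots,d\}$, let $\mathbf{H}:=\{1,\ldots,N\}$ and define
\[
p_k := \sum_{|\bd{I}\cap\mathbf{H}|=k}|c_{\bd{I}}|^2,\qquad k=0,1,\ldots,N,
\]
so that $\sum_k p_k=1$ and $p_N = |\langle 1,\ldots,N|\Psi_N\rangle|^2$. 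The diagonal case of (\ref{rrdomatrix}) with $r=1$ gives $\lambda_i=\sum_{\bd{I}\ni i}|c_{\bd{I}}|^2$; summing over $i\in\mathbf{H}$ and swapping the two sums yields $\sum_{i=1}^N\lambda_i=\sum_k k\,p_k$, which by (\ref{HFdistance}) equals $N-\delta$. Combined with $\sum_k p_k=1$ this produces the key identity
\[
\sum_{k=0}^{N-1}(N-k)\,p_k \;=\; \delta.
\]

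Both bounds in (\ref{HFstability}) then follow at once from the elementary sandwich $1\leq N-k\leq N$ for $k\in\{0,\ldots,N-1\}$ applied to the displayed identity: dividing by $N$ from above and by $1$ from below gives $\delta/N\leq 1-p_N\leq \delta$, i.e.~$1-\delta \leq |c|^2 \leq 1-\delta/N$.

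For the closing optimization claim I will rerun the very same derivation with respect to an arbitrary orthonormal extension $\{|\tilde{i}\rangle\}_{i=1}^d$ of the proposed $\{|\tilde{1}\rangle,\ldots,|\tilde{N}\rangle\}$. With $\tilde{\mathbf{H}}$ the $N$-dimensional subspace spanned by these vectors and $\tilde{\delta}:= N-\mbox{Tr}[P_{\tilde{\mathbf{H}}}\rho_1]$, the analogous argument yields
\[
|\langle \tilde{1},\ldots,\tilde{N}|\Psi_N\rangle|^2 \;\leq\; 1 - \tilde{\delta}/N.
\]
Ky Fan's principle (Lem.~\ref{KyFan}) forces $\mbox{Tr}[P_{\tilde{\mathbf{H}}}\rho_1]\leq \sum_{i=1}^N\lambda_i = N-\delta$, hence $\tilde{\delta}\geq\delta$, so even the optimized overlap cannot exceed $1-\delta/N$; the announced $O(\delta^2)$ slack is automatic. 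The only conceptual step anywhere in the argument is recognizing the grouping $p_k$; everything else reduces to the elementary identity above and Ky Fan, so I anticipate no genuine obstacle.
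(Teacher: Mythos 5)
Your proof is correct, and its core is the paper's own argument in different clothing: your weights $p_k$ are precisely the spectral weights $\langle\Psi_N|P_{N-k}|\Psi_N\rangle$ of the operator $\hat{S} = N\mathds{1}-(a_1^{\dagger}a_1+\ldots+a_N^{\dagger}a_N)$ that the paper's proof introduces, and your identity $\sum_{k=0}^{N-1}(N-k)p_k=\delta$ sandwiched via $1\leq N-k\leq N$ is exactly the paper's two-sided estimate of $\langle\hat{S}\rangle_{\Psi_N}$ against $1-\|P_0\Psi_N\|_{L^2}^2$. Where you genuinely add something is the ``moreover'' claim, which the paper declares straightforward and skips: you observe that the same counting only uses \emph{diagonal} entries of $\rho_1$, hence works in any orthonormal basis and gives $|\langle \tilde{1},\ldots,\tilde{N}|\Psi_N\rangle|^2 \leq \frac{1}{N}\,\mbox{Tr}[P_{\tilde{\mathbf{H}}}\rho_1]$, after which Ky Fan (Lem.~\ref{KyFan}) caps the trace by $\lambda_1+\ldots+\lambda_N=N-\delta$. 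This yields the cleaner and in fact stronger conclusion that \emph{no} Slater determinant can have overlap exceeding $1-\delta/N$ at all, not merely up to $O(\delta^2)$, which of course implies the statement of the lemma; it also sharpens the paper's parenthetical claim that the optimal overlap equals $1-\frac{1}{N}\delta+O(\delta^2)$ by showing the correction cannot be positive.
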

\begin{proof}
Let $|\Psi_N\rangle$ be a state with NONs $\vec{\lambda}$ and its NOs give rise to a basis $\mathcal{B}_1 =\{|i\rangle\}_{i=1}^d$ of $\mathcal{H}_1^{(d)}$. We denote the corresponding creation and annihilation operators by $a_i^\dagger, a_j$ (recall Sec.~\ref{sec:fermionicconcepts})
and define an operator (depends implicitly on $|\Psi_N\rangle$)
\begin{equation}\label{Shatoperator}
\hat{S} = N \mathds{1}-\left(a_1^{\dagger} a_1+\ldots + a_N^{\dagger} a_N\right)\,.
\end{equation}
Since all operators $a_i^{\dagger} a_i, i=1,\ldots, d$ commute, $\hat{S}$ has the spectrum $\{0,1,\ldots,N\}$. We denote the orthogonal projection operator on the $k$-eigenspace of $\hat{S}$ by $P_k$.
%
For fixed $|\Psi_N\rangle$ the $\hat{S}$-operator was chosen such that the lhs in (\ref{HFdistance}) can be expressed as the expectation value
$\langle\hat{S}\rangle_{\Psi_N} \equiv\langle \Psi_N|\hat{S} |\Psi_N\rangle = N- (\lambda_1+\ldots+\lambda_d )$. The idea is now to use the gapped structure of $\hat{S}$ to show that whenever $\langle\hat{S}\rangle_{\Psi_N} \approx 0$, $|\Psi_N\rangle$ has the main weight in the eigenspace of the minimal $\hat{S}$-eigenvalue $0$.
Indeed, we easily find
\begin{eqnarray}\label{HFstabilityEstimtion}
\langle \Psi_N|\hat{S} |\Psi_N\rangle &=& \sum_{k=0}^N k\,\langle\Psi_N|P_k |\Psi_N\rangle \nonumber \\
&\geq& \sum_{k=1}^N \,\langle\Psi_N|P_k |\Psi_N\rangle \nonumber \\
&=& 1- \|P_0 \Psi_N\|_{L^2}^2\,.
\end{eqnarray}
Since the $0$-eigenspace is $1$-dimensional and spanned by the Slater determinant \\$|1,2,\ldots,N\rangle$, the lower bound for $|\langle 1,2,\ldots,N|\Psi_N\rangle |^2$ in Lem.~\ref{lem:HFNONtoSlater} follows. To derive the upper bound we repeat the calculation (\ref{HFstabilityEstimtion}) but in the second line we estimate $\sum_{k=0}^N k\,\langle\Psi_N|P_k |\Psi_N\rangle \leq \sum_{k=1}^N \,N \langle\Psi_N|P_k |\Psi_N\rangle$.
The proof that the best overlap of $|\Psi_N\rangle$ with a Slater determinant is given by $1-\frac{1}{N}\delta +O(\delta^2)$ is straightforward and we skip it here.
\end{proof}
\begin{rem}\label{rem:HFpoint}
Lem.~\ref{lem:HFNONtoSlater} in particular states that the NONs $\vec{\lambda}_{HF}$ (\ref{NONsHF}) do arise exactly from those states which can be written as one single Slater determinant. Since this relation of NONs and structure of the corresponding $N$-fermion quantum state is stable under small deviations of the NONs, the Hartree-Fock optimization is meaningful and leads to excellent approximations of ground states whenever their NONs are close to the Hartree-Fock point.
\end{rem}
\begin{rem}\label{rem:weakcorrel}
Continuing on Remark \ref{rem:HFpoint} we have learned that the neighborhood of $\vec{\lambda}_{HF}$ (see e.g.~red dot in Fig.~\ref{fig:QMPfermSolut}) describes the regime of weak fermion correlations. Moreover, the ground state NONs of non-interacting fermions are exactly pinned to the boundary of the polytope and coincide with the Hartree-Fock point $\vec{\lambda}_{HF}$. A natural question arises as whether they are still pinned to the boundary after turning on some interaction between the fermions. This question is investigated in Chap.~\ref{chap:Physics}.
\end{rem}

Can we generalize Lem.~\ref{lem:HFNONtoSlater} to arbitrary points on or close to the polytope boundary $\partial \mathcal{P}_{N,d}$? As a first step we should find an appropriate generalization of the $\hat{S}$-operator (\ref{Shatoperator}). For this consider a generalized Pauli constraint
\begin{equation}\label{gPC1}
D(\lambda)=\kappa_0  + \kappa_1 \lambda_1+\ldots+\kappa_{d} \lambda_{d} \geq 0
\end{equation}
and a state $|\Psi_N\rangle$ with NONs $\vec{\lambda}$ and NONs $\{|i\rangle\}_{i=1}^d$. Then by defining
\begin{equation}\label{Dhatoprator}
\hat{D}:= \kappa_0 \mathds{1} + \kappa_1 a_1^{\dagger} a_1+\ldots+\kappa_{d} a_{d}^{\dagger}a_{d}
\end{equation}
we find indeed
\begin{equation}\label{DtoDhat}
D(\vec{\lambda}) = \langle\Psi_N|\hat{D}|\Psi_N\rangle\,.
\end{equation}
Note that since $\kappa_i \in \Z$ we have $\mbox{spec}(\hat{D}) \subset \Z$. In contrast to $\hat{S}$ (\ref{Shatoperator}), $\hat{D}$ is not positive semi-definite anymore and takes the general form $\hat{D} = P_-\oplus P_0 \oplus P_+$\,. $P_\pm$ denotes the projection operator to the positive/negative eigenspace and $P_0$ that onto the $0$-eigenspace.

The hope is now to find a lemma of the following type. Whenever $D(\vec{\lambda}) \approx 0$, $|\Psi_N\rangle$ has main weight in the $0$-eigenspace of $\hat{D}$. Unfortunately, such a statement cannot easily be proven anymore. In contrast to the $\hat{S}$-operator studied above the interesting $0$-eigenvalue is not extremal in $\mbox{spec}(\hat{D})$. Therefore, it seems possible that $|\Psi_N\rangle$ has weights in the spaces
corresponding to $P_-$ and $P_+$, which then cancel out in Eq.~(\ref{DtoDhat}).

However, with more effort one can prove a statement for exact pinning.
\begin{lem}\label{lem:exPintoStruct}
Consider an $N$-fermion state $|\Psi_N\rangle$ with NONs $\vec{\lambda}$ exactly saturating a generalized Pauli constraint (\ref{gPC1}).
Recall the corresponding $\hat{D}$-operator (\ref{Dhatoprator}) referring  to the NOs of $|\Psi_N\rangle$. Then,
\begin{equation}
\hat{D}|\Psi_N\rangle := \left(\kappa_0 \mathds{1}+\kappa_1 a_1^{\dagger} a_1+\ldots+\kappa_{d} a_{d}^{\dagger}a_{d}\right)|\Psi_N\rangle = 0\,.
\end{equation}
\end{lem}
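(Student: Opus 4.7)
The plan is to exploit the fact that, under the generalized Pauli constraint, the real-valued functional $f(|\Psi\rangle) := D(\vec{\lambda}(|\Psi\rangle))$ is nonnegative on the entire (normalized) antisymmetric $N$-fermion state space, and that $|\Psi_N\rangle$ realizes its minimum value $0$. Hence $|\Psi_N\rangle$ is a critical point of $f$, and its first-order variation along every admissible tangent direction must vanish. I would then extract the desired identity $\hat{D}|\Psi_N\rangle=0$ from this vanishing condition.

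Concretely, I would consider a smooth normalization-preserving path $|\Psi(\epsilon)\rangle = |\Psi_N\rangle + \epsilon|\dot{\Psi}\rangle + O(\epsilon^2)$ with $\langle\Psi_N|\dot{\Psi}\rangle = 0$ and $|\dot{\Psi}\rangle \in \wedge^N[\mathcal{H}_1^{(d)}]$ arbitrary. First assuming the NONs to be non-degenerate, standard first-order perturbation theory for Hermitian operators gives
\begin{equation}
\dot{\lambda}_i\big|_{\epsilon=0} = \langle i|\dot{\rho}_1|i\rangle ,
\end{equation}
where $|i\rangle$ is the $i$-th NO of $|\Psi_N\rangle$. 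Consequently
\begin{equation}
\frac{d}{d\epsilon}D(\vec{\lambda}(\epsilon))\Big|_{\epsilon=0} = \sum_{i=1}^d \kappa_i \langle i|\dot{\rho}_1|i\rangle = \mathrm{Tr}[\hat{K}\,\dot{\rho}_1], \qquad \hat{K} := \sum_i \kappa_i |i\rangle\langle i|.
\end{equation}
Using $\rho_1 = N\,\mathrm{Tr}_{2,\ldots,N}[|\Psi\rangle\langle\Psi|]$, antisymmetry of $|\Psi_N\rangle$, and the second-quantized identity $\sum_{k=1}^N \hat{K}_k = \sum_i \kappa_i a_i^\dagger a_i$, this first variation rewrites as $2\,\mathrm{Re}\langle\dot{\Psi}|(\hat{D}-\kappa_0\mathds{1})|\Psi_N\rangle$, and the $\kappa_0$ piece drops out by the orthogonality $\langle\dot{\Psi}|\Psi_N\rangle = 0$.

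Since $\epsilon = 0$ is a minimum and both $\pm|\dot{\Psi}\rangle$ and $\pm i|\dot{\Psi}\rangle$ are admissible, the first-order variation must actually vanish: $\langle\dot{\Psi}|\hat{D}|\Psi_N\rangle = 0$ for all $|\dot{\Psi}\rangle \perp |\Psi_N\rangle$ in $\wedge^N[\mathcal{H}_1^{(d)}]$. Therefore $\hat{D}|\Psi_N\rangle$ must be a scalar multiple of $|\Psi_N\rangle$. But the scalar equals $\langle\Psi_N|\hat{D}|\Psi_N\rangle = D(\vec{\lambda}) = 0$ by the saturation hypothesis (recall Eq.~(\ref{DtoDhat})). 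This yields $\hat{D}|\Psi_N\rangle = 0$, as claimed.

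The only subtlety, and the main technical obstacle, is that the standard eigenvalue perturbation formula above requires non-degenerate NONs; in the degenerate case the NOs within a degenerate eigenspace of $\rho_1$ are ambiguous and the derivatives of the ordered eigenvalues are only one-sided. I would handle this by a continuity/density argument: perturb $|\Psi_N\rangle$ infinitesimally to lift the degeneracies, apply the argument above, and pass to the limit, observing that the conclusion $\hat{D}|\Psi_N\rangle=0$ is invariant under any choice of NOs inside a degenerate eigenspace for which the corresponding $\kappa_i$ coincide (which is automatic, as different permutations of degenerate NONs yield identical values of $D(\vec{\lambda})$ only when $\kappa_i$ is constant on each degeneracy block of $\vec{\lambda}$; this last point can be enforced by symmetrizing $\hat{D}$ over degenerate blocks and then noting that the symmetrized operator coincides with $\hat{D}$ on $|\Psi_N\rangle$).
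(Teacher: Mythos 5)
Your variational argument for the non-degenerate case is correct, and it is the standard route to this result; note that the paper itself does not prove the lemma but defers to \cite{DGpriv}, so there is no in-paper proof to compare against. The chain you propose — a normalization-preserving path, first-order eigenvalue perturbation $\dot{\lambda}_i = \langle i|\dot{\rho}_1|i\rangle$, the rewriting of the first variation as $2\,\mathrm{Re}\langle\dot{\Psi}|(\hat{D}-\kappa_0\mathds{1})|\Psi_N\rangle$, the use of both $\pm|\dot{\Psi}\rangle$ and $\pm i|\dot{\Psi}\rangle$ to kill the variation, and finally $\hat{D}|\Psi_N\rangle = \langle\Psi_N|\hat{D}|\Psi_N\rangle\,|\Psi_N\rangle = D(\vec{\lambda})\,|\Psi_N\rangle = 0$ — is sound, and the non-degeneracy also guarantees that the decreasing ordering of the NONs is locally stable, so $D\circ\vec{\lambda}$ is indeed differentiable along your path.

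The degenerate case, however, is a genuine gap, for two reasons. First, the continuity argument cannot be run as described: if you perturb $|\Psi_N\rangle$ to lift the degeneracies, the perturbed state generically no longer saturates the constraint exactly, hence is no longer a minimizer of $f$, and the first-variation argument simply does not apply to it; to pass to the limit you would need a quantitative statement about almost-saturating states, which is precisely the hard (and separately conjectured) part. Second, the claimed invariance of the conclusion under the choice of NOs within a degenerate block is false whenever the $\kappa_i$ differ across that block. The paper's own state (\ref{DBcounterex}) at $\delta=0$, i.e.\ $|\Psi_3\rangle = \alpha|1,3,5\rangle + \beta|1,2,4\rangle + \gamma|2,3,6\rangle$ with $|\beta|^2 = |\alpha|^2+|\gamma|^2$, is an explicit counterexample: it exactly saturates $D^{(3,6)}(\vec{\lambda})=0$ and has $\lambda_3=\lambda_4$, yet with the labeling as written one finds $\hat{D}|\Psi_3\rangle = \alpha|1,3,5\rangle - \beta|1,2,4\rangle + \gamma|2,3,6\rangle \neq 0$, while swapping the two degenerate orbitals $|3\rangle \leftrightarrow |4\rangle$ gives $\hat{D}|\Psi_3\rangle = 0$. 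So in the degenerate case the conclusion holds only for a suitable choice of NOs, and your symmetrization of $\hat{D}$ over degenerate blocks does not repair this: in the example the symmetrized operator neither annihilates the state nor agrees with $\hat{D}$ on it. A complete proof must either assume non-degenerate NONs or establish the existence of a good labeling by an argument beyond the limiting procedure you propose.
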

\noindent For the proof we refer to \cite{DGpriv}.

From Lem.~\ref{lem:exPintoStruct} we can immediately conclude that whenever NONs are pinned to some facet of the polytope the corresponding $|\Psi_N\rangle$ has weight only in the $0$-eigenspace of the corresponding $\hat{D}$-operator (\ref{Dhatoprator}). This can be formulated as a selection rule for Slater determinants:
\begin{cor}\label{cor:selrulSlaters}
Consider an $N$-fermion state $|\Psi_N\rangle$ with NOs $\{|i\rangle\}_{i=1}^d$ and NONs $\vec{\lambda}$ exactly saturating a generalized Pauli constraint (\ref{gPC1}). By expanding $|\Psi_N\rangle$ in Slater determinants,
\begin{equation}
|\Psi_N\rangle = \sum_{\bd{i}}\,c_{\bd{i}}\,|\bd{i}\rangle,
\end{equation}
we find the following selection rule (recall (\ref{Dhatoprator}))
\begin{equation}
\hat{D} |\bd{i}\rangle \neq 0 \qquad \Rightarrow \qquad c_{\bd{i}} = 0\,.
\end{equation}
\end{cor}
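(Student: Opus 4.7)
The plan is to reduce the corollary almost immediately to Lemma \ref{lem:exPintoStruct}, exploiting the fact that $\hat{D}$ is diagonal in the Slater-determinant basis built from the natural orbitals. Concretely, I would first observe that each number operator $a_i^\dagger a_i$ acts on a Slater determinant $|\bd{i}\rangle = |i_1,\ldots,i_N\rangle$ (with $i_1 < \ldots < i_N$) as the characteristic function of $\bd{i}$, namely $a_i^\dagger a_i |\bd{i}\rangle = \chi(i \in \bd{i})\,|\bd{i}\rangle$. Hence, setting $d_{\bd{i}} \equiv \kappa_0 + \sum_{k=1}^N \kappa_{i_k}$, one has
\begin{equation}
\hat{D}\,|\bd{i}\rangle = d_{\bd{i}}\,|\bd{i}\rangle,
\end{equation}
so every Slater determinant in the NO basis is an eigenvector of $\hat{D}$ with integer eigenvalue $d_{\bd{i}}$, and the condition ``$\hat{D}|\bd{i}\rangle \neq 0$'' in the statement just means $d_{\bd{i}} \neq 0$.

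Next I would invoke Lemma \ref{lem:exPintoStruct}, which guarantees $\hat{D}|\Psi_N\rangle = 0$ under the saturation hypothesis $D(\vec{\lambda})=0$. Expanding $|\Psi_N\rangle = \sum_{\bd{i}} c_{\bd{i}}\,|\bd{i}\rangle$ in the NO Slater-determinant basis and applying $\hat{D}$ termwise yields
\begin{equation}
0 \;=\; \hat{D}|\Psi_N\rangle \;=\; \sum_{\bd{i}}\, d_{\bd{i}}\,c_{\bd{i}}\,|\bd{i}\rangle.
\end{equation}
Since the Slater determinants $\{|\bd{i}\rangle\}$ form an orthonormal basis of $\wedge^N[\mathcal{H}_1^{(d)}]$, all coefficients on the right vanish, i.e.\ $d_{\bd{i}}\,c_{\bd{i}} = 0$ for every $\bd{i}$. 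Whenever $d_{\bd{i}} \neq 0$, this forces $c_{\bd{i}} = 0$, which is precisely the stated selection rule.

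There is really no obstacle here beyond Lemma \ref{lem:exPintoStruct} itself (whose proof is deferred to \cite{DGpriv}); the corollary is a clean ``$\hat{D}$ is diagonal in the NO basis'' observation combined with linear independence. The only subtlety worth emphasizing in the write-up is that the expansion must be performed in Slater determinants built from the \emph{natural orbitals} of the given $|\Psi_N\rangle$, since $\hat{D}$ has been defined through the creation/annihilation operators associated to exactly this basis; in any other basis $\hat{D}$ would no longer be diagonal and the argument would fail.
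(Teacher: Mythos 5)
Your proof is correct and follows the same route the paper intends: the paper derives the corollary immediately from Lemma \ref{lem:exPintoStruct} by noting that $|\Psi_N\rangle$ must lie in the $0$-eigenspace of $\hat{D}$, which is exactly your observation that $\hat{D}$ is diagonal in the NO Slater-determinant basis with eigenvalues $d_{\bd{i}}=\kappa_0+\sum_k\kappa_{i_k}$, so linear independence forces $c_{\bd{i}}=0$ whenever $d_{\bd{i}}\neq 0$. Your explicit emphasis that the expansion must use the natural orbitals of $|\Psi_N\rangle$ is a correct and worthwhile clarification, but it does not constitute a different approach.
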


To emphasize the importance of Lem.~\ref{lem:exPintoStruct} we apply the selection rule to an example.
\begin{ex}\label{ex:PintoStructBD}
Consider a state $|\Psi_3\rangle \in \wedge^3[\mathcal{H}_1^{(6)}]$ with NONs $\vec{\lambda}$. The generalized Pauli constraints are given by (\ref{set36}). The first three constraints take independent of $\vec{\lambda}$ the form of equalities. According to Cor.~\ref{cor:selrulSlaters} this leads to universal structural implications for any arbitrary $|\Psi_3\rangle \in \wedge^3[\mathcal{H}_1^{(6)}]$.
In the expansion
\begin{equation}
|\Psi_3\rangle = \sum_{1\leq i_1 <i_2 < i_3\leq 6} c_{i_1,i_2,i_3}\,|i_1,i_2,i_3\rangle
\end{equation}
only those Slater determinants $|i_1,i_2,i_3\rangle$ can show up which have exactly one index $i_k$ in each of the three sets $\{1,6\}$, $\{2,5\}$ and $\{3,4\}$. There are only $2^3=8$ such Slater determinants: $|1,2,3\rangle$, $|1,2,4\rangle$, $|1,3,5\rangle$, $|1,4,5\rangle$, $|2,3,6\rangle$, $|2,4,6\rangle$, $|3,5,6\rangle$ and $|4,5,6\rangle$. This universal statement is not in contradiction to the dimension $\binom{6}{3}=20$ of the $3$-fermion Hilbert space $\wedge^3[\mathcal{H}_1^{(6)}]$: For each given state $|\Psi_3\rangle$ we can find an appropriate $1$-particle basis (the NOs) such that $|\Psi_3\rangle$ can be written as a linear combination of those eight Slater determinants.

Assume now, that the corresponding NONs $\vec{\lambda}$ are pinned to the facet described by saturation of $D^{(3,6)}(\cdot) = 0$ (recall (\ref{set36})). The Cor.~\ref{cor:selrulSlaters} implies that
\begin{equation}\label{DBPinningstructure}
|\Psi_3\rangle = \alpha |1,2,3\rangle +\beta |1,4,5\rangle + \gamma |2,4,6\rangle\,.
\end{equation}
The three coefficients are free but should be chosen such that $\lambda_1 \geq \lambda_2 \geq \ldots\geq \lambda_6$.
\end{ex}

Example \ref{ex:PintoStructBD} impressively emphasizes the importance of Lem.~\ref{lem:exPintoStruct} and Cor.~\ref{cor:selrulSlaters}. We summarize these insights by
\begin{rem}\label{rem:Pinningisrelev}
Pinning corresponds to specific and simplified structures of the corresponding $N$-fermion quantum state $|\Psi_N\rangle$. In that sense pinning is highly physically relevant. It is also remarkable that pinning as phenomenon in the simple $1$-particle picture allows to reconstruct the structure of $|\Psi_N\rangle$ as object in the important $N$-particle picture.
\end{rem}

Due to Remark \ref{rem:Pinningisrelev} and the selection rule Cor.~\ref{cor:selrulSlaters} we can confirm our intuitive expectation that exact pinning is unnatural. Indeed, for interacting fermions confined to some potential trap it seems quite unlikely that for instance their ground state can be expanded by just a few Slater determinants. More realistically \emph{all} Slater determinants will show up. If their weights have a clear decaying hierarchy at least quasi-pinning is still possible. Therefore, the central question is whether the powerful Lem.~\ref{lem:exPintoStruct} and its Cor.~\ref{cor:selrulSlaters} are stable. Does a NONs-vector $\vec{\lambda}$ in the vicinity of some polytope facet implies that the corresponding $|\Psi_N\rangle$ has approximately the specific and simplified structure stated in Cor.~\ref{cor:selrulSlaters}?
Unfortunately, this is not the case for the whole vicinity of the polytope boundary $\partial\mathcal{P}_{N,d}$. A counterexample for the Borland-Dennis setting (\ref{set36}) is given by the state
\begin{equation}\label{DBcounterex}
|\Psi_3\rangle = \alpha |1,3,5\rangle +\sqrt{|\alpha|^2+|\gamma|^2-\delta} \,|1,2,4\rangle + \gamma |2,3,6\rangle\,.
\end{equation}
with $|\alpha| > |\gamma|$ and $\delta>0$ arbitrarily small.
This state leads to properly ordered NONs, exhibits strong quasi-pinning $D^{(3,6)} = \delta$ but violates maximally the structural implications by the selection rule in Cor.~\ref{cor:selrulSlaters}. We also notice that for this counterexample strong quasi-pinning, $\delta \ll 1$, is equivalent to an approximate saturation of some ordering constraints, $D^{(3,6)}(\vec{\lambda})= \lambda_3-\lambda_4$.

Can we use the latter insight to state a generalized form of Lem.~\ref{lem:exPintoStruct} and Cor.~\ref{cor:selrulSlaters} for some specific region in the polytope $\mathcal{P}_{N,d}$? Yes, at least for the Borland-Dennis setting we can provide an analytical result.
\begin{thm}\label{thm:BDSelRulestable}
Given a state $|\Psi_3\rangle \in \wedge^3[\mathcal{H}_1^{(6)}]$ with NONs $\vec{\lambda}\equiv (\lambda_k)_{k=1}^6$.
Let $P$ be the projection operator onto the subspace spanned by the states $|1,2,3\rangle, |1,4,5\rangle, |2,4,6\rangle$, which corresponds to exact pinning
of $D^{(3,6)}(\vec{\lambda})=\lambda_5+\lambda_6-\lambda_4 \geq 0$ (recall (\ref{set36}) and Cor.~\ref{cor:selrulSlaters}). Then as long as
\begin{equation}
\delta\equiv 3-\lambda_1-\lambda_2-\lambda_3 \leq \frac{1}{4}
\end{equation}
(which means being not to far away from the Hartree-Fock point) we find
\begin{equation}
1- \chi_{\delta} D^{(3,6)}(\vec{\lambda}) \,\,\leq\,\, \|P \Psi_3\|_{L^2}^2 \,\, \leq\,\, 1- \frac{1}{2} D^{(3,6)}(\vec{\lambda})\,,
\end{equation}
with
\begin{equation}
\chi_{\delta} \equiv \frac{1+2\delta}{1 - 4\delta} \,.
\end{equation}
\end{thm}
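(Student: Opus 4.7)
My plan is to work in the basis of natural orbitals $\{|i\rangle\}_{i=1}^{6}$ of $|\Psi_3\rangle$. The three universal equalities $\lambda_1+\lambda_6=\lambda_2+\lambda_5=\lambda_3+\lambda_4=1$ of the Borland-Dennis setting are themselves generalized Pauli constraints that are always saturated, so Cor.~\ref{cor:selrulSlaters} restricts the expansion of $|\Psi_3\rangle$ to the eight Slater determinants $|\bd{i}\rangle$ whose index triples take exactly one element from each of the pairs $\{1,6\},\{2,5\},\{3,4\}$. I abbreviate the eight squared coefficients by $a=|c_{123}|^2$, $b=|c_{124}|^2$, $c=|c_{135}|^2$, $d=|c_{145}|^2$, $e=|c_{236}|^2$, $f=|c_{246}|^2$, $g=|c_{356}|^2$, $h=|c_{456}|^2$, with normalization $a+b+\cdots+h=1$.

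The operator $\hat{D}\equiv \hat{n}_5+\hat{n}_6-\hat{n}_4$ associated with $D^{(3,6)}$ is diagonal on this $8$-dimensional subspace: the three Slaters $|1,2,3\rangle,|1,4,5\rangle,|2,4,6\rangle$ spanning the image of $P$ lie in the $0$-eigenspace; $|1,3,5\rangle,|2,3,6\rangle,|4,5,6\rangle$ lie in the $(+1)$-eigenspace; $|3,5,6\rangle$ in the $(+2)$-eigenspace; and $|1,2,4\rangle$ is the unique Slater in the $(-1)$-eigenspace. Reading off expectation values yields the identities
\begin{equation*}
\|P\Psi_3\|_{L^2}^2=a+d+f,\qquad D^{(3,6)}=-b+c+e+2g+h,\qquad \delta=b+c+2d+e+2f+2g+3h,
\end{equation*}
together with the key consequence $\lambda_4=b+d+f+h=(\delta-D^{(3,6)})/2$ (in particular $D^{(3,6)}\leq\delta$). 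The upper bound is then immediate: $2[1-\|P\Psi_3\|_{L^2}^2]-D^{(3,6)}=3b+c+e+h\geq 0$ gives $\|P\Psi_3\|_{L^2}^2\leq 1-D^{(3,6)}/2$.

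For the lower bound I multiply through by $1-4\delta>0$ and collect terms to rewrite the target as the purely algebraic inequality
\begin{equation*}
2(1-\delta)\,b\,\leq\,(1+8\delta)\,g+6\delta(c+e+h).
\end{equation*}
The main obstacle is that the two ``obvious'' upper bounds on the defect coefficient $b=|c_{124}|^2$, namely $b\leq\delta$ (from Lem.~\ref{lem:HFNONtoSlater} applied to $\hat{S}=3\mathds{1}-\hat{n}_1-\hat{n}_2-\hat{n}_3$) and $b\leq c+e+2g+h$ (from $D^{(3,6)}\geq 0$), each saturate the target only at the boundary value $\delta=1/4$. For $\delta<1/4$ a strictly sharper bound on $b$ is required, and must come from the ordering of the NONs. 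Concretely I would combine $\lambda_2\geq\lambda_3$, i.e.\ $b+f\geq c+g$, and $\lambda_4\geq\lambda_6$, i.e.\ $b+d\geq e+g$, with the identity $b+d+f+h=(\delta-D^{(3,6)})/2$ and with $D^{(3,6)}\geq 0$ to extract precisely the required coefficients $1+8\delta$, $6\delta$, $2(1-\delta)$; the divergence of $\chi_\delta$ at $\delta=1/4$ then corresponds to the system of linear inequalities becoming degenerate at exactly that threshold.

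As a sharpness check and guide for the balancing above, I would exhibit an extremal state with $c=e=h=0$, $b=g(1+8\delta)/[2(1-\delta)]$, $f=g-b$, $d=f$ and $a$ fixed by normalization: for $\delta<1/4$ it respects all ordering constraints, obeys $D^{(3,6)}\geq 0$ weakly, and attains equality in the lower bound, while degenerating precisely as $\delta\to 1/4^-$. This confirms both that $\chi_\delta$ is the sharp constant and that the hypothesis $\delta\leq 1/4$ is essential.
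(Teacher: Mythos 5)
Your setup, the upper bound, and the algebraic reduction of the lower bound to the target inequality $2(1-\delta)b \leq 6\delta(c+e+h)+(1+8\delta)g$ are all correct, and they match the paper's starting point. The gap is in the claim that this target can be extracted from the ordering constraints, $D^{(3,6)}\geq 0$, normalization and $\delta\leq\tfrac14$: these are all conditions on the populations $a,\dots,h$ alone, and they do not suffice. Concretely, take $a=0.94$, $b=0.02$, $c=d=e=f=0.01$, $g=h=0$. Then the NONs are $(0.98,0.98,0.96,0.04,0.02,0.02)$, so every ordering constraint holds, $\delta=0.08\leq\tfrac14$, $D^{(3,6)}=0$, $\lambda_4=(\delta-D^{(3,6)})/2$, $b\leq\delta$ and $b\leq c+e+2g+h$ — i.e.\ every ingredient you list is satisfied — yet $2(1-\delta)b=0.0368>0.0096=6\delta(c+e+h)+(1+8\delta)g$, and indeed $\|P\Psi_3\|^2=a+d+f=0.96<1=1-\chi_\delta D^{(3,6)}$. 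So no linear combination of your inequalities can prove the target; your proposed route fails at exactly this step, and your ``extremal state'' at the end has the same problem, since you cannot freely prescribe the moduli.

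What rules out such tuples, and what your argument is missing, is the requirement that the expansion basis consists of \emph{natural} orbitals: since $\lambda_3\neq\lambda_4$, the off-diagonal element $\langle 3|\rho_1|4\rangle$ must vanish, which is a constraint on the complex amplitudes, not the populations. In the configuration above it reads $|\bar{c}_{123}c_{124}|\approx 0.137 \leq |\bar{c}_{135}c_{145}|+|\bar{c}_{236}c_{246}|+|\bar{c}_{356}c_{456}|=0.02$, which is impossible, so the tuple is not realizable. The paper's proof hinges on exactly this relation: from $0=\langle 4|\rho_1|3\rangle = \bar{c}_{123}c_{124}+\bar{c}_{135}c_{145}+\bar{c}_{236}c_{246}+\bar{c}_{356}c_{456}$ one gets, via the triangle and Cauchy--Schwarz-type estimates, $b\leq \frac{3(1-a)}{a}\bigl(c+e+\tfrac13(2g+h)\bigr)$, and then $a\geq 1-\delta$ (Lem.~\ref{lem:HFNONtoSlater}) together with a one-parameter optimization produces precisely the constant $\chi_\delta$, with the threshold $\delta=\tfrac14$ coming from $a\geq\tfrac34$. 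To repair your proof you must bring in this off-diagonal condition (or an equivalent amplitude-level constraint); the population-level inequalities you invoke are genuinely too weak.
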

\noindent The proof is presented in Appendix \ref{app:BDSelRulestable}.

Unfortunately, it is impossible to generalize the proof of Thm.~\ref{thm:BDSelRulestable} to arbitrary $N$ and $d$. However, by using Monte Carlo simulation we found evidence that Lem.~\ref{lem:exPintoStruct} and Cor.~\ref{cor:selrulSlaters} are stable under small deviations of the NONs whenever the saturation of some generalized Pauli constraint is much smaller than the saturation of the ordering constraints, $\lambda_i-\lambda_{i+1} \geq 0$.
Therefore, we conjecture
\begin{conj}\label{conj:SelRulestable}
Given an $N$-fermion state $|\Psi_N\rangle \in \wedge^N[\mathcal{H}_1^{(d)}]$ with NOs $\{|i\rangle\}_{i=1}^d$ and NONs $\vec{\lambda}$ approximately saturating a generalized Pauli constraint, $D^{(N,d)}(\vec{\lambda}) = \delta \approx 0$.
Moreover, we introduce $P$ as the projection operator onto the subspace spanned by all those Slater determinants $|\bd{i}\rangle$ fulfilling $\hat{D}|\bd{i}\rangle = 0$  (recall (\ref{Dhatoprator})). Then, whenever $\min_{i}(\lambda_i-\lambda_{i+1}) \equiv \varepsilon \gg \delta$ the selection rule Cor.~\ref{cor:selrulSlaters} is stable,
\begin{equation}
\|P \Psi_N\|_{L^2}^2 = 1-O(\delta)\,.
\end{equation}
\end{conj}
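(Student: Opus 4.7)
The plan is to work throughout in the NO basis: expand $|\Psi_N\rangle=\sum_{\bd{i}} c_{\bd{i}} |\bd{i}\rangle$ where the Slater determinants $|\bd{i}\rangle$ are built from the natural orbitals of $|\Psi_N\rangle$. Since the number operators $a_i^\dagger a_i$ are diagonal in this basis, $\hat{D}|\bd{i}\rangle=D(\bd{i})|\bd{i}\rangle$ with integer eigenvalues $D(\bd{i})=\kappa_0+\sum_{k\in\bd{i}}\kappa_k$. Decompose $|\Psi_N\rangle=|\Psi_0\rangle+|\Psi_+\rangle+|\Psi_-\rangle$ along the eigenspaces $D=0$, $D>0$, $D<0$. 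The target $\|(1-P)\Psi_N\|^2=\|\Psi_+\|^2+\|\Psi_-\|^2=O(\delta)$ would be immediate if $\hat{D}$ were positive semi-definite, as in the proof of Lem.\ \ref{lem:HFNONtoSlater}: the identity $\delta=\langle\hat D\rangle_{\Psi_N}$ would force $\|\Psi_+\|^2\leq\delta$. The obstruction is that $0$ is generically interior to the spectrum of $\hat D$, so the identity $\delta=\sum_{\bd{i}}|c_{\bd{i}}|^2 D(\bd{i})$ only controls the signed difference of weights on the positive and negative parts and cannot by itself rule out large cancelling populations in $|\Psi_\pm\rangle$.

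To supply the missing inequality I would use the additional constraint that $\rho_1$ is diagonal in the NO basis, i.e.\ $\langle p|\rho_1|q\rangle=0$ for $p\neq q$. Expressed through the coefficients this reads, for every pair $p\neq q$, $\sum_{\bd{k}\,:\,p,q\notin\bd{k}}(\pm)\,c_{\bd{k}\cup\{q\}}^{*}\,c_{\bd{k}\cup\{p\}}=0$, which links Slater determinants differing by a single orbital exchange. These equations form a ``hopping graph'' on the $\binom{d}{N}$ determinants; every single-orbital swap $p\to q$ shifts $D(\bd{i})$ by exactly $\kappa_q-\kappa_p$, so the graph couples the three subspaces $\Psi_0,\Psi_+,\Psi_-$ in a combinatorially controlled manner. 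Combined with the NON identity $\lambda_i=\sum_{\bd{j}\,:\,i\in\bd{j}}|c_{\bd{j}}|^2$ and the assumed gap $\lambda_i-\lambda_{i+1}\geq\varepsilon$, one hopes to propagate any putative weight $\|\Psi_-\|^2=\eta$ along the hopping graph and show that it either forces a matching $\|\Psi_+\|^2\gtrsim\eta$ (so that $\delta\gtrsim\eta$ via $\langle\hat D\rangle$) or else collapses some gap $\lambda_i-\lambda_{i+1}$ to order $\eta$, contradicting $\varepsilon\gg\delta$.

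The main obstacle is exactly to make this propagation argument quantitative and uniform in $N$, $d$, and the particular constraint $D^{(N,d)}$ being saturated. The counterexample (\ref{DBcounterex}) shows that the conclusion genuinely breaks down once $\lambda_i-\lambda_{i+1}=O(\delta)$, so any proof must use the $\varepsilon$-gap in an essential way rather than as a technical convenience. I expect the cleanest route is a \L ojasiewicz-type inequality: the set of exactly pinned states is a real algebraic subvariety of $\wedge^N[\mathcal{H}_1^{(d)}]$ cut out by the selection-rule equations of Cor.\ \ref{cor:selrulSlaters}, and $|\Psi_N\rangle\mapsto D^{(N,d)}(\vec{\lambda}(\Psi_N))$ vanishes on it; if one can show that on the open region of the polytope where all ordering gaps exceed $\varepsilon$ the $\Psi_N$-gradient of this function is bounded below by a positive function $c(\varepsilon)$, the estimate $\|(1-P)\Psi_N\|^2\leq c(\varepsilon)^{-1}\,\delta$ follows. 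Pinning down this gradient bound, and in particular its precise $\varepsilon$-dependence, is the step on which a rigorous argument stands or falls, and it is the reason the statement is at present recorded only as a conjecture.
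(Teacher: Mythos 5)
You should first be aware that the paper does not prove this statement at all: it is recorded as a conjecture, supported only by Monte Carlo evidence and by the analytic special case Thm.~\ref{thm:BDSelRulestable}, which treats the Borland--Dennis setting $\wedge^3[\mathcal{H}_1^{(6)}]$ under the hypothesis $3-\lambda_1-\lambda_2-\lambda_3\leq\frac14$ (closeness to the Hartree--Fock point, which is a different and stronger regime than the gap hypothesis $\min_i(\lambda_i-\lambda_{i+1})\equiv\varepsilon\gg\delta$ of the conjecture). So there is no paper proof to match; the relevant comparison is with the proof of that special case in Appendix~\ref{app:BDSelRulestable}. Your diagnosis of the obstruction is exactly the paper's: since $0$ is not extremal in $\mbox{spec}(\hat{D})$, the identity $\delta=\langle\hat{D}\rangle_{\Psi_N}$ only controls the signed difference of the weights in the positive and negative eigenspaces, and the argument of Lem.~\ref{lem:HFNONtoSlater} breaks down. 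Your proposed extra ingredient --- the vanishing of the off-diagonal NO-basis matrix elements of $\rho_1$, viewed as ``hopping'' relations between determinants differing by one orbital --- is also precisely what the paper's special-case proof uses: there the single relation $0=\langle 4|\rho_1|3\rangle=\overline{\alpha}\beta+\overline{\gamma}\nu+\overline{\delta}\mu+\overline{\xi}\zeta$ is what bounds the offending coefficient $|\beta|^2$ and yields the constant $\chi_\delta$. Your recognition that the counterexample (\ref{DBcounterex}) forces any proof to use the ordering gap essentially is likewise correct.

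The genuine gap is the one you yourself flag: the propagation of weight along the single-swap graph, or equivalently the $\varepsilon$-uniform lower bound $c(\varepsilon)$ in your \L ojasiewicz-type formulation, is nowhere established, and without it the proposal proves nothing beyond what Thm.~\ref{thm:BDSelRulestable} already gives. Two additional warnings if you pursue the \L ojasiewicz route. First, the ``variety of exactly pinned states'' is not cut out by fixed polynomial equations in the ambient space $\wedge^N[\mathcal{H}_1^{(d)}]$: the selection rule of Cor.~\ref{cor:selrulSlaters} is formulated relative to the state's \emph{own} natural orbitals, the NO basis varies with $\Psi_N$, and both $P$ and the labeling of the $\lambda_i$ become discontinuous at degeneracies, so any gradient bound must be set up on the quotient by $1$-particle unitaries and restricted to the region where all gaps exceed $\varepsilon$. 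Second, the conjecture as stated demands uniformity in the choice of constraint $D^{(N,d)}$ and implicitly in $(N,d)$, whereas the only analytic control available in the paper (and in your sketch) is constraint-specific and low-dimensional; even a clean argument for one facet of one polytope would not by itself settle the statement in the generality claimed. In short, your strategy is consistent with, and essentially a generalization of, the paper's special-case method, but as a proof of the conjecture it stands or falls on the quantitative step you have not supplied --- which is exactly why the paper, too, leaves it as a conjecture.
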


\section{Pinning analysis and concept of truncation}\label{sec:truncation}
In this section we explain in detail how to explore possible pinning of concrete NONs.
At first sight comments on how to perform a pinning analysis seem to be superfluous. Is it not just about plugging in the given NONs $\vec{\lambda}$ into the corresponding generalized Pauli constraints and verifying whether some of them are exactly or approximately saturated?
Yes, that is indeed that simple whenever the constraints for the underlying setting $\wedge^N[\mathcal{H}_1^{(d)}]$ are known.
Unfortunately, this is only the case for settings with $d\leq 10$ and there is no hope that one can determine the constraints for settings with $d\gg 10$. Since most physical systems have an \emph{infinite-dimensional} underlying $1$-particle Hilbert space, performing a pinning analysis for their states seems to be impossible. However, e.g.~for ground states of trapped fermions one typically observes NONs with main weight in some low-dimensional subspace $\R^d$, i.e.~almost all NONs except the first few ($d$) are very close to $0$. In that case we expect that one can omit those close to $0$ and investigate possible pinning of the truncated NONs $\vec{\lambda}^{tr}=(\lambda_1,\ldots,\lambda_d)$ w.r.t.~the setting $\wedge^N[\mathcal{H}_1^{(d)}]$. In the following we will verify that such a truncation is indeed possible. Moreover, we will explain how the results found in the pinning analysis in a truncated setting are related to possible pinning in the correct setting.

The strategy for this consists of the following three aspects.
\begin{enumerate}
\item Relation of the generalized Pauli constraints of two different settings $\wedge^N[\mathcal{H}_1^{(d)}]$ and $\wedge^N[\mathcal{H}_1^{(d')}]$
\item Definition of a measure for quasi-pinning
\item Concept of truncation and pinning analysis
\end{enumerate}

\subsection{Relation of generalized Pauli constraints for different settings}\label{sec:genPCtwosettings}
We start by introducing some notation.
By recalling in particular Sec.~\ref{sec:fermionicconcepts} we can expand every $|\Psi_N\rangle$ w.r.t. to the Slater determinants $\mathcal{B}_N = \{|\bd{i}\rangle\}$ built up from its NOs,
\begin{equation}\label{expansion}
|\Psi_N\rangle = \sum_{\bd{i}}\,c_{\bd{i}}\,|\bd{i}\rangle\,.
\end{equation}
The NONs then follow as
\begin{equation}\label{noncoef}
\lambda_k = \sum_{\bd{i},\,k \in \bd{i}}\,|c_{\bd{i}}|^2 \,.
\end{equation}
To compare settings of different dimensions, $d, d'$ with $d < d'\leq \infty$ we embed $\mathcal{H}_1^{(d)}$ into $\mathcal{H}_1^{(d')}$,
\begin{equation}
\mbox{span}\{|i\rangle\}_{i=1}^{d}\equiv\mathcal{H}_1^{(d)} \leq \mathcal{H}_1^{(d')} \equiv \overline{\mbox{span}\{|i\rangle\}_{i=1}^{d'}}\,,
\end{equation}
where the closure is only relevant for the case $d'$ infinite.
In the same way,
\begin{equation}
\wedge^N[\mathcal{H}_{1}^{(d)}] \leq \wedge^N[\mathcal{H}_{1}^{(d')}]\,.
\end{equation}
Indeed, according to (\ref{expansion}), we find that every state
\begin{equation}\label{expansionsmall}
|\Psi_N\rangle =  \sum_{1\leq i_1 <\ldots<i_N \leq d} c_{\bd{i}} \,|\bd{i}\rangle \,\,\,\,\in \wedge^N[\mathcal{H}_1^{(d)}]
\end{equation}
can be embedded into $\wedge^N[\mathcal{H}_1^{(d')}]$ by
\begin{equation}\label{expansionlarge}
|\Psi'_N\rangle =  \sum_{1\leq i_1 <\ldots<i_N \leq d} c_{\bd{i}} \,|\bd{i}\rangle \,\,\,\,\in \wedge^N[\mathcal{H}_1^{(d')}]\,,
\end{equation}
and all the other coefficients $c_{\bd{i}}$ in (\ref{expansionlarge}) with $i_N>d$ vanish. Although they look the same we used different symbols for both states $|\Psi_N\rangle$ and $|\Psi'_N\rangle$ to distinguish between the two different spaces $\wedge^N[\mathcal{H}_1^{(d)}]$ and $\wedge^N[\mathcal{H}_1^{(d')}]$. This subtle difference is becoming relevant if we determine the NONs $\vec{\lambda}'$ of $|\Psi'_N\rangle$ (recall (\ref{noncoef})), \begin{equation}\label{nonimbedded}
\vec{\lambda}' = (\lambda_1,\ldots,\lambda_d,\underbrace{0,\ldots,0}_{d'-d})
\end{equation}
differing from $\vec{\lambda}=(\lambda_1,\ldots\lambda_d)$ by additional zeros. In the following to simplify the notation we use the same symbols for mathematical objects and their embeddings into larger spaces.

As a first step for developing the concept of truncation we state
\begin{lem}\label{lemzeros}
Consider for $N, N', d, d' \in \N$, $N\leq N'$, $d < d'$  the two settings $\wedge^N[\mathcal{H}_1^{(d)}]$ and $\wedge^{N'}[\mathcal{H}_{1}^{(d')}]$, and let $\vec{\lambda} = (\lambda_1,\ldots,\lambda_{d})$ be some NONs. Then, by introducing $N'-N\equiv r$ and $d'-d -r \equiv s$ we find
\begin{eqnarray}
(\lambda_1,\ldots,\lambda_d) &\mbox{compatible w.r.t.}& \wedge^N[\mathcal{H}_1^{(d)}]\qquad \qquad\qquad \nonumber \\
&\Leftrightarrow& \nonumber\\
(\underbrace{1,\ldots,1}_r,\lambda_1,\ldots,\lambda_d,\underbrace{0,\ldots,0}_s) &\mbox{compatible w.r.t.} & \wedge^{N'}[\mathcal{H}_{1}^{(d')}]\,\,.
\end{eqnarray}
Rephrasing this geometrically, we have
\begin{equation}
\mathcal{P}_{N,d} = \mathcal{P}_{N',d'}|_{\begin{array}{l}\lambda_{1},\ldots,\lambda_{r}=1 \\
                                          \lambda_{d+r+1},\ldots,\lambda_{d'}=0 \end{array}}\,,
\end{equation}
i.e.~the polytope $\mathcal{P}_{N',d'}$ intersected with the hyperplane
defined by $\lambda_{1},\ldots,\lambda_{r}=1$, $\lambda_{d+r+1},\ldots,\lambda_{d'}=0$ coincides with
$\mathcal{P}_{N,d}$.
\end{lem}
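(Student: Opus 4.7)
The plan is to prove the two implications of the equivalence separately; the geometric statement about $\mathcal{P}_{N,d}$ and $\mathcal{P}_{N',d'}$ follows immediately from the set-theoretic equivalence together with the ordering convention $\lambda_1\geq\ldots\geq\lambda_d$ and Pauli's principle $\lambda_i\leq 1$ (which guarantees that the enlarged tuple is automatically in decreasing order).

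For the direction ``$\Rightarrow$'', I would pick any $|\Psi_N\rangle\in\wedge^N[\mathcal{H}_1^{(d)}]$ realizing the NONs $\vec{\lambda}$, embed $\mathcal{H}_1^{(d)}\hookrightarrow\mathcal{H}_1^{(d')}$, and complete the NOs of $|\Psi_N\rangle$ to an orthonormal basis of $\mathcal{H}_1^{(d')}$ by adjoining $r$ auxiliary orbitals $|\tilde e_1\rangle,\ldots,|\tilde e_r\rangle$ (to be filled) and $s$ further orbitals (to remain empty). The natural candidate is then
\begin{equation*}
|\Psi'_{N'}\rangle := |\tilde e_1\rangle\wedge\cdots\wedge|\tilde e_r\rangle\wedge|\Psi_N\rangle\,\in\,\wedge^{N'}[\mathcal{H}_1^{(d')}]\,.
\end{equation*}
Since the $1$-RDO of a wedge product of two states on orthogonal subspaces splits as a direct sum of the respective $1$-RDOs, the spectrum of the $1$-RDO of $|\Psi'_{N'}\rangle$ is exactly $(1,\ldots,1,\lambda_1,\ldots,\lambda_d,0,\ldots,0)$, as desired.

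For the direction ``$\Leftarrow$'', I would start with $|\Psi'_{N'}\rangle\in\wedge^{N'}[\mathcal{H}_1^{(d')}]$ realizing the enlarged tuple, fix its NO basis $\{|i\rangle\}_{i=1}^{d'}$, and exploit the extremal occupancies through the creation/annihilation operators $a_i^\dagger,a_i$ attached to this basis. For each $i=1,\ldots,r$ the identity $a_i a_i^\dagger = \mathds{1}-a_i^\dagger a_i$ combined with $\langle a_i^\dagger a_i\rangle_{\Psi'_{N'}}=1$ yields
\begin{equation*}
\|a_i^\dagger|\Psi'_{N'}\rangle\|^2 = \langle a_i a_i^\dagger\rangle_{\Psi'_{N'}} = 0\,,
\end{equation*}
so $a_i^\dagger|\Psi'_{N'}\rangle=0$, i.e.\ orbital $|i\rangle$ is fully occupied in every Slater determinant appearing in the NO expansion of $|\Psi'_{N'}\rangle$. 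Symmetrically, for $i=d+r+1,\ldots,d'$ one has $\langle a_i^\dagger a_i\rangle_{\Psi'_{N'}}=0$, hence $a_i|\Psi'_{N'}\rangle=0$, and these orbitals do not appear at all. Combining these two facts one obtains the factorization
\begin{equation*}
|\Psi'_{N'}\rangle = a_1^\dagger\cdots a_r^\dagger\,|\Phi_N\rangle\,,\qquad |\Phi_N\rangle\in\wedge^N\!\big[\mathrm{span}\{|r+1\rangle,\ldots,|r+d\rangle\}\big]\,,
\end{equation*}
and computing the $1$-RDO of $|\Phi_N\rangle$ via \eqref{rrdomatrix} shows that its spectrum is precisely $(\lambda_1,\ldots,\lambda_d)$, establishing compatibility with the smaller setting $\wedge^N[\mathcal{H}_1^{(d)}]$.

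The only non-routine step is the ``occupation $1$ forces factorization'' argument, which is nothing but the sharp $\delta=0$ version of the reasoning already used in Lemma \ref{lem:HFNONtoSlater}; once the factorization is in hand, the rest amounts to index bookkeeping and a direct application of \eqref{partialtrace}. I expect no further obstacle.
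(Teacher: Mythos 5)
Your proof is correct and follows essentially the same route as the paper: expand the state in Slater determinants of its natural orbitals, observe that a vanishing NON forces every determinant containing that orbital to drop out (your $a_k|\Psi'_{N'}\rangle=0$ is just the operator form of the paper's coefficient identity $\lambda_k=\sum_{\bd{i},\,k\in\bd{i}}|c_{\bd{i}}|^2$), and get ``$\Rightarrow$'' by embedding. The only difference is completeness: the paper proves just the $r=0$ case and dismisses $r>0$ as ``similar but slightly lengthy'', whereas your wedge-product construction for ``$\Rightarrow$'' and the factorization $|\Psi'_{N'}\rangle=a_1^\dagger\cdots a_r^\dagger|\Phi_N\rangle$ for ``$\Leftarrow$'' correctly supply exactly that omitted case.
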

\begin{proof}
We prove Lem.~\ref{lemzeros} only for $r=0$. For $r>0$ the proof is similar but slightly lengthy.
The direction ``$\Rightarrow$'' was already explained above at the beginning of the present section.
To prove ``$\Leftarrow$'' we show that a state $|\Psi'_N\rangle$ expanded according to (\ref{expansion}),
\begin{equation}\label{expansionlarge}
|\Psi'_N\rangle =  \sum_{1\leq i_1 <\ldots<i_N \leq d'} c_{\bd{i}} \,|\bd{i}\rangle \,\,\,\,\in \wedge^N[\mathcal{H}_{1}^{(d')}]\,,
\end{equation}
with natural occupation numbers $(\lambda_1,\ldots,\lambda_d,\underbrace{0,\ldots,0}_s)$ contains only Slater determinants $|\bd{i}\rangle$, with $i_1,\ldots,i_N\leq d$. But this is clear
due to (\ref{noncoef}), which then yields
\begin{equation}
\forall \,k\,>d\,:\,\,\, 0 \stackrel{!}{=}\lambda_k = \sum_{\bd{i},\,k \in \bd{i}}\,|c_{\bd{i}}|^2 \,.
\end{equation}
Hence, $c_{\bd{i}} = 0$ if $i_N>d$.
\end{proof}
What does Lem.~\ref{lemzeros} imply for the relation between the families of generalized Pauli constraints of two settings?
Let us consider two settings with $d,d'$ finite, $d<d'$ and we restrict ourself just for simplicity to $N'=N$. Every constraint $D'_j$ for the setting $\wedge^N[\mathcal{H}_{1}^{(d')}]$
is linear and hence its restriction
\begin{equation}\label{constrestricted}
\hat{D}_j'(\lambda_1,\ldots,\lambda_d) \equiv D_j'(\lambda_1,\ldots,\lambda_d,0,\ldots)  \geq 0
\end{equation}
to the hyperplane defined by $0=\lambda_{d+1},\lambda_{d+2},\ldots$ is also a linear constraint in the remaining coordinates $\lambda_1,\ldots,\lambda_d$. How is the half space $V_j\subset \mathbb{R}^d$ corresponding to (\ref{constrestricted}) related to the polytope $\mathcal{P}_{N,d}$? Lem.~\ref{lemzeros} states that
\begin{equation}
\mathcal{P}_{N,d} \subset V_j
\end{equation}
and
\begin{equation}
\mathcal{P}_{N,d} = \cap_j V_j|_{\ast} \,,
\end{equation}
where the star $\ast$ denotes here the restriction to spectra, i.e.~ordered and normalized vectors.
There are two possible relations between $V_j$ (or $V_j|_{\ast}$) and $\mathcal{P}_{N,d}$. They are illustrated in Fig.~\ref{facetproj} in form of a simplified $2-$dimensional picture:
\begin{figure}[!h]
\centering
\includegraphics[width=8cm]{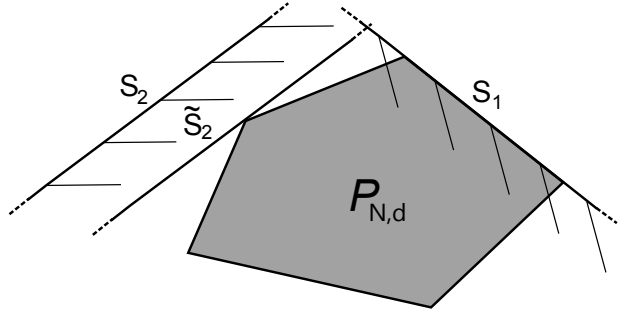}
\centering
\caption{Polytope $\mathcal{P}_{N,d}$ and projected constraints of larger settings.}
\label{facetproj}
\end{figure}
There, we consider two half spaces $V_1$ and $V_2$ corresponding to the ``restricted'' constraints $\hat{D}'_1\geq 0$ and $\hat{D}'_2\geq 0$ with boundaries $S_1$ and $S_2$ and orientation indicated by stripes. Such hyperplanes can either contain a facet of maximal (example $S_1$) or lower dimension of $\mathcal{P}_{N,d}$ or they lie outside of $\mathcal{P}_{N,d}$ (example $S_2$). The third case of a proper intersection is not possible due to Lem.~\ref{lemzeros}. Every constraint $D'$ with boundary $S$ of its restriction $\hat{D}'$ lying outside of $\mathcal{P}_{N,d}$ has the form
\begin{equation}
D'(\vec{\lambda}) = c + \tilde{D}(\vec{\lambda}^{tr}) + O(\lambda_{d+1})\,,
\end{equation}
where $\tilde{D}(\vec{\lambda}^{tr})\geq 0$ is a constraint of the setting $\wedge^N[\mathcal{H}_1^{(d)}]$ with a boundary shown in Fig.~\ref{facetproj} as hyperplane $\tilde{S}_2$ and $c >0$ is some offset.
Hence, if the NONs $\vec{\lambda}$ are sufficiently fast decaying, constraint $D'$ is not saturated at all due to the off set $c$ and thus irrelevant.
Moreover, for every facet of $\mathcal{P}_{N,d}$ corresponding to some constraint $D>0$ Lem.~\ref{lemzeros} guarantees the existence of a constraint $D'>0$ in the larger setting whose projection $\hat{D'}$ coincides with $D$.
All these insights imply the following lemma.
\begin{lem}\label{lemdistancemodif}
Given two settings $\wedge^N[\mathcal{H}_1^{(d)}]$ and $\wedge^{N'}[\mathcal{H}_{1}^{(d')}]$ with $N\leq N' \in \N$, $d < d' \in \mathbb{N}$ and decreasingly ordered NONs $\vec{\lambda}$ with $ \max(1-\lambda_r,\lambda_{d+r+1}) \equiv \varepsilon \ll 1$, $r\equiv N'-N$.
Every constraint $D'\geq 0$ of the setting $\wedge^{N'}[\mathcal{H}_{1}^{(d')}]$ relevant for the pinning analysis is then given by a linear modification of some constraint $D\geq 0$ of the setting $\wedge^N[\mathcal{H}_1^{(d)}]$,
\begin{equation}\label{distancemodif}
D'(\vec{\lambda}) = D(\vec{\lambda}^{tr}) + O(\varepsilon)\,.
\end{equation}
Here $\vec{\lambda}^{tr} \equiv (\lambda_{r+1},\ldots,\lambda_{d+r})$.
\end{lem}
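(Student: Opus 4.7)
The strategy is to separate $D'(\vec{\lambda})=\kappa_0^{(j)}+\sum_{i=1}^{d'}\kappa_i^{(j)}\lambda_i$ into the contribution of the ``frozen'' coordinates $i=1,\dots,r$ and $i=d+r+1,\dots,d'$ and the contribution of the ``active'' coordinates $\vec{\lambda}^{tr}=(\lambda_{r+1},\dots,\lambda_{r+d})$, then to apply Lem.~\ref{lemzeros} in order to identify the active part with a constraint of the smaller setting. First I would write, using $\lambda_i=1-(1-\lambda_i)$ for $i\leq r$ and $|\kappa_i^{(j)}|\leq \kappa_{\max}$,
\begin{equation}
D'(\vec{\lambda})=\underbrace{\Big(\kappa_0^{(j)}+\sum_{i=1}^{r}\kappa_i^{(j)}\Big)+\sum_{i=r+1}^{r+d}\kappa_i^{(j)}\lambda_i}_{\equiv\,\hat{D}'(\vec{\lambda}^{tr})}\;+\;\underbrace{\sum_{i=d+r+1}^{d'}\kappa_i^{(j)}\lambda_i-\sum_{i=1}^{r}\kappa_i^{(j)}(1-\lambda_i)}_{=\,O(\varepsilon)}\,,
\end{equation}
where the remainder is bounded by $(d'-d)\kappa_{\max}\,\varepsilon$ using the hypothesis $\max(1-\lambda_r,\lambda_{d+r+1})=\varepsilon$ together with the decreasing ordering of the NONs.

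The second step identifies the affine function $\hat{D}'$ on $\R^d$ with a constraint of $\wedge^N[\mathcal{H}_1^{(d)}]$. By Lem.~\ref{lemzeros}, every $\vec{\mu}\in\mathcal{P}_{N,d}$ lifts to a point of $\mathcal{P}_{N',d'}$ by padding with $r$ ones and $d'-d-r$ zeros; on this lifted point $D'\geq 0$ reduces exactly to $\hat{D}'(\vec{\mu})\geq 0$. Hence the half-space $V_j=\{\hat{D}'\geq 0\}\subset\R^d$ contains $\mathcal{P}_{N,d}$, and its boundary hyperplane $S_j$ is either a supporting hyperplane of $\mathcal{P}_{N,d}$ (case (a)) or lies strictly outside $\mathcal{P}_{N,d}$ (case (b)); a proper intersection is ruled out by Lem.~\ref{lemzeros}. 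In case (a), $\hat{D}'$ is a non-negative combination of the facet-defining generalized Pauli (and ordering/normalization) constraints of $\wedge^N[\mathcal{H}_1^{(d)}]$, and picking a representative $D$ of this combination gives $\hat{D}'(\vec{\lambda}^{tr})=D(\vec{\lambda}^{tr})$, yielding the claim.

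Case (b) is where ``relevant for the pinning analysis'' enters. There exists $c>0$ with $\hat{D}'(\vec{\mu})\geq c$ for all $\vec{\mu}\in\mathcal{P}_{N,d}$. I would argue that the restriction of $\vec{\lambda}$ to its active coordinates $\vec{\lambda}^{tr}$ lies within distance $O(\varepsilon)$ of $\mathcal{P}_{N,d}$ (this follows because $\vec{\lambda}\in\mathcal{P}_{N',d'}$ and the frozen coordinates deviate from $(1,\ldots,1,0,\ldots,0)$ by at most $\varepsilon$; a short convexity/Lipschitz argument gives the bound). Hence $\hat{D}'(\vec{\lambda}^{tr})\geq c-O(\varepsilon)$ and $D'(\vec{\lambda})\geq c-O(\varepsilon)$, which is bounded away from $0$ for sufficiently small $\varepsilon$. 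Such a $D'$ can never be (quasi-)saturated by $\vec{\lambda}$, so it is irrelevant for the pinning analysis and may be discarded, leaving only the constraints of case (a), which satisfy \eqref{distancemodif}.

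The main technical obstacle will be the quantitative step in case (b): showing that $\vec{\lambda}^{tr}$ is within $O(\varepsilon)$ of $\mathcal{P}_{N,d}$ in a norm for which $\hat{D}'$ is Lipschitz, so that the strict separation $c>0$ survives the perturbation. A second, softer obstacle is the identification in case (a): if $S_j$ only touches a lower-dimensional face of $\mathcal{P}_{N,d}$, $\hat{D}'$ need not coincide with a single facet-defining constraint but only with a non-negative combination of them, and one must either read the statement of the lemma as allowing such combinations (which is natural, since positive combinations of valid constraints are again valid constraints) or pass to a finer facet-decomposition. Neither step appears deep, but both require some care in order not to hide $\varepsilon$-dependences in implicit constants.
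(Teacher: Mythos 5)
Your proposal follows essentially the same route as the paper's own argument: the same splitting of $D'$ into an active part $\hat{D}'(\vec{\lambda}^{tr})$ plus an $O(\varepsilon)$ remainder over the frozen coordinates, the same use of Lem.~\ref{lemzeros} to conclude that the restricted half-space contains $\mathcal{P}_{N,d}$ with its boundary either supporting the polytope or lying strictly outside (a proper intersection being excluded), and the same discarding of the offset case as irrelevant for the pinning analysis. If anything, your write-up is a bit more explicit than the paper's informal discussion (the quantitative remainder bound and the lower-dimensional-face caveat, which the paper only touches on, cf.\ Remark~\ref{rem:PinAnalyrem2}), so no gap to report.
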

Finally, we remark that for the important case $d'$ infinite effectively the same results hold but one has to deal with one subtlety. Since $\mathcal{P}_{N,\infty}$ is described by infinitely many constraints, Lem.~\ref{lemzeros} guarantees for every constraint $D\geq 0$ of the setting $\wedge^N[\mathcal{H}_1^{(d)}]$, only the existence of a sequence of constraints $D'_j\geq 0$ whose restrictions $\hat{D}'_j\geq 0$ converge to the constraint $D\geq 0$. This means that condition (\ref{distancemodif}) in Lem.~\ref{lemdistancemodif} holds up to a small error $\varepsilon$,
\begin{equation}
D_{\varepsilon}'(\vec{\lambda}) = \varepsilon + D(\vec{\lambda}^{tr}) + O(\lambda_{d+1})\,,
\end{equation}
which can be made arbitrarily small by choosing appropriate constraints $D'_{\varepsilon}$. Hence, to minimize the technical effort and not to confuse the reader, we assume that Lem.~\ref{lemdistancemodif} holds in its original form also for the case $d'$ infinite.\\

\subsection{Measures for quasi-pinning}\label{sec:PinningMeasures}
To explore and quantify possible quasi-pinning of given NONs $\vec{\lambda}\in \R^{d'}$ we need to find an appropriate measure. Consider an arbitrary facet $F_D$ of the polytope $\mathcal{P}_{N,d}$ defined by exact saturation of the constraint $D(\vec{\lambda})\equiv \kappa_0 + \vec{\kappa}\cdot \vec{\lambda}\geq 0$, $\vec{\kappa}\equiv (\kappa_1,\ldots,\kappa_{d'})$,
\begin{equation}\label{facet}
F_D = \{\vec{\lambda} \in \mathcal{P}_{N,d}\,|\,D(\vec{\lambda}) = 0 \}\,.
\end{equation}
From our viewpoint there are three natural measures for quantifying quasi-pinning of NONs $\vec{\lambda}$ by constraint $D(\cdot)\geq 0$. Those are given by
\begin{enumerate}
\item The generalized Pauli constraint adapted measure
      \begin{equation}\label{PinMeasD}
      d_D(\vec{\lambda}) \equiv D(\vec{\lambda})\,.
      \end{equation}
      Since the constraint $D(\cdot)\geq 0$ is defined only up to a positive factor, this constraint
      highly depends on the ``normalization'' of $D$, i.e.~on the choice of the coefficients $\kappa_j$, $j=0,1,\ldots,d'$.
      In the following we fix this normalization by choosing the minimal absolute values for the \emph{\emph{integer}} coefficients $\kappa_j$.
\item The euclidian, i.e.~$l^2$-distance of $\vec{\lambda}$ to the facet $F_D$,
       \begin{equation}\label{PinMeas2}
      d_2(\vec{\lambda}) \equiv \frac{D(\vec{\lambda})}{\|\vec{\kappa}\|_2}\,.
      \end{equation}
\item The $l^1$-distance of $\vec{\lambda}$ to the facet $F_D$,
       \begin{equation}\label{PinMeas1}
      d_1(\vec{\lambda}) \equiv \frac{D(\vec{\lambda})}{\|\vec{\kappa}\|_{\infty}}\,,
      \end{equation}
      where $\|\vec{\kappa}\|_{\infty} \equiv \max_{1\leq i\leq d'}(|\kappa_i|)$.
\end{enumerate}
Obviously, all three measures $d_k, k=D,1,2$ are equivalent in the sense $d_k(\vec{\lambda}) < d_k(\vec{\mu})$ $\Rightarrow d_l(\vec{\lambda}) < d_l(\vec{\mu})$
for any other measure $d_l, l=1,2,D$. In addition, they depend linearly on each other.

Deciding whether one of those three measures is the most natural one or whether there is even a more appropriate fourth one is not possible. The reason for this is the following. An optimal measure should be maximally adapted to the physical relevance of quasi-pinning. The candidate for the physical relevance is the expected relation between quasi-pinning and the simplified structure of the corresponding $N$-fermion state $|\Psi_N\rangle$ (recall Thm.~\ref{thm:BDSelRulestable} and Conj.~\ref{conj:SelRulestable}): Whenever $\vec{\lambda}$ is quasi-pinned some structural overlap $\|P \Psi_N\|_{L^2}$ (recall e.g.~Conj.~\ref{conj:SelRulestable}) is expected to be close to one. However, we cannot yet prove this and in particular we do not know the optimal upper bound for $1-\|P \Psi_N\|_{L^2}$. But this bound is nothing else but the optimal pinning measure.

In the following we choose the first measure $d_D$ and define the quasi-pinning of $\vec{\lambda}$ in the setting $\wedge^N[\mathcal{H}_1^{(d)}]$ by
\begin{equation}\label{PinMeasure}
d_{N,d}(\vec{\lambda}):=\min\left(\{d_D(\vec{\lambda})\,|\,D(\cdot)\geq 0\,\,\mbox{a gen.~Pauli constraint of} \wedge^N[\mathcal{H}_1^{(d)}]\}\right)\,.
\end{equation}

\subsection{Concept of truncation and pinning analysis}\label{sec:truncatedPinanalysis}
In this section we develop the concept of a truncated pinning analysis by combining the results and concepts from the previous sections Sec.~\ref{sec:genPCtwosettings} and Sec.~\ref{sec:PinningMeasures}.

As a preliminary step, we observe that there are important structural insights following from pinning by the Pauli exclusion principle (\ref{PauliexclusionP}). For this consider NONs $\vec{\lambda}$ of a state $|\Psi_{N'}\rangle \in \wedge^{N'}[\mathcal{H}_1^{(d')}]$, where some of them are pinned by Pauli's exclusion principle to $1$ or $0$, $\lambda_1=\ldots=\lambda_{r} = 1$, $\lambda_{d+r+1}=\ldots=\lambda_{d'} = 0$. From Eq.~(\ref{noncoef}) it follows immediately that whenever $\lambda_{j}=0$ the corresponding natural orbital $|j\rangle$ does never show up in the expansion (\ref{expansion}) of $|\Psi_{N'}\rangle$ in Slater determinants and we can omit it. In a similar way, whenever $\lambda_i = 1$ the natural orbital $|i\rangle$ is contained in \emph{every} Slater determinant. As a consequence we can restrict our pinning analysis to the reduced NONs $(\lambda_{r+1},\ldots,\lambda_{d+r})$ arising from the state
\begin{equation}
|\Psi_N\rangle = a_r\cdot\ldots\cdot a_1\,|\Psi_{N'}\rangle \,\,\in \,\wedge^{N}[\mathcal{H}_1^{(d)}]\,,
\end{equation}
where $a_i$ is the annihilation operator w.r.t.~$|i\rangle$.

From now on we assume that such a reduction of the setting was already performed and the resulting NONs $\vec{\lambda}$ are not pinned by Pauli's exclusion principle, i.e.~$0 < \lambda_i < 1\,,\,\forall i= 1,\ldots,d'$. However, we would like to assume that some NONs are quasi-pinned by the exclusion principle,
\begin{equation}
1\gg \varepsilon \equiv \max \left(\{1-\lambda_1, \ldots, 1 -\lambda_r, \lambda_{d+r+1}, \ldots, \lambda_{d'}\}\right)\,.
\end{equation}
To investigate possible (quasi-)pinning we need to check the saturation of each generalized Pauli constraint of the setting $\wedge^{N'}[\mathcal{H}_1^{(d')}]$. Let $D'(\cdot) \geq 0$ be such a constraint. According to Lem.~\ref{lemdistancemodif} there exists a constraint $D(\cdot)\geq 0$ for the truncated setting $\wedge^N[\mathcal{H}_1^{(d)}]$ with
\begin{equation}\label{distancemodif2}
D'(\vec{\lambda}) = D(\vec{\lambda}^{tr}) + O(\varepsilon)\,,
\end{equation}
where $\vec{\lambda}^{tr} \equiv (\lambda_{r+1},\ldots,\lambda_{d+r})$ and the term $O(\varepsilon)$ is given by $ \sum_{i=1}^{r}\,\kappa_i (1-\lambda_i) + \sum_{j=d+r+1}^{d'}\,\kappa_j \lambda_j$. From that relation it is clear how to perform a truncated pinning analysis.
Given NONs $\vec{\lambda}'$ for which we would like to investigate possible quasi-pinning. The strategy for that is given by the following consecutive steps.
\begin{enumerate}
\item Omit all NONs, which are pinned by Pauli's exclusion principle, i.e the $1$'s and $0$'s. Let us denote the reduced set of NONs obtained in that way by $\vec{\lambda}$\,.
\item Choose appropriate $r,s \geq 0$, such that
\begin{equation}
\varepsilon \equiv \max(1-\lambda_1,\ldots,1-\lambda_r, \lambda_{d'-s+1},\ldots,\lambda_{d'})
\end{equation}
is sufficiently small. Whether one can find such $r,s>0$ does not only depend on $\vec{\lambda}'$, but also on the concrete purpose of the pinning analysis, which defines the scale of ``sufficiently small''.
\item Investigate possible pinning of the truncated vector $\vec{\lambda}^{tr} \equiv (\lambda_{r+1},\ldots,\lambda_{d+r})$ by the generalized Pauli constraints of the truncated setting $\wedge^N[\mathcal{H}_1^{(d)}]$, where $N= N'-r$ and $d=d'-r-s$.
\item The quasi-pinning of strength $d_{N,d}(\vec{\lambda}^{tr})$ (recall (\ref{PinMeasure})) found in the truncated analysis translates to the quasi-pinning in the correct untruncated setting $\wedge^{N'}[\mathcal{H}_1^{(d')}]$ by
    \begin{equation}
    d_{N',d'}(\vec{\lambda}) = d_{N,d}(\vec{\lambda}^{tr}) + O(\varepsilon)\,.
    \end{equation}
\end{enumerate}

\begin{rem}\label{rem:PinAnalyrem1}
Exact pinning found in a truncated setting can vanish in the correct setting due to any arbitrarily small truncation error $\varepsilon$. Hence, exact pinning can only be excluded, but never confirmed in a truncated analysis.
\end{rem}

\begin{ex} As an example for Remark \ref{rem:PinAnalyrem1} consider a spectrum $\vec{\lambda} = (\lambda_i)_{i=1}^9$ and assume that the truncated spectrum
$\vec{\lambda}^{tr} = (\lambda_i)_{i=1}^8$ saturates the constraint (\ref{D8.21}),
\begin{equation}
D^{(3,8)}_{21}(\vec{\lambda}^{tr}) = -\lambda_1+\lambda_3+\lambda_4 +\lambda_5-\lambda_8=0
\end{equation}
and $\lambda_9= \varepsilon >0$.
In the correct setting $\wedge^3[\mathcal{H}_1^{(9)}]$ one natural extension of constraint (\ref{D8.21}) is given by (see \cite{Altun})
\begin{equation}
D^{(3,8)}_{21}(\vec{\lambda}) = -\lambda_1+\lambda_3+\lambda_4 +\lambda_5-\lambda_8+2\lambda_9\geq0.
\end{equation}
Since the truncation error $\varepsilon$ is finite $D^{(3,8)}_{21}(\vec{\lambda})$ is not zero.
\end{ex}

\begin{rem}\label{rem:PinAnalyrem2}
A statement of the form (\ref{distancemodif2}) is very subtle. In principle even for very small $\varepsilon$ one may find $O(\varepsilon) = O(1)$. However since this requires that the corresponding coefficients $\kappa_j$, $j=1,\ldots,r,d+r+1,\ldots,d'$ are much larger than all the other ones this is not plausible at all. Here we can see that a better understanding of the derivation of generalized Pauli constraints is preferable and necessary to rule out with absolute certainty such a strange behavior of the coefficients $\kappa_i$ (\ref{generalizedPC}).
\end{rem}

\chapter{Pinning Analysis for Specific Physical Systems}\label{chap:Physics}
\section{Motivation and summary}\label{sec:MotSumPhys}
 In the previous chapters we have learned that the antisymmetry of $N$-fermion quantum states under particle exchange implies not only Pauli's famous exclusion principle but also leads to additional, even stronger restrictions on the natural occupation numbers (NONs) $\vec{\lambda}$, the eigenvalues of the $1$-particle reduced density operator ($1$-RDO). The mapping of $N$-fermion quantum states $|\Psi_N\rangle$ to their NONs is illustrated in Fig.~\ref{fig:mapnon}.
\begin{figure}[h!]
\includegraphics[width=10cm]{MapNON}
\centering
\captionC{Schematic illustration of how the family of antisymmetric $N$-particle states $|\Psi_N\rangle$ maps to their vectors $\vec{\lambda}$ of natural occupation numbers forming a polytope (dark-gray), a proper subset of the light-gray hypercube describing Pauli's exclusion principle.
Single Slater determinants are mapped to the Hartree-Fock point $\vec{\lambda}_{HF}\equiv (1,\ldots,1,0,\ldots)$, a vertex (red dot) of that polytope.}
\label{fig:mapnon}
\end{figure}
According to Pauli's exclusion principle, which can be formulated as
\begin{equation}\label{PauliConstraint}
	0\leq \lambda_i \leq1\,,
	\forall i
\end{equation}
only natural occupation numbers $\vec{\lambda}\equiv (\lambda_1,\ldots,\lambda_d)$ inside of the high-dimensional ``Pauli hypercube'' are possible. However, there are stronger restrictions, so-called generalized Pauli constraints, conditions of the form,
\begin{equation}
\kappa_0 + \kappa_1 \lambda_1+\ldots + \kappa_d \lambda_d \geq 0\qquad,\,\kappa_j \in \Z\,,
\end{equation}
giving rise to a proper subset, the dark-gray polytope. Only those $\vec{\lambda}$ inside of that polytope are mathematically possible. In this chapter we explore the relevance of those generalized Pauli constraints for concrete physical systems.

As a central step we investigate how fermionic quantum systems behave from this new viewpoint. Where do the NONs of specific states, as e.g.~ground states of concrete physical systems lie? For fermionic systems with zero interaction the ground state can always be written as a single Slater determinant and the corresponding NONs $\vec{\lambda}=(1,\ldots,0,\ldots)$ lies at a vertex of the polytope, the so-called Hartree-Fock point (red dot in Fig.~\ref{fig:mapnon}). What happens if we switch on an interaction between the fermions by increasing a coupling parameter $\delta$? Does $\vec{\lambda}(\delta)$, which moves away from the Hartree-Fock point, still lie on the boundary of the polytope or does it move to the middle? Gaining some insights into these questions analytically is very difficult because we do not only need to calculate e.g.~the ground state of an interacting $N$-fermion model but also integrate out $N-1$ fermions to get the $1$-RDO and diagonalize it.

For a model of few harmonically coupled spinless fermions in one dimension confined to a harmonic trap all those three steps are feasible. For finite interaction $\delta$ we find that the NONs are not on the boundary of the polytope anymore. However, it is quite surprising that the distance to the boundary is very small: In the regime of weak interaction we find for that distance the algebraic dependence $\delta^8$ and even for medium interaction strengths the distance is of order $10^{-6}-10^{-9}$. This is a new phenomenon, which we call \emph{quasi-pinning}. It is highly physical relevant because it allows to reconstruct the structure of the corresponding $N$-particle state. Moreover by investigating the first few excited states of that model we identify this as an effect of the lower lying energy eigenstates. Hence, quasi-pinning may originate from the strong conflict of the antisymmetry (responsible for the existence of the generalized Pauli constraints) and the energy minimization in the sense that skipping the antisymmetry e.g.~for the energy minimization for the ground state would lead to much lower energies.

As a second model we study the $1$-band Hubbard model in one dimension with a few electrons on a few lattice sites. For the setting of three electrons on three sites we find exact pinning for a certain regime of the on-site interaction $u$. This means that the system by changing $u$ undergoes a kind of ``phase transition'' from exact pinning to no pinning. Although we find the same behavior for two additional settings with four sites, the generalized Pauli constraints seem not to play any role for Hubbard systems with more than four sites. However, we learn that the unexpected effect of \emph{exact pinning} is possible whenever the physical model is rather simple and exhibits sufficiently many symmetries as e.g.~the translational invariance of the Hubbard model.
%

\section{N-Harmonium}\label{sec:NHarm}
\subsection{Model and $N$-particle energy states}\label{sec:model}
Exploring possible pinning analytically for concrete physical systems is very difficult because we do not only need to calculate the ground state of an interacting $N$-fermion system but also integrate out $N-1$ fermions to get the $1$-RDO and diagonalize it. However, there exists at least one model
with a continuous configuration space for which these three steps are possible: The $N$-Harmonium\footnote{It is also known as Hook or Moshinsky atom and plays an important role e.g.~in quantum chemistry. Since it can completely be solved analytically, several abstract theoretical concepts and optimization methods can be probed with that model \cite{Nagy2011a, Nagy2011b, Nagy2012, Nagy2013}}, a model of harmonically coupled spinless particles confined to a harmonic trap. The corresponding Hamiltonian of its one-dimensional version in spatial representation is given by
\begin{equation}\label{HamHarmonium}
H_N^{(X)} = \sum_{i=1}^N\,\left(\,\frac{p_i^2}{2m}+\frac{1}{2}m\omega^2 x_i^2\,\right) + \frac{1}{2} D\,\sum_{i,j=1}^N \, (x_i - x_j)^2 \,.
\end{equation}
Since the concept of generalized Pauli constraints exists only for fermions, it is necessary to restrict this model of $N$ identical spinless particles to the Hilbert space $\mathcal{H}_N^{(f)}$ of fermions, the $N$-particle states, which are antisymmetric under particle exchange. However, it will prove instructive to solve first the energy eigenvalue problem of (\ref{HamHarmonium}) on the larger $N$-particle Hilbert space $\mathcal{H}_N \equiv L^2(\mathbb{R})^{\otimes^N}$, without any symmetry constraints. Here $L^2(\mathbb{R})$ denotes the $1$-particle Hilbert space of square-integrable functions on $\mathbb{R}$. With such a solution at hand we can construct all fermionic eigenstates. This is true since $H_N^{(X)}$ commutes with any permutation $P$ of the $N$ identical particles,
\begin{equation}
[H_N^{(X)},P]=0\,.
\end{equation}
In particular, since the antisymmetrizing operator $\mathcal{A}_N$ is built up by a linear combination of permutation operators (recall Eq.~\ref{antisymop}), the same also holds for $\mathcal{A}_N$,
\begin{equation}\label{comantisym}
[H_N^{(X)},\mathcal{A}_N]=0\,.
\end{equation}
Property (\ref{comantisym}) allows us to simultaneously diagonalize $H_N^{(X)}$ and $\mathcal{A}_N$ on the total Hilbert space $\mathcal{H}_N$ and thus the family of all fermionic eigenstates can be obtained by applying $\mathcal{A}_N$ to each $N$-particle eigenstate of $H_N^{(X)}$. In the following we will adapt this strategy and start by solving the $N$-particle eigenvalue problem of $H_N^{(X)}$ on $\mathcal{H}_N$.

First, to assure the existence of bound states we choose $D>-\frac{m \omega^2}{N}$.
 With the short hand notation $\vec{x} \equiv (x_1,\ldots,x_N)^{T}$, the kinetic energy operator $T^{(X)}$ and the coupling matrix
\begin{equation}
\mathcal{D}= (m {\omega}^2 + N D) \mathds{1}_N - \begin{pmatrix} D & \ldots &D\\ \vdots && \vdots\\ D & \ldots & D  \end{pmatrix}
\end{equation}
the Hamiltonian (\ref{HamHarmonium}) becomes
\begin{equation}
H_N^{(X)} = T^{(X)} + \frac{1}{2} \vec{x}^{T} \mathcal{D} \vec{x}\,.
\end{equation}
The eigenvalue problem
\begin{equation}\label{eigenvalueproblem}
\left(H_N^{(X)} \Phi_N\right)(\vec{x}) = E \Phi_N(\vec{x})
\end{equation}
can easily be solved by diagonalizing $\mathcal{D}$.
As eigenvectors we find
\begin{eqnarray}\label{eigenvecD}
\vec{e}_1 &=& \frac{1}{\sqrt{N}}(1,\ldots,1)^T \\
\vec{e}_k  &= & \frac{1}{\sqrt{k(k-1)}}(\underbrace{1,\ldots,1}_{k-1},-(k-1),0,\ldots)^{T} \,\,\,\,,k=2,\ldots,N. \nonumber
\end{eqnarray}
The first one, $\vec{e}_1$ describes the center of mass motion. Consequently, the corresponding eigenvalue is $d_-:= m {\omega}^2$. The remaining eigenvectors $\vec{e}_k$, $k=2,\ldots,N$ describe the relative motion between the $N$ particles and their eigenvalues are degenerate and given by $d_+:= m {\omega}^2 + N D$. We also introduce the corresponding frequencies $\omega_{\pm} \equiv  \sqrt{\frac{d_{\pm}}{m}}$. The real and orthogonal matrix
\begin{equation}\label{Smatrix}
S := (\vec{e}_1,\vec{e}_2,\ldots,\vec{e}_N)
\end{equation}
then diagonalizes $\mathcal{D}$,
\begin{equation}
S \mathcal{D} S^{T} = \mbox{diag}(d_-,d_+,\ldots,d_+) =:\mathcal{D}_{0} \,.
\end{equation}
Thus, by introducing $\vec{y}\equiv (y_1,\ldots,y_N)^{T}:= S \vec{x}$ and
\begin{equation}\label{HamHarmoniumY}
H_N^{(Y)}:= T_N^{(Y)}+\frac{1}{2} \vec{y}^{T} \mathcal{D}_0 \vec{y} \,,
\end{equation}
eigenvalue problem (\ref{eigenvalueproblem}) is equivalent to (use $\vec{x}^T \mathcal{D} \vec{x} = (S \vec{x})^T \mathcal{D}_0 (S \vec{x})= \vec{y}^T \mathcal{D}_0 \vec{y} $ and $T_N^{(X)}=T_N^{(Y)}$)
\begin{equation}\label{eigenvalueproblem2}
\left(H_N^{(Y)} \Psi_N\right)(\vec{y}) = E \Psi_N(\vec{y}) \,,
\end{equation}
with $\Psi_N(\vec{y}(\vec{x})) = \Phi_N(\vec{x})$.
An alternative diagonalization of $\mathcal{D}$ and decoupling of the $N$ coordinates $x_i$, respectively, can be obtained by introducing the Jacobian coordinates as done in \cite{harmOsc2012}.

Since (\ref{HamHarmoniumY}) describes $N$ decoupled one-dimensional harmonic oscillators, the spectrum of (\ref{eigenvalueproblem2}) is well-known as $\forall \bd \nu \equiv (\nu_1,\ldots,\nu_N) \in \NN^N$
\begin{equation}\label{energies}
E_{\bd \nu} =  \hbar \omega_- \,(\nu_1 + \frac{1}{2}) + \hbar \omega_+\,\sum_{i=2}^{N} (\nu_i+ \frac{1}{2}) \,,
\end{equation}
with corresponding eigenstate
\begin{equation}\label{eigenstatesY}
\Phi_{\bd \nu}(\vec{y}) = \varphi_{\nu_1}^{(l_-)}(y_1) \prod_{i=2}^N \varphi_{\nu_i}^{(l_+)}(y_i),
\end{equation}
where $\varphi_{\nu}^{(l)}(y) $ is the $\nu$-th Hermite function, an eigenfunction of a single harmonic oscillator with natural length scale $l$, i.e.
\begin{equation}\label{Hermitefunc}
\varphi_{\nu}^{(l)}(y) = \pi^{-\frac{1}{4}} l^{-\frac{1}{2}} (2^{\nu} {\nu}!)^{-\frac{1}{2}} H_{\nu}(\frac{y}{l}) \mbox{e}^{-\frac{y^2}{2 l^2}}
\end{equation}
and $H_{\nu}$ is the $\nu$-th Hermite polynomial. The natural length scales for the center of mass motion and the relative motion, $l_-$ and $l_+$, respectively, are given by $l_{\pm}= \sqrt{\hbar/(m \omega_{\pm})}$. They are related to the coupling constants $D$ and $m \omega^2$ by
\begin{equation}\label{LvsHook}
\kappa\equiv \frac{N D}{m \omega^2} = \left(\frac{l_-}{l_+}\right)^4-1\,,
\end{equation}
where the dimensionless quantity $\kappa$ is a measure for the interaction strength.
For macroscopic particle numbers one should rescale $D$ by N, i.e.\ $D \rightarrow D/N$, in order that the energy per particle is of order one in $N$.

According to (\ref{energies}) and (\ref{eigenstatesY}) the ground state of (\ref{HamHarmonium}) is characterized by  $\nu_{i}=0$ for $i=1,2,...,N$.
Moreover, by using the orthogonal character of the transformation matrix $S$ (\ref{Smatrix}) and by reintroducing the physical coordinates $x_i$ we find
\begin{eqnarray}\label{gsbosons}
\Phi_{0}(\vec{x})& =& \Phi_{0,\ldots,0}(\vec{y}(\vec{x})) \nonumber \\
&=&\mathcal{N}  \exp{(-\frac{1}{2 l_-^{\,2}} y_1(\vec{x})^2-\frac{1}{2 l_+^{\,2}} \sum_{k=2}^N y_k(\vec{x})^2)} \nonumber \\
&=& \mathcal{N}  \exp{(-\frac{1}{2 N}\left(\frac{1}{l_-^{\,2}} - \frac{1}{l_+^{\,2}}\right) (x_1+\ldots+x_N)^2 - \frac{1}{2 l_+^{\,2}} \vec{x}^2)} \nonumber \\
&=:& \mathcal{N} e^{-A\vec{x}^2 + B_N (x_1+\ldots+x_N)^2} \,
\end{eqnarray}
where $\mathcal{N}$ is the normalization factor and
\begin{equation}\label{parameterAB}
A \equiv \frac{1}{2 l_+^{\,2}}    \,\,\,,\,B_N\equiv \frac{1}{2N}\left(\frac{1}{l_+^{\,2}} - \frac{1}{l_-^{\,2}}\right)\,.
\end{equation}
Note that for zero interaction $B_N$ vanishes since $l_-=l_+$.
Since (\ref{gsbosons}) is symmetric in the variables $x_i$, $\Phi_0(\vec{y}(\cdot))$ is also the ground state in the bosonic Hilbert space $\mathcal{H}_N^{(b)}$.

In the following we will construct the fermionic ground state $\Psi_0$ of (\ref{HamHarmonium}). In principle, as explained at the beginning of this section this can be done by projecting the eigenstates (\ref{eigenstatesY}) to the fermionic subspace $\mathcal{H}_N^{(f)}$. However, since it will turn out that the first few $N$-particle eigenstates have zero weight in $\mathcal{H}_N^{(f)}$, a more constructive approach is necessary.

As a first step notice that the $N$-particle eigenstates (\ref{eigenstatesY}) form a complete orthonormal basis of $\mathcal{H}_N$. Moreover, according to the remarks at the beginning of this section, in particular recall (\ref{comantisym}), the fermionic ground state $\Psi_{0}$ can be expanded as a linear combination of finitely many states $\Phi_{\bd \nu}(\vec{y}(\vec{x}))$, all belonging to the \textbf{same} energy eigenspace. According to (\ref{energies}) these finite-dimensional subspaces $S_{n_-,n_+}$ are described by just two integers $n_-,n_+\in \mathbb{N}_0$, the number of center of mass excitations $n_-$ and the number of relative excitations $n_+$. By defining the following index set
\begin{equation}\label{indsetNU}
J_{n_-,n_+} := \{\bd{\nu}\in \NN^N\,|\,\nu_1=n_-\,,\,\nu_2+\ldots+\nu_N=n_+ \}
\end{equation}
we have
\begin{equation}\label{eigenspaces}
S_{n_-,n_+} = \mbox{span}(\{\Phi_{\bd \nu}(\vec{y}(\cdot)) \,| \,\bd \nu \in J_{n_-,n_+}\})\,.
\end{equation}
Let $(n_-^{(f)},n_+^{(f)})$ be the pair of excitations such that the fermionic ground state $\Psi_0$ lies in $S_{n_-^{(f)},n_+^{(f)}}$.
Then, by using (\ref{eigenstatesY}) and (\ref{Hermitefunc}) we find
\begin{eqnarray}\label{gsfermionicexpan}
\Psi_0(\vec{x}) &=& \sum_{\bd \nu \in J_{n_-,n_+}} c_{\bd \nu} \, \Phi_{\bd \nu}(\vec{y}(\vec{x}))\\
&=&  \sum_{\bd \nu \in J_{n_-,n_+}} d_{\bd \nu} \, \left(H_{\nu_1}\big(\frac{y_1(\vec{x})}{l_-}\big)  \prod_{j=2}^N H_{\nu_j}\big(\frac{y_j(\vec{x})}{l_+}\big)\right)\, e^{-A\vec{x}^2 + B_N (x_1+\ldots+x_N)^2} \nonumber\,.
\end{eqnarray}
where the coefficients $\{c_{\bd \nu}\}$ and $\{d_{\bd \nu}\}$, respectively, are such that $\mathcal{A}_N \Psi_0 = \Psi_0$. Since the exponential function in the last line is symmetric, the polynomial in front of it needs to be antisymmetric. What is the minimal degree of a polynomial, which is antisymmetric in $N$ variables? The answer follows from
\begin{lem}\label{lempolylowest}
Applying the antisymmetrizing operator $\mathcal{A}_N$ to a monomial $m_{\bd \nu}(\vec{x})\equiv \prod_{j=1}^N\, x_j^{\nu_j}$ yields zero whenever its degree fulfills $\nu_1+\ldots+\nu_N < \binom{N}{2}$. Moreover there exist an antisymmetric polynomial of degree $\binom{N}{2}$, unique up to a global factor, the Vandermonde determinant
\begin{equation}\label{Vandermonde}
\prod_{1\leq i<j\leq N}(x_i-x_j) = \left|\begin{array}{lll}\,1&\ldots&\,1\\x_1&\ldots&x_N\\ \,\vdots& &\,\vdots\\ x_1^{N-1}&\ldots& x_N^{N-1} \end{array}\right|\,.
\end{equation}
\end{lem}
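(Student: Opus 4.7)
The plan is to recognize that $\mathcal{A}_N m_{\bd\nu}$ is, up to a constant, a Slater-type determinant built from monomials, and to exploit the elementary fact that a determinant vanishes whenever two of its rows (or columns) coincide. Concretely, a direct calculation using the definition (\ref{antisymop}) of $\mathcal{A}_N$ and the representation (\ref{permrep}) gives
\begin{equation}
(\mathcal{A}_N m_{\bd\nu})(\vec{x}) \;=\; \frac{1}{\sqrt{N!}} \sum_{P\in \mathcal{S}_N}\mathrm{sign}(P)\,\prod_{j=1}^N x_{P^{-1}(j)}^{\nu_j} \;=\; \frac{1}{\sqrt{N!}}\,\det\!\bigl(x_i^{\nu_j}\bigr)_{1\leq i,j\leq N}.
\end{equation}
If any two exponents coincide, $\nu_i=\nu_j$ with $i\neq j$, then two columns of this matrix agree and the determinant vanishes. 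Hence $\mathcal{A}_N m_{\bd\nu}\neq 0$ forces the $\nu_j$ to be pairwise distinct nonnegative integers.

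Now the minimum value of $\nu_1+\ldots+\nu_N$ over all $N$-tuples of pairwise distinct elements of $\NN$ is attained by $\{0,1,\ldots,N-1\}$ and equals $0+1+\ldots+(N-1)=\binom{N}{2}$. Consequently, whenever $\sum_j \nu_j < \binom{N}{2}$, the $\nu_j$ cannot all be distinct and $\mathcal{A}_N m_{\bd\nu}\equiv 0$, which is the first claim of the lemma.

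For the second claim, suppose $p(\vec{x})$ is any antisymmetric polynomial of total degree $\binom{N}{2}$. Expanding $p$ in monomials $m_{\bd\nu}$ and applying $\mathcal{A}_N$ (which fixes $p$), we may write $p=\mathcal{A}_N p$ as a linear combination of terms $\mathcal{A}_N m_{\bd\nu}$ of degree $\leq \binom{N}{2}$. By the previous step only multi-indices $\bd\nu$ consisting of pairwise distinct nonnegative integers contribute, and the degree constraint $\sum_j\nu_j=\binom{N}{2}$ forces $\{\nu_1,\ldots,\nu_N\}=\{0,1,\ldots,N-1\}$ as a set. All such multi-indices differ only by a permutation, and $\mathcal{A}_N m_{\bd\nu}=\mathrm{sign}(\sigma)\,\mathcal{A}_N m_{(0,1,\ldots,N-1)}$ whenever $\bd\nu$ is obtained from $(0,1,\ldots,N-1)$ by the permutation $\sigma$. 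Therefore $p$ is a scalar multiple of
\begin{equation}
\mathcal{A}_N m_{(0,1,\ldots,N-1)}(\vec{x}) \;\propto\; \det\!\bigl(x_i^{j-1}\bigr)_{1\leq i,j\leq N} \;=\; \prod_{1\leq i<j\leq N}(x_j-x_i),
\end{equation}
the Vandermonde determinant (\ref{Vandermonde}), which establishes uniqueness up to a global factor.

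The whole argument is essentially bookkeeping once the Slater-determinant rewriting is in place; I do not expect any serious obstacle. The one subtlety worth stating carefully is the passage from ``$\mathcal{A}_N p = p$ for antisymmetric $p$'' to the restriction on which $\bd\nu$ can appear with nonzero coefficient, so that the minimality of $\binom{N}{2}$ is transferred from individual antisymmetrized monomials to the whole polynomial; this only uses linearity of $\mathcal{A}_N$ and the fact that $\mathcal{A}_N$ preserves total degree.
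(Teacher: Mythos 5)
Your proof is correct, and for the uniqueness part it takes a genuinely different route from the paper. For the vanishing claim both arguments rest on the same mechanism: if $\sum_j\nu_j<\binom{N}{2}$ then two exponents must coincide; the paper then uses the transposition identity $\mathcal{A}_N m_{\bd\nu}=\mathcal{A}_N P_{ij}m_{\bd\nu}=-\mathcal{A}_N m_{\bd\nu}$, while you package the same cancellation as a determinant $\det\bigl(x_i^{\nu_j}\bigr)$ with two equal columns — equivalent content, different bookkeeping. The real divergence is in the uniqueness of the degree-$\binom{N}{2}$ antisymmetric polynomial: the paper invokes (without proof, as "easily verified") the divisibility fact that every antisymmetric polynomial equals the Vandermonde determinant times a symmetric polynomial, and then notes the only degree-zero symmetric polynomial is a constant; you instead expand an arbitrary antisymmetric $p$ in monomials, use $\mathcal{A}_N p\propto p$ together with the first part to see that only multi-indices that are permutations of $(0,1,\ldots,N-1)$ survive, and observe that all of these antisymmetrize to the same function up to sign. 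Your route is more self-contained, since it avoids appealing to the unproved divisibility lemma, at the cost of being specific to the minimal degree rather than giving the general structure of antisymmetric polynomials that the paper's factorization statement provides. One small point of care: with the paper's normalization $\mathcal{A}_N=\frac{1}{\sqrt{N!}}\sum_P \mathrm{sign}(P)P$, an antisymmetric $p$ satisfies $\mathcal{A}_N p=\sqrt{N!}\,p$ rather than $\mathcal{A}_N p=p$, so "fixes $p$" should read "fixes $p$ up to the factor $\sqrt{N!}$"; this does not affect your argument, which only needs proportionality.
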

\begin{proof}
Given a monomial $m_{\bd \nu}$ with degree of $\bd \nu$ smaller $\binom{N}{2}$. Then, there exists $i\neq j \in \{1,\ldots,N\}$ such that $\nu_i = \nu_j$. Hence, swapping the $i$-th and $j$-th variable in $m_{\bd \nu}(\vec{x})$ does not change the monomial, $P_{ij}m_{\bd \nu} = m_{\bd \nu}$ and we get $\mathcal{A}_N m_{\bd \nu} = \mathcal{A}_N P_{ij}m_{\bd \nu} = - \mathcal{A}_N m_{\bd \nu}$, since $\mbox{sign}(P_{ij})=-1$ and thus $\mathcal{A}_N m_{\bd \nu} = 0$. The Vandermonde determinant is indeed antisymmetric and non-zero. Moreover, we can easily verify that every antisymmetric polynomial can be written as Vandermonde determinant multiplied by a symmetric polynomial. Since there is only one such symmetric polynomial of degree zero, the constant, (\ref{Vandermonde}) is the only antisymmetric polynomial of degree $\binom{N}{2}$.
\end{proof}
In the spirit of Lem.~\ref{lempolylowest} it is clear that the fermionic ground state lies in the space with $n_-=0$, since $y_1(\vec{x})\sim (x_1+\ldots+x_N)$
treats all $N$ physical variables $x_i$ equally and cannot ``contribute'' to the antisymmetry. Since the Hermite polynomial $H_{\nu}$ has degree $\nu$ and since $\vec{y}$ depends linearly on $\vec{x}$, we find that the polynomial in (\ref{gsfermionicexpan}) in front of the exponential function has degree $n_+$ in the physical coordinates $x_i$. Lem.~\ref{lempolylowest} then implies that the fermionic ground states lies in the space $S_{0,n_+}$, with $n_+ = \binom{N}{2}$. It is unique up to a global factor and can be expressed as
\begin{equation}\label{gsfermions}
\Psi_0(\vec{x}) = \mathcal{N}\,\left(\prod_{1\leq i<j\leq N}(x_i-x_j)\right)\, e^{-A\vec{x}^2 + B_N (x_1+\ldots+x_N)^2}\,.
\end{equation}
\begin{rem}
The only difference between the fermionic (\ref{gsfermions}) and bosonic (\ref{gsbosons}) ground state of the $N$-Harmonium model (\ref{HamHarmonium}) is the Vandermonde determinant. Consequently, all differences between the bosonic and fermionic physics of that model are hidden in that polynomial. Moreover, (\ref{gsfermions}) has a strong similarity to the Laughlin wave functions of the fractional Quantum Hall effect \cite{Laughlin1983}.
\end{rem}

As a next step we would like to calculate the $1$-RDO $\rho_1^{(f)}(x,y)$ of (\ref{gsfermions}). This means to integrate out $N-1$ fermion coordinates. Although the structure of (\ref{gsfermions}), a product of a polynomial and an exponential function with a quadratic exponent is quite simple, this turns out to be quite complicated and does not lead to an elementary analytical expression for arbitrary particle number $N$. There are two structural features of (\ref{gsfermions}), which together are responsible for that: The mixing of different particle coordinates $x_i$ in the exponent and the polynomial in front of the exponential function. Indeed, skipping the coordinate mixing in the exponent would allow to expand (\ref{gsfermions}) in a finite linear combination of Slater determinants built up by Hermite functions and the integration could be done symbolically by referring to the orthogonality of the Hermite functions.
On the other hand, skipping the polynomial would make the integration trivial after a diagonalization of the quadratic form. The latter scenario means nothing else but to deal with the bosonic ground state (\ref{gsbosons}). Since the corresponding calculation will be instructive for later considerations of the fermionic ground state, we will first calculate the bosonic $1$-RDO $\rho_1^{(b)}(x,y)$ and its NONs and NOs.

\subsection{Bosonic $1$-particle reduced density operator $\rho_1^{(b)}(x,y)$}\label{sec:1RDObosons}
The calculation of the $1$-RDO $\rho_1^{(b)}(x,y)$ for the \emph{bosonic} ground state is straightforward for arbitrary particle number $N$ (see Appendix \ref{app:bosonic}). We get
\begin{equation}\label{1RDOb}
\rho_1^{(b)}(x,y) = c_N\, \exp{\left[-a_N (x^2+y^2) +b_N x y\right]}
\end{equation}
with (recall Eq.~(\ref{parameterAB}))
\begin{eqnarray}\label{parameterab}
b_N&=&\frac{(N-1)B_N^2}{A-(N-1)B_N}\,\,,\,a_N=(A-B_N)-\frac{1}{2}b_N \nonumber \\
c_N&=&N \sqrt{\frac{2 a_N-b_N}{\pi}}\,.
\end{eqnarray}
Note that $\rho_1^{(b)}$ is normalized to the particle number $N$, i.e.\ $\int \!\mathrm{d}x \,\rho_1^{(b)}(x,x)=N$\,.
This result resembles those in Refs.~\cite{Rob,Dav,Peschel1999}. The difference to Ref.~\cite{Peschel1999} is that the coefficients $a_N, b_N$ of the bilinear exponent can be expressed explicitly by both length scales $l_-, l_+$ for \emph{all} $N$ (cf.~Eq.~(\ref{parameterAB}) and (\ref{parameterab})).

The $1$-RDO $\rho_1^{(b)}$ for bosons, Eq.~(\ref{1RDOb}), has the form of a Gibbs state in coordinate representation \cite{feyn}
\begin{eqnarray}\label{Gibbs}
\rho_1^{(b)}(x,y)&=& \frac{1}{Z_{eff}}\, \langle x|\exp{[-\beta_N H_{eff}]}|y\rangle\nonumber \\
&=& N \sqrt{\frac{1}{\pi L_N^2}\tanh{(\beta_N \hbar \Omega_N/2)}} \\
&&\cdot\exp{\left(-\frac{1}{2 L_N^2 \sinh{(\beta_N\hbar \Omega_N)}}\left[(x^2+y^2)\cosh{(\beta_N\hbar \Omega_N)}- 2 x y\right]\right)}\nonumber
\end{eqnarray}
where  $H_{eff}$ is the effective Hamiltonian for a single harmonic oscillator with mass $M_N$, frequency $\Omega_N$ and length scale $L_N = \sqrt{\frac{\hbar}{M_N  \Omega_N}}$:
\begin{equation}\label{Hamiltonianeff}
H_{eff} = \frac{1}{2} \hbar \Omega_N \left[-L_N^2\frac{\mathrm{d}^2}{\mathrm{d}x^2}+\frac{1}{L_N^2} x^2\right]\,.
\end{equation}
From Eqs.~(\ref{1RDOb}), (\ref{Gibbs}) and $\rho_1^{(b)}(x,y) = \langle x| \rho_1^{(b)}|y\rangle$ we obtain
\begin{equation}\label{GibbsOp}
\rho_1^{(b)} = \frac{1}{Z_{eff}}\,\exp{[-\beta_N H_{eff}]}\,,
\end{equation}
with
\begin{eqnarray}\label{effectiveoscill}
L_N &=& (4 a_N^2-b_N^2)^{-\frac{1}{4}} \nonumber \\
\beta_N \hbar \Omega_N &=& \arcsinh \left(\frac{1}{L_N^2 b_N}\right) \nonumber \\
Z_{eff}&=& \frac{N}{2}\,\left[ \sinh{\left(\frac{\beta_N \hbar \Omega_N}{2}\right)}\right]^{-1}\,.
\end{eqnarray}

\noindent These quantities can also be expressed by the original parameters $l_-$ and $l_+$, only:
\begin{eqnarray}\label{parameterbhm}
\beta_N \hbar \Omega_N &=& \arcsinh{ \left[\frac{2 l_+ l_- \sqrt{[(N-1)l_+^{\,2}+l_-^{\,2}][l_+^{\,2}+(N-1)l_-^{\,2}]}}{(1-1/N)(l_+^{\,2}-l_-^{\,2})^2}\right]}\nonumber \\
L_N &=& \sqrt{l_- l_+}\,\left[\frac{(N-1)l_+^{\,2} +l_-^{\,2}}{l_+^{\,2} +(N-1)l_-^{\,2}}\right]^{\frac{1}{4}} \,.
\end{eqnarray}
Note that $L_N\rightarrow (N-1)^{-\frac{1}{4}}\,l_-\,\left(l_+/l_-\right)^{\frac{1}{2}}$, $\beta_N \hbar \Omega_N\rightarrow \left(2N/\sqrt{N-1}\right) \,l_+/l_-$ for $l_+/l_-\rightarrow 0$ corresponding to $D\rightarrow \infty$, and $L_N\rightarrow l_-$, $\beta_N \hbar \Omega_N\rightarrow  \infty$ for $l_+/l_-\rightarrow 1$, i.e.\ $D \rightarrow 0$. The result (\ref{GibbsOp}) demonstrates that the $1$-RDO can exactly be represented by the Gibbs state of an effective harmonic oscillator at a ``temperature'' $T_N = 1/(k_B \beta_N)$. That $\rho_1^{(b)}$ is a Gibbs state for an effective harmonic oscillator has already been shown in \cite{Peschel1999} for a harmonic chain with nearest neighbor interactions. Due to the permutation invariance of the harmonic potential of our model, the parameters of the effective Hamiltonian can be calculated explicitly as functions of $l_-$ and $l_+$ (see Eqs.~(\ref{parameterAB}), (\ref{parameterab}), (\ref{Hamiltonianeff}) and (\ref{effectiveoscill})). For $D=0$, i.e.\ non-interacting bosons, it follows $l_-=l_+$. For that case, the ``temperature'' is zero.

Due to the elementary form of $\rho_1^{(b)}$, see Eq.~(\ref{GibbsOp}), it is quite easy to determine the bosonic NOs $\chi_k^{(b)}(x)$ and the corresponding occupation numbers $\lambda_k^{(b)}$, which obey the eigenvalue equation
\begin{equation}
\rho_1^{(b)} \chi_k^{(b)} = \lambda_k^{(b)} \chi_k^{(b)}\,.
\end{equation}
By recalling the Hermite functions $\varphi_k^{(l)}(x)$ (see Eq.~(\ref{Hermitefunc})) we find $\chi_k^{(b)}(x) = \varphi_k^{(L_N)}(x)$.
Moreover, the NONs obey the Boltzmann law
\begin{equation}\label{NONb}
\lambda_k^{(b)} = N \left[1-\exp{(-\beta_N \hbar \Omega_N)}\right]\, e^{-(\beta_N \hbar \Omega_N)k}\,,k=0,1,\ldots
\end{equation}
It is obvious that  $\lambda_k^{(b)}$ fulfill the standard normalization $\sum_{k=0}^{\infty} \lambda_k^{(b)} = N$.

To finish the study of the bosonic ground state we analyze the bosonic NONs as function of the interaction strength.
For this we define the quantity
\begin{equation}\label{paramterqN}
q_N \equiv e^{-\beta_N \hbar \Omega_N}\,,
\end{equation}
which depends via $\beta_N \hbar \Omega_N$ (recall (\ref{parameterbhm})) on the relative fermion interaction strength $\kappa$ or alternatively (c.f.~Eq.~(\ref{LvsHook})) on the quantity $\frac{l_+}{l_-}$.

Since $(\beta_N \hbar \Omega_N)$ is invariant under swapping the length scales $l_-$ and $l_+$, we find that the NONs (\ref{NONb}) fulfill
the duality,
\begin{equation}\label{dualityb}
\lambda_k^{(b)}\big(\frac{l_+}{l_-}\big) = \lambda_k^{(b)}\big(\frac{l_-}{l_+}\big)\,,
\end{equation}
which relates the attractive ($l_- > l_+$) and the repulsive ($l_- < l_+$) interaction regime. In the asymptotic regimes of weak ($\frac{l_+}{l_-}\rightarrow 1$) and strong interaction ($\frac{l_+}{l_-}\rightarrow 0$) we find
\begin{eqnarray}
q_N &\sim &  \frac{(N-1) }{N^2}\epsilon^2 +\frac{(N-1) }{N^2} \epsilon^3   \qquad, \,\epsilon = 1- \frac{l_+}{l_-}\rightarrow 0^+\nonumber \\
q_N &\sim &  1 -\frac{2 N }{\sqrt{N-1}}\delta + \frac{2 N^2}{N-1}\delta^2   \qquad, \,\delta = \frac{l_+}{l_-}\rightarrow 0^+
\end{eqnarray}
Due to its physical relevance we still consider the relative entropy,
\begin{eqnarray}
S[\rho_1^{(b)}]&:=& \mbox{Tr}[\rho_1 \ln{\rho_1^{(b)}}] \\
&=&- \sum_{k=0}^{\infty} (1-q_N)q_N^k \left( \ln{(1-q_N)}+k \ln{q_N}\right)\\
&=& -\ln{(1-q_N)} - q_N (1-q_N) \ln{q_N} \sum_{k=0}^{\infty} q_N^{k-1}  k   \\
&=&  -\ln{(1-q_N)}- \frac{q_N}{1-q_N} \ln{q_N} \,.
\end{eqnarray}
The behavior of $q_N$ and $S[q_N]$ is visualized in Fig.~\ref{fig:qSdecay} for particle numbers $N=2,10,100$.
\\
\\
\begin{figure}[h]
\centering
\includegraphics[scale=0.72]{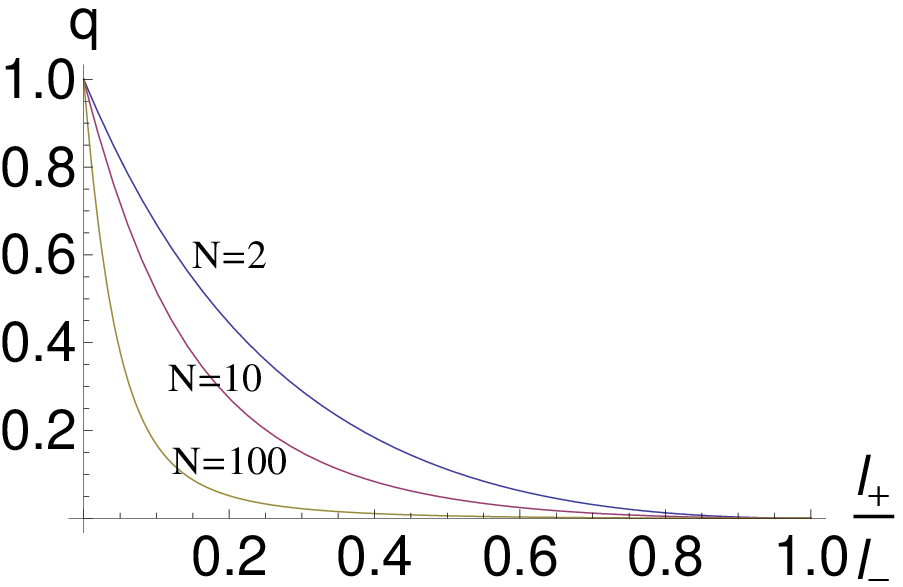}
\hspace{0.3cm}
\includegraphics[scale=0.63]{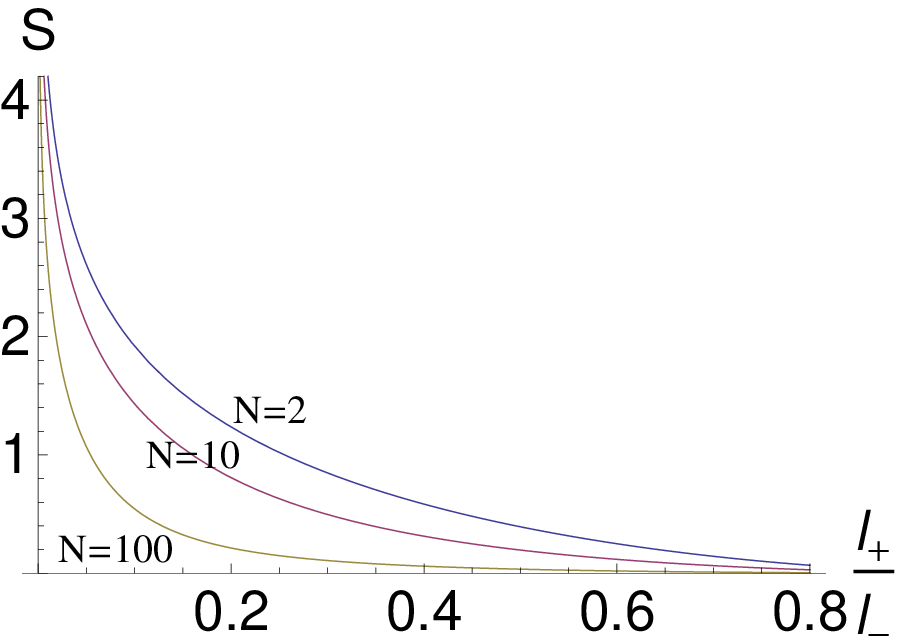}
\captionC{Parameter $q_N$ (left) and relative $1$-particle entropy $S[q_N]$ (right) as function of $\frac{l_+}{l_-}$ for the three particle numbers $N=2,10,100$.}
\label{fig:qSdecay}
\end{figure}
\\
\\
The largest relative occupation number $\frac{\lambda_0}{N}$ as function of $\frac{l_+}{l_-}$ and the decay of the NONs for the strong interaction value $\frac{l_+}{l_-}=\frac{3}{100}$ is presented in Fig.~\ref{fig:occbosons} for three particle numbers.
\\
\\
\begin{figure}[h]
\centering
\includegraphics[scale=0.62]{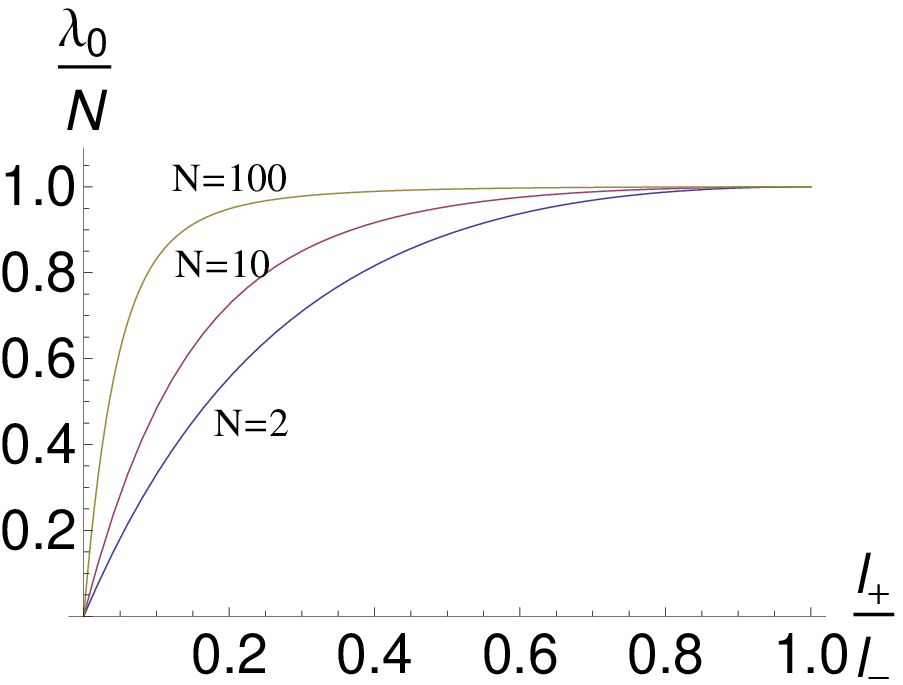}
\hspace{0.3cm}
\includegraphics[scale=0.60]{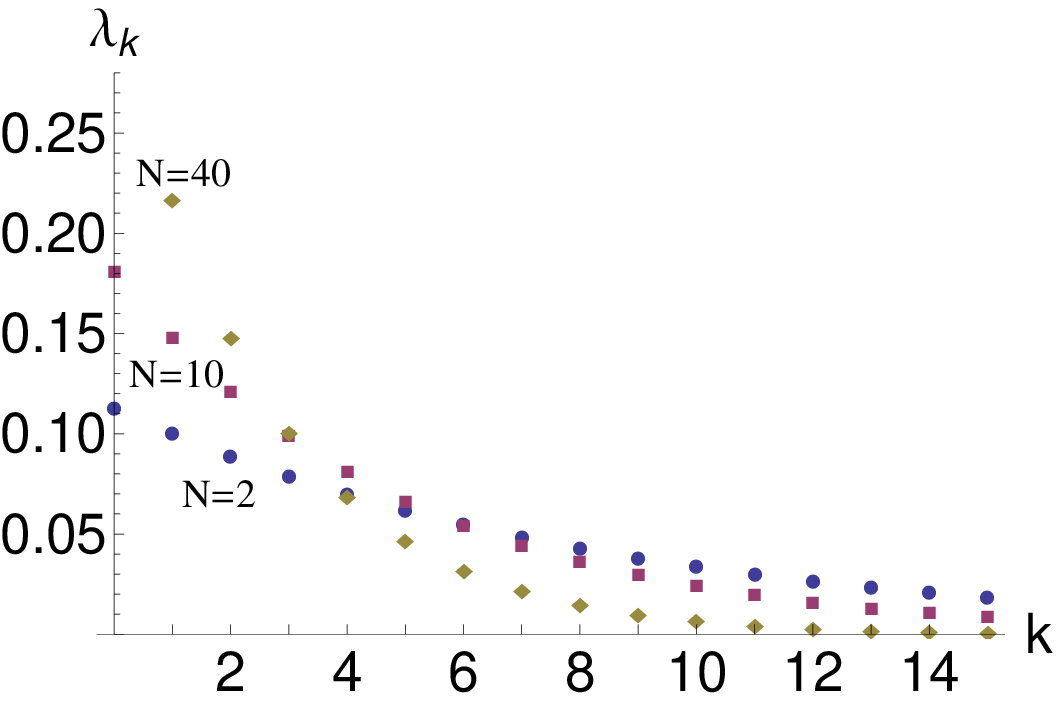}
\captionC{left: relative occupancy $\frac{\lambda_0}{N}$ as function of $\frac{l_+}{l_-}$ for the three particle numbers $N=2,10,100$; right: decay behavior of the NONs $\lambda_k$ for the interaction defined by $\frac{l_+}{l_-}=\frac{3}{100}$ for the three particle numbers $N=2,10,40$.}
\label{fig:occbosons}
\end{figure}

\subsection{Fermionic $1$-particle reduced density operator $\rho_1^{(f)}$}\label{sec:1RDOfermions}
In this section we first determine the fermionic $1$-RDO $\rho_1^{(f)}(x,y)$ in spatial representation. In a second step
we represent $\rho_1^{(f)}$ as matrix w.r.t.~to a given $1$-particle reference basis. Since an analytic diagonalization of
$\rho_1^{(f)}$ turns out to be not possible anymore, such a matrix form is the starting point for numerical and perturbation theoretical
approaches used in the following sections to determine the corresponding NONs.

As already pointed out in Sec.~\ref{sec:model} it is impossible to derive an elementary analytical expression for $\rho_1^{(f)}(x,y)$. In the Appendix \ref{app:fermionic} the integral
\begin{equation}
\rho_1^{(f)}(x,y) = \int\mathrm{d}x_2\ldots\mathrm{d}x_N\,\Psi_0(x,x_2,\ldots,x_N)\Psi_0(y,x_2,\ldots,x_N)
\end{equation}
is simplified for arbitrary $N$ to a finite sum of single integrals.
Again, as for the $N$-particle ground states, the exponential part of the fermionic $1$-RDO coincides with the bosonic one. The Vandermonde determinant in front of the exponential term in Eq.~(\ref{gsfermions}) leads to an additional symmetric polynomial $F_N(x,y)$ of degree $2(N-1)$ and with only even order monomials (see Appendix \ref{app:NOsfermionic}):
\begin{equation}\label{1RDOfpoly}
F_N(x,y) = \sum_{\nu=0}^{N-1} \sum_{\mu=0}^{2 \nu} \,c_{\nu,\mu}\, x^{2\nu-\mu} y^{\mu}\,.
\end{equation}
The coefficients $c_{\nu,\mu}$ depend on the model parameters and fulfill $c_{\nu,\mu} = c_{\nu,2\nu-\mu}$. Accordingly, we have
\begin{equation}\label{1RDOf}
\rho_1^{(f)}(x,y) = F_N(x,y)\, \exp{\left[-a_N (x^2+y^2) +b_N x y\right]},
\end{equation}
which is again normalized to $N$. The expression for the coefficients $c_{\nu,\mu}$  is rather cumbersome (see Eq.~(\ref{1RDOfUint})). The number of terms contributing to $c_{\nu,\mu}$ increases with increasing $N$.
As an example we present the explicit result for $N=3$:
\begin{eqnarray}
F_3(x,y) &=& d_3 \big[ C_1 (x^4+y^4)+C_2 (x^3 y+x y^3)+C_3 x^2 y^2  \nonumber \\
&&+ C_4 (x^2+y^2) +C_5 x y + C_6\big]
\end{eqnarray}
with
\begin{eqnarray}
C_1&=&\frac{1}{24} \big(96 A^4 B_N^2-480 A^3 B_N^3+600 A^2 B_N^4\big) \nonumber \\
C_2&=&\frac{1}{6} \big(-96 A^5 B_N+720 A^4 B_N^2-1824 A^3 B_N^3 +1560 A^2 B_N^4\big)\nonumber \\
C_3&=&\frac{1}{4} \big(64 A^6-640 A^5 B_N+2464 A^4 B_N^2-4320 A^3 B_N^3 +2904 A^2 B_N^4\big)\nonumber \\
C_4&=&\frac{1}{2} \big(-8 A^5+72 A^4 B_N-264 A^3 B_N^2+460 A^2 B_N^3 -312 A B_N^4\big)\nonumber \\
C_5&=&8 A^5-48 A^4 B_N+72 A^3 B_N^2+44 A^2 B_N^3-120 A B_N^4\nonumber \\
C_6&=&3 A^4-24 A^3 B_N+75 A^2 B_N^2-108 A B_N^3+60 B_N^4\nonumber \\
d_3&=& \frac{\sqrt{A^2-3 A B_N}}{\sqrt{2 \pi } \left(A-2 B_N\right){}^{9/2}}\,.
\end{eqnarray}

Due to the involved form of (\ref{1RDOf}) we cannot analytically diagonalize the $1$-RDO for fermions anymore,
even not for $N=2,3$. As a preparation for numerical and perturbation theoretical approaches we represent
$\rho_1^{(f)}$ as an infinite matrix. Since the exponential factor of (\ref{1RDOf}) coincides with that for the
bosons, we choose the bosonic NOs, Hermite functions with natural length scale $L_N$ ((\ref{effectiveoscill})),
found analytically in the previous section, Sec.~\ref{sec:1RDObosons}, as reference basis. The matrix elements are then given by
\begin{eqnarray}\label{1RDOfmat1}
\left(\rho_1^{(f)}\right)_{nm}  &\equiv&  \langle \varphi_{n}^{(L_N)}, \rho_1^{(f)} \varphi_{m}^{(L_N)} \rangle \nonumber \\
&=& \int \mathrm{d}x \mathrm{d}y\, \varphi_{n}^{(L_N)}(x) \rho_1^{(f)}(x,y) \varphi_{n}^{(L_N)}(y)\,.
\end{eqnarray}
Surprisingly, it turns out that for not too strong interaction this matrix is approximately diagonal, i.e.~the choice of
$1$-particle reference states was an excellent guess for the fermionic NOs. This will be investigated in more detail
in Sec.~\ref{sec:NO}.

Since the function $F_N$ in $\rho_1^{(f)}(x,y)$ has finite degree and due to the explicit diagonalization of the exponential
factor (recall (\ref{1RDOb}) and (\ref{GibbsOp})) we can find an analytic expression for $\left(\rho_1^{(f)}\right)_{nm}$ for
arbitrary $n,m \in \NN$. We explain the main steps here and present details in the Appendix \ref{app:fermionicmatrix}.
By rescaling the coefficients of the polynomial $F_N$ from Eq.~(\ref{1RDOfpoly}) and using (\ref{GibbsOp}) we find
\begin{equation}\label{1RDOfMehler}
\rho_1^{(f)}(x,y) = \tilde{F}_N\left(\frac{x}{L_N},\frac{y}{L_N}\right)\, \sum_{k=0}^\infty q_N^k \varphi_k^{(L_N)}(x)\varphi_k^{(L_N)}(y)\,,
\end{equation}
where the parameter $q_N$ is given by Eq.~(\ref{paramterqN}).
As a next step we introduce the standard ladder operators $a_x^{(\dagger)},a_y^{(\dagger)}$ for the harmonic oscillator acting here w.r.t.~the coordinates $\frac{x}{L_N}$ and $\frac{y}{L_N}$, respectively. Plugging in this in Eq.~(\ref{1RDOfmat1}) and using the orthonormality of the Hermite functions $\varphi_k^{(L_N)}$ allows to analytically calculate the matrix elements $\left(\rho_1^{(f)}\right)_{nm}$. Since $F_N$ contains only even order monomial, $\left(\rho_1^{(f)}\right)_{nm}$ vanishes whenever $n+m$ is odd. Since $F_N$ has degree $2(N-1)$, the matrix $\rho_1^{(f)}$ is in addition sparse and has only non-zero entries when $|n-m| \leq 2(N-1)$. The analytical expressions for $\left(\rho_1^{(f)}\right)_{nm}$ are highly complex and are present in the Appendix \ref{app:fermionicmatrix} for the case of three particles.

Moreover, since $\left(\rho_1^{(f)}\right)_{nm} \sim q_N^n$ for $n\approx m$ and $q_N\ll1$ for weak interaction, the matrix elements strongly decay with increasing $n$ and $m$. The numerical approach we use to determine the NONs is based on that fact. For specific interaction strengths $\delta$  we
truncate $\rho_1^{(f)}$ at $n = m = d$ and then diagonalize it numerically. This is then fully justified since the corresponding error is of the order $q_N^{d}$.

\subsection{Fermionic natural occupation numbers}\label{sec:NONsfermionic}
For $N=3$ and several interaction strengths we calculate numerically the NONs $\vec{\lambda}^{(f)}$ by following the strategy explained in the previous section. Intriguingly, we find a duality,
\begin{equation}\label{duality}
\vec{\lambda}^{(f)}\big(\frac{l_-}{l_+}\big) = \vec{\lambda}^{(f)}\big(\frac{l_+}{l_-}\big),
\end{equation}
which connects the regime of attractive interaction ($l_+<l_-$) with that of repulsive interaction ($l_+ >l_-$). (\ref{duality}) suggests the definition\footnote{The function $\log(\cdot)$ stand always for the natural logarithm. If we change the base from $e$ to $b$ we write $\log_b(\cdot)$.}
\begin{equation}\label{delta}
\delta := -\log{\big(\frac{l_+}{l_-}\big)}\,.
\end{equation}
The NONs $\lambda^{(f)}_i$ as function of the interaction strength $\delta$ are then symmetric functions. In the present and remaining three subsections the superscript $f$ will be suppressed.
In Tab.~\ref{tab:NONs3Harm} numerical results for the first nine digits are presented for the five coupling strength $\delta=0.2,0.4,0.6,0.8,1.0$, which corresponds to $3D/m\omega^2 = 1.23,3.95,10.0,23.5,53.6$.
\begin{table}[h]
\centering
$\begin{array}{|c|r|r|r|r|r|}
\hline
\delta & 0.2 \qquad& 0.4 \qquad& 0.6 \qquad& 0.8 \qquad& 1.0 \qquad \\
\hline
\lambda_1 & 0.99999655 & 0.99979195 & 0.99788745 & 0.99008995 & 0.97039917 \\
\lambda_2 & 0.99966393 & 0.99541159 & 0.98161011 & 0.95575302 & 0.91814283 \\
\lambda_3 & 0.99966062 & 0.99523526 & 0.98014300 & 0.95053687 & 0.90742159 \\
\lambda_4 & 0.00033932 & 0.00475175 & 0.01963124 & 0.04811132 & 0.08809330 \\
\lambda_5 & 0.00033608 & 0.00458920 & 0.01837054 & 0.04380671 & 0.07883271 \\
\lambda_6 & 3.416\cdot 10^{-6} & 0.00020069 & 0.00196597 & 0.00886269 & 0.02537522 \\
\lambda_7 & 8.8\cdot 10^{-8} & 0.00001857 & 0.00034683 & 0.00225847 & 0.00807419 \\
\lambda_8 & 1.1\cdot 10^{-9} & 9.36\cdot 10^{-7}& 0.00004026 & 0.00048034 & 0.00273121 \\
\lambda_9 & 0 & 4.4\cdot 10^{-8} & 4.133\cdot 10^{-6}& 0.00008335 & 0.00068859 \\
\lambda_{10} &0&0.2\cdot 10^{-8} & 4.12\cdot 10^{-7} & 0.00001447 & 0.00018185 \\
\hline
\end{array}
$
\captionC{First ten NONs of the $3$-Harmonium ground state for the five different coupling strengths $\delta=0.2, 0.4, 0.6, 0.8, 1.0$.}
\label{tab:NONs3Harm}
\end{table}
Still for very strong interaction $\delta =1.0$ the NONs deviate from $1$ and $0$ only slightly.
For the two interaction strengths $\delta=0.2, 0.4$ the NONs are sufficiently fast decaying and a truncation (recall Sec.~\ref{sec:truncation})
to the setting with dimension $d=7$ is possible. The truncation errors are given by $\lambda_8 = 1.1\cdot 10^{-9}$ and $
9.36\cdot 10^{-7}$, respectively. The polytope distances we find (with $D^{(3,7)}_i$ according (\ref{set37})) are listed in Tab.~\ref{tab:Ds}.
\begin{table}[h]
\centering
\renewcommand{\arraystretch}{1.25}
$\begin{array}{|c|r|r|r|r|}
\hline
\delta & D^{(3,7)}_1 \,\,\,\,\,\,\,& D^{(3,7)}_2 \,\,\,\,\,\,\,& D^{(3,7)}_3 \,\,\,\,\,\,\,& D^{(3,7)}_4 \,\,\,\,\,\,\,\\
\hline
0.2 &2.4\cdot 10^{-8} & 9.8\cdot 10^{-8} & 5.6\cdot 10^{-8} & 1.19\cdot 10^{-7}\\
\hline
0.4 & 6.565\cdot10^{-6}& 0.00002034 & 0.00001220 &
0.00002613 \\ \hline
\end{array}
$
\captionC{Saturations of the generalized Pauli constraints for the coupling strengths $\delta=0.2, 0.4$ in the setting $\wedge^3[\mathcal{H}_1^{(7)}]$.}
\label{tab:Ds}
\end{table}
All those distances are larger than the truncation error. It is remarkable that the distance to the boundary of the polytope is
significantly smaller than the distance to the Hartree-Fock point, which is $6\cdot 10^{-4}$ and $9\cdot 10^{-4}$, respectively.
This effect of non-trivial quasi-pinning is becoming even more extreme if we reduce the coupling $\delta$ and consider the weak interaction
regime, $\delta\ll 1$. Studying this regime also allows to determine the algebraic behavior of $\lambda_i$ and the polytope distances as function
of $\delta$. To explain this, consider a function $f(\delta)$ with the following analytical behavior
\begin{equation}
f(\delta) = a^{(r)} \delta^r + \mbox{O}(\delta ^{r+1}) \qquad,\, \delta \rightarrow 0^+ \,\,\,,\, r\in \N\,.
\end{equation}
This leads to
\begin{equation}
-\log_{10}{f(\delta)}= -\log_{10}{a^{(r)}}+\mbox{O}(\delta)+ r (-\log_{10}{\delta})     \qquad,  \,\delta \rightarrow 0^+ \,.
\end{equation}
By introducing $\tilde{y} := -\log_{10}{f(\delta)}$, $\tilde{x} := -\log_{10}{\delta}$, $b := -\log_{10}{a^{(r)}}$ and observing that the regime $\delta \gtrapprox 0$  corresponds to $\tilde{x} \gg 1$ we find
\begin{equation}\label{numlinearreg}
\tilde{y} \approx b + r \tilde{x}   \qquad \mbox{for} \, \tilde{x} \gg 1\,.
\end{equation}
By plotting such an approximately linear graph for $\tilde{y}(\tilde{x})$ we can determine the parameters $a^{(r)}$ and $r$. This is done for the first few NONs by analyzing the functions
\begin{eqnarray}
1-\lambda_i(\delta) &=& a_i^{(r_i)}\, \delta ^{r_i} + \mbox{O}(\delta ^{r_i+2}) \qquad \mbox{for} \, i=1,2,3 \nonumber \\
\lambda_j(\delta) &=& a_j^{(r_j)}\, \delta ^{r_j} + \mbox{O}(\delta ^{r_j+2}) \qquad \mbox{for} \, j\geq 4\qquad.
\end{eqnarray}
and also for the polytope distances $D_i^{(3,7)}$ (\ref{set37}).
The corresponding double-logarithmic plots are presented in Fig.~\ref{fig:nons} and \ref{fig:Ds}. There we chose the small interaction strengths $\delta= 10^{-1},10^{-2},10^{-3}$. In all those plots a linearly fitted curve describes the three points quite well.
For the plot describing the leading corrections of $\lambda_1$ the slope is $6$, i.e.~ $1-\lambda_1(\delta) \sim \delta^6$. For $\lambda_4$ we find the slope $4$.
\begin{figure}
\includegraphics[scale=0.50]{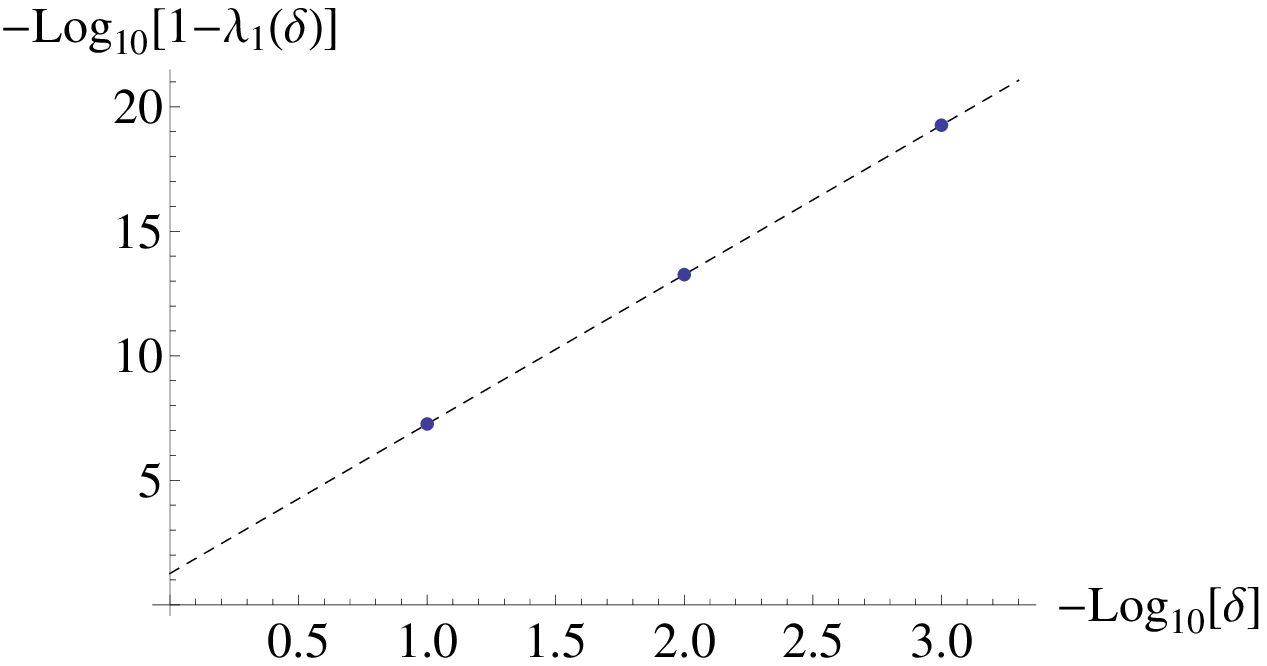}
\hspace{0.1cm}
\includegraphics[scale=0.50]{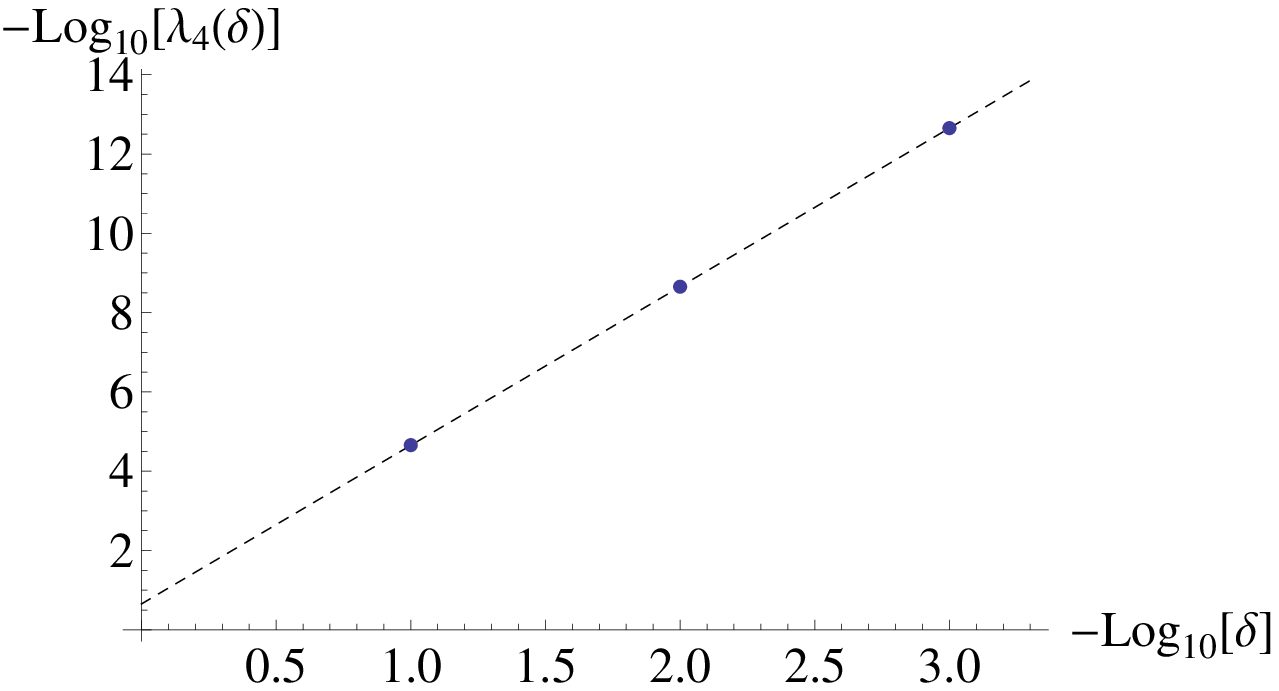}
\captionC{Double-logarithmic plots of the quantities $1-\lambda_1,\lambda_4$ for the interaction strengths $\delta= 10^{-1}, 10^{-2}, 10^{-3}$ for the $3$-Harmonium ground state. Moreover, a linear curve fitting the three points is shown.}
\label{fig:nons}
\end{figure}
Intriguingly, the saturations $D_1^{(3,7)}$ and $D_4^{(3,7)}$ presented in Fig.~\ref{fig:Ds} have slope $8$. This saturation behavior $D_i^{(3,7)}\sim \delta^8$ is surprising and was not expected. Since $\vec{\lambda}$ has a distance $\delta^4$ to the Hartree-Fock point, this quasi-pinning of $\delta^8$ is non-trivial.
\begin{figure}
\includegraphics[scale=0.50]{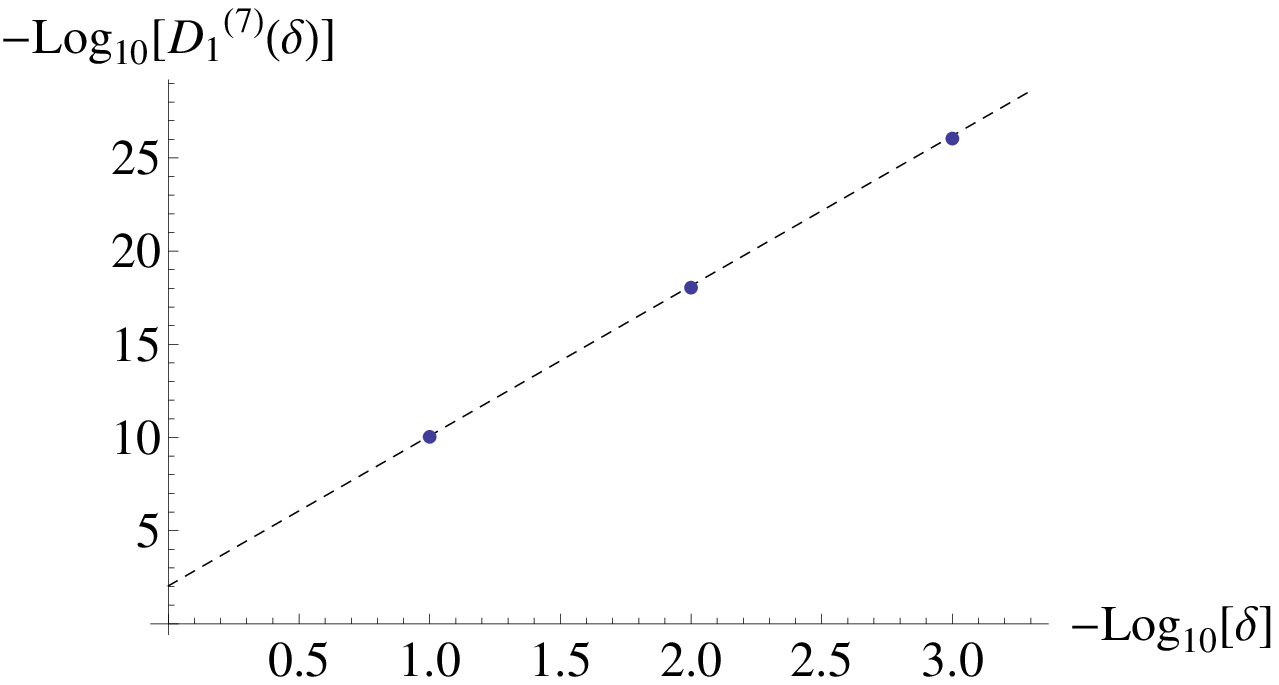}
\hspace{0.1cm}
\includegraphics[scale=0.50]{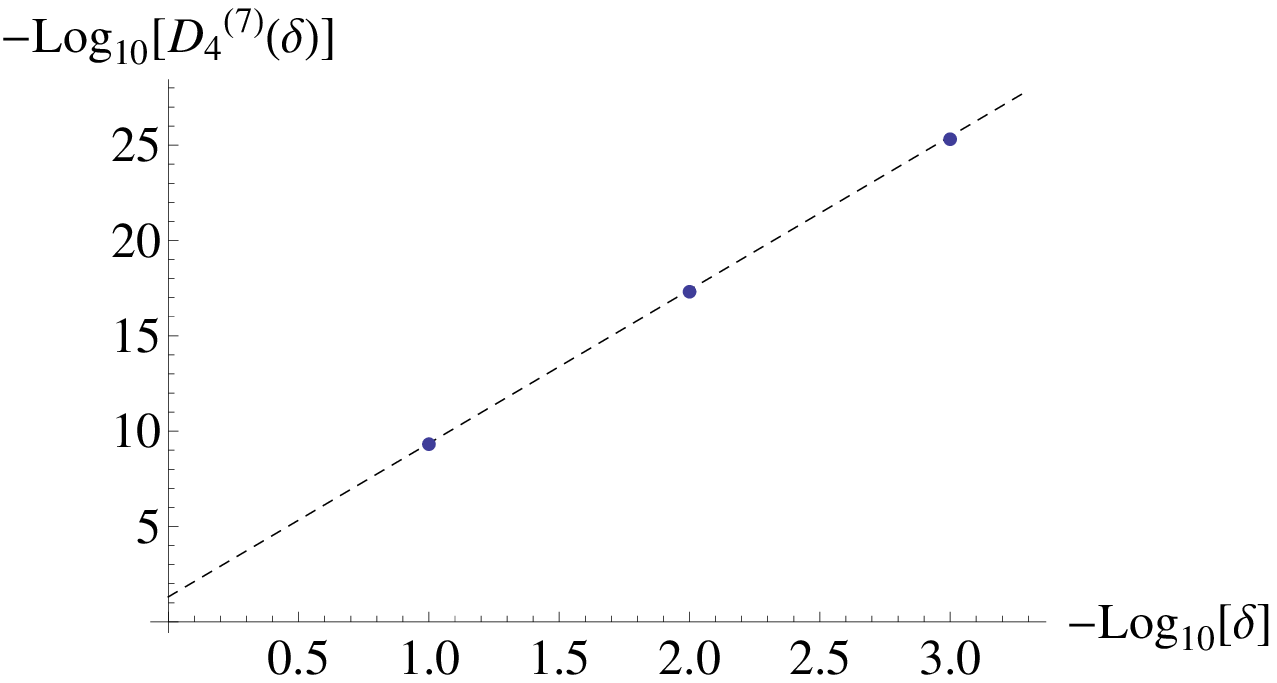}
\captionC{Double-logarithmic plots of the saturations $D_1^{(3,7)}, D_4^{(3,7)}$ for the interaction strengths $\delta= 10^{-1}, 10^{-2}, 10^{-3}$ for the $3$-Harmonium ground state. Moreover, a linear curve fitting the three points is shown.}
\label{fig:Ds}
\end{figure}

We summarize the numerical results in Tab.~\ref{tab:NONs3Harmnanalyt} and Tab.~\ref{tab:Ds3Harmnanalyt}. Calculating $\lambda_k$ for smaller and smaller interaction strengths we reached the linear regime (\ref{numlinearreg}) arbitrarily well and it was possible to predict
the coefficients of the leading order terms presented in Tabs.~\ref{tab:NONs3Harmnanalyt}, \ref{tab:Ds3Harmnanalyt}.
\begin{table}[h]
\centering
$
\setlength{\arraycolsep}{0.065cm}
\renewcommand{\arraystretch}{1.2}
\begin{array}{c|c|c|c|c|c|c|c|c}
& 1-\lambda_1 & 1-\lambda_2 & 1-\lambda_3 & \lambda_4 &\lambda_5 &\lambda_6 & \lambda_7 & \lambda_8\nonumber \\
\hline
r_i & 6 &4&4&4&4&6&8 &10\nonumber \\
\hline
a_i^{(r_i)}&  0.055&  0.222 &  0.222 &  0.222 &  0.222 &  0.055 &  0.037 &  0.011  \nonumber \\
& \approx \frac{40}{729}&  \approx\frac{2}{9} &  \approx\frac{2}{9} &  \approx\frac{2}{9} &  \approx\frac{2}{9} &  \approx\frac{40}{729} & \approx\frac{80}{2187} &  \approx\frac{224}{19683}
\end{array}
$
\captionC{Leading order of the largest eight NONs of the $3$-Harmonium ground state in the regime of weak interaction, $\delta \ll 1$.}
\label{tab:NONs3Harmnanalyt}
\end{table}

\begin{table}[h]
\centering
$
\renewcommand{\arraystretch}{1.2}
\begin{array}{c|c|c|c|c|c}
&D^{(3,6)}_1&D^{(3,7)}_1&D^{(3,7)}_2&D^{(3,7)}_3&  D^{(3,7)}_4 \nonumber \\ \hline
r_i & 8&8&8&8&8 \nonumber \\ \hline
a_i^{(r_i)} & 0.076&  0.009 &  0.041 & 0.023 &  0.049 \nonumber \\
 & \approx \frac{4510}{59049} & \approx \frac{20 }{2187} & \approx \frac{10 }{243}& \approx \frac{50 }{2187} & \approx \frac{2890 }{59049}
\end{array}
$
\captionC{Leading order of the saturations for the settings $\wedge^3[\mathcal{H}_1^{(6)}]$ and $\wedge^3[\mathcal{H}_1^{(7)}]$ of the $3$-Harmonium ground state in the regime of weak interaction, $\delta \ll 1$.}
\label{tab:Ds3Harmnanalyt}
\end{table}

From a numerical point of view it is quite difficult to determine the higher-order coefficients $a_i^{(r_i+2)}, a_i^{(r_i+4)}, \ldots$. Therefore, to understand and also to verify the results for the NONs and the saturations in the regime of weak interaction we apply degenerate Rayleigh-Sch\"odinger perturbation to the $1$-RDO represented as matrix w.r.t.~the bosonic NOs, i.e.~(\ref{1RDOfmat1}).

Notice that due to the duality (\ref{duality}) the expansion of $\lambda_k(\delta)$ as function of $\delta$ contains only even order terms, which makes the perturbation theory simpler. As a first step, we expand the matrix $\rho_1$ up to $\delta^{10}$
\begin{equation}
\rho_1 = \sum_{k=0}^{10} \rho_{1,k}\,\delta^k + O(\delta^{11})\,.
\end{equation}
The unperturbed matrix $\rho_{1,0}$ is highly degenerate. The eigenvalue $1$ is three times degenerate and the other one, $0$, infinitely many
times. The degenerate perturbation theory requires as first step to decouple these two eigenblocks by applying an appropriate unitary transformation $U_0$,
\begin{equation}\label{ptunitary}
\tilde{\rho}_1 = U_0^{\dagger}\rho_1 U_0\,.
\end{equation}
Then afterwards we can consider both blocks separately. Both steps are done in the Appendix \ref{app:fermionicPT} and we just present the results here.
We find for the NONs
\begin{eqnarray}\label{spectrum}
1-\lambda_1 &= & \frac{40}{729} {\delta}^6 - \frac{1390}{59049} {\delta}^8 + O(\delta^{10}) \nonumber \\
1-\lambda_2 &= & \frac{2}{9} {\delta}^4 - \frac{232}{729}{\delta}^6 + \frac{3926}{10935} {\delta}^8 +O(\delta^{10}) \nonumber  \\
1-\lambda_3 &=& \frac{2}{9}{\delta}^4 - \frac{64}{243}{\delta}^6 + \frac{81902}{295245}{\delta}^8 +O(\delta^{10}) \nonumber \\
\lambda_4 &= & \frac{2}{9}{\delta}^4 - \frac{64}{243}{\delta}^6 + \frac{73802}{295245}{\delta}^8 + O(\delta^{10}) \nonumber \\
\lambda_5 &= &\frac{2}{9} {\delta}^4 - \frac{232}{729} {\delta}^6 + \frac{3976}{10935} {\delta}^8 +O(\delta^{10}) \nonumber \\
\lambda_6 &= & \frac{40}{729} {\delta}^6 - \frac{2200}{59049} {\delta}^8 + O(\delta^{10}) \nonumber \\
\lambda_7 &= & \frac{80}{2187} {\delta}^8 + O(\delta^{10}) \nonumber \\
\lambda_8 &=& O(\delta^{10}) \nonumber \\
\lambda_9 &= & O(\delta^{12})\,.
\end{eqnarray}
Moreover, there is strong numerical evidence for the hierarchy $\lambda_k \sim \delta^{2k-6}$ for $k>9$. The analytical result (\ref{spectrum}) confirm the numerical results presented in Tab.~\ref{tab:NONs3Harmnanalyt} and Tab.~\ref{tab:Ds3Harmnanalyt}.

\subsection{Pinning analysis for weak interaction}\label{sec:pinninganalysis}
In this section we use the analytical results for the NONs (\ref{spectrum}) and perform a detailed pinning analysis (recall Sec.~\ref{sec:pinninganalysis}) for the regime of weak interaction. We first start by considering only terms on the scale $\delta^4$, then afterwards step by step include further orders in $\delta$.
\begin{itemize}
\item \emph{scale $\delta^4$}: On that scale all NONs are pinned by the Pauli exclusion principle to either $1$ or $0$, except $\lambda_2,\lambda_3,\lambda_4,\lambda_5$, which have first corrections of order $\delta^4$.  This means that the distance to the Hartree-Fock point
    behaves as $\delta^4$. Moreover, since we can truncate to the setting $\wedge^2[\mathcal{H}_1^{(4)}]$, there are only strict equalities (see Lem.~\ref{lem:setting2}). As consistency check, we observe indeed that
    \begin{equation}
    \lambda_2 = \lambda_3 + O(\delta^6)\qquad,\,\lambda_4 = \lambda_5 + O(\delta^6)\,.
    \end{equation}
\item \emph{scale $\delta^6$}: On this finer scale, also the eigenvalues $\lambda_1,\lambda_6$ have contributions. Therefore, we deal here with the setting $\wedge^3[\mathcal{H}_1^{(6)}]$, where the constraints are given by (\ref{set36}). Three of them are equalities. As a consistency check we notice that
    \begin{eqnarray}
    1-(\lambda_1+\lambda_6)&=& \frac{10}{729}\delta^8 + O(\delta^{10}) \nonumber \\
    1-(\lambda_2+\lambda_5)&=& \frac{10}{2187}\delta^8 + O(\delta^{10}) \nonumber \\
    1-(\lambda_3+\lambda_4)&=& \frac{20}{729}\delta^8 + O(\delta^{10}) \,,
    \end{eqnarray}
    i.e.~the equalities are indeed satisfied on the scale $\delta^6$. The only saturation that we can investigate is that of constraint $D^{(3,6)}(\vec{\lambda})\geq 0$. Geometrically, this means to determine the distance of the truncated NONs to the boundary of the polytope $\mathcal{P}_{3,6}$. Due to the three equalities for the Borland-Dennis setting (\ref{set36}) $\mathcal{P}_{3,6}$ is just three dimensional and we can visualize it. By choosing $\vec{v}\equiv (\lambda_4,\lambda_5,\lambda_6)$ as the vector of independent variables we present $\mathcal{P}_{3,6}$ qualitatively in Fig.~\ref{fig:traj} in the neighborhood of the Hartree-Fock point $\vec{v}^{(a)} \equiv (0,0,0)$.
    \begin{figure}[!h]
    \includegraphics[width=11.5cm]{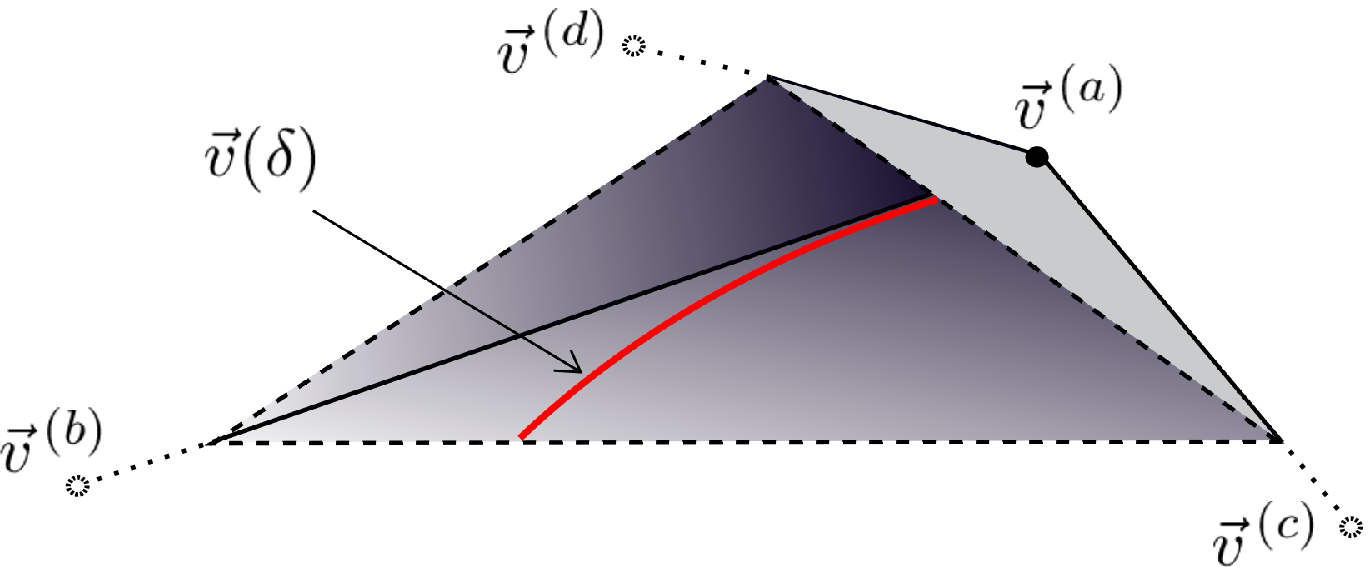}
    \centering
    \captionC{Polytope $\mathcal{P}_{3,6}$ (qualitatively) in the neighborhood of the Hartree-Fock point $\vec{v}^{(a)}$ and the spectral trajectory $\vec{v}(\delta)$ for the $3$-Harmonium ground state.}
    \label{fig:traj}
    \end{figure}
    The other three vertices are indicated. They are given by
    \begin{equation}\label{verticesBD}
    \vec{v}^{(b)} \equiv (\frac{1}{2},\frac{1}{2},0)\,,\,\,    \vec{v}^{(c)} \equiv (\frac{1}{2},\frac{1}{4},\frac{1}{4})\,,\,\,    \vec{v}^{(d)} \equiv (\frac{1}{2},\frac{1}{2},\frac{1}{2})\,.
    \end{equation}
    The spectral ``trajectory'' $\vec{v}(\delta)$ is drawn as red curve in Fig.~\ref{fig:traj}. It starts at the Hartree-Fock point vertex
    $\vec{v}^{(a)}$ which corresponds  to the non-interacting situation $\delta=0$. When increasing the fermion-fermion interaction, $\vec{v}(\delta)$ leaves the vertex $\vec{v}^{(a)}$ and moves along the edge $(\vec{v}^{(a)}, \vec{v}^{(b)})$, the distance to $\vec{v}^{(a)}$ growing as $\delta^4$.
    On the finer scale $\delta^6$, $\vec{v}(\delta)$ also moves away from the edge but is still \emph{pinned to the boundary}
    of the polytope, lying on the 2-facet spanned by $\vec{v}^{(a)}, \vec{v}^{(b)}$ and $\vec{v}^{(c)}$. This is the bottom area in Fig.~\ref{fig:traj}, corresponding to saturation of constraint $D^{(3,6)}\geq 0$ (\ref{set36}).

    From (\ref{spectrum}), we can infer that the distance to the 2-facet $(\vec{v}^{(a)}, \vec{v}^{(b)}, \vec{v}^{(c)})$ increases as $\delta^8$, \begin{equation}\label{pinning}
    D^{(3,6)}(\delta) = \zeta^{(6)} \,\delta^8 + O(\delta^{10})
    \end{equation} with $\zeta^{(6)}= \frac{4510}{59049}$. This means we find exact pinning on a scale $\delta^6$. To conclude that we have only quasi-pinning on the absolute scale is not possible, as the distance to the boundary is of the same order
    -- $\delta^8$ -- as the truncation error.
\item \emph{scale $\delta^8$}: To improve the accuracy of our pinning analysis we take the seventh-largest NON, $\lambda_7$, into account.
    For the four generalized Pauli constraints (\ref{set37}) we find the saturations
    \begin{equation}
    D_i^{(3,7)}  = \zeta_i^{(7)}{\delta}^8+ O({\delta}^{10}) \,\,\,,\label{pinning2}
    \end{equation}
    with $\zeta_1^{(7)}=\frac{20 }{2187}$, $\zeta_2^{(7)}=\frac{10 }{243}$,  $\zeta_3^{(7)}=\frac{50 }{2187}$, $\zeta_4^{(7)}=\frac{2890 }{59049}$.
    Here in the $\wedge^3[\mathcal{H}_1^{(7)}]$-analysis, the new result is that all four distances $D^{(3,7)}_i$ are non-zero to a smaller order
    ($\delta^8$) than the error of spectral truncation ($\delta^{10}$) of the NONs.
\end{itemize}
According to the concept of truncation explained in Sec.~\ref{sec:truncation}, this shows that the absence of pinned spectra is genuine, rather than an artifact of
the truncation. Given this, the quasi-pinning found here is surprisingly strong. In
particular it exceeds by four additional orders the
(quasi-)pinning by Pauli's exclusion principle constraints
(\ref{PauliConstraint}),
\begin{equation}\label{PauliConstraint2}
0\leq 1-\lambda_2(\delta), 1-\lambda_3(\delta), \lambda_4(\delta),
\lambda_5(\delta) = \frac{2}{9}\,\delta^4 + O(\delta^{6}).
\end{equation}

To finish the pinning analysis for the $3$-Harmonium ground state we would like to emphasize again that quasi-pinning does not only occur for weak interaction but even for medium one (see numerical results in Tab.~\ref{tab:Ds}).


\subsection{Numerical results for larger particle numbers}\label{sec:4fermions}
The result on strong quasi-pinning found for the $3$-Harmonium ground state in the previous section, Sec.~\ref{sec:pinninganalysis}, is quite surprising. We are wondering whether this effect also shows up for larger particle numbers $N>3$. Do we find again quasi-pinning of order $\delta^8$ or might it be even stronger? To which reduced setting can we truncate our pinning analysis? For $N>3$, do more than six NONs contribute to the pinning effect or do we always fall back to the Borland-Dennis setting $\wedge^3[\mathcal{H}_1^{(6)}]$? The latter would emphasize the relevance of the active space picture: For large particle number $N$ and weak interaction some of the fermions are frozen in the lowest $1$-particle states of the harmonic confinement potential and all the highly excited eigenstates of the trap also do not contribute to the physical behavior of the system. In that sense we could restrict to the active space given by the $1$-particle states around the Fermi level and study the physics of the remaining non-frozen fermions within that finite dimensional subspace. Such approximations are quite popular in condensed matter physics and are also used for variational optimizations in quantum chemistry, manifested in the so-called complete active space self-consistent field methods (CASSCF).

The strategy for determining the NONs of the $N$-Harmonium ground state for $N>3$ is the same as that for $N=3$ presented in the previous sections.
The ground state is given by (\ref{gsfermions}) and the $1$-RDO $\rho_1(x,y)$ in spatial representation can be calculated for each fixed $N$ and takes the form (\ref{1RDOf}). Finally by choosing the analytically known bosonic NOs as reference basis we can express $\rho_1$ as a matrix. By using the same analytical and numerical methods as in Sec.~\ref{sec:NONsfermionic}, D.Ebler determined the NONs for $N=4$ \cite{Ebler}. We have extended these results by numerical brute force up to $N= 8$. From those results we conjecture the leading order behavior in the regime of weak interaction for arbitrary $N$ shown in Tab.~\ref{tab:NONsNHarmnanalyt}.
\begin{table}[h]
\centering
\renewcommand{\arraystretch}{1.2}
$\begin{array}{|c|c|c|c||c|c|c|c|}
\hline
 1-\lambda_{N-3} & 1-\lambda_{N-2} & 1-\lambda_{N-1} &1-\lambda_N& \lambda_{N+1} & \lambda_{N+2} & \lambda_{N+3} & \lambda_{N+4} \\
\hline
\delta^8 & \delta^6 & \delta^4 & \delta^4 & \delta^4 & \delta^4 & \delta^6 & \delta^8 \\
\hline
\end{array}
$
\captionC{NONs of the $N$-Harmonium ground state close to the ``Fermi level'' in the regime of weak interaction, $\delta \ll 1$, up to corrections of order $\delta^{10}$.}
\label{tab:NONsNHarmnanalyt}
\end{table}

For the other indices $k$ the NONs follow the hierarchy
\begin{eqnarray}
1-\lambda_{N+1-k} &\sim& \delta^{2 k}\qquad,k=5,6,\ldots,N \nonumber \\
\lambda_{N+k} &\sim & \delta^{2 k} \qquad,k=5,6,7,\ldots \,.
\end{eqnarray}
We clearly see the existence of an active space. By considering the NONs up to corrections to $1$ or $0$ of order $\delta^{2r}$ only $r$ fermions are active in a $2r$-dimensional $1$-particle Hilbert space.

Given the algebraic behavior of the NONs we can perform a systematic pinning analysis by considering different scales $\delta^{2r}$ starting with $\delta^4$ and continuing with $\delta^6$ and $\delta^8$ as it was done for $N=3$ in Sec.~\ref{sec:pinninganalysis}. On the scale $\delta^6$
the active space is given by the Borland-Dennis setting $\wedge^3[\mathcal{H}_1^{(6)}]$. As saturation for $D^{(3,6)}$ (recall (\ref{set36})) we find again
\begin{equation}
D^{(3,6)} \sim \delta^8\,,
\end{equation}
which is of the same order as the truncation error, given by $\lambda_{N+4}, 1-\lambda_{N-3}\sim \delta^8$. On the finer scale $\delta^8$ we need to consider the NONs $\lambda_{N-3},\ldots,\lambda_{N+4}$, i.e.~we have to consider the setting $\wedge^4[\mathcal{H}_1^{(8)}]$. All generalized Pauli constraints turn out to be saturated up to corrections of order $\delta^8$. This means the $N$-Harmonium ground state is pinned for all $N$ up to corrections of order $\delta^8$. Moreover, the pinning is reducible to the Borland-Dennis setting defined by three fermions in the six modes around the Fermi level. We are wondering whether there exists a physical model for few fermions confined to some trap with ground state quasi-pinning not reducible to just three fermions. This question seems to be very challenging and we expect that one needs to understand the mechanism behind pinning to answer it.

\subsection{Excited states}\label{sec:excited}
In this chapter we investigate possible pinning of the first few excited $3$-Harmonium states. This will give some first insight into the mechanism behind the effect of ground state quasi-pinning found in Sec.~\ref{sec:NONsfermionic} and Sec.~\ref{sec:pinninganalysis}.

As a first step we calculate the first few excited $N$-fermion states. For this recall (\ref{Smatrix}), (\ref{eigenspaces}), expansions of the form (\ref{gsfermionicexpan}) and that $y_1(\vec{x})$ depends symmetrically on the physical coordinates $x_1,\ldots,x_N$. Then, in the same way as explained in Sec.~\ref{sec:model} for the ground state, we can determine all fermionic excitations by studying the $N$-particle subspaces $S_{n_-,n_+}$, separately. Moreover, according to the structure of $y_1(\vec{x})$ we can restrict ourself to the spaces $S_{0,n_+}$, i.e.~states with zero center of mass excitations. The fermionic excited states lying in $S_{n_-,n_+}$ with $n_- >0$ can be obtained by multiplying those in $S_{0,n_+}$ with $H_{n_-}(y_1(\vec{x}))$ (see also the structure in Eq.~(\ref{gsfermionicexpan})). By either using symbolic tools like Mathematica or using the results from \cite{harmOsc2012} we can easily determine the first few excited fermionic states. Since they are quite lengthy, they are not presented explicitly.

\subsubsection{Zero interaction}\label{sec:zeroInt}
First, we study pinning of the lowest few excited states of our harmonic model (\ref{HamHarmonium}) for the case of zero fermion-fermion interaction, i.e.~$\delta=0$. Notice, that this task is trivial for most $1$-dimensional confinement potentials $V(x)$: For zero interaction, the $N$-particle Schr\"odinger equation effectively reduces to a $1$-particle equation, the problem of diagonalizing the $1$-particle Hamiltonian $\frac{p^2}{2m} +V(x)$.
Given the corresponding $1$-particle solution, energies $\{\epsilon_i\}$ and eigenstates $\{|i\rangle\}$, we find as $N$-particle eigenstates single Slater determinants $|i_1,\ldots,i_N\rangle$ with energies $E_{\emph{\textbf{i}}} = \epsilon_{i_1}+\ldots + \epsilon_{i_N}$.
For generic $V(x)$ these energies $E_{\emph{\textbf{i}}}$ are not degenerate and the corresponding eigenstates are unique.
Consequently, every fermionic eigenstate is then pinned to the Hartree-Fock point on the boundary of the polytope and it makes sense to investigate its possible pinning after switching on a small $2$-particle interaction driving the NONs away from the Hartree-Fock point. In contrast to such generic models, our harmonic trap model has degenerate $N$-particle eigenenergies for zero interaction since the energy of the center of mass excitation, $\hbar \omega_-$ and relative excitation energy $\hbar \omega_+$ are identical. As a consequence, the $N$-particle states are not uniquely defined
and from Sec.~\ref{sec:model} we can infer that the degeneracy is $n_+-n_+^{(f)}-1$. Nevertheless, since the degeneracies do vanish for finite fermion-fermion interaction $\delta>0$, we can define the zero interaction states naturally via
\begin{equation}
\Psi_{n_-,n_+}(\delta=0) := \lim_{\delta\rightarrow 0}\,\Psi_{n_-,n_+}(\delta)\,.
\end{equation}
\begin{table}[h]
\centering
$
\setlength{\arraycolsep}{0.04cm}
\renewcommand{\arraystretch}{1.3}
\begin{array}{c|cc|cl|cc|c}
 \#&n_- & \Delta n_+ &  & \mbox{NONs} & d_{eff} & N_{eff} &  \\
 \hline
 0 & 0 & 0 &  & \{1,1,1,0,0,0,0,0,0,0,0\} & 0 & 0 & \text{p} \\
 \hline
 1 & 1 & 0 &  & \{1,1,1,0,0,0,0,0,0,0,0\} & 0 & 0 & \text{p} \\
 \hline
 2 & 2 & 0 &  & \left\{1,\frac{2}{3},\frac{2}{3},\frac{1}{3},\frac{1}{3},0,0,0,0,0,0\right\} & 4 & 2 & \text{p} \\
 3 & 0 & 2 &  & \left\{1,\frac{2}{3},\frac{2}{3},\frac{1}{3},\frac{1}{3},0,0,0,0,0,0\right\} & 4 & 2 & \text{p} \\
 \hline
 4 & 3 & 0 &  & \left\{\frac{26}{27},\frac{17}{27},\frac{16}{27},\frac{11}{27},\frac{10}{27},\frac{1}{27},0,0,0,0,0\right\} & 6 & 3 & \text{p} \\
 5 & 1 & 2 &  & \left\{\frac{7}{9},\frac{7}{9},\frac{5}{9},\frac{4}{9},\frac{2}{9},\frac{2}{9},0,0,0,0,0\right\} & 6 & 3 & \text{p} \\
 6 & 0 & 3 &  & \left\{\frac{25}{27},\frac{22}{27},\frac{20}{27},\frac{7}{27},\frac{5}{27},\frac{2}{27},0,0,0,0,0\right\} & 6 & 3 & \text{p} \\
 \hline
 7 & 4 & 0 &  & \left\{\frac{8}{9},\frac{2}{3},\frac{5}{9},\frac{8}{27},\frac{7}{27},\frac{5}{27},\frac{4}{27},0,0,0,0\right\} & 7 & 3 & \text{p} \\
 8 & 2 & 2 &  & \left\{\frac{8}{9},\frac{2}{3},\frac{5}{9},\frac{4}{9},\frac{1}{3},\frac{1}{9},0,0,0,0,0\right\} & 6 & 3 & \text{p} \\
 9 & 1 & 3 &  & \left\{\frac{20}{27},\frac{19}{27},\frac{2}{3},\frac{4}{9},\frac{5}{27},\frac{4}{27},\frac{1}{9},0,0,0,0\right\} & 7 & 3 & \text{p} \\
 10 & 0 & 4 &  & \left\{1,\frac{5}{9},\frac{5}{9},\frac{1}{3},\frac{1}{3},\frac{1}{9},\frac{1}{9},0,0,0,0\right\} & 6 & 2 & \text{p} \\
 \hline
 11 & 5 & 0 &  & \left\{\frac{64}{81},\frac{44}{81},\frac{37}{81},\frac{32}{81},\frac{10}{27},\frac{8}{27},\frac{7}{81},\frac{5}{81},0,0,0\right\} & 8 & 3 & \text{p} \\
 12 & 3 & 2 &  & \left\{\frac{22}{27},\frac{50}{81},\frac{35}{81},\frac{25}{81},\frac{7}{27},\frac{20}{81},\frac{16}{81},\frac{10}{81},0,0,0\right\} & 8 & 3 & \text{p} \\
 13 & 2 & 3 &  & \left\{\frac{56}{81},\frac{16}{27},\frac{47}{81},\frac{14}{27},\frac{25}{81},\frac{14}{81},\frac{10}{81},\frac{1}{81},0,0,0\right\} & 8 & 3 & \text{p} \\
 14 & 1 & 4 &  & \left\{\frac{19}{27},\frac{5}{9},\frac{11}{27},\frac{1}{3},\frac{8}{27},\frac{7}{27},\frac{7}{27},\frac{5}{27},0,0,0\right\} & 8 & 3 & \text{np} \\
 15 & 0 & 5 &  & \left\{\frac{20}{27},\frac{14}{27},\frac{35}{81},\frac{35}{81},\frac{28}{81},\frac{25}{81},\frac{14}{81},\frac{4}{81},0,0,0\right\} & 8 & 3 & \text{np}
\end{array}
$
\captionC{NONs of the fifteen lowest $3$-Harmonium eigenstates $\Psi_{n_-, n_+^{(f)}+\Delta n_+}$ for zero interaction. In addition, indication of pinning (p) or no pinning (np) in the effective setting $\wedge^{N_{eff}}[\mathcal{H}_1^{(d_{eff})}]$.}
\label{tab:NONs3Harmexcitations0}
\end{table}

For all those zero-interaction energy states $\Psi_{n_-,n_+^{(f)}+\Delta n_+}(0)$ up to eight additional excitations, $n_- + \Delta n_+ \leq 8$ relative to the fermionic ground state $\Psi_0 \in S_{0,n_+^{(f)}}$, we calculate the NONs and present them (up to five additional excitations) in Tab.~\ref{tab:NONs3Harmexcitations0}. There, we can see that the lowest thirteen excited states converge for $\delta \rightarrow 0^+$ to the boundary of the polytope. However, this is not true anymore for the further, higher excited states presented in Tab.~\ref{tab:NONs3Harmexcitations0}.
The same also holds for the excited states with $n_- + \Delta n_+ = 6,7$, not shown in Fig.~\ref{tab:NONs3Harmexcitations0}. For those with $n_- + \Delta n_+ = 8$ such a statement on zero-interaction pinning is not possible anymore since their effective settings are $\wedge^3[\mathcal{H}_1^{(d)}]$ with $d\geq 11$ and the corresponding polytopes are not known yet. Nevertheless, we expect that they follow the trend seen for the states with $n_- + \Delta n_+ \leq 7$, i.e.~they are not pinned.

We shortly present all generalized Pauli constraints, which are saturated for the states discussed in Tab.~\ref{tab:NONs3Harmexcitations0}. First, note that for $\#=0,1,2,3,10$ the pinning is trivial, it immediately follows from the small value for the effective particle number and the effective dimension of the $1$-particle Hilbert space. The cases $\#=4,5,6,8$ belong to the Borland-Dennis setting, which has only one generalized Pauli constraint, $D^{(3,6)}\geq 0$ (cf.~(\ref{set36})). Now, we still present for all the other cases with pinning, $\#=7,9,11,12,13$, the constraints which are saturated:
\begin{itemize}
\item $\#7: $ We find saturation of
\begin{eqnarray}
D^{(3,7)}_1 &\equiv& 2-(\lambda_1+\lambda_2+\lambda_5+\lambda_6) \geq 0  \nonumber \\
D^{(3,7)}_4 &\equiv& 2-(\lambda_1+\lambda_2+\lambda_4+\lambda_7) \geq 0
\end{eqnarray}
\item $\#9: $ Besides the saturation of $D^{(3,7)}_4\geq 0$ we also find saturation of
\begin{eqnarray}
D^{(3,7)}_2 &\equiv& 2-(\lambda_1+\lambda_3+\lambda_4+\lambda_6) \geq 0  \nonumber \\
D^{(3,7)}_3 &\equiv& 2-(\lambda_2+\lambda_3+\lambda_4+\lambda_5) \geq 0
\end{eqnarray}
\item $\#11: $ We find saturation of
\begin{eqnarray}
D^{(3,8)}_1 &\equiv& 2-(\lambda_1+\lambda_2+\lambda_5+\lambda_6) \geq 0  \nonumber \\
D^{(3,8)}_7 &\equiv& 1-\lambda_1-\lambda_6+\lambda_7  \geq 0
\end{eqnarray}
\item $\#12: $ We find saturation of
\begin{eqnarray}
D^{(3,8)}_5 &\equiv& 1-\lambda_1-\lambda_2+\lambda_3 \geq 0  \nonumber \\
D^{(3,8)}_{19} &\equiv& -\lambda_1-\lambda_2+2\lambda_3  + \lambda_4 + \lambda_5  \geq 0
\end{eqnarray}
\item $\#13: $ We find saturation of
\begin{equation}
D^{(3,8)}_3 \equiv 2-(\lambda_2+\lambda_3+\lambda_4+\lambda_5) \geq 0
\end{equation}
\end{itemize}

\subsubsection{Finite interaction}\label{sec:finiteInt}
Using Mathematica we investigate the lowest six fermionic excited energy states of (\ref{HamHarmonium}), $1\leq n_- + \Delta n_+\leq 3$, for the weak interaction regime, $\delta \ll 1$ and determine the algebraic behavior of the distance of $\vec{\lambda}(\delta)$ to the polytope boundary as function of the fermion-fermion coupling strength $\delta$.
\begin{table}[h]
\centering
$
\setlength{\arraycolsep}{0.04cm}
\renewcommand{\arraystretch}{1.1}
\begin{array}{c|cc|cc|l|c}
\# & n_- & n_+ & N_{red} & d_{red} & \mbox{order of }|\lambda_i(\delta)-\lambda_i(0)| & \mbox{dist}(\vec{\lambda}(\delta),\partial P) \\
\hline
0&0&0&3&6& (6,4,4,4,4,6,8,10,12,\ldots) & \sim \delta^8 \nonumber \\
\hline
1&1&0&3&6&(4,2,2,2,2,4,8,8,\ldots)& \sim \delta^8 \nonumber \\
\hline
2&2&0&3&6&(4,2,2,2,2,4,4,8,8,\ldots)&\sim \delta^6 \nonumber \\
3&0&2&3&6&(4,2,2,2,2,4,4,8,8,\ldots)&\sim \delta^6 \nonumber \\
\hline
4&3&0&3&8&(2,2,2,2,2,2,4,4,8,\ldots)&\sim \delta^6 \nonumber \\
5&1&2&3&8&(2,2,2,2,2,2,4,4,\ldots)&\sim \delta^6 \nonumber \\
6&0&3&3&8&(2,2,2,2,2,2,4,4,8,8,\ldots)&\sim \delta^6
\end{array}
$
\captionC{Leading order behavior of the NONs for the lowest seven eigenstates of $3$-Harmonium and their corresponding distances to the polytope boundary   in the regime of weak interaction, $\delta \ll 1$. The quasi-pinning is reducible to the setting $\wedge^{N_{red}}[\mathcal{H}_1^{(d_{red})}]$.}
\label{tab:D3Harmex}
\end{table}

Results are presented in Tab.~\ref{tab:D3Harmex}. We observe that each of these first six excited states exhibits strong quasi-pinning, at least for weak interaction. Besides the ground state, also the first excitation is pinned up to corrections of order $\delta^8$. Remarkably, in contrast to the ground state behavior this is now six orders in $\delta$ stronger than its pinning to the Hartree-Fock point (order $\delta^2$). The other five exited states are pinned up to corrections of $\delta^6$. Unfortunately, the computational power did not allow us to study the pinning behavior of the next excitations. Although such quasi-pinning will not show up for the fourteenth and higher excitations (even for zero interaction they are not pinned to the boundary anymore) it would be instructive to investigate the intermediate regime, the excitations $\#7 -\#13$. We are wondering whether the pinning order succinctly reduces from $\delta^6$ to $\delta^4$ to $\delta^2$ and eventually reaches no pinning.

\subsection{Natural orbitals and decay behavior of their occupancies}\label{sec:NO}
The ultimate purpose of the present chapter, Chap.~\ref{chap:Physics}, is to study specific physical systems from the viewpoint of generalized Pauli constraints. So far, by studying the $N$-Harmonium model (\ref{HamHarmonium}), we have found the remarkable effect of quasi-pinning providing strong evidence that the generalized Pauli constraints have some influence on fermionic ground states. Moreover, from Sec.~\ref{sec:truncation} we already have learned that (quasi-)pinning, as effect in the $1$-particle picture, is physically relevant in the sense that it corresponds to very specific and simplified structures of the corresponding $N$-fermion quantum state, expanded w.r.t.~Slater determinants built up from the NOs. Although such strong structural insights are quite spectacular there is no chance to deduce any statement from possible pinning on the NOs. All these structural implications do not make a difference e.g.~between spatially localized or completely delocalized $1$-particle states. They are invariant under $1$-particle basis transformations, i.e.~unitary transformations of the form $U^{\otimes^N}$ acting on $\mathcal{H}_N^{(f)}$. However, for physical applications the concrete spatial form of $1$-particle quantum states is most relevant and at the heart of several physical effect. To shed some light on the NOs, we determine them numerically for the $N$-Harmonium ground state (\ref{gsfermions}). After all, we investigate the decay behavior of the fermionic NONs, in particular for the regime of strong interaction. This regime was not suitable at all for the pinning analysis since no truncation of the NONs to a sufficiently small setting was possible.

The strategy for this section is first to study the eigenvalue equation
\begin{equation}
\rho_1^{(f)} |\chi^{(f)}\rangle = \lambda^{(f)} |\chi^{(f)}\rangle,
\end{equation}
for the fermionic $1$-RDO represented as matrix w.r.t.~the bosonic NOs $\chi_m^{(b)}(x)$, the Hermite functions $\varphi_m^{(L_N)}(x) \equiv \langle x | m\rangle$. The analytic results we will find for the decay behavior of the NONs and the NOs are afterwards confirmed numerically.

We start by expressing the fermionic NOs w.r.t.~the bosonic ones.
\begin{equation}
|\chi^{(f)}\rangle = \sum_{m=0}^{\infty} \zeta_m |m\rangle\,.
\end{equation}
The eigenvalue equation for $\rho_1^{(f)}(x,y)$ reduces to a discrete equation for the expansion coefficients $\{\zeta_m\}$,
\begin{equation}\label{eigenprobmatrixf}
\sum_{n=0}^{\infty} \langle m |\rho_1^{(f)} | n \rangle \zeta_n = \lambda^{(f)} \zeta_m\,.
\end{equation}
In the following we choose the particle number $N$ arbitrary, but fixed. Using Eq.~(\ref{coefrelNOf}) from the Appendix \ref{app:NOsfermionic}, Eq.~(\ref{eigenprobmatrixf}) for sufficiently large $m$ reduces to
\begin{equation}\label{coefrelation}
m^{N-1} e^{-\beta_N \hbar  \Omega_N (m+\frac{1}{2})}\,\sum_{r=-(N-1)}^{N-1}\,h_{m,m-2r}\,\zeta_{m-2r} \simeq \lambda^{(f)} \zeta_m\,.
\end{equation}
Here we used, e.g.\ $\sqrt{m+r} \simeq \sqrt{m}$  for $m \gg 1$  and $r=O(1)$.
For illustration, we discuss this equation for  $N=2$ (for larger $N$ one can proceed similarly), i.e.
\begin{equation}\label{coefrelation2}
m e^{-\beta_2 \hbar  \Omega_2 (m+\frac{1}{2})}\left[h_- \zeta_{m-2} + h_0 \zeta_m +h_+ \zeta_{m+2}\right] \simeq \lambda^{(f)} \zeta_m,
\end{equation}
where $h_0 \equiv h_{m,m}, h_{\pm} \equiv h_{m,m\pm 2}$ do not depend on $m$. For vanishing interaction the eigenfunctions of $\rho_1^{(f)}(x,y)$ are the Hermite functions $\varphi_k^{(L_N)}$. Accordingly, we can label the eigenfunctions by $k$ and find for that case $|\chi_k^{(f)}\rangle = |k \rangle$ and thus $\zeta_m^{(k)} = \delta_{k,m}$. Turning on the interaction we expect the main contributions to $|\chi_k^{(f)}\rangle$ coming still from $|k\rangle$. Since $\zeta_{k\pm 2}$ is at most of the same order as $\zeta_k$, we conclude from Eq.~(\ref{coefrelation2}) with $m=k$ that
\begin{equation}\label{NONf}
\lambda^{(f)} \rightarrow \lambda^{(f)}_k \sim  k \, e^{-\beta_2 \hbar  \Omega_2 (k+\frac{1}{2})}\,, k\gg 1\,.
\end{equation}
For each $k$, $\zeta_m^{(k)}$ for $m \rightarrow \infty$ decays to zero, due to the normalization of  $\chi_k^{(f)}$. Therefore, as a consistency ansatz, let us assume that
$|\zeta_m^{(k)}/\zeta_{m-2}^{(k)}| \ll 1$ for $m\gg k$. This together with (\ref{coefrelation2}) and (\ref{NONf}) leads to
\begin{equation}\label{coefdecayr}
\frac{\zeta_m^{(k)}}{\zeta_{m-2}^{(k)}} \sim \frac{m}{k}\, e^{-\frac{1}{4}\beta_2 \hbar  \Omega_2 (m-k)}\,,
\end{equation}
which is indeed consistent with our assumption $|\zeta_m^{(k)}/\zeta_{m-2}^{(k)}| \ll 1$. Moreover, from (\ref{coefdecayr}) we obtain the Gaussian decay behavior
\begin{equation}
\zeta_m^{(k)} \sim  e^{-\frac{1}{4}\beta_2 \hbar  \Omega_2 (m-k)^2}
\end{equation}
for $k\gg1$ and $m\gg k$.
For the opposite regime, $1\ll m \ll k$, and taking $h_0 = O(m^0)$ into account we have
\begin{eqnarray}
|m e^{-\beta_2 \hbar  \Omega_2 (m+\frac{1}{2})} h_0 \zeta_m^{(k)}| & \gg & |\lambda_k \zeta_m^{(k)}| \nonumber \\
&\propto & |k  e^{-\beta_2 \hbar  \Omega_2 (k+\frac{1}{2})} \zeta_m^{(k)}|\,.
\end{eqnarray}
Eq.~(\ref{coefrelation2}) then implies
\begin{equation}\label{coefrelationl}
\zeta_m^{(k)} + \frac{h_-}{h_0} \,\zeta_{m-2}^{(k)} + \frac{h_+}{h_0} \,\zeta_{m+2}^{(k)}\simeq 0\,,
\end{equation}
which is solved by an exponential
\begin{equation}\label{coefdecayl}
\zeta_m^{(k)} \sim  e^{\alpha_2 (m-k)}\,,
\end{equation}
where $\alpha_2$ depends on $h_0, h_{\pm}$, but not on the orbital index $k$.  $\alpha_2$ can be determined by plugging the ansatz (\ref{coefdecayl}) into Eq.~(\ref{coefrelationl}) and solving the emerging quadratic equation for $e^{2\alpha_2}$. Since $\zeta_m^{(k)}$ for $1 \ll m \ll k$ should decay with decreasing $m$, the root with $Re(\alpha_2) > 0$ should be taken.

By just repeating all these steps for Eq.~(\ref{coefrelation}) we find for arbitrary $N$
\begin{equation}\label{NONfN}
\lambda^{(f)} \rightarrow \lambda^{(f)}_k \sim  k^{N-1}  e^{-\beta_N \hbar  \Omega_N (k+\frac{1}{2})}\,,k\gg 1\,.
\end{equation}
The decay behavior of $\zeta_m^{(k)}$  for $m\gg k$ is again Gaussian,
\begin{equation}\label{coefdecayrN}
\zeta_m^{(k)} \sim  e^{-\frac{1}{4(N-1)}\beta_N \hbar  \Omega_N (m-k)^2}\,,
\end{equation}
and exponential for $1\ll m \ll k$ ,
\begin{equation}\label{coefdecaylN}
\zeta_m^{(k)} \sim  e^{\alpha_N (m-k)}\,,
\end{equation}
where $\alpha_N$ depends on $h_0, h_{\pm r} \equiv h_{m,m\pm 2 r}, r=1,\ldots,N-1$, but not on the orbital index $k$. $\alpha_N$  is the root of a polynomial of degree $2(N-1)$ for which $Re(\alpha_N) > 0$.

In order to check all the analytical predictions made in this section for the NONs $\lambda_k^{(f)}$ and the NOs $\chi_k^{(f)}$ we have solved Eq.~(\ref{eigenprobmatrixf}) numerically for $N=3$ and $N=5$, by representing $\rho_1^{(f)}$ and the states $|\chi_k^{(f)}\rangle$ again w.r.t to the bosonic NOs $|\chi_m^{(b)}\rangle$ and then truncating the corresponding matrix $((\rho_1^{(f)})_{n,m})$ and vectors $(\zeta_m^{(k)}), \zeta_m^{(k)} \equiv \langle \chi_m^{(b)} |\chi_k^{(f)}\rangle $ at $m_{max}$. All the results presented here are obtained with $m_{max}=500$. As dimensionless interaction strengths we choose $l_+/l_- = 4/5, 1/2$ and $1/3$, which corresponds (according to Eq.~(\ref{LvsHook})) to $ND/(m \omega^2) = 369/256 \simeq 1.44, 15$ and $80$.
The numerical calculations in particular allow us to investigate $\lambda_k$ for the regime $k = O(1)$.
\begin{figure}[!h]
\includegraphics[width=8cm]{fig1}
\centering
\captionC{NONs $\lambda_k^{(f)}$ for $N=5$ and three different interaction strengths (see legend).}
\label{fig:non}
\end{figure}

Fig.~\ref{fig:non} depicts the NONs $\lambda_k^{(f)}$ for three different coupling strengths and  $N=5$. Note, even  for quite strong interaction the ``gap'' at the ``Fermi level'' is still well pronounced.
\begin{figure}[!h]
\includegraphics[width=8cm]{fig2}
\centering
\captionC{$k$-dependence of $-\ln{(\lambda_k k^{-(N-1)})}/(k+\frac{1}{2})$ for $N=3,5$ and interaction $l_+/l_- = 4/5$. The horizontal lines represent the asymptotic values $\beta_N \hbar \Omega_N$.}
\label{fig:nonbeta}
\end{figure}
Although the $\lambda_k^{(\alpha)}$'s for $k\gg N$ behave very similar for bosons and fermions this is not true  anymore for the regime $k = O(N)$ or smaller. Whereas $\lambda_k^{(b)}$ exhibit a purely exponential decay for all $k$, $\lambda_k^{(f)}$ has a ``gap'' at the ``Fermi level'' $k_F =N$. For zero interaction, it is
\begin{equation}\label{FermiDirac0}
\lambda_k^{(f)} = \begin{cases}
  1,  & k < k_F\\
  0, & k  \geq k_F\,.
\end{cases}
\end{equation}
With increasing interaction Fig. \ref{fig:non} demonstrates that $\lambda_k^{(f)}$ deviates from one for $k<k_F$ and from zero for $k \geq k_F$. The gap at $k_F$ becomes smaller but remains significantly large even for rather strong interactions. This behavior resembles the Fermi-Dirac distribution function. At zero ``temperature'', which corresponds to zero interaction, this distribution function is identical with the behavior in Eq.\ (\ref{FermiDirac0}). The ``softening'' of the $k$-dependence of $\lambda_k^{(f)}$ with increasing interaction strength corresponds to the softening of the Fermi-Dirac distribution for increasing temperature. It would be interesting to study $\lambda_k^{(f)}$ in the ``thermodynamic'' limit, i.e.\ $\lambda^{(f)}(\tilde{k}) = \lim_{N\rightarrow \infty} \lambda^{(f)}_{N \tilde{k}}$ and to investigate the dependence of the gap in $\lambda^{(f)}(\tilde{k})$ on the coupling constant $D$ at the ``Fermi level'' $\tilde{k}_F=1$, provided the gap survives the limit $N\rightarrow \infty$.

In Fig.~\ref{fig:nonbeta} we verify the dominant Boltzmann-like behavior found in Eq.~(\ref{NONfN}) for interaction strength $l_+/l_- = 4/5$ by plotting the $k$-dependence of $-\ln{(\lambda_k k^{-(N-1)})}/(k+\frac{1}{2})$,
which should converge for $k\rightarrow \infty$ to the constant $\beta_N \hbar \Omega_N$. Indeed, this happens since the curves are approaching the values $\beta_3 \hbar \Omega_3 \simeq 4.51$ and $\beta_5 \hbar \Omega_5 \simeq 4.83$ quite well.

One of the most remarkable results of our analysis is shown in Fig.~\ref{fig:coef}. Even for $l_+/l_- = 1/3$, which for $N=5$ corresponds to a very large coupling ratio $D/(m \omega^2) = 16$, the fermionic NOs $\chi_k^{(f)}$ are very well approximated by a superposition of very few bosonic orbitals $\chi_m^{(b)}$, with $m\approx k$, and the dominant weight comes from $m=k$. This also holds for smaller orbital indices $k$, than those shown in Fig.~\ref{fig:coef}, except for $k=2,3,\ldots,8$, those closer to the ``Fermi level''. Nevertheless, for comparable couplings, $D/(m\omega^2) \approx 1$ or weak interaction also the fermionic orbitals in the vicinity of the ``Fermi level'' are well approximated by the bosonic counterparts.
\begin{figure}[!h]
\includegraphics[width=8cm]{fig3}
\centering
\captionC{Expansion coefficients $\zeta_m^{(k)} \equiv \langle \chi_{m}^{(b)} | \chi_{k}^{(f)} \rangle$ for the NOs $\chi_{k}^{(f)}$, $k=30,100$ and $250$ for $N=5$ and $l_+/l_- = 1/3$ for the relevant regime $m \approx k$.}
\label{fig:coef}
\end{figure}
To verify the Gaussian decay, Eq.~(\ref{coefdecayrN}), for $m\gg k$ and for $m \ll k$ the exponential decay, Eq.~(\ref{coefdecaylN}),  of the expansion coefficients $\zeta_m^{(k)} \equiv \langle \chi_m^{(b)} |\chi_k^{(f)}\rangle$   we plot  $-\ln{(|\zeta_m^{(k)}|)}/(m-k)^2$ and $-\ln{(|\zeta_m^{(k)}|)}/(k-m)^2$, respectively, as a function of $m-k$. From Fig.~\ref{fig:coefr}, one can infer that $\zeta_m^{(k)}$  indeed decays Gaussian-like, and the decay constants are as predicted in Eq.~(\ref{coefdecayrN}), i.e.\ $\frac{1}{8}\beta_3 \hbar \Omega_3 \simeq 0.56$ and $\frac{1}{16}\beta_5 \hbar \Omega_5 \simeq 0.30$.
\begin{figure}[!h]
\includegraphics[width=6.5cm]{fig4a}\hspace{0.4cm}
\includegraphics[width=6.5cm]{fig4b}
\centering
\captionC{$-\ln{(|\zeta_m^{(k)}|)}/(m-k)^2$ as function of $m-k$ for the orbital indices $k= 30,100,250$ and  $l_+/l_- = 4/5$. Left:$N=3$ and  right: $N=5$. The horizontal lines represent $\beta_N \hbar \Omega_N/(4 (N-1))$.}
\label{fig:coefr}
\end{figure}
Fig.~\ref{fig:coefl} confirms the average exponential decay for the regime $1\ll m\ll k$.
\begin{figure}[!h]
\includegraphics[width=6.5cm]{fig5a}\hspace{0.4cm}
\includegraphics[width=6.5cm]{fig5b}
\centering
\captionC{$-\ln{(|\zeta_m^{(k)}|)}/(k-m)^2$ as function of $k-m$ for the orbital indices $k= 100,250$ and  $l_+/l_- = 4/5$. Left: $N=3$ and  right: $N=5$. }
\label{fig:coefl}
\end{figure}

We close this discussion of NOs by presenting one general idea connecting the result on pinning with that on the strong similarity between the fermionic and bosonic NOs.
For this we consider a system of $N$ identical interacting particles for which one would like to find a good approximation of the fermionic ground state $|\Psi_N\rangle$. Assume that the following two statements hold
\begin{enumerate}
\item The unknown fermionic ground state is (quasi)-pinned
\item The bosonic and fermionic NOs are very similar
\end{enumerate}
Then, if we can determine the bosonic ground state we can make an excellent guess for the fermionic ground state: Since pinning corresponds to specific structures of the corresponding $N$-fermion state (see Sec.~\ref{sec:truncation}), statement $1$ already fixes the structure of $|\Psi_N\rangle$, expanded in specific Slater determinants, built up from the unknown fermionic NOs. Moreover, due to point $2$ we can replace them by the known bosonic NOs.
This strategy fixes $|\Psi_N\rangle$ up two a few unknown expansion coefficients, which follow from the concrete position of the fermionic NONs on (close to) the boundary of the polytope. If this information is not feasible one can determine those coefficients variationally.

\section{Hubbard model with a few lattice sites}\label{sec:hubbard}
\subsection{Introduction}\label{sec:hubbardIntro}
In this section we investigate possible pinning for a second physical model, the Hubbard model. Although it is one of the most elementary models it describes quite well some of the most important features in condensed matter physics, as e.g.~the existence of Mott-insulators and their phase transition. However, since the generalized Pauli constraints are just known for settings with at most ten-dimensional $1$-particle Hilbert spaces we cannot study the macroscopic regime. Therefore, we restrict to microscopic situations, which also play some role, at least in quantum optics. By loading optical lattices with ultracold atoms effects of condensed matter physics can be reproduced \cite{Bloch05, Esslinger2008}.
In particular, as a new direction the transition of few-fermion physics to macroscopic physics is studied by considering just a few optical traps with ultracold fermionic atoms. Recent techniques already allow to control those fermionic atoms, tune their interaction and measure occupancies in the $1$-particle picture \cite{Jochim2013}.

The model, we consider is the $1$-band Hubbard model for electrons on a $1$-dimensional lattice $\mathcal{L}_X$ with $d$ sites and we choose periodic boundary conditions. The corresponding Hamiltonian in second quantization reads
\begin{equation}\label{HamHubbard}
H = -t\,\sum_{i=0}^{d-1} \sum_{\sigma}\,\left(c_{i+1\, \sigma}^\dagger\,c_{i \sigma}\,+\mbox{h.c.}\right)\,\,+\,\,u\,\sum_{i=0}^{d-1}\,n_{i\uparrow}\,n_{i\downarrow}
\end{equation}
where $c_{i\sigma}^\dagger$ and $c_{i\sigma}$ are the creation and annihilation operators for an electron on the $i$-th lattice site with spin $\sigma = \uparrow,\downarrow$ w.r.t.~the $3$-axis. They fulfill the standard anticommutation relations
\begin{equation}
\{c_{i\sigma},c_{j\mu}\}\,=\,\{c_{i\sigma}^\dagger,c_{j\mu}^\dagger\}\,=\,0\,\,,\,\,\,\,\{c_{i\sigma},c_{j\mu}^\dagger\}\,=\,\delta_{ij}\,\delta_{\sigma\mu}\,.
\end{equation}
Moreover, $n_{i \sigma}\,\equiv c_{i\sigma}^\dagger\,c_{i\sigma}$ is the particle number operator for the state on site $i$ with spin $\sigma$ and w.l.o.g. we set $t=1$ in (\ref{HamHubbard}).

\subsection{Symmetries}\label{sec:symmetriesLattice}
To simplify calculations below we first study the symmetries of the Hubbard Hamiltonian (\ref{HamHubbard}) (see also \cite{Fradkin}). Let us denote the $3$-vector of Pauli matrices by $\vec{\boldsymbol \sigma}\,\equiv\,(\boldsymbol{\sigma}_1,\boldsymbol{\sigma}_2,\boldsymbol{\sigma}_3)$. Then, the total spin operator $\vec{S}$ has the compact form (we set $\hbar \equiv 1$)
\begin{equation}
S_k\,\equiv \frac{1}{2}\,\sum_{i=0}^{d-1}\,\sum_{\mu,\tau}\,c_{i\mu}^\dagger\,(\boldsymbol{\sigma}_k)_{\mu \tau}\,c_{i \tau}\,,\,\,k=1,2,3\,.
\end{equation}
Since the total spin turns out to be also a good quantum number, we recall the form
\begin{equation}\label{spinsquared}
\vec{S}^2 = \sum_{i,j=0}^{d-1}\,S_3^{(i)}\,S_3^{(j)}\,+\,\frac{1}{2}\,\left(S_+^{(i)}\,S_-^{(j)}\,+\,S_-^{(i)}\,S_+^{(j)}\right)\,,
\end{equation}
where $S_{\pm}^{(j)}\equiv S_{1}^{(j)}\pm i S_{2}^{(j)}$ are the ladder operators for the $j$-th lattice site's spin space.
As elementary exercise one can verify the following commutation relations
\begin{equation}\label{HamHubbardSpin}
[H,\vec{S}^{\,2}]\,=\,0\,\,,\,\,\,[H,S_3]\,=\,0\,\,,\,\,\,[H,n]\,=\,0,
\end{equation}
where $n\,\equiv\,\sum_{i\,\mu}\,n_{i\mu}$ is the total electron number operator. Due to the particle number conservation we restrict to the sector of $N$ electrons. There are further symmetries referring to the orbital (spatial) degrees of freedom. Let $T_{k}$ be the translation of the (real space) lattice $\mathcal{L}_X$ by $k$ sites. Then,
\begin{equation}\label{HamHubbardTransl}
[H,T_{k}]\,=\,0\qquad,\,\forall\,k\,=\,\mathbb{Z}\,.
\end{equation}
and $T_d\,=\,(T_1)^{d}\,\equiv\,T^{\,d}\,=\,\mbox{Id}$. Moreover, let $P_X$ be the ``inversion'' operator that inverts the $i$-th lattice site to the $(-i)$-th one, i.e.~to the $(d-i)$-th one. We find
\begin{equation}\label{HamHubbardPX}
[H,P_X]\,=\,0
\end{equation}
and the orbital symmetry group of $H$ is generated by the two operators $T$ and $P_X$. Since the antisymmetrizing operator $\mathcal{A}_N$ commutes with $H, T, P_X,\vec{S}^2, S_3$, the $N$-electron energy eigenstates are (or can be chosen as) eigenstates also of \footnote{Note that we omit $P_X$ since $[T,P_X]\neq 0$} $T$, $\vec{S}^2$, $S_3$ and we can split the corresponding $N$-electron Hilbert space
\begin{equation}
\wedge^N[\mathcal{H}_1] \equiv \wedge^N[\mathcal{H}_1^{(l)}\otimes \mathcal{H}_1^{(s)}]\,=\,\bigoplus_{E,S,M,K}\,\mathcal{H}_{E,S,M,K}
\end{equation}
where $\mathcal{H}_1\equiv \mathcal{H}_1^{(l)}\otimes \mathcal{H}_1^{(s)} \cong \mathbb{C}^d\otimes \mathbb{C}^2$ is the $1$-particle Hilbert space, $E$ is the energy, $S$ the total spin, $M$ the total magnetic quantum number (w.r.t.~the $3$-axis) and $K=0,1,\ldots,d-1$ the $1$-dimensional wave number ``vector'' up to a dimensional factor. In detail, by first using Weyl's decomposition in orbital and spin part \cite{Weyl} we find
\begin{equation}\label{Weyl}
\wedge^N[\mathcal{H}_1^{(l)}\otimes \mathcal{H}_1^{(s)}]\,=\,\bigoplus_{\nu,\,|\nu|=N}\,\mathcal{H}_{\nu}^{(l)}\otimes \mathcal{H}_{\nu^T}^{(s)}\,.
\end{equation}
Here, $\mathcal{H}_{\mu}^{(l)}, \mathcal{H}_{\mu}^{(s)}$ denote the irreducible representations of the group $SU(\mathcal{H}_1^{(l)})$ and $SU(\mathcal{H}_1^{(s)})$, respectively, labeled by a Young diagram $\mu$. In Eq.~(\ref{Weyl}) we sum over all Young diagrams $\nu$ with $N$ boxes and (due to the spin-$\frac{1}{2}$ of the electron) at most 2 columns. Since $\mathcal{H}_1^{(s)}$ has dimension $2$, only transposed diagrams $\nu^T$ with at most two rows can show up and one finally can show
\begin{equation}
S = \frac{1}{2}\,\left[(\nu^T)_1- (\nu^T)_2\right]\,.
\end{equation}
We can express the rhs of (\ref{Weyl}) in a simpler form by labeling the Young diagram $\nu$ by the corresponding spin $S$,
\begin{equation}
\nu(S) = \left(\frac{N}{2}+S,\frac{N}{2}-S\right)\,
\end{equation}
and find
\begin{equation}
\wedge^N[\mathcal{H}_1^{(l)}\otimes \mathcal{H}_1^{(s)}]\,=\,\bigoplus_{S=S_-}^{S_+}\,\mathcal{H}_{\nu(S)}^{(l)}\otimes \mathcal{H}_{\nu(S)^T}^{(s)}\,,
\end{equation}
where $S_-$ and $S_+$ are the minimal and maximal spin (for the coupling of $N$ spin-$\frac{1}{2}$'s), respectively.
Finally,
\begin{eqnarray}
\wedge^N[\mathcal{H}_1^{(l)}\otimes \mathcal{H}_1^{(s)}]\, &=&\bigoplus_{S=S_-}^{S_+}\,
\bigoplus_{M=-S}^S\,\bigoplus_{K=0}^{d-1}\,
\mathcal{H}_{E,S,M,K}\,, \nonumber \\
\mathcal{H}_{E,S,M,K} &\equiv& \pi_E\, \bigoplus_{S=S_-}^{S_+}\,\mathcal{H}_{\nu(S)}^{(l,K)}\otimes \mathcal{H}_{\nu(S)^T}^{(s,M)}\,
\end{eqnarray}
and $\pi_E$ is the projection operator onto the subspace corresponding to the energy $E$.

\subsection{Three electrons on three sites: analytic results}\label{sec:hubbardAnalyt}
In this section we study the $1$-dimensional Hubbard model (\ref{HamHubbard}) for three electrons on three lattice sites and periodic boundary conditions.  Consequently, we deal with the Borland-Dennis setting, $\mathcal{H}_3 = \wedge^3[\mathcal{H}_1^{(6)}]$. In Sec.~\ref{sec:hubbardLarger} we will also consider larger settings, with more sites and more electrons. Note that the finiteness of the Hilbert space underlying the Hubbard model is due to a truncation of the physical Hilbert space $\wedge^N[L^2(\R^3)\otimes \C^2]$, which was applied in the derivation of the Hubbard model. This in particular means that in contrast to the Harmonium model (\ref{HamHarmonium}) studied in Sec.~\ref{sec:NHarm} we do not need to truncate the NONs for the pinning analysis after solving the model and calculating the $1$-RDO. This will also shed some light on the `exact' pinning found by Klyachko for the Beryllium state obtained by variational methods based on finite dimensional spaces (see also \cite{Kly1,CS2013QChem}).

First, we calculate analytically the eigenenergies and determine the ground state. Since the total Hilbert space has already dimension $\binom{6}{3} =20$,
we use the simplifications following from the symmetries introduced in Sec.~\ref{sec:symmetriesLattice}. It turns out that most of the results we find do not directly depend on the Hamiltonian (\ref{HamHubbard}), but just follow from its symmetries. In that sense the results derived in the present section have a more universal meaning.

To succinctly split the $3$-electron Hilbert space $\mathcal{H}_3$ according to the symmetries of $H$ we introduce a basis of Slater determinants. As corresponding $1$-particle basis we choose the six states
$|k,\sigma\rangle_Q \equiv |k\rangle_Q\otimes |\sigma\rangle $, $k =0,1,2$, $\sigma = \uparrow,\downarrow$. Here $|k\rangle_Q$, $k=0,1,2$ are the $1$-particle Bloch states. This choice of the basis is optimal for the calculation of the NONs of $H$-eigenstates: According to the examples Ex.~\ref{ex:Spincomponent} and Ex.~\ref{ex:translationBloch} the $1$-RDO $\rho_1$ of every $N$-electron $S_3$-spin eigenstate $|\Psi_N\rangle$ adapted to the translational symmetry is diagonal w.r.t.~$|k,\sigma\rangle_Q$ and the NONs are therefore just given by its diagonal entries.

To visualize the symmetries we introduce an occupation number picture and describe every $3$-electron Slater determinant by a symbol of the form $ \z{1}{0}{1}{0}{0}{1}$. In such a $2 \times 3$ array every dot represents one electron occupying a state $|k,\sigma\rangle_Q$, where the two rows stand for $\uparrow$ and $\downarrow$ and the three columns (from left to right) represent $|0\rangle_Q$, $|1\rangle_Q$ and $|2\rangle_Q$.

As a first split of the $3$-electron Hilbert space we consider the four possible magnetic quantum numbers $M=-\frac{3}{2}, -\frac{1}{2},\frac{1}{2},\frac{3}{2}$. The subspaces corresponding to $M=-\frac{3}{2}$ and $\frac{3}{2}$ are both $1$-dimensional and are spanned by the states represented by
\begin{equation}
\z{0}{1}{0}{1}{0}{1} \qquad \mbox{and}\qquad \z{1}{0}{1}{0}{1}{0}\,,
 \end{equation}
respectively. Note that we have $ \z{1}{0}{1}{0}{1}{0}
= \zX{1}{0}{1}{0}{1}{0}$ and $ \z{0}{1}{0}{1}{0}{1}
= \zX{0}{1}{0}{1}{0}{1}$, where $\zX{0}{0}{0}{0}{0}{0}$ is again a symbolical notation, now w.r.t.~the real space lattice $\mathcal{L}_X$.
The two subspaces corresponding to $M=-\frac{1}{2},\frac{1}{2}$ are also isomorphic (isomorphism is given by the flipping of all three spins) and thus we consider only the case $M = \frac{1}{2}$. The corresponding subspace is $9$-dimensional and a basis is given by
$\z{2}{1}{1}{2}{1}{2}$, $\z{1}{2}{2}{1}{1}{2}$, $\z{1}{2}{1}{2}{2}{1}$, $\z{1}{1}{1}{2}{2}{2}$, $\z{2}{2}{1}{1}{1}{2}$, $\z{1}{2}{2}{2}{1}{1}$, $\z{1}{1}{2}{2}{1}{2}$, $\z{1}{2}{1}{1}{2}{2}$, $\z{2}{2}{1}{2}{1}{1}$.
Those $3$-electron states are already $T$-eigenstates, but unfortunately not all are eigenstates of $\vec{S}^2$. To make them eigenstates of $\vec{S}^2$ we introduce three different $K=0$ states,
\begin{eqnarray}
|K=0\rangle_a &\equiv& \frac{1}{3}\,\left[\z{2}{1}{1}{2}{1}{2}\,+\,
\z{1}{2}{2}{1}{1}{2}\,+\,
\z{1}{2}{1}{2}{2}{1}\right] \nonumber \\
|K=0\rangle_b &\equiv& \frac{1}{3}\,\left[\z{2}{1}{1}{2}{1}{2}\,+\,e^{\frac{2 \pi}{3} i}\,
\z{1}{2}{2}{1}{1}{2}\,+\,e^{\frac{4 \pi}{3} i}\,
\z{1}{2}{1}{2}{2}{1}\right] \nonumber \\
|K=0\rangle_c &\equiv& \frac{1}{3}\,\left[\z{2}{1}{1}{2}{1}{2}\,+\,e^{\frac{4 \pi}{3} i}\,
\z{1}{2}{2}{1}{1}{2}\,+\,e^{\frac{2 \pi}{3} i}\,
\z{1}{2}{1}{2}{2}{1}\right]\,.
\end{eqnarray}
The first has eigenvalue $S=\frac{3}{2}$. The other two and also the remaining six states
\begin{eqnarray}
|K=1\rangle_a \equiv \z{1}{1}{1}{2}{2}{2}\,,\,
|K=1\rangle_b \equiv \z{2}{2}{1}{1}{1}{2}\,,\,
|K=1\rangle_c \equiv \z{1}{2}{2}{2}{1}{1} &&\nonumber \\
|K=2\rangle_a \equiv \z{1}{1}{2}{2}{1}{2}\,,\,
|K=2\rangle_b \equiv \z{1}{2}{1}{1}{2}{2}\,,\,
|K=2\rangle_c \equiv \z{2}{2}{1}{2}{1}{1} &&,
\end{eqnarray}
are $S=\frac{1}{2}$-eigenstates. Moreover, notice that by introducing the inversion/reflection operator $P_Q$ for the wavenumber space we have
\begin{eqnarray}\label{BrillReflex}
P_Q^{(3)} \,\z{1}{1}{1}{2}{2}{2} &=&
\z{1}{1}{2}{2}{1}{2}  \nonumber \\
P_Q^{(3)} \,\z{2}{2}{1}{1}{1}{2} &=&
\z{2}{2}{1}{2}{1}{1}      \nonumber \\
P_Q^{(3)} \,\z{1}{2}{2}{2}{1}{1} &=&
\z{1}{2}{1}{1}{2}{2}
\end{eqnarray}
and
\begin{equation}
[H,P_Q]=0\,.
\end{equation}
This completes our construction of a symmetry adapted $3$-electron basis and the total Hilbert space splits according to
\begin{eqnarray}\label{Hilbertsplited}
\mathcal{H}_3 &=& \bigoplus_{S=\frac{1}{2}, \frac{3}{2}} \bigoplus_{M=-S}^S \bigoplus_{K=0, \pm 1} \mathcal{H}_{S,M,K}  \\
&=&\big(\mathcal{H}_{\frac{3}{2},\frac{3}{2},0} \oplus \mathcal{H}_{\frac{3}{2},\frac{1}{2},0} \oplus \mathcal{H}_{\frac{1}{2},\frac{1}{2},0} \oplus \mathcal{H}_{\frac{1}{2},\frac{1}{2},1}\oplus P_Q \mathcal{H}_{\frac{1}{2},\frac{1}{2},1}\big)\oplus I_s \left(\cdot\right) \,,\nonumber \\
&& \hspace{0.39cm}(\hspace{0.10cm} 1 \hspace{0.40cm} + \hspace{0.45cm} 1\hspace{0.41cm} + \hspace{0.44cm} 2 \hspace{0.41cm} + \hspace{0.44cm} 3 \hspace{0.41cm} + \hspace{0.65cm} 3 \hspace{0.10cm}) \hspace{0.59cm} \times \hspace{0.28cm} 2 \nonumber
\end{eqnarray}
where $I_s$ is the spin flip operator. The last line presents the dimensions of the corresponding subspaces and $(\cdot)$ stands for the whole direct sum of subspaces presented in the same line to its left.

Diagonalizing $H$ is now much easier since it is block-diagonal w.r.t.~the subspaces in (\ref{Hilbertsplited}) and we can consider them separately. Since the first two spaces $\mathcal{H}_{\frac{3}{2},\frac{3}{2},0}$,  $\mathcal{H}_{\frac{3}{2},\frac{1}{2},0}$ are just $1$-dimensional, $H$ restricted to them is already diagonal, which also holds for the $2$-dimensional space $\mathcal{H}_{\frac{1}{2},\frac{1}{2},0}$. It remains to diagonalize $H$ in the subspace $\mathcal{H}_{\frac{1}{2},\frac{1}{2},1}$. The corresponding eigenvalue problem
\begin{equation}\label{eigenvalueQ}
\left(
\begin{array}{ccc}
 \frac{2 u}{3}-3 & -\frac{u}{3} & -\frac{u}{3} \\
 -\frac{u}{3} & \frac{2 u}{3}+3 & -\frac{u}{3} \\
 -\frac{u}{3} & -\frac{u}{3} & \frac{2 u}{3}
\end{array}
\right)\vec{v} = E \, \vec{v}\,
\end{equation}
is solved in Appendix \ref{app:cubicEigenProb}. The complete spectrum of the Hamiltonian (\ref{HamHubbard}) with three sites and three electrons is shown in Fig.~\ref{fig:energyplots}.
\begin{figure}[h]
\centering
	\includegraphics[width=7.7cm]{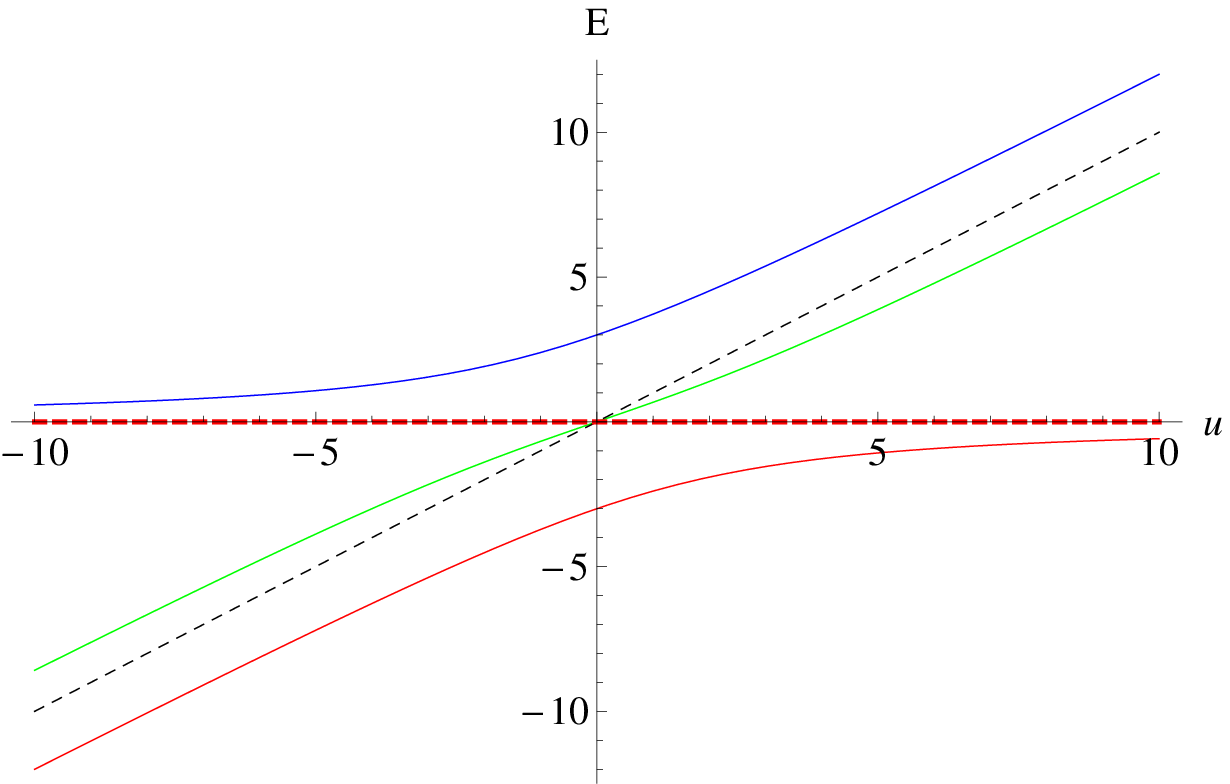}
\captionC{All energies (each with multiplicity four) of the 3-site Hubbard model for 3 electrons. The solution of the nontrivial subspace problem (\ref{eigenvalueQ}) is given by the red, blue and green curve. The eigenvalues $E(u)=0$ for $S=\frac{3}{2}$ and $E(u)= u$ for $K=\pm 1$, $S=\frac{1}{2}$
are shown as thick (red) and thin (black) dashed line, respectively.}
	\label{fig:energyplots}
\end{figure}
The red, blue and green curve describe the energies $E(u)$ arising from the non-trivial subspace problem (\ref{eigenvalueQ}) and according to the symmetries they have multiplicity four (quantum numbers $K=\pm 1$, $S = \frac{1}{2}$ and $M = \pm \frac{1}{2}$). For the four eigenstates with $S=\frac{3}{2}$ we find the eigenvalue $E(u) \equiv 0$ and for those with $K=0$, $S=\frac{1}{2}$ and $M= \pm \frac{1}{2}$, $E(u)=u$ also with multiplicity four. The behavior of the red, blue and green curve for $u\rightarrow \pm \infty$ and $u\rightarrow 0$ is given in Appendix \ref{app:cubicEigenProb}. In Fig,~\ref{fig:energyplots} we can also see that the red curve $E_{gs}(u)$ is below the other curves and does not intersect them. Therefore, $E_{gs}(u)$ is the ground state energy for the whole $u$-regime $\R$.

Now, we focus on the ground state and would like to investigate possible pinning. Since the ground state space has dimension four, this is not that easy. In particular, whenever the ground state is highly degenerate a pinning analysis can becoming meaningless. This can easily be seen for the extreme case of a Hamiltonian $H=\mathds{1}$. In that case every fermionic state $|\Psi_N\rangle$ is a ground state. Hence, one can find for each $\vec{\lambda} \in \mathcal{P}_{N,d}$ a corresponding ground state with NONs $\vec{\lambda}$. To circumvent that we switch on an external magnetic field along the $3$-axis, the Zeeman term then breaks the spin flip symmetry and the ground state space reduces its dimensionality from four ($K=\pm 1$, $M=\pm \frac{1}{2}$) to two ($K=\pm 1$, $M= +\frac{1}{2}$). Although, it is not clear to us whether this is experimentally feasible, we first also break the $P_Q$-symmetry and consider a unique ground state, labeled by the quantum numbers $K= 1$ and $M= \frac{1}{2}$.

The corresponding $3$-electron energy eigenstate $|\Psi_1\rangle$ that is the solution of the subspace problem (\ref{eigenvalueQ}) has then the form
\begin{equation}\label{gsHubbard1}
|\Psi\rangle = \alpha\, \z{1}{1}{1}{2}{2}{2} \,+
\beta\, \z{2}{2}{1}{1}{1}{2} \,+
\gamma\, \z{1}{2}{2}{2}{1}{1} \,.
\end{equation}
The three coefficients $\alpha(u), \beta(u)$ and $\gamma(u)$ are calculated in the Appendix \ref{app:cubicEigenProb}
and are presented in Fig.~\ref{fig:HubGScoef}.
\begin{figure}[h]
\centering
\includegraphics[scale=0.70]{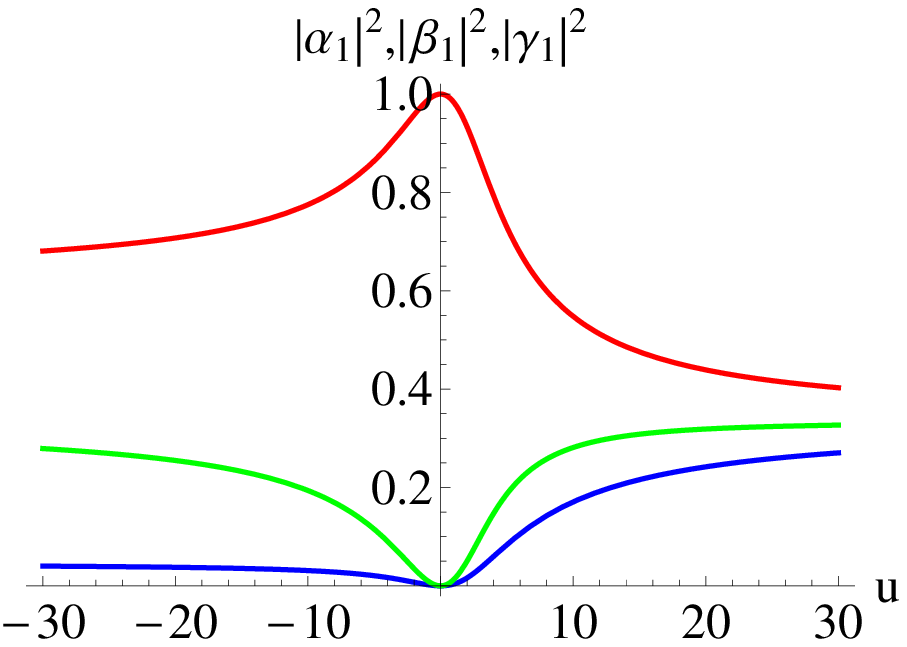}
\captionC{The coefficients $\alpha(u)$ (red), $\beta(u)$ (blue) and $\gamma(u)$ (green) of the Hubbard ground state (\ref{gsHubbard1}).}
\label{fig:HubGScoef}
\end{figure}
From their asymptotic behavior presented in Appendix \ref{app:cubicEigenProb} we can derive the asymptotic behavior of the ground state:\\
\\
$
\begin{array}{llll}
|\Psi_1(u)\rangle &\sim& \frac{1}{\sqrt{3}}\,\left[\z{1}{1}{1}{2}{2}{2}+\z{2}{2}{1}{1}{1}{2}+\z{1}{2}{2}{2}{1}{1} \right] & \nonumber \\
& & +\frac{\sqrt{3}}{u}\, \left[\z{1}{1}{1}{2}{2}{2}-\z{2}{2}{1}{1}{1}{2} \right]+O\left(\left(\frac{1}{u}\right)^2\right)
&,\,u\rightarrow +\infty \nonumber \\
|\Psi_1(u)\rangle &\sim& \frac{3+\sqrt{3}}{6}\,\z{1}{1}{1}{2}{2}{2}
-\frac{3-\sqrt{3}}{6}\,
\z{2}{2}{1}{1}{1}{2} -
\frac{1}{\sqrt{3}}\,
\z{1}{2}{2}{2}{1}{1} & \nonumber \\
& & +\frac{1}{u}\, \left[-\frac{3+\sqrt{3}}{4}\,\z{1}{1}{1}{2}{2}{2}-\frac{3-\sqrt{3}}{4}\,\z{2}{2}{1}{1}{1}{2} - \frac{3}{2}\, \z{1}{2}{2}{2}{1}{1}\right]  &\nonumber \\
&&+O\left(\left(\frac{1}{u}\right)^2\right)  &,\,u\rightarrow -\infty \nonumber \\
|\Psi_1(u)\rangle&\sim& \z{1}{1}{1}{2}{2}{2}
 +\frac{u}{18}\, \left[\z{2}{2}{1}{1}{1}{2} + 2\,  \z{1}{2}{2}{2}{1}{1}\right]+O\left(u^2\right)&,\, u\rightarrow 0\,.
\end{array}
$
\\
\\
According to the previous remarks on symmetry adaption, determining the $1$-RDO for any state of the form (\ref{gsHubbard1}) should be trivial. Indeed, we find
\begin{equation}\label{NONsHubbard1}
\left({}_Q\langle k,\sigma | \rho_1 | l,\mu \rangle_Q\right) = \mbox{diag}(|\alpha|^2+|\gamma|^2,|\alpha|^2 ,|\alpha|^2+|\beta|^2,|\beta|^2,|\beta|^2+|\gamma|^2,|\gamma|^2)\,.
\end{equation}
It is important to notice that this simplified form of the $1$-RDO arises just from the symmetries of the Hamiltonian. Its concrete form plays a role just for the coefficients $\alpha, \beta, \gamma$.

To finally study possible pinning we still need to find the hierarchy of those six NONs. Although this hierarchy seems to solely depend on the concrete form of the Hamiltonian we can still make a quite universal statement on possible pinning of the NONs (\ref{NONsHubbard1}), arising from a state of the form (\ref{gsHubbard1}) without knowing the three coefficients. For this, w.l.o.g.~assume $|\alpha|^2 > |\gamma|^2 > |\beta|^2$ (otherwise just relabel the three coefficients in Eq.~(\ref{gsHubbard1})). For the decreasingly ordered NONs it follows immediately that $\lambda_1 = |\alpha|^2+ |\gamma|^2$,  $\lambda_2 = |\alpha|^2+ |\beta|^2$,  $\lambda_5 = |\gamma|^2$
and  $\lambda_6 = |\beta|^2$. The remaining two NONs are given by
\begin{equation}
\lambda_4 \,=\, \min(|\alpha|^2, |\beta|^2+|\gamma|^2)\,\,\,\,,\,\,\,\lambda_3 = 1- \lambda_4\,.
\end{equation}
For the saturation $D^{(3,6)}$ (recall (\ref{set36})) this yields
\begin{equation}
D^{(3,6)}=\begin{cases}
  0,  & \mbox{whenever}\,|\alpha|^2\geq |\beta|^2+|\gamma|^2 \\
  |\beta|^2+|\gamma|^2-|\alpha|^2>0, & \mbox{otherwise}
\end{cases}\,.
\end{equation}
Intriguingly, this means that exact pinning is possible. Moreover the high symmetry of the $3$-electron Hamiltonian leads independent of its concrete form to two manifest different scenarios, exact pinning and no-pinning. They are $1-1$ related to the hierarchy of the two terms $|\alpha|^2$ and $|\beta|^2+|\gamma|^2$, which is fixed by the concrete form of the Hamiltonian.

To check possible pinning for the Hubbard Hamiltonian we just need to check the ordering of the coefficients. As it can be seen in Fig \ref{fig:HubGScoef} we have
\begin{equation}
|\alpha(u)|^2 > |\gamma(u)|^2 \geq |\beta(u)|^2 \qquad,\,\,\forall u \in \mathbb{R}\,.
\end{equation}
Exact pinning is then given whenever $|\alpha(u)|^2 - |\gamma(u)|^2 - |\beta(u)|^2 \geq 0$. The $u$-regime, where this is fulfilled can be derived from Fig.~\ref{fig:saturationBD}.
\begin{figure}[h]
\centering
\includegraphics[scale=0.65]{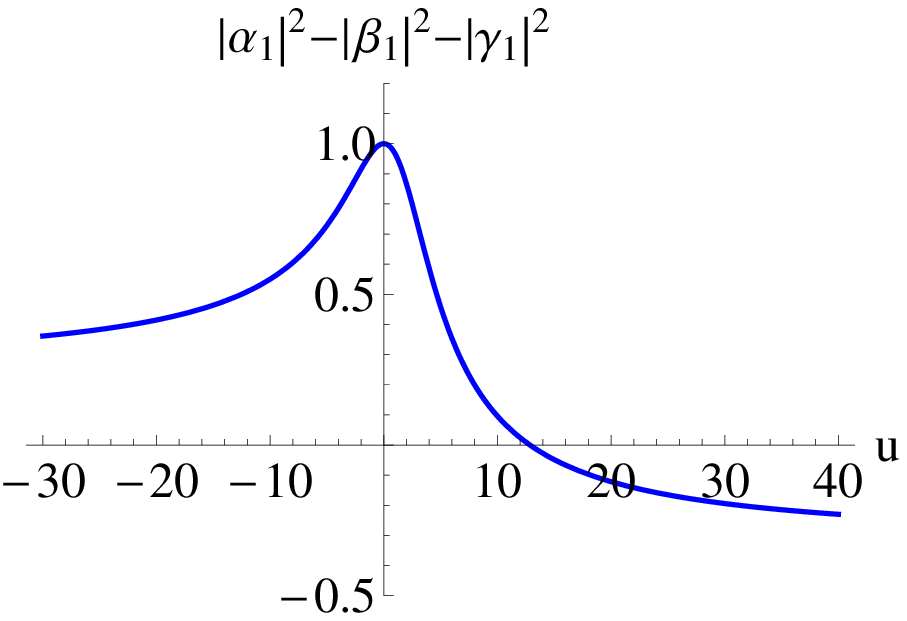}\hspace{0.4cm}
\includegraphics[scale=0.65]{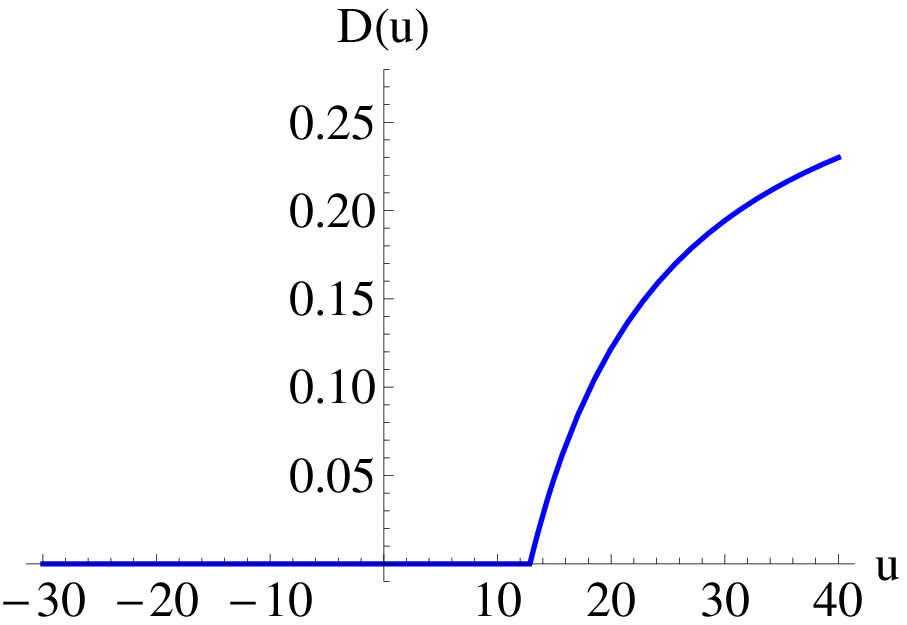}
\captionC{Saturation $D(u)$ (right) for the Hubbard ground state $|\Psi_1(u)\rangle$ of setting with $3$ electrons on $3$ sites. Exact pinning is given whenever the term $|\alpha(u)|^2-|\beta(u)|^2-|\gamma(u)|^2$ (left) is non-negative.}
\label{fig:saturationBD}
\end{figure}
The regime of exact pinning is the $u$-interval $\mathcal{I}_p\equiv\, ]-\infty,u_p]$ and the crossing point $u_p$ is numerically found as
\begin{equation}
u_p \approx 12.86\,.
\end{equation}
This kind of ``microscopic phase transition'' is also shown on the right side of Fig.~\ref{fig:saturationBD}

Numerically, we find the asymptotic behavior of the saturation $D^{(3,6)}$ for $u$ close to $u_p$ and analytically that for $u\rightarrow \infty$:
\begin{eqnarray}
D^{(3,6)}(u) & \approx & 0.02641\, (u-u_p) -0.00199 \,(u-u_p)^2+O\left((u-u_p)^3\right)\,\,\,\,\,,\mbox{for}\,u\geq u_p \nonumber \\
D^{(3,6)}(u) &\sim &\frac{1}{3}-\frac{4}{u}-\frac{6}{u^2}+O\left(\frac{1}{u^3}\right) \qquad,\, u\rightarrow \infty\,.
\end{eqnarray}

Do these remarkable results in Fig.~\ref{fig:saturationBD} mean that the Hubbard ground state shows pinning for all $u\leq u_p$, which vanishes for $u>u_p$? No, the ground state is not unique and the results we have found so far just apply for the ground state uniquely found after fixing $K=\pm 1$ and $M=\pm \frac{1}{2}$. In the following we still break the spin flip symmetry, restrict to $M=+\frac{1}{2}$, but allow an arbitrary ground state $|\Psi\rangle$ w.r.t.~the wavenumber $K =\pm 1$,i.e.~we remove the breaking of the $P_Q$-symmetry . Then the ground state has the general form
\begin{equation}\label{gsHubbard2}
|\Psi\rangle = \zeta |\Psi_1\rangle +\xi |\Psi_2\rangle\,,
\end{equation}
where $|\Psi_2\rangle = P_Q |\Psi_1\rangle $, $|\zeta|^2+|\xi|^2=1$ and $|\Psi_K\rangle$ are the unique ground states after fixing the wavenumber to $K=\pm 1$. Since (\ref{gsHubbard2}) is not adapted to the translational symmetry anymore, the corresponding $1$-RDO represented w.r.t.~the states $\{|k,\sigma\rangle_Q\}$ is unfortunately not diagonal anymore. However, since (\ref{gsHubbard2}) is still an $S_3$-eigenstate, the new NOs are still spin eigenstates w.r.t.~$3$-axis (see also Ex.~\ref{ex:Spincomponent}). For the $1$-RDO this means
\begin{equation}\label{rho1DirectSum}
\rho_1 = \rho_1^{\uparrow} \oplus \rho_1^{\downarrow}\,.
\end{equation}
We find
\begin{equation}\label{rho1up}
 \rho_1^{\uparrow} = \left(\begin{array}{ccc}|\alpha|^2+|\gamma|^2&\zeta \xi^\ast\gamma \beta^\ast& \zeta^\ast \xi \beta^\ast \gamma\\ \zeta^\ast \xi\gamma^\ast \beta&|\zeta|^2 |\alpha|^2+|\xi|^2 |\gamma|^2+|\beta|^2& \zeta \xi^\ast |\alpha|^2\\ \zeta \xi^\ast \beta \gamma^\ast&\zeta^\ast \xi |\alpha|^2& |\zeta|^2 |\gamma|^2+|\xi|^2 |\alpha|^2+|\beta|^2 \end{array}\right)
\end{equation}
and
\begin{equation}\label{rho1down}
 \rho_1^{\downarrow} = \left(\begin{array}{ccc}|\alpha|^2& -\zeta \xi^\ast\alpha \gamma^\ast& -\zeta^\ast \xi \gamma^\ast \alpha\\ -\zeta^\ast \xi\alpha^\ast \gamma&|\zeta|^2 |\beta|^2+|\xi|^2 |\gamma|^2& -\zeta \xi^\ast |\beta|^2\\ -\zeta \xi^\ast \gamma \alpha^\ast& -\zeta^\ast \xi |\beta|^2& |\zeta|^2 |\gamma|^2+|\xi|^2 |\beta|^2 \end{array}\right)\,.
\end{equation}
By solving again cubic equations we will find analytic expressions for the six NONs. Before doing so we first briefly discuss some general structure.
Note that
\begin{equation}\label{exchange}
\vec{\lambda}(u;\zeta,\xi) = \vec{\lambda}(u;\xi,\zeta)\,,
\end{equation}
which follows directly from (\ref{BrillReflex}) and (\ref{gsHubbard2}).
The two blocks $\rho_1^{\uparrow/\downarrow}(u;\zeta,\xi)$ both obviously fulfill
\begin{equation}
\rho_1^{\uparrow/\downarrow}(u;\zeta,\xi e^{\frac{2 \pi}{3}i}) = U^\dagger\, \rho_1^{\uparrow/\downarrow}(u;\zeta,\xi) \,U
\end{equation}
with the unitary matrix $U = \mbox{diag}(1,e^{-\frac{2 \pi}{3}i},e^{\frac{2 \pi}{3}i})$. Thus,
\begin{equation}
\vec{\lambda}(u;\zeta,\xi e^{\frac{2 \pi}{3}i}) = \vec{\lambda}(u;\zeta,\xi)\,.
\end{equation}
Now, we calculate the six NONs $\lambda_1\geq \ldots \geq \lambda_6\geq0$. One can show that it suffices to determine the three eigenvalues of the spin down block $\rho_1^{\downarrow}$. Since $\rho_1^{\downarrow}$ is normalized to $1$ and $\rho_1^{\uparrow}$ to $2$, the three generalized Pauli constraints
\begin{equation}
\lambda_1+\lambda_6 = \lambda_2+\lambda_5 = \lambda_3+\lambda_4 = 1\,,
\end{equation}
for the Borland-Dennis setting imply that each eigenvalue $n$ of $\rho_1^{\downarrow}$ has a partner eigenvalue $m$ of $\rho_1^{\uparrow}$ such that
$1 = n+ m$.

Again as for the pinning analysis for the states of the form (\ref{gsHubbard1}) we can draw some general conclusions on pinning for states of the form (\ref{gsHubbard2}). The only property we need is that (\ref{gsHubbard2}) is a $S_3=\frac{1}{2}$ eigenstate in $\wedge^3[\mathcal{H}_1^{(6)}]$. This implies the structure (\ref{rho1DirectSum}) for its $1$-RDO with normalizations $\mbox{Tr}[\rho_1^{\uparrow}]=m_1+m_2+m_3= 2$ and $\mbox{Tr}[\rho_1^{\downarrow}]=n_1+n_2+n_3= 1$. Due to these normalizations we can conclude that the two smallest NONs $\lambda_5$, $\lambda_6$ are arising as eigenvalues of $\rho_1^{\downarrow}$ and the two largest, $\lambda_1,\lambda_2$ from $\rho_1^{\uparrow}$. W.l.o.g.~we order the $n_i$ and $m_j$ such that $\lambda_1= m_1$, $\lambda_2 = m_2$, $\lambda_5= n_2$ and $\lambda_6 = n_3$. The remaining relations are given by $\lambda_3 = \max{(m_3,n_1)}$ and $\lambda_4 = \min{(m_3,n_1)}$.
Hence, there are two different cases and it turns out that we can make concrete statements on possible pinning for both of them:
\begin{enumerate}
\item $m_3 > n_1:$\, In this case we have $\lambda_3 = m_3$ and $\lambda_4 = n_1$. For the saturation $D^{(3,6)}$ (recall Eq.~(\ref{set36})) we find
     \begin{equation}
     D^{(3,6)} =\lambda_5+\lambda_6-\lambda_4 = n_2 + n_3 - n_1 = 1-2 n_1 >0\,,
     \end{equation}
     where we have used in the last step that $m_3 + n_1 = 1$ and $m_3 > n_1$. Hence, there is no pinning.
\item $m_3 < n_1:$\, In that case we have $\lambda_3 = n_1$ and $\lambda_4 = m_3$. Then, by using $m_3 + n_1 = 1$ and $n_1+n_2+n_3 = 1$ we obtain
     \begin{equation}
     D^{(3,6)} =\lambda_5+\lambda_6-\lambda_4 = n_2 + n_3 - m_3 =n_2+n_3 -(1-n_1) = 0\,.
     \end{equation}
     We found exact pinning.
\end{enumerate}

Whether a concrete state belongs to the first or second case depends on the specific form of the Hamiltonian and the two coefficients $\zeta, \xi$ in (\ref{gsHubbard2}). To explore possible pinning for the Hubbard ground state (\ref{gsHubbard2}) we present the plots for $D^{(3,6)} = \lambda_5+\lambda_6-\lambda_4$ for the two values $u=100$ and $u=20$ as function of $\zeta$ and $\xi$ in Fig~ \ref{fig:saturationBDsuper}. W.l.o.g.~we choose $\zeta= |\zeta| = \sqrt{1-|\xi|^2}$.
\begin{figure}[h]
\centering
\includegraphics[width=0.47\textwidth]{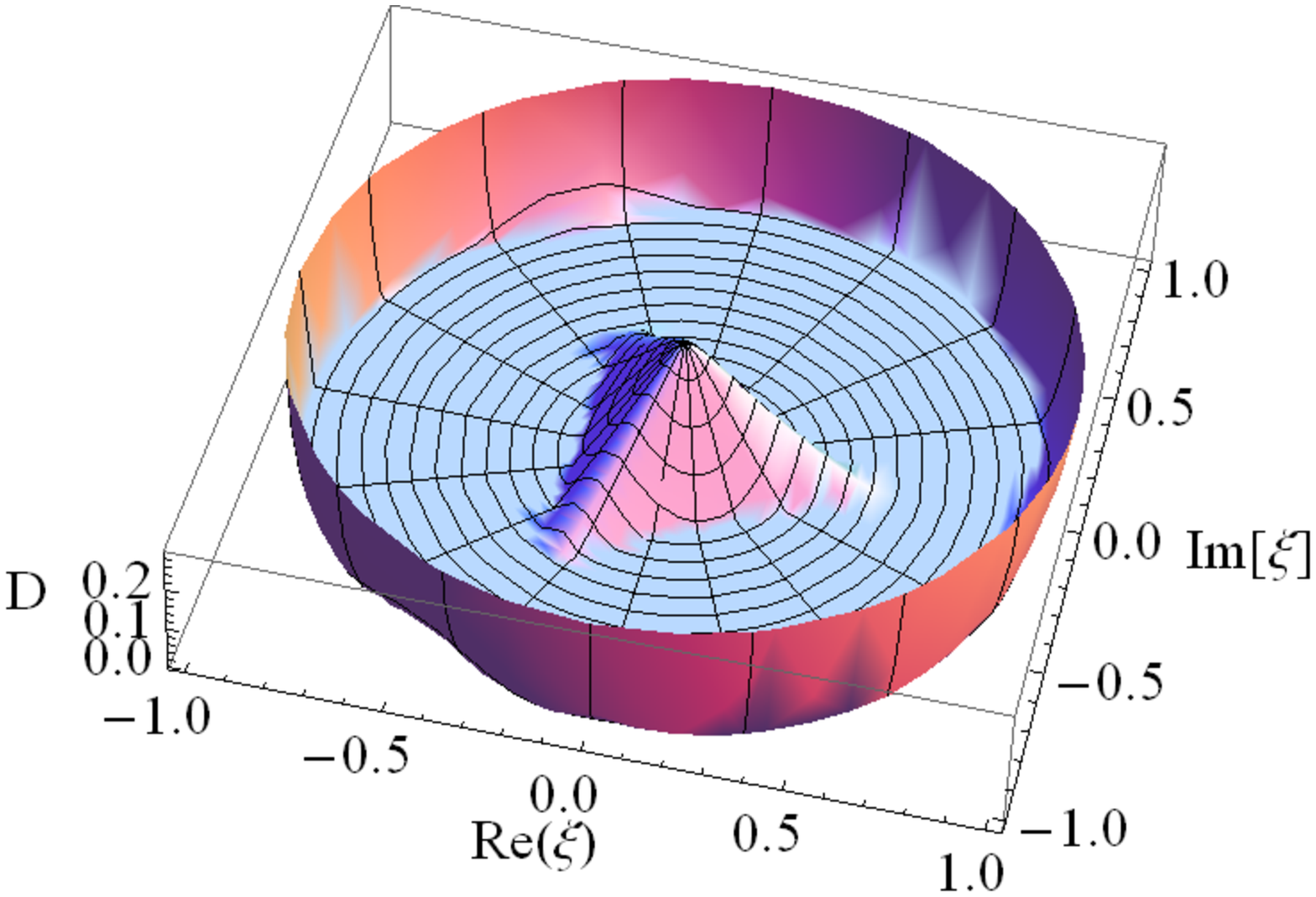}
\includegraphics[width=0.52\textwidth]{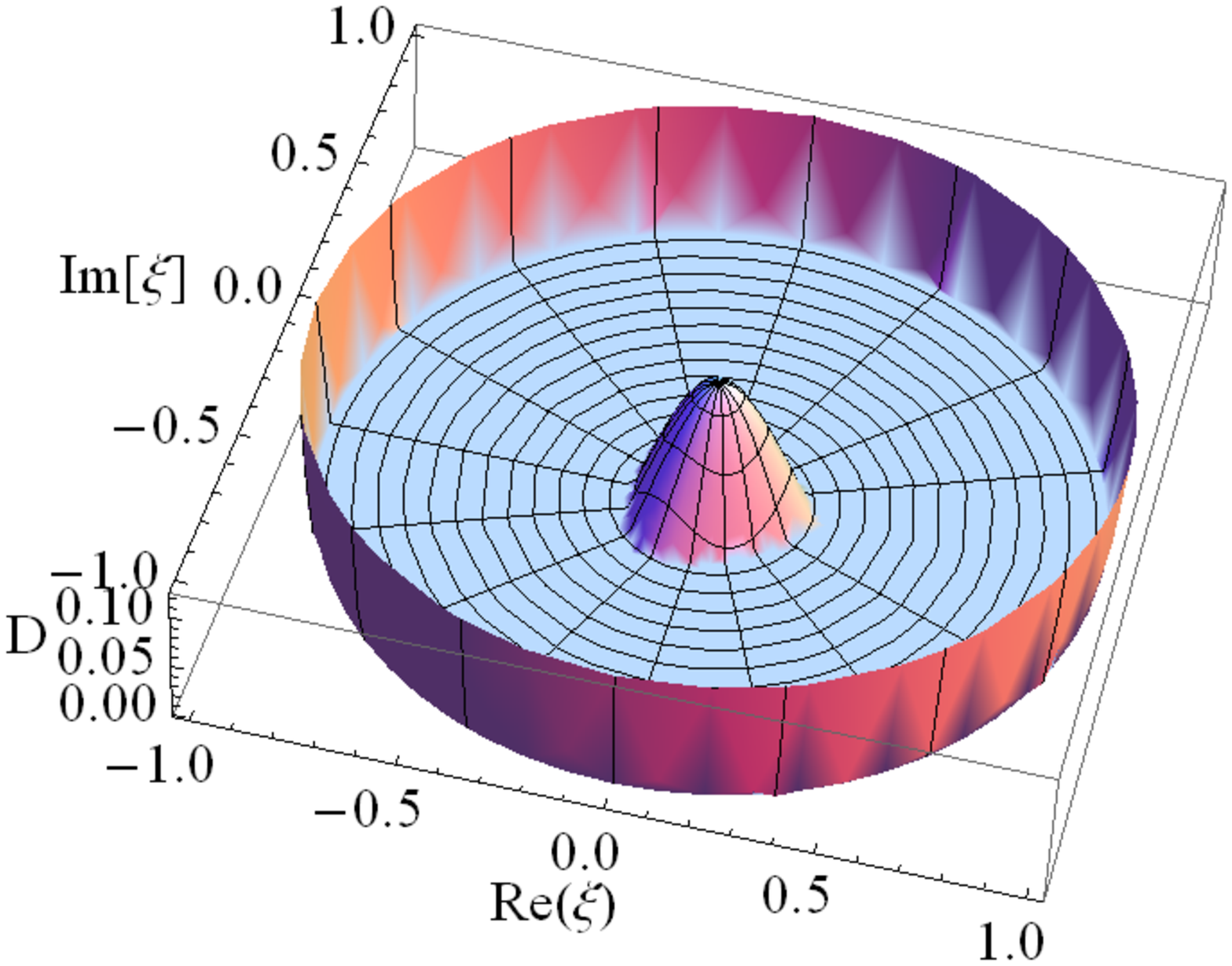}
\captionC{Saturation $D$ for $u=100$ (left) and $u=20$ (right) for the ground state $|\Psi(u)\rangle = \zeta |\Psi_1(u)\rangle +\xi |\Psi_2(u)\rangle$ of the Hubbard model with $3$ sites and $3$ electrons with $\zeta = |\zeta| = \sqrt{1-|\xi|^2}$.}
\label{fig:saturationBDsuper}
\end{figure}
From Fig.~\ref{fig:saturationBDsuper} we learn that there is exact pinning if we superpose $|\Psi_1\rangle$ and $|\Psi_2\rangle$ with similar weights, $|\zeta| \approx |\xi|$. Moreover, a relative phase of $\frac{\pi}{3}$ between $\zeta$ and $\xi$ favors pinning best. In the regime of $u\rightarrow \infty$ the $(\zeta,\xi)$-regime of exact pinning is becoming smaller and smaller.  To understand this better we
illustrate the whole $u$-regime for the two extremal relative phases $\phi =0, \frac{\pi}{3}$ between $\xi$ and $\zeta$ in Fig.~\ref{fig:saturationBDsuperU}.
\begin{figure}[h]
\centering
\includegraphics[scale=0.34]{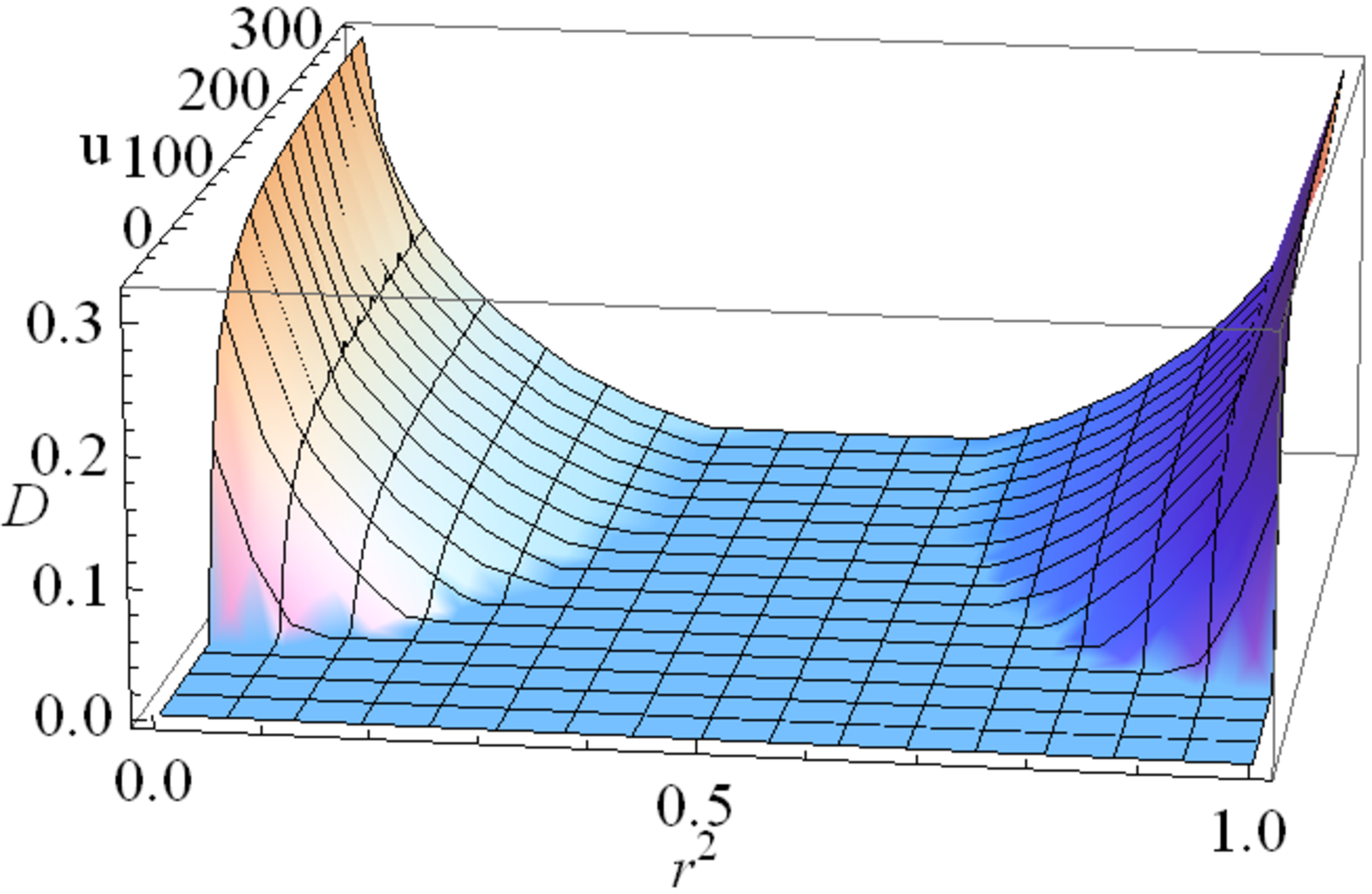}\hspace{0.4cm}
\includegraphics[scale=0.35]{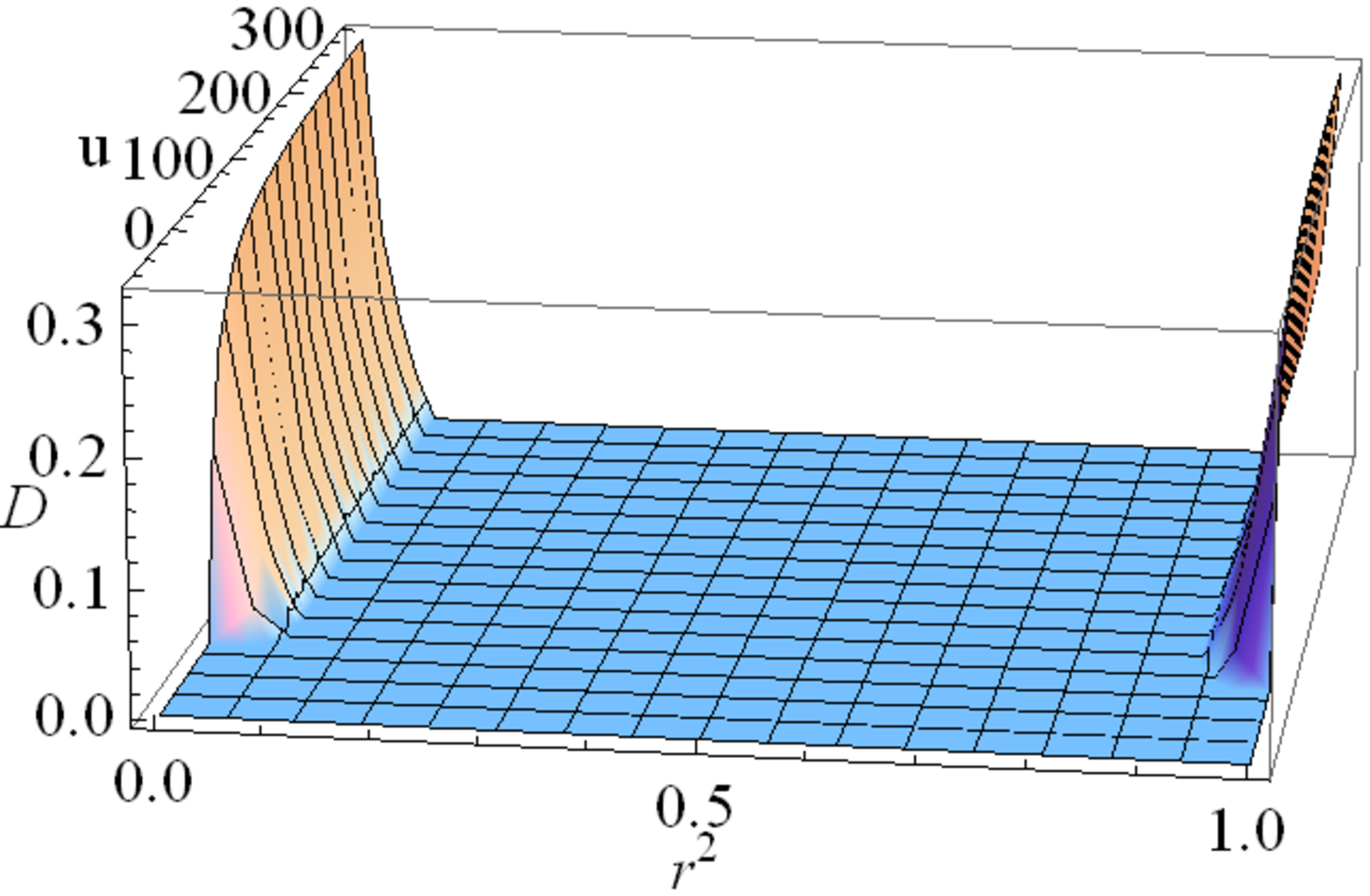}
\captionC{Saturation $D(u)$ for the ground state $|\Psi(u)\rangle = \sqrt{1-r^2} |\Psi_1(u)\rangle + r\,e^{i \varphi}\, |\Psi_2(u)\rangle$ of the Hubbard model with $3$ sites and $3$ electrons for the two extremal cases $\varphi= 0$ and $\varphi= \frac{\pi}{3}$.}
\label{fig:saturationBDsuperU}
\end{figure}
We can see that the exact pinning is uniform in $u$  whenever $|\zeta| = |\xi| = \frac{1}{\sqrt{2}}$. For all the other superpositions it vanishes for some critical value $u_p(\zeta, \xi)$. Moreover, since this critical value $u_p(\zeta, \xi)$ is minimal for either $\zeta=0$ or $\xi =0$, we can conclude that superposing $|\Psi_1\rangle$ and $|\Psi_2\rangle$ (recall (\ref{gsHubbard2})) enhances the $u$-regime of exact pinning from $(-\infty,12.86]$ to some larger interval
$(-\infty, u_p(\zeta, \xi)]$, $u_p(\zeta, \xi) \geq 12.86$.

\subsection{Larger settings}\label{sec:hubbardLarger}
In this section we explore possible pinning for the Hubbard ground state of settings with more than three lattice sites. Since we just know the generalized Pauli constraints for settings with a $1$-particle Hilbert space of dimension $d\leq 10$, we need to restrict ourself to lattices with at most five sites. For each such lattice we consider particle numbers $N=3,4$ and $5$. For each such setting we diagonalize the Hamiltonian (\ref{HamHubbard}) numerically.
As a first difference to the setting with three sites and three electrons studied analytically in Sec.~\ref{sec:hubbardAnalyt} the energy ``curves''  $E_j(u)$ as function of the on-site interaction $u$ do cross each other (recall Fig.~\ref{fig:energyplots}). Therefore, to explore ground state pinning we need to consider every $u$-interval between two consecutive energy crossing points separately. In addition, the ground state Hilbert space is again invariant under spin flips and inversions $P_Q$ of the reciprocal lattice, which means that the ground state is typically not unique. We circumvent that problem by breaking these symmetries (e.g.~by switching on an external magnetic field) and consider in the following just symmetry-adapted ground states, i.e.~eigenstates of $\vec{S}^2$, $S_3$ and the translation operator $T$.

The results of the pinning analysis for larger settings strongly suggest that the generalized Pauli constraints do not play a role for Hubbard models with significantly more than $3$ sites. Indeed, for all settings with $5$ lattice sites we find no pinning at all, independent of the electron number $N=3,4,5$ and the value for $u \in \R$. However, for the settings with four sites and the particle numbers $N=3$ and $N=5$ we find again a phase transition between no-pinning and exact pinning by changing $u$ as already found for the setting with three electrons on three sites (see previous section, Sec.~\ref{sec:hubbardAnalyt}). We briefly present those results. By applying a particle-hole transformation to the Hubbard Hamiltonian (\ref{HamHubbard}) it can be shown that the ground states for $N=3$ and $N=5$ are identical from a structural viewpoint. This implies that the pinning behavior of both states is identical and we consider just $N=3$. By studying the energy curves $E_j(u)$ we find three intersections of the lowest energy curve with the next highest, at the positions $u_1=-18.6$, $u_2=0$ and $u_3 = 18.6$. This requires to study each $u$-regime $]-\infty, u_1]$, $[u_1,u_2]$,  $[u_2, u_3]$ and $]u_3,+\infty[$ separately. For the first two regimes and also the last one we do not find pinning, in contrast to the third one. There, the ground state space is four dimensional and described by the quantum numbers $(S,M,K)=(\frac{1}{2},\pm\frac{1}{2},\pm 1)$.
\begin{figure}[h]
    \centering
\includegraphics[scale=0.37]{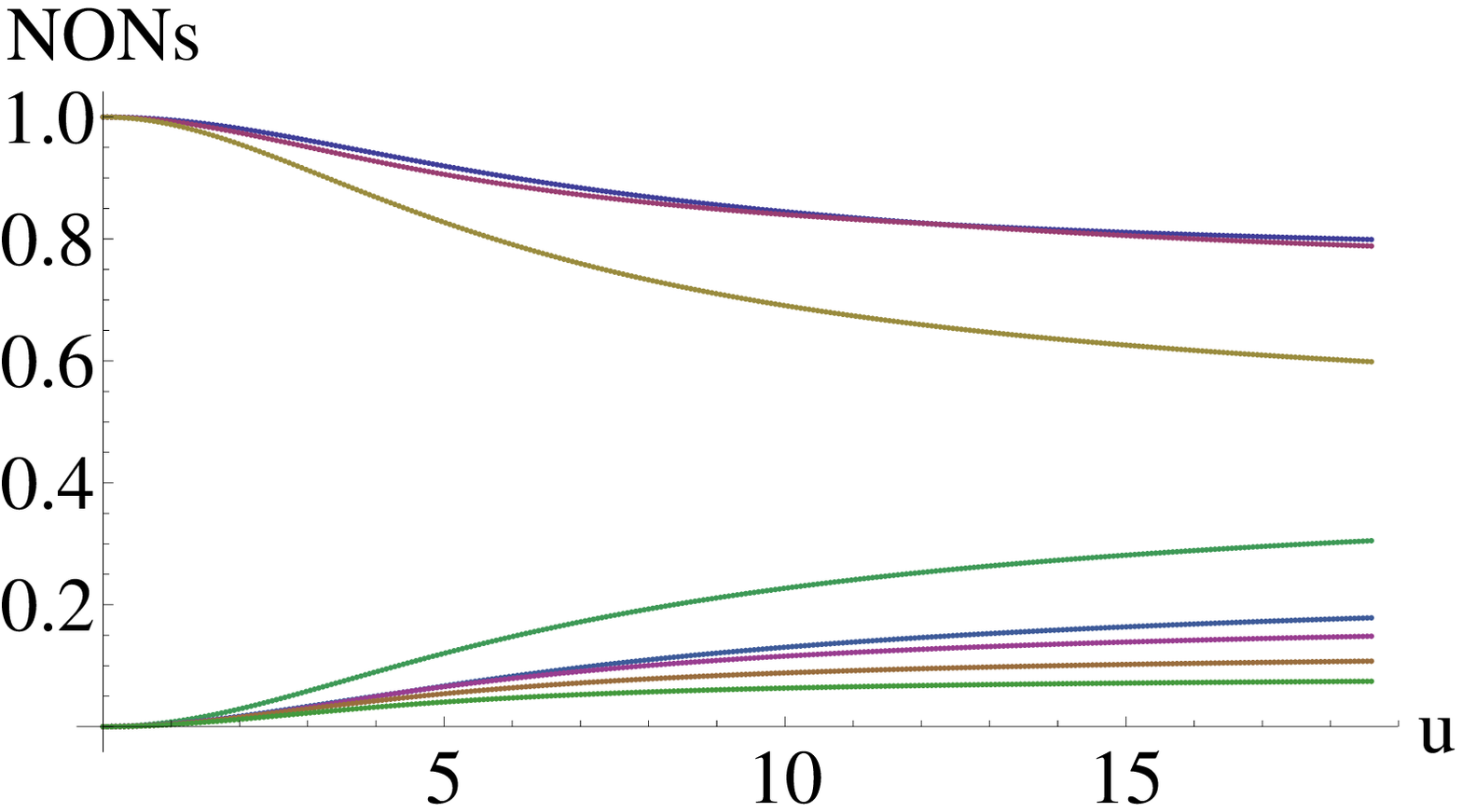}\hspace{0.3cm}
\includegraphics[scale=0.37]{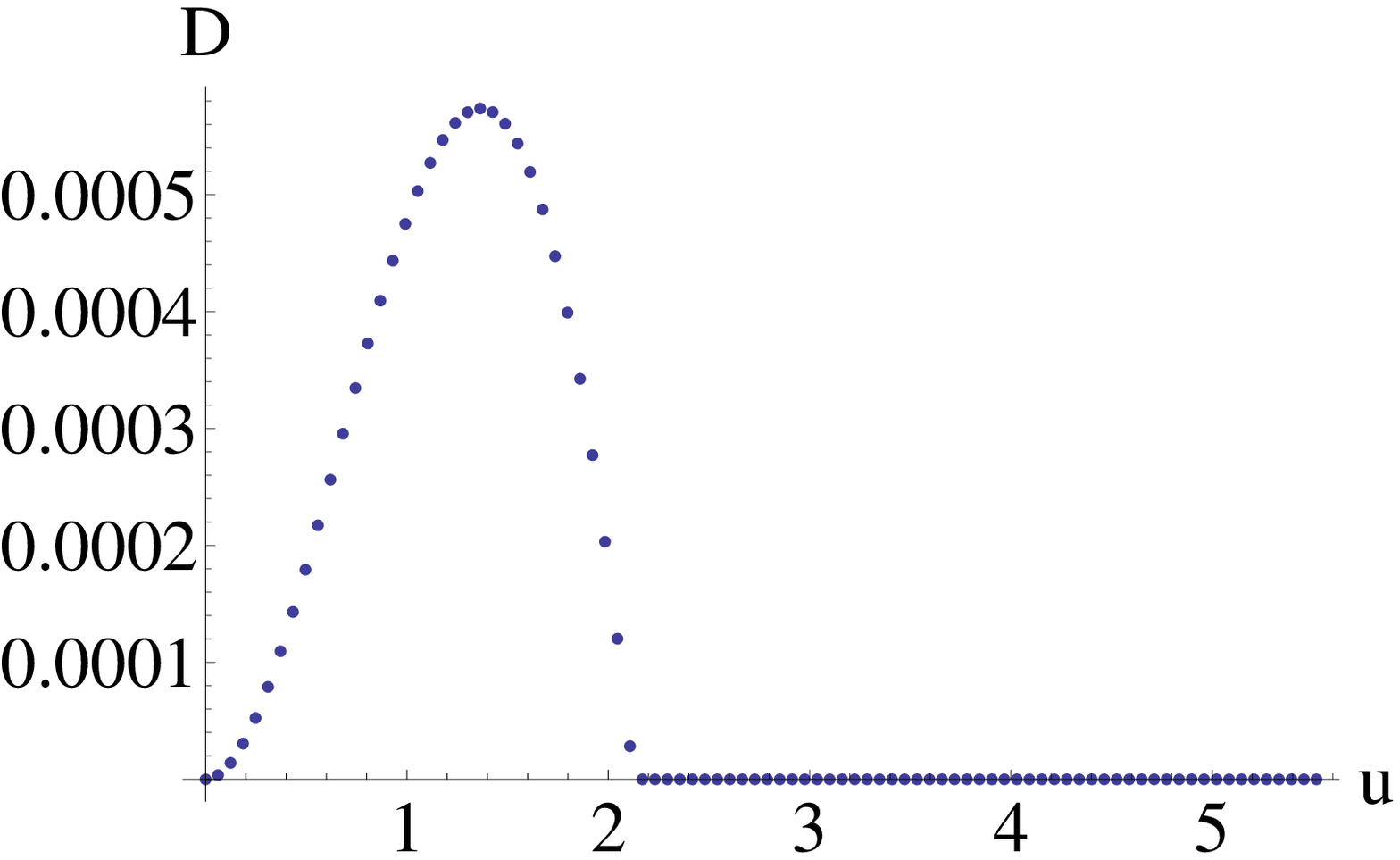}
\captionC{NONs for the Hubbard model ground state for $3$ electrons on $4$ sites as function $u$ in the pinning-relevant interval $[0,18.6]$ (left). The pinning is shown on the right side.}
\label{fig:non@34}
\end{figure}
We choose the state $|\Psi_{\frac{1}{2},\frac{1}{2},1}\rangle$ and the other three are of course identical from a structural viewpoint since they arise from $|\Psi_{\frac{1}{2},\frac{1}{2},1}\rangle$ just by flipping all spins and/or inverting the state in the reciprocal lattice. The NONs are presented on the left side of Fig.~\ref{fig:non@34}. On the right side the distance to the polytope boundary is shown. We find exact pinning for $u$ above the crossing point, $u_p\geq 2.3$. Among the 31 generalized Pauli constraints for the setting $\wedge^3[\mathcal{H}_1^{(8)}]$ (shown in Appendix \ref{app:GPClargsettings}) there is exactly one, which is saturated, the constraint
\begin{equation}
    D_2^{(3,8)} = 2-(\lambda_1+\lambda_2+\lambda_4+\lambda_7) \geq 0\,.
\end{equation}
The mathematical origin of the phase transition at $u_p=2.3$ is the same as that for the transition in the setting with three electrons on three sites: At $u_p$ some NONs change their hierarchy. All those crossings between NONs are illustrated in Fig.~\ref{fig:crossing@34}.
\begin{figure}[h]
\centering
\includegraphics[scale=0.27]{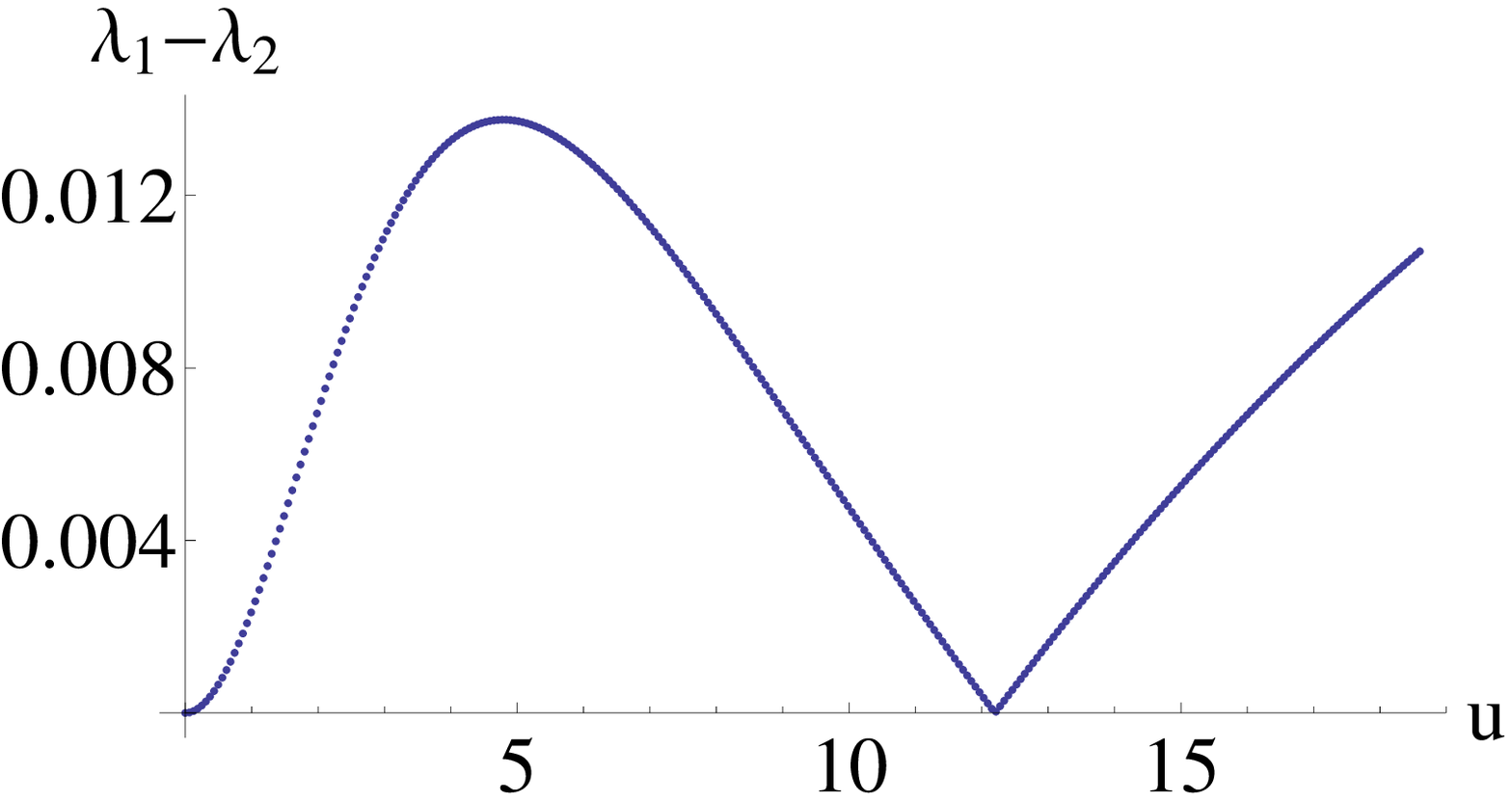}\hspace{0.1cm}
\includegraphics[scale=0.27]{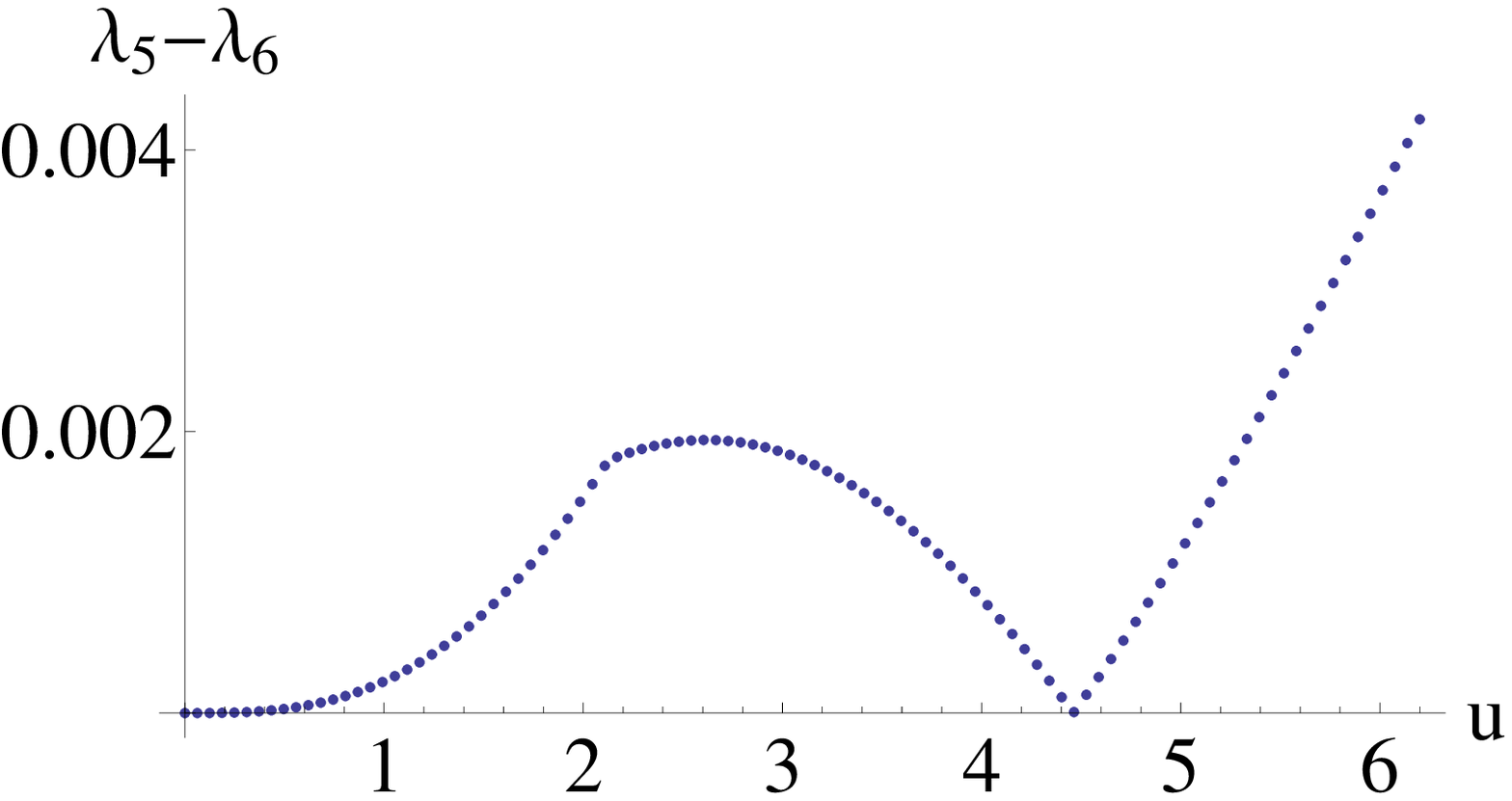}\hspace{0.1cm}
\includegraphics[scale=0.27]{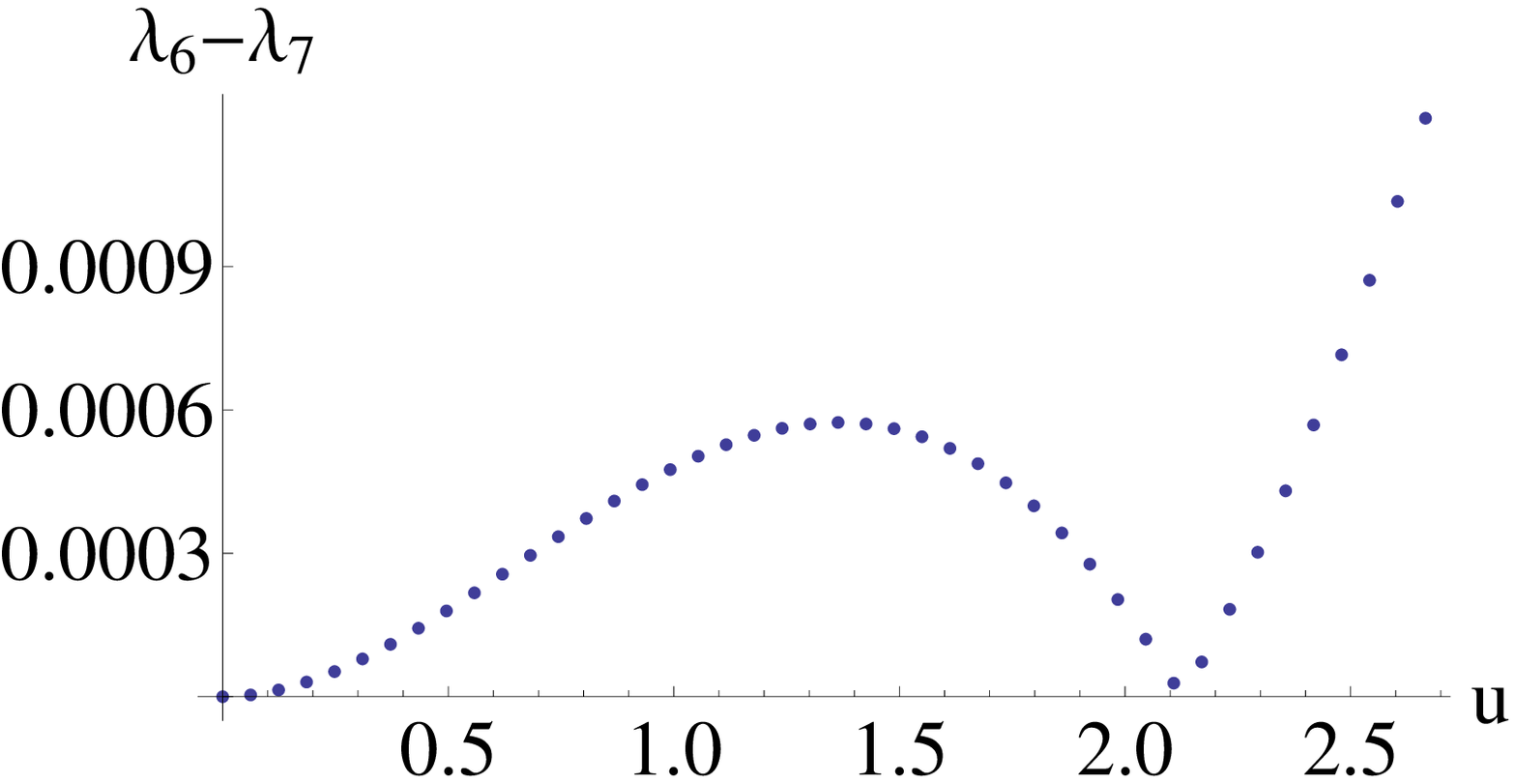}
\captionC{The only crossing points for the NONs in the relevant $u$-interval $[0,18.6]$.}
\label{fig:crossing@34}
\end{figure}
The change of the hierarchy between $\lambda_1$ and $\lambda_2$ (on the left) does not affect $D_2^{(3,8)}$ since both NONs contribute equally. The same also holds for the crossing between $\lambda_5$ and $\lambda_6$, which both do not contribute to $D_2^{(3,8)}$ at all. In contrast to that, the change of hierarchy between $\lambda_6$ and $\lambda_7$ (see right side in Fig.~\ref{fig:crossing@34}) matters, because just $\lambda_7$ shows up in $D_2^{(3,8)}$.

Although we did not find an analytical expression for the ground state $|\Psi_{\frac{1}{2},\frac{1}{2},1}\rangle$ we would like to give the reader an impression how it looks. For this we present its form for three exemplary values of $u$, $u=2 < u_p$ and $u = 3, 12 > u_p$:
\begin{eqnarray}
 |\Psi_{\frac{1}{2},\frac{1}{2},1}(2)\rangle &=&    0.007 \zz{0}{1}{0}{0}{1}{0}{1}{0}+0.124 \zz{0}{0}{1}{1}{0}{0}{1}{0}-0.058 \zz{0}{0}{1}{0}{1}{1}{0}{0}\nonumber \\&&-0.117 \zz{1}{0}{0}{0}{0}{1}{1}{0}+0.110 \zz{1}{0}{0}{0}{1}{0}{0}{1}+0.977 \zz{1}{1}{1}{0}{0}{0}{0}{0}\nonumber\\
 |\Psi_{\frac{1}{2},\frac{1}{2},1}(3)\rangle&=& 0.0139 \zz{0}{1}{0}{0}{1}{0}{1}{0}+0.177 \zz{0}{0}{1}{1}{0}{0}{1}{0}-0.0811 \zz{0}{0}{1}{0}{1}{1}{0}{0}\nonumber \\
&&-0.163 \zz{1}{0}{0}{0}{0}{1}{1}{0}+0.150 \zz{1}{0}{0}{0}{1}{0}{0}{1} +0.955 \zz{1}{1}{1}{0}{0}{0}{0}{0} \nonumber \\
|\Psi_{\frac{1}{2},\frac{1}{2},1}(12)\rangle&=&
0.0613 \zz{0}{1}{0}{0}{1}{0}{1}{0} +0.382  \zz{0}{0}{1}{1}{0}{0}{1}{0} -0.155 \zz{0}{0}{1}{0}{1}{1}{0}{0}\nonumber \\
&&-0.321 \zz{1}{0}{0}{0}{0}{1}{1}{0}+0.260 \zz{1}{0}{0}{0}{1}{0}{0}{1}+0.810 \zz{1}{1}{1}{0}{0}{0}{0}{0}\,.\nonumber\\
&&
\end{eqnarray}
The corresponding NONs read
    \begin{eqnarray}
    \vec{\lambda}(2) &=& (0.981,0.974,0.955,0.029,0.017,0.016,0.015,0.012) \nonumber \\
    \vec{\lambda}(3) &=& (0.962, 0.951, 0.913, 0.0583, 0.0333, 0.0314, 0.0291, 0.0224) \nonumber \\
    \vec{\lambda}(12) &=& (0.826,0.826,0.659,0.253,0.146,0.127,0.095,0.067)\,.
\end{eqnarray}

\chapter{Summary and Conclusion}\label{chap:concloutl}
In this chapter we summarize our main results, draw some conclusions and present an outlook of further research questions inspired by the present thesis.

In Chap.~\ref{chap:Math} we studied the univariant quantum marginal problem (QMP). We learned that a unitary equivalence holds and as a consequence only the spectra of the reduced density operators are relevant. The solution of the univariant QMP found by Klyachko \cite{Kly4} is given by so-called marginal constraints, linear inequalities which define a high-dimensional polytope of possible spectra. We succeeded in breaking down
Klyachko's very abstract derivation to a more elementary level. We found a variational principle expressing arbitrary sums of eigenvalues of a hermitian matrix, which stands at the origin of the marginal constraint of the prototype QMP $\{A,B,AB\}$. The problem of finding all constraints amounts to a study of a very specific intersection property for so-called Schubert varieties, which arise naturally in the variational principle. The remaining steps of the derivation were straightforward but still quite involved. We confirmed that the Schubert varieties have indeed the structure of projective subvarieties of the flag variety and calculated at least for the Grassmannian flag variety the cohomology ring. We resigned from studying the important intersection property from this algebraic viewpoint. Instead, we developed the geometric picture behind it, which allowed us to study the intersection property by brute force. In that way we succeeded in determining marginal constraints for some elementary QMPs. However, for larger settings our brute force method seems to be useless and one needs to use Klyachko's abstract algorithm based on algebraic topology. Due to the same reason we also did not succeed in deriving the generalized Pauli constraints, the marginal constraints for the $1$-body pure $N$-representability problem. As an additional handicap we notice that even for small dimensional Hilbert spaces the number of marginal constraints is already very large. Due to the great importance of the generalized Pauli constraints it would be very important to get a more intuitive understanding of Klyachko's abstract approach using algebraic topology.

Although the structure of the family of generalized Pauli constraints is already quite involved for small particle numbers it is also important to study the generalized Pauli principle for macroscopic settings $\wedge^N[\mathcal{H}_1^{(d)}]$. This is at least very challenging but there is also hope that for large $N$ and $d$ some approximate statements can be derived. From the viewpoint of condensed matter physics these possible restrictions on occupation numbers $n_{\vec{k}}$ may be very interesting. It could be possible that the generalized Pauli constraints rule out certain decay behaviors of occupation numbers $n_{\vec{k}}$ in $\vec{k}$-space or restrict possible non-smoothness of $n_{\vec{k}}$ around the Fermi level.

In the second part, presented in Chap.~\ref{chap:Fermions}, we learned that the antisymmetry of $N$-fermion states leads to so-called generalized Pauli constraints which significantly strengthen Pauli's exclusion principle. These constraints on natural occupation numbers (NONs)
are again linear inequalities giving rise to a polytope $\mathcal{P}_{N,d}$ of possible NONs $\vec{\lambda}$, a proper subset of the ``Pauli-hypercube'' given by $0 \leq \lambda_i\leq 1$. To explore the physical relevance of those remarkable constraints we elaborated the pinning effect, suggested by Klyachko \cite{Kly1}. There, the NONs $\vec{\lambda}$ are lying exactly on the boundary of the polytope and one might expect that the saturated generalized Pauli constraints lead to strong restrictions for the corresponding quantum state. However, this exact pinning contradicts intuition. Why should NONs of interacting fermions exactly saturate $1$-particle constraints? Instead, we suggested the new effect of \emph{quasi-pinning}. There, one observes NONs $\vec{\lambda}$ very close but not exactly on the polytope boundary. Although (quasi-)pinning seems to be an interesting effect its physical relevance is not obvious. In \cite{Kly1} it was indicated and in the presented thesis it has been elaborated that pinning corresponds to a very specific and simplified structure of the corresponding $N$-fermion state $|\Psi_N\rangle$. As an example we have seen that pinning of NONs $\vec{\lambda}$ belonging to the Borland-Dennis setting $\wedge^3[\mathcal{H}_1^{(6)}]$
implies the structure
\begin{equation}\label{BDPinStruc5}
|\Psi_3\rangle = \alpha \,|1,2,3\rangle + \beta \,|1,4,5\rangle + \gamma \,|2,4,6\rangle\,
\end{equation}
with some $1$-particle states $|i\rangle$, and $|i_1,i_2,i_3\rangle$ denotes a Slater determinant. Indeed, (\ref{BDPinStruc5}) is much simpler than generic states $|\Psi_3\rangle \in \wedge^3[\mathcal{H}_1^{(6)}]$, linear combination of $\binom{6}{3}=20$ Slater determinants.
We provided strong evidence that this important relation of pinning and structure of the $N$-fermion quantum state is stable under small deviations.
On the one hand we verified this analytically in the neighborhood of the Hartree-Fock point $\vec{\lambda}_{HF} \equiv (1,1,1,0,0,0)$ for the specific setting $\wedge^3[\mathcal{H}_1^{(6)}]$ and on the other hand claimed numerical evidence for larger settings. Hence, quasi-pinning seems to correspond  approximately to the simplified structure of $|\Psi_N\rangle$ and would therefore be physically relevant. It is one of the most important tasks for the future to confirm this rigorously.

The expected structural implication of quasi-pinning for $|\Psi_N\rangle$  (see e.g.~\ref{BDPinStruc5}) leads to a first application, a generalization of the Hartree-Fock method. For most Hamiltonians describing interacting fermions it is impossible to analytically determine the ground state. Therefore, one often resorts to approximations. One of the most well-known ones is the Hartree-Fock approximation, where one minimizes the energy expectation value only for single Slater determinants, $|\Psi_N\rangle = |i_1,\ldots,i_N\rangle$. Whenever the correct, but unknown ground state is weakly correlated (this is e.g.~the case for atoms) this method will work very well. We suggest to generalize the ansatz of a single Slater determinant by the more general state structure corresponding to exact pinning (see e.g.~(\ref{BDPinStruc5})). The method of linearly superposing several Slater determinants to improve the ground state approximation is of course already well-established in quantum chemistry as multi-configurational self-consistent field  methods (MCSCF). However, our findings show that we need to superpose only a few, but carefully chosen, Slater determinants. Compared to the standard MCSCF method this would lead to a significant computational speed up, which can be used to increase the accuracy by increasing the dimension of the truncated $1$-particle Hilbert space.

The number of generalized Pauli constraints significantly increases as one increases the dimensions $d$ of the underlying $1$-particle Hilbert space. Moreover, the family of constraints has only been determined for settings with $d\leq 10$ and without a more intuitive understanding of the algebraic topological approach there is not much hope that one can calculate these constraints with Klyachko's algorithm for $d\gg10$. Since most physical models are based on a large or even infinite-dimensional $1$-particle Hilbert space it is not clear at all how one can investigate possible (quasi-)pinning of specific physical states. To make this possible we developed the concept of a truncated pinning analysis. We learned that NONs which are exactly equal to $1$ or $0$ can be omitted for the pinning analysis. Moreover, we quantitatively verified that all NONs very close to $1$ or $0$ can also be omitted. In that sense one can perform a pinning analysis in a truncated setting, where a possible result on quasi-pinning translates one-to-one to the correct setting, but with a small truncation error, given by the distance of the neglected NONs to $1$ and $0$, respectively. The systematic application of this pinning analysis for systems with a few electrons, like atoms or ions in their ground state or in an excited state would allow to explore the existence of possible quasi-pinning.

In Chap.~\ref{chap:Physics} we investigated fermionic states from the new viewpoint of generalized Pauli constraints. The central question was whether we can find pinning or quasi-pinning. As a first model we studied in one dimension few harmonically coupled spinless fermions confined to a harmonic trap. For the case of three fermions, we did not only succeed in analytically determining the ground state but also found the $1$-particle reduced density operator $\rho_1$. We obtained the corresponding NONs numerically and afterwards confirmed them for the regime of weak interacting by applying high-order degenerate perturbation theory to $\rho_1$.
Using the truncated pinning analysis we observed strong quasi-pinning. In the regime of a small coupling $\delta\ll 1$ we found a distance of the ground state NONs $\vec{\lambda}(\delta)$ to the polytope boundary of order $\delta^8$ (see also \cite{CS2013}). This quasi-pinning is also present for medium interaction and vanishes only for very strong couplings. The same results were confirmed numerically for particle numbers $N=4,\ldots,8$. Moreover, by also investigating the first few excited states for $N=3$ we identified quasi-pinning as an effect of the lowest few energy eigenstates. This provides first insights into the mechanism behind ground state quasi-pinning. The energy minimization, which yields the ground state is in strong conflict to the antisymmetry (responsible for the existence of the generalized Pauli constraints) in the sense that skipping the antisymmetry would lead to much lower ground state energies.
The analytical study of the influence of this competition on quasi-pinning would be important, as well.
Another surprising result for the harmonic model was the similarity between the natural orbitals of the fermionic and bosonic ground state (see also \cite{CS2013NO}). Furthermore, the corresponding NONs, $\lambda_k^{(f)}$ and $\lambda_k^{(b)}$ exhibit identical exponential decay for $k\rightarrow \infty$. It would be interesting to study whether this also holds for fermionic and bosonic systems with anharmonic interactions.

As a second model we studied analytically the Hubbard model for three electrons on three lattice sites. We learned that due to the translational symmetry and spin-symmetries the total Hilbert space (already quite small) splits into several even smaller subspaces. The presence of symmetries is also the mathematical reason for the surprising findings of exact pinning. By considering larger settings we numerically verified that the ground state pinning vanishes. This is already the case for all settings with five lattice sites. Although the results of exact pinning seem to be not relevant for macroscopic systems in condensed matter physics they may be important from different viewpoints. The Hubbard model of e.g.~three electrons on three lattice sites can be realized in the lab, e.g.~by optical traps, and it also serves as a prime model for the description of some molecules as e.g.~benzene. In that sense the pinning effect can have a strong influence on the behavior of some concrete physical systems. One of the hopes would be to find a physical quantity for those systems which is monotonously related to pinning. Then, the generalized Pauli constraints would impose a ``magical'' (kinematic) bound on this quantity.

Although the validity of Pauli's exclusion principle is extremely well verified in experiments the antisymmetry of the corresponding fermionic wave function under particle exchange is only motivated theoretically and measured indirectly. The reason for this is that measuring occupation numbers is much easier than measuring two-particle amplitudes. However, thanks to Klyachko's work, there is now a chance to reduce the task of verifying the antisymmetry to measuring occupation numbers. Indeed, all measured tuples of occupation numbers are not only expected to lie inside of the ``Pauli hypercube'' $1\leq \lambda_i \leq 1$, but should be restricted to the corresponding polytope. Confirming this experimentally, would provide very strong evidence for the correctness of the antisymmetry. Alternatively, if one accepts the antisymmetry as necessity following from the indistinguishable character of the particles, such a measurement could verify that e.g.~those particle, which we call electrons and which seem to be identical from our state of knowledge are identical, indeed. This would rule out the possibility of additional physical properties (as e.g.~``flavors'', which are not know yet) leading to different types of electrons, which couldn't be distinguished in experiments so far.

In general, the main result of quasi-pinning for the harmonic system is quite surprising. Due to the expected relation of quasi-pinning and structure of the corresponding $N$-fermion state the generalized Pauli constraints play a very important role at least for the ground states of some fermionic systems. The central question stimulated by this thesis is whether the effect of ground state quasi-pinning is generic for all few-fermion systems trapped by some confinement potential. Does it also show up for $2$- or $3$-dimensional systems? Moreover, we still did not yet sufficiently understand the mechanism behind quasi-pinning. A strong candidate which still needs to be analyzed is the conflict of antisymmetry and energy minimization mentioned above. It is in particular well-known that for identical particle systems confined by some external trap the ground state is symmetric under particle exchange and that the corresponding bosonic ground state energy is much lower than the corresponding fermionic one. In that sense it clear that the antisymmetry strongly limits the ground state properties of fermionic quantum systems. However, it still needs to be explored whether this is just due to Pauli's exclusion principle or whether additional fermionic features (as e.g.~the generalized Pauli constraints) have also an influence.

If the suggested quasi-pinning effect turns out to be generic for the regime of not too strong interaction and thus holds for most fermionic quantum systems confined by some trap one could improve numerical methods by making use of the important structural implications of pinning on the corresponding $N$-fermion quantum state. A first example is the generalization of the Hartree-Fock method, which was already mentioned above. A second, much more challenging and speculative idea would be to use the knowledge of generic quasi-pinning for DMRG methods. There, one would intend to truncate the local Hilbert spaces to the subspaces corresponding to exact pinning. An alternative idea would be to study certain macroscopic systems from a few-fermion viewpoint. E.g.~for localized or itinerant electronic systems one could consider a small region inside the material and study the few electrons contained there in the field generated by the nuclei and all the other electrons. This would give rise to an external potential similar to that of a harmonic trap. If the dynamics of those few electrons is sufficiently slow we expect that their quantum state is (approximately) the ground state of that effective trap and therefore (quasi-)pinned. From a broader viewpoint this means to generalize the elementary $1$-particle picture for weakly correlated quantum systems to a picture where the smallest unit subsystem is not given by a single electron anymore but by a small group of a few electrons described by a quasi-pinned state. Similar techniques can also be found in quantum information theory. There, for a given many spin state, an important task is to extract as many Bell pairs as possible. In that case the Bell pair of two spins would define this smallest unit and their quantum state is given by the famous Bell state.

\begin{appendix}
\chapter{Supporting Material for Chapter \ref{chap:Math}}\label{app:chapMath}
\section{Some Additional Mathematical Concepts}\label{rest}
\subsection{Basic definitions}\label{app:definitions}
\begin{defn}\label{quotienttop}
Let $(X,\mathfrak{T}_{E})$ be a topological space, $F$ a set and $\varphi: X \rightarrow F$ a map. $\varphi$ induces a topology $\mathfrak{T}_F$ on $F$ by:
$U \subseteq F$ open if and only if $\varphi^{-1}(U) \in \mathfrak{T}_E$. This topology is called quotient topology on $F$ induced by $\varphi$. It is the strongest topology on $F$ under the condition that $\varphi$ is continuous.
\end{defn}

\begin{defn}
Let $\K$ be a field. The \textbf{polynomial ring $\K[X]$} is defined as the free (additive) abelian group generated by the `independent' symbols $X^0, X^1, X^2, \ldots$ with coefficients in $\K$, equipped with an abelian multiplicative structure defined by the rule $X^k X^l := X^{k+l}$, which is then linearly extended to the product of arbitrary elements of $\K[X]$. The elements of $\K[X]$ are called polynomials of the variable $X$ over the field $\K$.
\end{defn}
We abbreviate $X^0 \equiv 1$ and $X^1 \equiv X$. Moreover, we can extend the definition of a polynomial ring in one variable $X$ in the natural way to the polynomial ring $\K[X_1, X_2, \ldots, X_n]$ of independent variables $X_1, X_2, \ldots, X_n$.
\begin{defn} The \textbf{roots} of a polynomial $p(X) \in \K[X]$ is the set of solutions of the equation $p(X)=0$, where $X$ now takes concrete values in $\K$.
\end{defn}

\begin{defn} A field $\K$ is said to be \textbf{algebraically closed} iff every polynomial $p(X) \in \K[X]$ with one variable of degree at least $1$ has a root in $\K$.
\end{defn}
This definition is equivalent with the property that every polynomial $p(X) \in \K[X]$ with one variable of degree at least one
could be written as product of degree $1$ polynomials.
We give some examples:\\
\itemize
\item The field $\mathbb{C}$ of complex numbers is algebraically closed
\item The field $\mathbb{R}$ is not algebraically closed since $X^2 +1 = 0 $ has no solution in $\mathbb{R}$.
\item Every subfield of $\mathbb{R}$ is not algebraically closed
\item Every finite field $\K = \{k_1,k_2,\ldots,k_n\}$ is not algebraically closed since $p(X) = (X-k_1)\cdot \ldots \cdot (X-k_n) +1$ has no root in $\K$
\item The field of algebraic numbers is algebraically closed. An algebraic number is the root in $\mathbb{C}$ of a nonzero polynomial in  $\mathbb{Q}[X]$ or equivalently $\mathbb{Z}[X]$. Even the fact that these numbers form a field is not trivial.
\begin{defn} A \textbf{ring} $(R,+,\cdot)$ is a triple of a set $R$ and two binary operations $+$ (addition) and $\cdot$ (multiplication) with the following properties
\begin{enumerate}
\item  $(R,+)$ is an (additive) abelian group
\item $(R,\cdot)$ is closed under multiplication, the associativity law holds and there exists a neutral element $1$.
\item $\forall r, s, t \in R:\, r (s +t) = r s + r t$ and $ (s + t) r = s r +  t r $
\end{enumerate}
\end{defn}
In some definitions of a ring the existence of an identity is skipped and they use the name unitary ring for a ring with identity.

\begin{defn} An (two-sided) \textbf{ideal $\mathcal{I}$} of a ring $(R,+,\cdot)$ is a subset of $R$ with the properties
\begin{enumerate}
\item $\mathcal{I}$ is a subgroup of $(R,+)$
\item $\forall r \in R, x \in \mathcal{I}: r x \in \mathcal{I}$
\end{enumerate}
\end{defn}
An ideal is not necessary a subring of $R$ since it does not automatically include the neutral element $1$ of $(R,\cdot)$.
Every subset $S \subseteq R$ generates in the natural way an ideal $\mathcal{I}(S)$, which also could be defined as the intersection of all ideals of $R$ that contain $S$.
We add ideals $\mathcal{I}$ and $\mathcal{J}$ of a ring $R$ by
\begin{equation}
\mathcal{I}+\mathcal{J} = \{x +y | x \in \mathcal{I},  y \in \mathcal{J}\}
\end{equation}
and we multiply them by taking the ideal generated by the set
\begin{equation}
 S= \{x \cdot y |x \in \mathcal{I},  y \in \mathcal{J}\}
\end{equation}

\begin{defn} The \textbf{zero locus $Z[S]$} of an subset $\mathcal{S} \subseteq \K[X_1,\ldots,X_n]$ of a polynomial ring $\K[X_1,\ldots,X_n]$ is defined as
\begin{equation}
Z[S] := \{x \in \K^n | p(x) = 0 \,\,\forall p \in \K[X_1,\ldots,X_n]\}
\end{equation}
\end{defn}
We find $Z[S] = Z[\mathcal{I}(S)]$ and hence we refer in the following zero loci only to ideals.
\subsection{Cell decomposition and CW complex}\label{CW-complex}
In this section we introduce the concept of cellular decompositions of topological spaces. Later we can use these concepts to easily calculate the homology and cohomology of decomposable topological spaces.

\begin{defn}\label{cells}
A closed \emph{$n$-cell} is a topological space that is homeomorphic to an Euclidean $n$-ball $E^{(n)}$.
\end{defn}

\begin{defn}\label{def:cellstructure}
Let $X$ be a set. The pair $(X,\Phi)$ consisting of $X$ and a family $\Phi = \{\varphi_{\alpha}\}$ of maps $\varphi_{\alpha}: E^{(n_{\alpha})} \rightarrow X$ is called a cell structure if
\begin{enumerate}
\item $\forall \alpha: \varphi|_{\mathring{E}^{(n_{\alpha})}}$ is injective
\item $\{\varphi_{\alpha}(\mathring{E}^{(n_{\alpha})})\}_{\alpha}$ is a partition of $X$, i.e.~$\bigcup_{\alpha}  \varphi_{\alpha}(\mathring{E}^{(n_{\alpha})}) = X$ and $\forall \alpha, \beta, \alpha \neq \beta: \varphi_{\alpha}(\mathring{E}^{(n_{\alpha})}) \cap \varphi_{\beta}(\mathring{E}^{(n_{\beta})}) = \emptyset$
\item $\forall \alpha: {\varphi_{\alpha}}(\partial E^{(n_{\alpha})}) \subseteq \bigcup_{\beta, n_{\beta}< n_{\alpha}}{\varphi_{\beta}}(\mathring{E}^{(n_{\beta})}) $
\end{enumerate}
\end{defn}
The sets $\sigma_{\alpha}:= \varphi|_{E^{(n_{\alpha})}}$ are called \textbf{\emph{cells}}, the maps $\varphi_{\alpha}:\varphi|_{E^{(n_{\alpha})}}\rightarrow \sigma_{\alpha} $ \emph{characteristic maps} and the union $X^{(n)} := \bigcup \{\varphi_{\beta}(\mathring{E}^{(n_{\beta})})|\mbox{dom}({\varphi_{\beta}})= E^{(n_{\beta})} \wedge n_\beta \leq n \}$ the \emph{$n$-skeleton}.
We find
\begin{lem}
Let $(X,\Phi)$ be a cell structure. Then
\begin{enumerate}
\item $\forall \alpha: \varphi_{\alpha}(\mathring{E}^{(n_{\alpha})}) = \varphi_{\alpha}(E^{(n_{\alpha})}) \setminus  \varphi_{\alpha}(\partial E^{(n_{\alpha})})$
\item $\forall \alpha:\sigma_{\alpha} \subseteq X^{(n_{\alpha})}$, i.e.~every $n$-cell of $(X,\Phi)$ is a subset of $X^{(n)}$
\item $X^{(k)} = \bigcup \{\varphi_{\beta}(\sigma_{\beta})|\mbox{dom}({\varphi_{\beta}})= E^{(n_{\beta})} \wedge n_\beta \leq k \}$
\end{enumerate}
\begin{proof}
\begin{enumerate}
\item
Due to condition $3$ in Definition \ref{def:cellstructure} $\varphi_{\alpha}(\partial E^{(n_{\alpha})})$ is contained in $X^{(n_{\alpha}-1)}$ and since $\{\varphi_{\alpha}(\mathring{E}^{(n_{\alpha})})\}_{\alpha}$ is a partition of $X$, $\varphi_{\alpha}(E^{(n_{\alpha})})= \varphi_{\alpha}(\mathring{E}^{(n_{\alpha})}) \cup \varphi_{\alpha}(\partial E^{(n_{\alpha})})$ is a disjoint union. Hence, $\varphi_{\alpha}(\mathring{E}^{(n_{\alpha})}) = \varphi_{\alpha}(E^{(n_{\alpha})}) \setminus  \varphi_{\alpha}(\partial E^{(n_{\alpha})})$.
\item By definition $\sigma_{\alpha} \setminus \partial \sigma_{\alpha} \in X^{(n_{\alpha})}$ and due to property $3$ in Definition \ref{def:cellstructure}, $\sigma_{\alpha} \subseteq X^{(n_{\alpha})}$
\item This point is clear in the spirit of the last two points.
\end{enumerate}
\end{proof}
\end{lem}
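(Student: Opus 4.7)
The plan is to prove the three statements roughly in the order given, since later parts build on the earlier ones, and the key technical input in every case is the interplay between the partition condition (item 2 of Definition \ref{def:cellstructure}) and the boundary containment condition (item 3 of Definition \ref{def:cellstructure}).

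For statement (1), I would prove both inclusions separately. The inclusion ``$\subseteq$'' is the substantive one: given $x \in \varphi_{\alpha}(\mathring{E}^{(n_{\alpha})})$, clearly $x \in \varphi_{\alpha}(E^{(n_{\alpha})})$, and I would argue $x \notin \varphi_{\alpha}(\partial E^{(n_{\alpha})})$ by contradiction. If also $x = \varphi_{\alpha}(q)$ for some $q \in \partial E^{(n_{\alpha})}$, then condition 3 places $x$ in some $\varphi_{\beta}(\mathring{E}^{(n_{\beta})})$ with $n_{\beta} < n_{\alpha}$, contradicting the disjointness in condition 2 (since $x$ would lie in two distinct members of the partition $\{\varphi_{\gamma}(\mathring{E}^{(n_{\gamma})})\}_{\gamma}$). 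The reverse inclusion ``$\supseteq$'' is elementary: since $E^{(n_{\alpha})} = \mathring{E}^{(n_{\alpha})} \sqcup \partial E^{(n_{\alpha})}$, any point of $\varphi_{\alpha}(E^{(n_{\alpha})})$ not in $\varphi_{\alpha}(\partial E^{(n_{\alpha})})$ must have at least one preimage in $\mathring{E}^{(n_{\alpha})}$.

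For statement (2), I would decompose $\sigma_{\alpha} = \varphi_{\alpha}(E^{(n_{\alpha})}) = \varphi_{\alpha}(\mathring{E}^{(n_{\alpha})}) \cup \varphi_{\alpha}(\partial E^{(n_{\alpha})})$ and handle each piece. The interior image lies in $X^{(n_{\alpha})}$ directly from the definition of the $n_{\alpha}$-skeleton. For the boundary image, condition 3 of Definition \ref{def:cellstructure} gives $\varphi_{\alpha}(\partial E^{(n_{\alpha})}) \subseteq \bigcup_{\beta,\,n_{\beta} < n_{\alpha}} \varphi_{\beta}(\mathring{E}^{(n_{\beta})}) \subseteq X^{(n_{\alpha}-1)} \subseteq X^{(n_{\alpha})}$, so the union also lies in $X^{(n_{\alpha})}$.

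For statement (3), I would give a double inclusion. The inclusion ``$\supseteq$'' is a direct consequence of statement (2): every $\varphi_{\beta}(\sigma_{\beta})$ with $n_{\beta} \leq k$ is contained in $X^{(n_{\beta})} \subseteq X^{(k)}$. For ``$\subseteq$'', by definition $X^{(k)}$ is the union of the interior images $\varphi_{\beta}(\mathring{E}^{(n_{\beta})})$ with $n_{\beta} \leq k$, and each such interior image is contained in $\varphi_{\beta}(E^{(n_{\beta})}) = \sigma_{\beta}$, giving the inclusion.

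I anticipate the main obstacle (really the only non-bookkeeping step) to lie in the ``$\subseteq$'' direction of statement (1). The subtlety is that the injectivity hypothesis only holds on the open ball $\mathring{E}^{(n_{\alpha})}$, so one cannot rule out $x \in \varphi_{\alpha}(\partial E^{(n_{\alpha})})$ by invoking injectivity of $\varphi_{\alpha}$ on all of $E^{(n_{\alpha})}$. The resolution is to route the argument through the partition property of $X$, which ``globally'' prevents a point with a preimage in $\mathring{E}^{(n_{\alpha})}$ from simultaneously sitting in a lower-dimensional open cell, into which the boundary image is forced by condition~3. Once this point is isolated, the remainder of the proof is straightforward set-theoretic manipulation.
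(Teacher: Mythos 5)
Your proof is correct and follows essentially the same route as the paper's: in all three items the work is carried by combining the partition property (condition 2) with the boundary-containment property (condition 3), with your item (1) merely spelling out as an explicit double inclusion and contradiction what the paper phrases as "the union $\varphi_{\alpha}(\mathring{E}^{(n_{\alpha})}) \cup \varphi_{\alpha}(\partial E^{(n_{\alpha})})$ is disjoint." Your observation that one cannot appeal to injectivity on the closed ball, and must instead route through the partition of $X$, is exactly the point implicit in the paper's argument.
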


Moreover, the family $\Phi$ induces a topology $\mathfrak{T}^{(CW)}$ on $X$ according to
\begin{enumerate}
\item $\forall \alpha, \sigma_{\alpha}$ is made into a topological space by choosing the quotient topology induced $\varphi_{\alpha}$ (see also \ref{quotienttop}), namely the strongest topology $\mathfrak{T}_{\alpha}$ such that $\varphi_{\alpha}$ is still continuous
\item The topology $\mathfrak{T}^{(CW)}$ is defined as the weakest topology with respect to the subsets ${\sigma_{\alpha}}$, i.e.~a set $A \subseteq X$ is closed in $(X,\mathfrak{T}^{(CW)})$ if and only if $A \cap \sigma_{\alpha}$ is closed in $(\sigma_{\alpha},\mathfrak{T}_{\alpha})$ for all $\alpha$.
\end{enumerate}
The triple $(X,\Phi,\mathfrak{T}^{(CW)})$ is called CW complex.

We can prove
\begin{lem}\label{lemhomeo1}
Let $(X,\Phi,\mathfrak{T}^{(CW)})$ be a CW complex and $\mathfrak{T}^{(CW)}$ the CW topology induced by $\Phi$. Then
$\forall \alpha, \tilde{\varphi}_{\alpha}:=\varphi_{\alpha}|_{\mathring{E}^{(n_{\alpha})}}$ is a homeomorphism onto $\varphi_{\alpha}({\mathring{E}^{(n_{\alpha})}})$ with respect to the quotient topologies $\tilde{\mathfrak{T}}_{\alpha}$ on $\varphi_{\alpha}(\mathring{E}^{(\alpha)})$ and $\tilde{\mathfrak{T}}_{eucl}^{(n_{\alpha})}$ on $\mathring{E}^{(n_{\alpha})}$, induced by the topologies $\mathfrak{T}_{\alpha}$ on $\sigma_{\alpha}$ and $\mathfrak{T}_{eucl}^{(n_{\alpha})}$ on $E^{(n_{\alpha})}$.
\end{lem}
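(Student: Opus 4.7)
The plan is to check the three defining conditions of a homeomorphism in turn: bijectivity, continuity, and openness of $\tilde{\varphi}_{\alpha}$. Bijectivity is immediate: injectivity of $\tilde{\varphi}_{\alpha}$ is exactly property~1 in Definition~\ref{def:cellstructure}, and surjectivity holds by construction, since the codomain is defined as the image $\varphi_{\alpha}(\mathring{E}^{(n_{\alpha})})$. Continuity is also essentially free. By the very definition of the quotient topology $\mathfrak{T}_{\alpha}$, the map $\varphi_{\alpha}: (E^{(n_{\alpha})},\mathfrak{T}_{eucl}^{(n_{\alpha})}) \to (\sigma_{\alpha},\mathfrak{T}_{\alpha})$ is continuous; restricting both the domain to $\mathring{E}^{(n_{\alpha})}$ with the induced topology $\tilde{\mathfrak{T}}_{eucl}^{(n_{\alpha})}$ and the codomain to $\varphi_{\alpha}(\mathring{E}^{(n_{\alpha})})$ with the induced topology $\tilde{\mathfrak{T}}_{\alpha}$ preserves continuity.

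The substantive content is the openness of $\tilde{\varphi}_{\alpha}$, equivalently the continuity of $\tilde{\varphi}_{\alpha}^{-1}$. My strategy is as follows. Given an open set $U \subset \mathring{E}^{(n_{\alpha})}$, which (since $\mathring{E}^{(n_{\alpha})}$ is open in $E^{(n_{\alpha})}$) is also open in $E^{(n_{\alpha})}$, I want to exhibit a set $V$ open in $(\sigma_{\alpha},\mathfrak{T}_{\alpha})$ with $V\cap \varphi_{\alpha}(\mathring{E}^{(n_{\alpha})}) = \tilde{\varphi}_{\alpha}(U)$. The natural candidate is
\begin{equation*}
V := \sigma_{\alpha} \setminus \varphi_{\alpha}\bigl(E^{(n_{\alpha})} \setminus U\bigr) \,.
\end{equation*}
The complement $E^{(n_{\alpha})} \setminus U$ is closed in the compact ball $E^{(n_{\alpha})}$ and hence itself compact, so its image under the continuous map $\varphi_{\alpha}$ is compact in $\sigma_{\alpha}$. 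To conclude that $V$ is open I need this compact image to be closed, which is where the main technical difficulty lies.

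The critical step is therefore to establish that $(\sigma_{\alpha},\mathfrak{T}_{\alpha})$ is Hausdorff; in that case compact subsets are automatically closed and $V$ is open. I would verify this by showing that the equivalence relation induced by $\varphi_{\alpha}$ on $E^{(n_{\alpha})}$ is closed (its non-trivial classes live entirely in the compact set $\partial E^{(n_{\alpha})}$, because $\varphi_{\alpha}|_{\mathring{E}^{(n_{\alpha})}}$ is injective), so that the quotient is a quotient of a compact Hausdorff space by a closed equivalence relation. Once openness of $V$ is in hand, I compute
\begin{equation*}
V\cap \varphi_{\alpha}(\mathring{E}^{(n_{\alpha})}) \;=\; \varphi_{\alpha}(\mathring{E}^{(n_{\alpha})}) \setminus \varphi_{\alpha}\bigl(\mathring{E}^{(n_{\alpha})} \setminus U\bigr) \;=\; \tilde{\varphi}_{\alpha}(U)\,,
\end{equation*}
where the first equality uses that $\varphi_{\alpha}(\mathring{E}^{(n_{\alpha})})$ is disjoint from $\varphi_{\alpha}(\partial E^{(n_{\alpha})})$ (a consequence of property~2 together with property~3 in Definition~\ref{def:cellstructure}, since $\varphi_{\alpha}(\partial E^{(n_{\alpha})})$ is contained in strictly lower-dimensional cells), and the second uses the injectivity of $\tilde{\varphi}_{\alpha}$. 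This shows $\tilde{\varphi}_{\alpha}(U)$ is open in $(\varphi_{\alpha}(\mathring{E}^{(n_{\alpha})}),\tilde{\mathfrak{T}}_{\alpha})$ and completes the proof.
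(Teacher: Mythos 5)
Your bijectivity and continuity steps are fine, and you are right that the substantive content is the openness of $\tilde{\varphi}_{\alpha}$ (the paper's own proof in fact only verifies continuity and then deduces ``homeomorphism'' from bijectivity, so your instinct to address the inverse direction explicitly is sound). The genuine gap is in your ``critical step'': Hausdorffness of $(\sigma_{\alpha},\mathfrak{T}_{\alpha})$. Your justification -- that the non-trivial equivalence classes of the relation induced by $\varphi_{\alpha}$ lie in the compact set $\partial E^{(n_{\alpha})}$ -- does not imply that the relation is closed in $E^{(n_{\alpha})}\times E^{(n_{\alpha})}$, and in the paper's framework it genuinely need not be. In Definition~\ref{def:cellstructure} the set $X$ carries no topology and the maps $\varphi_{\alpha}$ are arbitrary set maps subject only to the combinatorial conditions 1--3; in particular the identifications $\varphi_{\alpha}$ makes on $\partial E^{(n_{\alpha})}$ can be an arbitrary partition of the boundary. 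For instance, take $E^{(2)}=D^2$ and let $\varphi$ send $\partial D^2\setminus\{p\}$ to one $0$-cell and $p$ to another: this satisfies the definition, the induced relation is not closed, and the quotient $\sigma_{\alpha}$ is not Hausdorff (every open set containing the image of $p$ also contains the image of $\partial D^2\setminus\{p\}$). So compactness of $\varphi_{\alpha}(E^{(n_{\alpha})}\setminus U)$ cannot be upgraded to closedness the way you propose; trying to prove the relation closed would anyway be circular, since closedness of the relation is essentially the closedness of the diagonal of $\sigma_{\alpha}$, i.e.\ the Hausdorff property you are after.

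Fortunately your candidate $V=\sigma_{\alpha}\setminus\varphi_{\alpha}\bigl(E^{(n_{\alpha})}\setminus U\bigr)$ is the right one, and its openness follows directly from the definition of the quotient topology, with no compactness or separation axioms: since $U\subseteq\mathring{E}^{(n_{\alpha})}$, the set $E^{(n_{\alpha})}\setminus U$ contains $\partial E^{(n_{\alpha})}$, and because $\varphi_{\alpha}(\mathring{E}^{(n_{\alpha})})$ is disjoint from $\varphi_{\alpha}(\partial E^{(n_{\alpha})})$ while $\varphi_{\alpha}$ is injective on the interior, one has $\varphi_{\alpha}^{-1}\bigl(\varphi_{\alpha}(E^{(n_{\alpha})}\setminus U)\bigr)=E^{(n_{\alpha})}\setminus U$, which is closed in $E^{(n_{\alpha})}$; hence $\varphi_{\alpha}(E^{(n_{\alpha})}\setminus U)$ is closed in $\mathfrak{T}_{\alpha}$ by definition of the quotient topology and $V$ is open. (Equivalently, and even shorter: for $W$ open in $\mathring{E}^{(n_{\alpha})}$ one has $\varphi_{\alpha}^{-1}\bigl(\tilde{\varphi}_{\alpha}(W)\bigr)=W$, so $\tilde{\varphi}_{\alpha}(W)$ is itself open in $\sigma_{\alpha}$.) Your concluding computation of $V\cap\varphi_{\alpha}(\mathring{E}^{(n_{\alpha})})$ is correct as written, so replacing the Hausdorff detour by this saturation argument closes the gap and yields a complete proof -- one that, unlike the paper's, actually treats the open-mapping direction.
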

\begin{proof}
Let $V = U \cap \varphi_{\alpha}(\mathring{E}^{(\alpha)})\in \tilde{\mathfrak{T}}_{\alpha} $ for some $U \in \mathfrak{T}_{\alpha}$. Hence
\begin{eqnarray}
\tilde{\varphi}_{\alpha}^{-1}(V)&=& \tilde{\varphi}_{\alpha}^{-1}(U) \cap \tilde{\varphi}_{\alpha}^{-1}(\varphi_{\alpha}(\mathring{E}^{(n_{\alpha})}))\nonumber \\
&=& (\varphi_{\alpha}^{-1}(U)\cap  \mathring{E}^{(n_{\alpha})} ) \cap \mathring{E}^{(n_{\alpha})}  \nonumber \\
&=& \varphi_{\alpha}^{-1}(U)\cap  \mathring{E}^{(n_{\alpha})}
\end{eqnarray}
and since $\varphi_{\alpha}$ continuous with respect to $\mathfrak{T}_{\alpha}$ and $\mathfrak{T}_{eucl}^{(n_{\alpha})}$, $\tilde{\varphi}_{\alpha}^{-1}(V) \in \tilde{\mathfrak{T}}_{eucl}^{(n_{\alpha})}$. Since $\tilde{\varphi}_{\alpha}$ bijective onto $\varphi_{\alpha}({\mathring{E}^{(n_{\alpha})}})$, it is also a homeomorphism.
\end{proof}




Now we define CW complexes for topological spaces:

\begin{defn}\label{defcwcomplex2}
Let $(X,\mathfrak{T})$ be a topological space and $(X,\Phi)$ a cell structure.
Then, if every characteristic map $\varphi_{\alpha}|{\mathring{E}^{(n_{\alpha})}}$ is a homeomorphism from $\mathring{E}^{(n_{\alpha})}$ onto $\varphi_{\alpha}({\mathring{E}^{(n_{\alpha})}})$ in the spirit of Definition \ref{lemhomeo1}
we call $(X,\Phi,\mathfrak{T}^{(CW)})$ a CW complex.
\end{defn}

The following important theorem holds
\begin{thm}\label{thmCW}
Let $(X,\mathfrak{T})$ be a topological space and $(X,\Phi,\mathfrak{T}^{(CW)})$ its CW complex according Definition \ref{defcwcomplex2}.
Then $\mathfrak{T} = \mathfrak{T}^{(CW)}$
\end{thm}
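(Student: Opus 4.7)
\textbf{Proof plan for Theorem \ref{thmCW}.}

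The plan is to establish the two inclusions $\mathfrak{T}\subseteq \mathfrak{T}^{(CW)}$ and $\mathfrak{T}^{(CW)}\subseteq \mathfrak{T}$ separately, using that $\mathfrak{T}^{(CW)}$ is, by construction, the final (coherent) topology on $X$ with respect to the family of inclusions $\iota_\alpha:(\sigma_\alpha,\mathfrak{T}_\alpha)\hookrightarrow X$. This universal viewpoint turns both containments into questions about continuity of certain maps, and then into statements about preimages under the characteristic maps $\varphi_\alpha$.

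First I would tackle $\mathfrak{T}\subseteq \mathfrak{T}^{(CW)}$. Let $U\in\mathfrak{T}$ and fix an index $\alpha$. I need to check that $U\cap\sigma_\alpha$ lies in $\mathfrak{T}_\alpha$, i.e.\ by the quotient definition of $\mathfrak{T}_\alpha$, that $\varphi_\alpha^{-1}(U\cap\sigma_\alpha)=\varphi_\alpha^{-1}(U)$ is open in $E^{(n_\alpha)}$ with its Euclidean topology. This amounts to the continuity of $\varphi_\alpha:E^{(n_\alpha)}\to(X,\mathfrak{T})$, which is built into the data of the cell structure (the characteristic map is a map \emph{into} $X$, continuous on the open ball by Definition \ref{defcwcomplex2} and continuous up to the boundary by the closure condition (3) of Definition \ref{def:cellstructure}, together with compactness of $E^{(n_\alpha)}$). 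Hence every inclusion $\iota_\alpha$ is continuous as a map into $(X,\mathfrak{T})$, and by the universal property of the final topology we obtain $\mathfrak{T}\subseteq \mathfrak{T}^{(CW)}$.

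Next I would attack the reverse inclusion $\mathfrak{T}^{(CW)}\subseteq \mathfrak{T}$, which is the harder direction. Let $A$ be $\mathfrak{T}^{(CW)}$-closed; I want to show $A$ is $\mathfrak{T}$-closed. Pick $x\in X\setminus A$; by condition (2) of Definition \ref{def:cellstructure} there is a unique $\beta$ with $x\in\varphi_\beta(\mathring{E}^{(n_\beta)})$. Since $A\cap\sigma_\beta$ is $\mathfrak{T}_\beta$-closed, Lemma \ref{lemhomeo1} (applied with the hypothesis of Definition \ref{defcwcomplex2} that $\varphi_\beta|_{\mathring{E}^{(n_\beta)}}$ is a homeomorphism onto $\varphi_\beta(\mathring{E}^{(n_\beta)})$ with its subspace topology from $\mathfrak{T}$) gives a $\mathfrak{T}$-open neighborhood $V_\beta$ of $x$ inside $\varphi_\beta(\mathring{E}^{(n_\beta)})$ with $V_\beta\cap A=\emptyset$. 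The main obstacle is then to \emph{thicken} $V_\beta$ to a $\mathfrak{T}$-open neighborhood of $x$ in all of $X$: one must use the skeletal filtration $X^{(0)}\subseteq X^{(1)}\subseteq\cdots$ provided by the cell structure and extend $V_\beta$ cell by cell, at each step invoking the homeomorphism property of the characteristic maps on the open cells and the boundary condition (3) to glue extensions consistently. This is the technical heart of the argument and will likely require a transfinite (or at least dimension-wise) induction on $n$, producing at stage $n$ an open set $V^{(n)}\subseteq X^{(n)}$ with $x\in V^{(n)}$ and $V^{(n)}\cap A=\emptyset$, and finally setting $V=\bigcup_n V^{(n)}$.

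Once both inclusions are in place, $\mathfrak{T}$ and $\mathfrak{T}^{(CW)}$ coincide and the theorem follows. I expect the first inclusion to be essentially formal, while the second will demand the skeletal inductive construction just sketched; the delicate point there is ensuring at each step that the open extension across the $n$-cells of $X^{(n)}\setminus X^{(n-1)}$ remains open in the genuine topology $\mathfrak{T}$, which is exactly where the hypothesis of Definition \ref{defcwcomplex2} (that the characteristic maps restrict to homeomorphisms onto the open cells in the subspace topology of $\mathfrak{T}$) is used in an essential way.
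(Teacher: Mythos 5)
Your first inclusion is fine in spirit but note that its justification is shakier than you suggest: continuity of the full characteristic maps $\varphi_\alpha:E^{(n_\alpha)}\to (X,\mathfrak{T})$ is exactly what $\mathfrak{T}\subseteq\mathfrak{T}^{(CW)}$ amounts to, and it is \emph{not} "built into the data" -- condition (3) of Definition \ref{def:cellstructure} is purely set-theoretic (it constrains where $\varphi_\alpha(\partial E^{(n_\alpha)})$ lands, not how $\varphi_\alpha$ behaves topologically near the boundary), so "closure condition plus compactness" does not yield continuity up to the boundary. The real gap, however, is the reverse inclusion $\mathfrak{T}^{(CW)}\subseteq\mathfrak{T}$: you only sketch a cell-by-cell thickening and never carry it out, and as sketched it is circular. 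To conclude at the end that $V=\bigcup_n V^{(n)}$ is $\mathfrak{T}$-open you would need to know that a set meeting every cell (or skeleton) in a relatively open set is $\mathfrak{T}$-open -- which is precisely the statement being proved. No purely formal induction can close this, because in the stated generality the claim can fail outright: decompose $[0,1]$ (usual topology) into its points as $0$-cells; all conditions of Definitions \ref{def:cellstructure} and \ref{defcwcomplex2} hold trivially, yet $\mathfrak{T}^{(CW)}$ is the discrete topology. Any honest proof of this direction must therefore feed in the extra structure present in the intended applications (finitely many cells, $\mathfrak{T}$-continuous characteristic maps with compact closed cells, Hausdorffness), e.g.\ via the standard argument that a $\mathfrak{T}^{(CW)}$-closed set is a finite union of compact, hence $\mathfrak{T}$-closed, pieces $\varphi_\alpha\bigl(\varphi_\alpha^{-1}(A)\bigr)$.

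For comparison, the paper does not run a skeletal induction at all. Its proof decomposes an arbitrary $\mathfrak{T}^{(CW)}$-neighborhood $U$ of $x$ into the traces $U_\beta=U\cap\mathring{\sigma}_\beta$ on the open cells and then uses the hypothesis of Definition \ref{defcwcomplex2} (via Lemma \ref{lemhomeo1}) that $\tilde{\varphi}_\alpha$ is a homeomorphism simultaneously with respect to the quotient-derived topology and the $\mathfrak{T}$-subspace topology, so that openness of each piece transfers between the two topologies; the other containment is claimed by symmetry. That argument is terse (the passage from openness of the traces to $U\in\mathfrak{T}(x)$ is itself not spelled out), but it is a genuinely different, non-inductive route from yours, and your proposal as written does not supply the missing step that either route needs.
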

\begin{proof}
We prove the theorem by showing
\begin{enumerate}
\item $\forall x \in X, U \in \mathfrak{T}^{(CW)}(x) \exists V \in \mathfrak{T}(x)$ such that $V \subseteq U$
\item $\forall x \in X, U \in \mathfrak{T} \exists V \in \mathfrak{T}^{(CW)}(x)(x)$ such that $U \subseteq V$
\end{enumerate}
Therefore, we first observe $\forall A \subseteq X, A = \mathring{\cup}_{\alpha} {A\cap \mathring{\sigma_{\alpha}}}$.
Let $x \in X$ arbitrary but fixed. Moreover choose $U \in \mathfrak{T}^{(CW)}(x)$ and decompose it according this observation, namely $U = \mathring{\cup}_{\beta} U_{\beta},$ where we introduced $U_{\beta}= U \cap \mathring{\sigma}_{\beta}$. There exists a unique $\alpha$ such that $x \in U_{\alpha}$.  According to the construction of our topologies we find that $U_{\beta} \in \tilde{\mathfrak{T}}_{\alpha}$. Since $\tilde{\varphi}_{\alpha}$ is a homeomorphism with respect to $\tilde{\mathfrak{T}}_{\alpha}$, but also with respect to $\mathfrak{T}_{rel,\alpha}$, $U_{\beta} \in \mathfrak{T}_{rel,\alpha}$. Hence, $U \in \mathfrak{T}(x)$. The same argument also works for the other direction, i.e.~for every $U \in \mathfrak{T} \exists V \in \mathfrak{T}^{(CW)}(x)(x)$ such that $U \subseteq V$.
\end{proof}
We give some general remarks on cellular decompositions. Consider a topological space $(X,\mathfrak{T})$ and assume that there exists a CW complex structure $(X,\Phi,\mathfrak{T})$ on $(X,\mathfrak{T})$. Theorem \ref{thmCW} states that the CW topology is equal to the given topology $\mathfrak{T}$. Every other family $\Phi'$ given by $\Phi$ where we may change the characteristic maps $\varphi_{\alpha}$ on the boundary $\partial E^{(n_{\beta})}$ of $E^{(n_{\beta})}$ under the condition 3 of Def.~\ref{def:cellstructure} will lead to the same induced CW topology, namely $\mathfrak{T}$.Therefore, to verify that a topological space could be interpreted as a CW complex it is enough to find a disjunct decomposition in `open' cells, which are homeomorphic to $\mathring{E}^{(n)}$.

\subsection{Algebraic varieties}\label{appvarieties}
\begin{defn} Let $\K$ be an algebraically closed field and $\mathcal{I} \subseteq \K[X_1,\ldots,X_n]$ an ideal of the polynomial ring $\K[X_1,\ldots,X_n]$. A subset $V \subseteq \K^n$ is called an affine algebraic subset of $\K^n$ if $V = Z[\mathcal{I}]$ for some ideal $\mathcal{I}$ of $\K[X_1,\ldots,X_n]$.
\end{defn}

\begin{defn}
An affine algebraic subset $V$ of $\K^n$ is called an affine algebraic variety of $\K^n$ if $V$ is irreducible, i.e.~$V$ is not the union of two proper affine algebraic subsets.
\end{defn}
There are also definitions that are calling every algebraic set an affine algebraic variety.

\section{Some simple Proofs of Trivialities}\label{trivial}
We prove Lem.~\ref{KyFan}.
\begin{proof}
Let $\mathcal{H}$ be a complex, $d$-dimensional Hilbert space and $k < d$ a fixed integer, choose an orthonormal set $\{v_1, \ldots, v_k\}$, set $V = \langle v_1,\ldots,v_k \rangle$ and denote the orthogonal projection operator onto $V$ by $P_V$. Moreover we choose a unitary matrix $U$ such that $v_j = \sum_{i=1}^d U_{j i} e_i$, where $e_j$ is the eigenvector of $\rho$ corresponding to $\lambda_j$. We find ($\lambda_i$ arranged non-increasing)
\begin{eqnarray}
\mbox{Tr}[P_V \rho] &=& \sum_{j=1}^k \langle v_j,\rho v_j\rangle \nonumber \\
 &=& \sum_{j=1}^k \sum_{i=1}^d |U_{ji}|^2 \lambda_i \nonumber \\
&\leq& \sum_{j=1}^k \sum_{i=1}^k |U_{ji}|^2 \lambda_i + \sum_{j=1}^k \sum_{i=k+1}^d |U_{ji}|^2 \lambda_{k+1} \nonumber \\
&=& \sum_{i=1}^k \sum_{j=1}^k |U_{ji}|^2 \lambda_i + \sum_{i=1}^k \sum_{j=k+1}^d |U_{ji}|^2 \lambda_{k+1} \nonumber
\end{eqnarray}
\begin{eqnarray}
\qquad \qquad \,\,\,&\leq& \sum_{i=1}^k \sum_{j=1}^k |U_{ji}|^2 \lambda_i + \sum_{i=1}^k \sum_{j=k+1}^d |U_{ji}|^2 \lambda_{i} \nonumber \\
&=& \sum_{i=1}^k \lambda_i \,.
\end{eqnarray}
On the other hand, if we choose $V = \langle e_1,\ldots e_k \rangle$ we find $\mbox{Tr}[P_V \rho] = \sum_{i=1}^k \lambda_i$, which leads to (\ref{KyFan1}).
The second variational principle (\ref{KyFan2}) follows immediately by applying (\ref{KyFan1}) to the state $\tilde{\rho}:= -\rho$.
\end{proof}

\section{Schubert Calculus}\label{Schubertcalculus}
We verify a statement used in Sec.~\ref{flagvarieties} which we state here as lemma:
\begin{lem}
Let $B \subset Gl(n)$ the set of regular complex upper triangle matrices. Then $B$ is a subgroup (w.r.t matrix multiplication).
\end{lem}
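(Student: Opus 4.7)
The plan is to verify the three subgroup axioms: closure under multiplication, existence of identity, and closure under inversion. Since the set $B$ is defined as the intersection of $Gl(n)$ with the set of upper triangular matrices, associativity of matrix multiplication is inherited automatically. The non-trivial content is showing that upper-triangularity is preserved by the group operations.

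First I would check closure under multiplication. Let $A, C \in B$ with entries $A_{ij}, C_{ij}$ vanishing for $i>j$. A direct index computation gives
\begin{equation}
(AC)_{ij} = \sum_{k=1}^{n} A_{ik} C_{kj},
\end{equation}
and whenever $i>j$, each summand contains either a factor $A_{ik}$ with $k<i$, forcing $k \leq j < i$ and hence $A_{ik} = 0$ only if $k<i$... more cleanly: if $k<i$ then $A_{ik}=0$, and if $k\geq i > j$ then $C_{kj}=0$. So $(AC)_{ij}=0$ for $i>j$, i.e.\ $AC$ is upper triangular. Invertibility of $AC$ follows from the multiplicativity of the determinant. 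The identity matrix $\mathds{1}_n$ is clearly upper triangular and invertible, so $\mathds{1}_n \in B$.

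The main (but still routine) step is closure under inversion. Given $A \in B$, upper-triangularity and invertibility together imply $\det(A) = \prod_i A_{ii} \neq 0$, so every diagonal entry is non-zero. The strategy is to write
\begin{equation}
A = D(\mathds{1}_n + N),
\end{equation}
where $D = \mathrm{diag}(A_{11},\ldots,A_{nn})$ is invertible and $N$ is strictly upper triangular. Since strictly upper triangular matrices are nilpotent with $N^n = 0$, the Neumann-type series
\begin{equation}
(\mathds{1}_n + N)^{-1} = \sum_{k=0}^{n-1} (-N)^k
\end{equation}
is a finite sum of upper triangular matrices, hence upper triangular. Combined with the fact that $D^{-1}$ is diagonal, we obtain $A^{-1} = (\mathds{1}_n + N)^{-1} D^{-1} \in B$.

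I do not anticipate any genuine obstacle; the only place that requires a small argument beyond inspection is the inversion step, and even there the nilpotency trick reduces it to a finite algebraic identity. Together, these three verifications establish that $B$ is a subgroup of $Gl(n)$.
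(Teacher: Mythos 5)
Your proof is correct, and the interesting divergence from the paper is in the inversion step. The paper handles closure and the identity with one sentence (as you do, though you spell out the index computation) and then proves closure under inversion via the explicit cofactor formula for $A^{-1}$: it writes each entry $a_{kl}$ of the inverse as a signed minor divided by $\det(A)$ and asserts that these minors vanish for $k>l$, so the inverse is again upper triangular. You instead factor $A = D(\mathds{1}_n+N)$ with $D$ the invertible diagonal part and $N$ strictly upper triangular, and use the nilpotency $N^n=0$ to write $(\mathds{1}_n+N)^{-1}=\sum_{k=0}^{n-1}(-N)^k$ as a finite sum of upper triangular matrices, concluding $A^{-1}=(\mathds{1}_n+N)^{-1}D^{-1}\in B$. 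Your route buys a purely algebraic argument that never touches determinants of submatrices — the only determinant fact you need is $\det(A)=\prod_i A_{ii}\neq 0$ to invert $D$ — whereas the paper's cofactor route is more direct notationally but leaves the vanishing of the relevant minors as an unspoken verification (it follows from a block structure of the deleted-row-and-column submatrix). One small cosmetic point: in your closure computation the first half-sentence ("forcing $k\leq j<i$...") is garbled before you restate the case split cleanly; the clean version is the one to keep.
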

\begin{proof}
Obviously, the product of two regular upper triangle matrices is again a regular upper triangle matrix and $B$ also contains the identity $\mathds{1}$.
We still have to show the existence of inverse elements in $B$.
Let $b \in B$.  Hence, b is invertible in $Gl(n)$, i.e.~$\exists a \in Gl(n)$ such that $b a = a b = \mathds{1}$. We have an explicit formula for the matrix $a$, namely
\begin{equation}
a_{k l} = \frac{(-1)^{i+j}}{\mbox{det}(b)} \left| \begin{array}{cccccc}  b_{1 1}&& b_{1 k-1}& b_{1 k+1}&& b_{1 d}\\ &\ddots & \vdots& \vdots& $\reflectbox{$\ddots$}$ &\\  b_{l-1 1}& \cdots & b_{l-1 k-1}& b_{l-1 k+1}&\hdots &  b_{l-1 d} \\  b_{l+1 1}&\cdots & b_{l+1 k-1}& b_{l+1 k+1}&\hdots &  b_{l+1 d}\\ &$\reflectbox{$\ddots$}$ &\vdots&\vdots&\ddots &\\ b_{1 d}&& b_{d k-1}& b_{d k+1}&& b_{d d}\end{array} \right| \,.
\end{equation}
If $k>l$ we find $a_{k l}=0$. This means  that $a = b^{-1}$ is also an upper triangle matrix. Since a product of $b_1, b_2$ is also an upper triangle matrix, $B$ is a subgroup of $Gl(n)$ and the left cosets $g B$ are well defined.
\end{proof}

To construct a more concrete representation of flags in particular to obtain \ref{flagvarietyGL3} we still have to verify
\begin{lem}
All elements $g$ of a given equivalence class/left coset $g_0 B$ lead to the same CEF or stated differently for an matrix $g^c$ with CEF the product $g^c \, b$ with $b \in B$ has also of CEF iff $b = \mathds{1}$.
\end{lem}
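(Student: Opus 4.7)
The plan is to prove the equivalent reformulation at the end of the lemma: assuming both $g^c$ and $g^c b$ are in CEF for some $b\in B$, show that $b=\mathds{1}$. I would proceed by induction on the column index $k$, proving at step $k$ that the $k$-th column of $b$ equals the standard basis vector $e_k$, i.e.~$b_{ik}=\delta_{ik}$ for all $i\le k$. Since $(g^c b)_k=\sum_{i\le k} b_{ik}\,g^c_i$, the induction hypothesis ``columns $1,\ldots,k-1$ of $b$ are already $e_1,\ldots,e_{k-1}$'' guarantees that the first $k-1$ columns of $g^c b$ coincide literally with those of $g^c$, and therefore exhibit pivots at the original positions $\alpha_1,\ldots,\alpha_{k-1}$ from the CEF of $g^c$.

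The induction step then exploits the CEF axioms for $g^c b$ twice. First, because the pivots of columns $1,\ldots,k-1$ of $g^c b$ sit at rows $\alpha_1,\ldots,\alpha_{k-1}$ and ``look to the right'', the entries of $(g^c b)_k$ at those rows must all vanish. Extracting these entries using the explicit CEF structure of $g^c$ — namely $(g^c_j)_{\alpha_i}=1$ if $j=i$, $=0$ if $j>i$, and possibly nonzero only when $j<i$ — produces a lower-triangular linear system in the unknowns $b_{1k},\ldots,b_{k-1,k}$ with unit diagonal, whose unique solution is zero. Second, once these entries have been eliminated, $(g^c b)_k$ collapses to $b_{kk}\,g^c_k$, whose lowest nonzero entry sits at row $\alpha_k$ with value $b_{kk}$; the CEF pivot condition then forces $b_{kk}=1$. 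The base case $k=1$ is a direct instance of this second half, since $(g^c b)_1=b_{11}\,g^c_1$.

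The step I expect to require the most care — though it is more a bookkeeping matter than a genuine obstacle — is justifying that the pivot structure of $g^c b$ agrees with that of $g^c$ at every stage of the induction. The induction bootstraps this cleanly: once columns $1,\ldots,k-1$ of $g^c b$ and $g^c$ have been shown to be identical, their pivot rows are tautologically the same $\alpha_1,\ldots,\alpha_{k-1}$, and the ``look-to-the-right'' condition applies without ambiguity at the next step. Alternatively, one can dispose of the worry globally by observing that $\alpha$ is an intrinsic invariant of the flag $F_{\bullet}$ relative to the standard flag — for instance, $\alpha_k$ is characterized as the row index at which $\dim(F_k \cap \langle e_{\alpha_k},\ldots,e_d\rangle)$ strictly exceeds $\dim(F_{k-1} \cap \langle e_{\alpha_k},\ldots,e_d\rangle)$ — so it is automatically the same for $g^c$ and $g^c b$, which represent the same flag. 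Either route completes the induction and yields $b=\mathds{1}$, which upgrades the already established surjection $Gl(d)^{CEF}\twoheadrightarrow \mbox{\emph{Fl}}_d$ to a bijection and thus establishes Eq.~(\ref{flagvarietyGL3}).
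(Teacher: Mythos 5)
Your proof is correct and follows essentially the same route as the paper's: both extract $b=\mathds{1}$ entry by entry from the two CEF conditions on $g^c b$ (the pivot value being $1$ and the pivot ``looking to the right''), combined with the explicit structure of the pivot rows of $g^c$. The only difference is bookkeeping — you sweep column-wise over $b$ with a formal induction, while the paper sweeps pivot by pivot (first $b_{11}=1$, then $b_{1k}=0$ for $k>1$, then $b_{22}=1$, etc.) — so the arguments coincide in substance.
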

This verifies the $1-1$ correspondence between flags and CEF used in Sec.~\ref{flagvarieties}.
\begin{proof}
Given a matrix $g = \left(\vec{g}_1,\ldots,\vec{g}_d\right)\in Gl(d)$. It is clear that the algorithm described in Sec.~\ref{flagvarieties}
yields a unique column echelon form $g^{C} \in Gl(d)$. We show that $g^{(C)} b$ has CEF iff $b = \mathds{1}$.
We partially calculate $h = g^{C} b$ and by assuming $h$ to have CEF we derive $b = \mathds{1}$.  Let the position of the pivots of $g^{C}$  be ${(\alpha_k,k)}_{k=1}^d$. For $\alpha < \alpha_1$ we find $h_{\alpha 1}= \vec{g}_{\alpha} \cdot \vec{b}_1=0$. Since $h_{\alpha_1 1}= \vec{g}_{\alpha_1} \cdot \vec{b}_1= b_{11}\neq 0$, we conclude $b_{11}=1$. Since for all $k>1$, $0 = h_{\alpha_1 k} = g_{\alpha_1} \cdot b_k = b_{1 k} $ we also find $b_{1 k}=0$ for $k>1$. Analyzing the matrix element $h_{\alpha_2 2}$ then leads in the same way as before $b_{22}=1$. Repeating this procedure leads finally to $b = \mathds{1}$.
\end{proof}

In the proof of Lem.~\ref{manifoldatlas} we have used the following lemma:
\begin{lem}
Given a reference orthonormal basis $\{e_1,\ldots, e_d\}$. The open set of flags transversal to the standard flag $E_{\bullet}$, that has linear subspaces $E_k = \langle e_1,\ldots,e_k \rangle $ is represented in the spirit of (\ref{flagvarietyGL2}) by the set of CEF matrices of the form
\begin{equation}
\left( \begin{array}{cccc} \ast & \cdots & \ast & 1 \\ \vdots & &  $\reflectbox{$\ddots$}$  & 0 \\ \ast & 1 &   & \vdots \\ 1 & 0 & \cdots & 0 \end{array}\right) \,, \label{transversestandard}
\end{equation}
where every star represents one complex variable.
\end{lem}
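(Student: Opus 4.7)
The plan is to prove both directions: every flag transversal to $E_\bullet$ admits a CEF of the anti-diagonal form \eqref{transversestandard}, and conversely every matrix of that form represents a transversal flag. The key reformulation is that $F_\bullet$ is transversal to $E_\bullet$ if and only if $F_i\cap E_{d-i}=\{0\}$ for every $i=1,\dots,d$ (the remaining equalities $\dim(F_i\cap E_j)=\max(i+j-d,0)$ follow from inclusions $E_j\subseteq E_{d-i}$ when $j\leq d-i$ and from dimension counting together with $F_i+E_{d-i}=\mathbb{C}^d$ otherwise). Representing $F_\bullet$ by $g=(\vec g_1,\dots,\vec g_d)\in Gl(d)$ with $F_k=\langle g_1,\dots,g_k\rangle$, this is equivalent to invertibility, for every $i$, of the $i\times i$ submatrix $M_i(g)$ formed by rows $d-i+1,\dots,d$ and columns $1,\dots,i$, since this submatrix is precisely the representative of the projection $F_i\to\mathbb{C}^d/E_{d-i}$.

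Next I would run the CEF algorithm on such a $g$ and show by induction on $k$ that the pivot of column $k$ lands at row $\alpha_k=d-k+1$, i.e.\ the CEF realizes the longest permutation $\omega_0=(d,d-1,\dots,1)$. For the base case, $F_1\cap E_{d-1}=\{0\}$ forces $(\vec g_1)_d\neq 0$. For the inductive step, suppose steps $1,\dots,k-1$ have placed pivots $1$ at positions $(d,1),(d-1,2),\dots,(d-k+2,k-1)$; by the ``rightwards'' clearing rule column $k$ of the partially reduced matrix $g^{(k-1)}$ then has zeros in rows $d-k+2,\dots,d$. Since every CEF reduction multiplies $g$ on the right by an upper-triangular matrix $b\in B$, the bottom $k\times k$ submatrix transforms as $M_k(gb)=M_k(g)\cdot B_k$ with $B_k$ the leading $k\times k$ principal block of $b$, which is upper-triangular and invertible; hence $\det M_k(g^{(k-1)})$ and $\det M_k(g)$ differ by a nonzero factor. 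Expanding $\det M_k(g^{(k-1)})$ along column $k$ (which has only one potentially nonzero entry, $x:=(\vec g^{(k-1)}_k)_{d-k+1}$) and observing that the remaining $(k-1)\times(k-1)$ minor is anti-upper-triangular with $1$'s on the anti-diagonal, one obtains $\det M_k(g^{(k-1)})=\pm x$. Transversality $F_k\cap E_{d-k}=\{0\}$ gives $\det M_k(g)\neq 0$, hence $x\neq 0$, so $\alpha_k=d-k+1$. After all $d$ steps, the ``below'' and ``rightwards'' vanishing rules applied to the pivot positions $(d-k+1,k)$ place zeros in the strict lower-anti-triangular part of the matrix and leave $\binom{d}{2}$ free complex parameters in the strict upper-anti-triangular part, producing exactly the form \eqref{transversestandard}.

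For the converse, given any matrix of the form \eqref{transversestandard} I would simply observe that its bottom $i\times i$ submatrix $M_i$ is anti-upper-triangular with $1$'s on the anti-diagonal, so $\det M_i=\pm 1\neq 0$; hence the projection $F_i\to\mathbb{C}^d/E_{d-i}$ is an isomorphism, i.e.\ $F_i\cap E_{d-i}=\{0\}$ for every $i$, which by the opening reformulation is full transversality to $E_\bullet$.

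The main obstacle will be the bookkeeping in the inductive step: keeping careful track, in the partially reduced matrix $g^{(k-1)}$, of which entries are pivots, which are zero (by the ``below'' or ``rightwards'' rules), and which remain stars, so that the cofactor expansion of $\det M_k(g^{(k-1)})$ can be carried out cleanly. Once the invariance of $\det M_k$ under upper-triangular right-multiplication is noted, the transversality hypothesis pins down precisely the single entry needed to be nonzero at each stage, and the induction runs through without further subtleties.
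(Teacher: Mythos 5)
Your proof is correct, and both directions are handled properly: the reduction of full transversality to the single family of conditions $F_i\cap E_{d-i}=\{0\}$ is valid, the identification of these conditions with invertibility of the bottom-left $i\times i$ blocks $M_i(g)$ is right, and the induction showing the CEF pivots are forced onto the anti-diagonal goes through, as does the easy converse.

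The skeleton of your argument — a column-by-column induction in which transversality forces the pivot of column $k$ to sit at row $d-k+1$, after which the CEF zero rules produce exactly the pattern \eqref{transversestandard} — is the same as the paper's. The implementation differs: the paper starts from a representative that is \emph{already} in CEF (using the bijection between flags and CEFs established earlier) and argues with bare linear algebra, e.g.\ $F_1\cap E_{d-1}=\{0\}$ forces $(\vec g_1)_d\neq 0$, the pivot at $(d,1)$ then zeroes row $d$ in all later columns, so $F_2\cap E_{d-2}=\{0\}$ forces the next pivot to row $d-1$, and so on; no determinants appear. You instead work with an arbitrary representative $g$, encode transversality in the nonvanishing of the minors $\det M_i(g)$, and propagate this through the reduction steps via $M_k(gb)=M_k(g)B_k$ with $B_k$ upper triangular. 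Your route costs more bookkeeping (the cofactor expansion and the invariance lemma), but it buys a representative-independent criterion for transversality that also disposes of the converse in one line ($\det M_i=\pm1$ for the anti-diagonal form), whereas the paper's shorter argument leans on having the CEF normal form in hand and merely asserts the converse as an observation.
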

\begin{proof}
First, we observe that any flag represented by a CEF of type (\ref{transversestandard}) is indeed transverse to the standard flag $E_{\bullet}$. Consider a regular matrix $g = (\vec{g}_1,\ldots,\vec{g}_d)$ with a CEF and transverse to $E_{\bullet}$. We will conclude that its pivots are on the same positions as the ones in (\ref{transversestandard}). Lets first analyze $\vec{g}_1$ and its span $\langle \vec{g}_1\rangle$ in $\mathbb{C}^d$. Transversality implies that
$\vec{g}_1 \not \in \langle e_1,\ldots,e_{d-1}\rangle$. This is only fulfilled if the $d$-th component of $\vec{g}_1$ does not vanish, i.e.~the pivot of $\vec{g}_1-$ is on position $d$. Hence, all the other vectors $\vec{g_2},\ldots, \vec{g_d}$ have a vanishing last component. Let's go on to $\vec{g}_2$. Transversality then states that $\vec{g}_2 \not \in \langle e_1,\ldots,e_{d-2}\rangle $, which implies that the $(d-1)$-th component of $\vec{g}_2$ cannot vanish (since the $d$-th has to vanish). Thus, the corresponding pivot has position $d-1$. Continuing this procedure yields the CEF type presented in (\ref{transversestandard}).
\end{proof}

We prove Lem.~\ref{zero}.
\begin{proof}
In the first step we show that every image $\theta(U)$ is a zero of the system described in Lem.~\ref{zero} and then in the second step we verify that every point in $\mathbb{P}[\bigwedge^d V]$ given by homogeneous coordinates meeting these conditions is the image $\theta(U)$ of some $U \in Gr_{d,n}$.
For the first part, let us expand one of these conditions by the use of $A = (a_{i j})$ and $A_{i_1,\ldots,i_{d-1}}^{\hat{m}}$ has the $m^{th}$ column deleted:
\begin{eqnarray}
\lefteqn{\sum_{k=1}^{d+1} (-1)^k p_{i_1,\ldots,i_{d-1},j_k}(A)\, p_{j_1,\ldots,\hat{j_k},\ldots,j_{d+1}}(A)} \nonumber\\
&=& \sum_{k=1}^{d+1} (-1)^k \sum_{m=1}^{d} (-1)^{d+m} a_{j_k m} \,\mbox{det}(A_{i_1,\ldots,i_{d-1}}^{\hat{m}}) \,\mbox{det}(A_{j_1,\ldots,\hat{j_k},\ldots,j_{d+1}}) \nonumber \\
&=& \sum_{m=1}^{d} (-1)^{d+m}\,\mbox{det}(A_{i_1,\ldots,i_{d-1}}^{\hat{m}}) \nonumber \\
&&\cdot \left(\sum_{k=1}^{d+1} (-1)^k a_{j_k m} \,\mbox{det}(A_{j_1,\ldots,\hat{j_k},\ldots,j_{d+1}})\right)
\end{eqnarray}
The last factor of the right side in the last equation is the determinant of a $(d+1)\times (d+1)$ matrix $\tilde{A}_m$, expanded along the first column. This matrix reads
\begin{equation}
\tilde{A}_m = \left(\begin{array}{cccc}
a_{j_1,m}& a_{j_1,1} & \cdots &a_{j_1,d} \\
\vdots & \vdots&&\vdots \\
a_{j_{d+1},m}& a_{j_{d+1},1} & \cdots &a_{j_{d+1},d}
\end{array}\right)\,.
\end{equation}
Since the first and $(m+1)^{th}$ column are the same, the determinant of $\tilde{A}_m$ vanishes and every element $U \in Gr_{d,n}$ leads to a zero of the system in Lem.~\ref{zero}. For the second part let us choose $q = (q_{\underline i})$ representing a point in $\mathbb{P}[\bigwedge^d V]$ and w.l.o.g. there exists a $\underline l=(l_1,\ldots,l_d)$ such that $q_{\underline l}=1$ and we define $a_{i k} := q_{l_1,\ldots,l_{k-1},i,l_{k+1},\ldots,l_d}$. By the use of decreasing induction on $|l\cap k|$ we prove the claim
\begin{equation}\label{Plrelations}
p_{\underline k}(A) = q_{\underline k}
\end{equation}
for all $\underline  k$ such that $k_1 < k_2< \ldots < k_d  $. To begin, let $\underline k = \underline l$. Then $A_{\underline l}= \mathds{1}_d$ and hence $p_{\underline k}(A) = \mbox{det}(A_{\underline l}) = 1$. Moreover let $\underline k = (l_1,\ldots, l_{k-1},m,l_{k+1},\ldots,l_{d})$. Also then we find $p_{\underline k} = \mbox{det}(A_{l_1,\ldots,l_{k-1},m,l_{k+1},\ldots,\l_{d}}) = q_{\underline  l}$ and the claim also holds for $|\underline k \cap \underline l| = d-1$.

Now assume claim (\ref{Plrelations}) holds for all $\underline{k}^{\prime}$ fulfilling $\#(\underline{k}^{\prime} \cap \underline{l})\ge m$, let $\underline{k}$ be an arbitrary, such that $1\leq k_1 <k_2<\ldots<k_d \leq n$, $\#(\underline{k} \cap \underline{l}) = m-1 $ and define $\underline i = (\l_1,\ldots,\l_{d-1})$ and $\underline j=(l_d,k_1,\ldots,k_d)$. W.l.o.g. we assume $l_d \notin \{k_1,\ldots,k_d\}$.
Since $(q_{\underline i})$ is a zero, we have
\begin{equation}
q_{\underline l} q_{\underline k} + \sum \pm q_{\underline l ^\prime} q_{\underline k ^\prime} = 0
\end{equation}
For each part of the sum above either $\#(\underline{l})\cap \underline{k}^{\prime} = m$  or $q_{\underline{l}^{\prime}}=0$ holds. Hence, we can substitute $q_{\underline{k}^{\prime}} = p_{\underline{k}^{\prime}}(A)$. Since $\underline{l}^{\prime}$ and $\underline{l}$ differ only by one entry, $q_{\underline{l}^{\prime}} = p_{\underline{l}^{\prime}}(A)$. Moreover, since the matrix $A$ as defined above has full rank,
\begin{equation}
p_{\underline l}(A) p_{\underline k}(A) + \sum \pm p_{\underline l ^\prime}(A) p_{\underline k ^\prime}(A) = 0 \,.
\end{equation}
Hence, we find, $p_{\underline l}(A)  p_{\underline k}(A) = q_{\underline l} q_{\underline k}  $ and since $p_{\underline l}(A) = q_{\underline l}=1$,
$p_{\underline k}(A) = q_{\underline k}$. This completes the induction and finishes the proof.
\end{proof}
\end{appendix}

To apply Schubert calculus to the QMP Lem.~\ref{intersection1varieties} and \ref{intersection2varieties} are both important steps. We prove them here.
First, we observe for $V,W \in \mbox{Gr}_{k,d}$ and a density operator $\rho = \sum_{k=1}^d\lambda_k |k\rangle \langle k|$ that
\begin{eqnarray}\label{matrixmetric1}
|\mbox{Tr}[\rho P_V]-\mbox{Tr}[\rho P_W]| &= &|\mbox{Tr}[\rho (P_V-P_W)]| \nonumber\\
&\leq& \sum_{k=1}^d\lambda_k | \langle k| P_V -P_W  |k\rangle | \nonumber \\
& \leq & \mbox{Tr}[\rho] \, \|P_V-P_W\|_{\mbox{Op}} \,.
\end{eqnarray}
Analogously, for given test spectrum $c$ , flags $F_{\bullet}$ and $G_{\bullet}$ defining hermitian operators $A$ and $B$ in the spirit of Definition \ref{flagoperator} and a density operator $\rho = \sum_{k=1}^d\lambda_k |k\rangle \langle k|$ we find
\begin{eqnarray}\label{matrixmetric2}
|\mbox{Tr}[\rho A]-\mbox{Tr}[\rho B]| &= &|\mbox{Tr}[\rho (A-B)]| \nonumber\\
&\leq& \sum_{k=1}^d\lambda_k | \langle k| A-B  |k\rangle | \nonumber \\
& \leq & \mbox{Tr}[\rho] \, \|A-B\|_{\mbox{Op}} \,.
\end{eqnarray}

The crucial aspect for the proof of Lem.~\ref{intersection1varieties} and \ref{intersection2varieties} is to show (recall deviation (\ref{deviationspecineq1}) and (\ref{deviationspecineq2})) that the replacement of a Schubert cell by its closure, $S_{\pi}^{\circ} \rightarrow S_{\pi}$ and $X_{\alpha}^{\circ} \rightarrow X_{\alpha}$  , does not change the expression
\begin{equation}
\min_{V \in S_{\pi}^{\circ}(\rho)}(\mbox{Tr}[\rho P_{V}]) \qquad \mbox{and} \qquad \min \limits_{\begin{array}{c}F_{\bullet}(A) \in X_{\alpha}^{\circ}(\rho)\\ spec(A)=a\end{array}}(\mbox{Tr}[\rho A])\,,
\end{equation}
respectively.

Therefore, it is important to show that a point on the boundary of a Schubert cell has zero distance $\|P_V-P_W\|_{\mbox{Op}}$ and $\|A-B\|_{\mbox{Op}}$, respectively to the Schubert cell. This is obviously expected since the topology for the Grassmannian and the flag variety are induced by the topology of regular matrices, but it has to be proven. Let $X$ be a topological space and $Y= X/\sim$ a quotient space, whose topology is induced by the quotient map
\begin{equation}
\pi: X \rightarrow Y\,,
\end{equation}
namely as the finest topology on $Y$ such that $\pi$ is still continuous. Let us now consider $M\subset Y$ open and denote its closure by $N = \overline{M}$.
Since $\pi$ is continuous, the preimage $\pi^{-1}(M)$ is open in $X$ and $\pi^{-1}(N)$ and $\pi^{-1}(\partial M)$ are both closed. We find
\begin{eqnarray}
\pi^{-1}(\partial M) &=& \pi^{-1}(N \setminus M) \nonumber \\
&=& \pi^{-1}(N) \setminus \pi^{-1}(M) \nonumber \\
&=& \pi^{-1}(\,\cap\,\{V\subset Y\,|\, M \subset V\, \mbox{and}\, V \, \mbox{closed}\}) \setminus \pi^{-1}(M) \nonumber \\
&=& \left(\,\cap\,\{\pi^{-1}(V) \,|\, M \subset V \subset Y\, \mbox{and}\, V \, \mbox{closed}\})\right) \setminus \pi^{-1}(M) \nonumber \\
&\supset& (\,\cap\,\{W\subset X\,|\, \pi^{-1}(M) \subset W\, \mbox{and}\, W \, \mbox{closed}\}) \setminus \pi^{-1}(M) \nonumber \\
&=& \overline{\pi^{-1}(M)} \setminus \pi^{-1}(M) \nonumber \\
&=& \partial \pi^{-1}(M)\,.
\end{eqnarray}
In the third and sixth step we used the definition of the closure of a set and the remaining steps are all elementary.
Let's choose $W \in \partial S_{\pi}^{\circ}(\rho)$. Recall (\ref{matrixgras}) which defines a quotient map $\pi$ from the set $\mathcal{M}_{k,d}$ of $d\times k$-matrices with $k$ linearly independent column vectors to the Grassmannian,
\begin{equation}
\pi: \mathcal{M}_{k,d} \rightarrow \mbox{Gr}_{k,d}\,.
\end{equation}
Moreover we choose $R \in \overline{\pi^{-1}(S_{\pi}^{\circ}(\rho))}$ such that $R$ represents $W$. Then by using the estimate (\ref{matrixmetric1}) we find
\begin{eqnarray}
\lefteqn{0 < \min \limits_{V \in S_{\sigma}^{\circ}(\rho)}(\mbox{Tr}[P_V \rho])-\mbox{Tr}[P_W \rho])}&& \nonumber \\
&=& \min \limits_{V \in S_{\sigma}^{\circ}(\rho)}(|\mbox{Tr}[\rho (P_V -P_W)]|) \nonumber \\
&\leq& \inf \limits_{V \in S_{\sigma}^{\circ}(\rho)} \mbox{Tr}[\rho] \|P_V -P_W\|_{\mbox{Op}} \nonumber \\
&=& \mbox{Tr}[\rho] \inf_{T \in \pi^{-1}(S_{\sigma}^{\circ}(\rho))}(\|P_{\pi(T)}-P_{\pi(R)}\|_{\mbox{Op}}) \nonumber \\
&=& 0
\end{eqnarray}
since $R \in \overline{\pi^{-1}(S_{\pi}^{\circ}(\rho))} $ and this means zero distance to $\pi^{-1}(S_{\sigma}^{\circ}(\rho))$ w.r.t.~the metric $\|P_{\pi(T)}-P_{\pi(R)}\|_{\mbox{Op}}$.
The proof of Lem.~\ref{intersection2varieties} works in the same way by using (\ref{matrixmetric2}) instead of (\ref{matrixmetric1}) and we skip it here.

\chapter{Stability of the Selection Rule}\label{app:BDSelRulestable}
In this chapter we will prove Thm.~\ref{thm:BDSelRulestable}.
In Example \ref{ex:PintoStructBD} we concluded that any $|\Psi_3\rangle \in \wedge^3[\mathcal{H}_1^{(6)}]$ has the form
\begin{eqnarray}\label{ansatz}
|\Psi_3\rangle &=& \alpha |1,2,3\rangle+ \beta |1,2,4\rangle+ \gamma |1,3,5\rangle+ \delta |2,3,6\rangle \nonumber \\
&&+\nu |1,4,5\rangle+\mu |2,4,6\rangle + \xi |3,5,6\rangle+\zeta |4,5,6\rangle \,,
\end{eqnarray}
with natural orbitals $\{|k\rangle\}_{k=1}^6$.
Since the corresponding $1-$RDO is diagonal w.r.t. $\{|k\rangle\}_{k=1}^6$,
\begin{equation}\label{diagonal}
\langle k |\rho_1|l\rangle = \delta_{k l} \,\lambda_k\,,
\end{equation}
we find
\begin{eqnarray}
\lambda_4 &=& |\beta|^2+|\nu|^2+|\mu|^2+|\zeta|^2 \\
\lambda_5 &=& |\gamma|^2+|\nu|^2+|\xi|^2+|\zeta|^2 \\
\lambda_6 &=& |\delta|^2+|\mu|^2+|\xi|^2+|\zeta|^2\,.
\end{eqnarray}
The goal is now to show that the coefficients $\beta, \gamma, \delta,\xi$ and $\zeta$ are small, i.e.
\begin{equation}
\|P \Psi\|_{L^2}^2 = |\alpha|^2+ |\mu|^2+|\nu|^2 = 1-\left( |\beta|^2+ |\gamma|^2+|\delta|^2+|\xi|^2+ |\zeta|^2\right)
\end{equation}
is close to $1$, whenever constraint $D^{(3,6)}(\cdot) \geq 0$ (\ref{set36}) is approximately saturated. For the given state $|\Psi_3\rangle$ (\ref{ansatz})  the saturation $D^{(3,6)}(\vec{\lambda})$ reads
\begin{equation}\label{distancegreek}
D^{(3,6)}(\vec{\lambda}) = -|\beta|^2+|\gamma|^2+|\delta|^2+ 2|\xi|^2+|\zeta|^2\,.
\end{equation}
First we observe
\begin{eqnarray}
\|P \Psi_3\|_{L^2}^2 &\leq& 1- \frac{1}{2}\left( |\beta|^2+ |\gamma|^2+|\delta|^2+ 2|\xi|^2+ |\zeta|^2\right)\nonumber\\
&\leq& 1- \frac{1}{2}\left( - |\beta|^2+ |\gamma|^2+|\delta|^2+ 2|\xi|^2+ |\zeta|^2\right)\nonumber\\
&=& 1-\frac{1}{2}\,D^{(3,6)}(\vec{\lambda})\,,
\end{eqnarray}
which is the upper bound for $\|P \Psi_3\|_{L^2}^2$ in Thm.~\ref{thm:BDSelRulestable}.

To derive the lower bound note that (\ref{diagonal}) in particular implies
\begin{eqnarray}
0 = \langle 4|\rho_1|3\rangle = \overline{\alpha} \beta +  \overline{\gamma} \nu +  \overline{\delta} \mu +  \overline{\xi} \zeta\,,
\end{eqnarray}
which leads by the triangle inequality, the identity $(A+B+C)^3 \leq 3 \,(A^2+B^2 + C^3)$ and $|\mu|^2, |\nu|^2,|\xi|^2,|\zeta|^2 \leq 1-|\alpha|^2$ to
\begin{eqnarray}\label{betaestimate}
|\beta|^2 &\leq& \frac{1}{|\alpha|^2}\,\left(|\gamma| \,|\nu| + |\delta| \,|\mu| +|\xi| \,|\zeta|   \right)^2                  \nonumber \\
&\leq& \frac{3}{|\alpha|^2}\,\left(|\gamma|^2 \,|\nu|^2 + |\delta|^2 \,|\mu|^2 +|\xi|^2 \,|\zeta|^2   \right)\nonumber \\
 &\leq& \frac{3(1-|\alpha|^2)}{|\alpha|^2}\,\left(|\gamma|^2  + |\delta|^2  + \frac{1}{3} (2|\xi|^2 +|\zeta|^2) \right)\,.
\end{eqnarray}
Now, for all $s,r \geq0$ we find by using (\ref{betaestimate})
\begin{eqnarray}\label{lowerboundest}
\lefteqn{|\beta|^2+ |\gamma|^2+|\delta|^2+|\xi|^2+ |\zeta|^2} &&\nonumber \\
&\leq& (1-r)|\beta|^2+ |\gamma|^2+|\delta|^2+ (1+s)(2 |\xi|^2+ |\zeta|^2) + r |\beta|^2 \nonumber \\
&\leq& (1-r)|\beta|^2+ |\gamma|^2+|\delta|^2+(1+s)(2 |\xi|^2+ |\zeta|^2) \nonumber \\
&& + \frac{3 r (1-|\alpha|^2)}{|\alpha|^2}\,\left(|\gamma|^2  + |\delta|^2  + \frac{1}{3} (2|\xi|^2 +|\zeta|^2) \right) \nonumber \\
&=& (1-r)|\beta|^2 + \left(1+ \frac{3r(1-|\alpha|^2)}{|\alpha|^2}\right)\,\left(|\gamma|^2 + |\delta|^2\right) \nonumber \\
&& +  \left( 1+ s+\frac{r(1-|\alpha|^2)}{|\alpha|^2}\right) \left(2 |\xi|^2+ |\zeta|^2\right)\,.
\end{eqnarray}
We can choose the parameters $s,r$ such that the last expression in (\ref{lowerboundest}) coincides with $D^{(3,6)}(\vec{\lambda})$ up to a global factor $\chi$.
For this we solve (recall (\ref{distancegreek}))
\begin{equation}
-(1-r) = \left(1+\frac{3r(1-|\alpha|^2)}{|\alpha|^2}\right) = 1 +s+ \frac{r(1-|\alpha|^2)}{|\alpha|^2} \,.
\end{equation}
The solution reads
\begin{eqnarray}
r&=& \frac{2 |\alpha|^2}{4 |\alpha|^2-3}\\
s&=& \frac{4 (1-|\alpha|^2)}{4|\alpha|^2-3}\,.
\end{eqnarray}
Both parameters are non-negative as long as $|\alpha|^2\geq \frac{3}{4}$.
Finally, this leads to
\begin{eqnarray}
\|P \Psi_3\|_{L^2}^2 &=& 1-(|\beta|^2+ |\gamma|^2+|\delta|^2+|\xi|^2+ |\zeta|^2) \nonumber \\
&\geq& 1-(r-1)D^{(3,6)}(\vec{\lambda}) \nonumber \\
&\geq& 1-\chi_{1-|\alpha|^2} D^{(3,6)}(\vec{\lambda})\,,
\end{eqnarray}
with
\begin{equation}
\chi_{1-|\alpha|^2}\equiv r-1 = \frac{3-2 |\alpha|^2}{4 |\alpha|^2-3} =  \frac{1 + 2(1- |\alpha|^2)}{1-4(1-|\alpha|^2)}\,.
\end{equation}
Lem.~\ref{lem:HFNONtoSlater} states $|\alpha|^2 \geq 1-\delta$ and since $\chi$ is monotonously increasing, $\chi_{1-|\alpha|^2}\leq \chi_{\delta}$,
which finishes the proof.

\chapter{Technical Results for $N$-Harmonium}\label{app:NHarm}
\section{Bosonic results}\label{app:bosonic}
In the following we calculate the $1$-RDO for the bosonic ground state $\Psi_0^{(b)}$ (recall (\ref{gsbosons})):
\begin{eqnarray}
\rho_1^{(b)}(x,y)&=& \int \!\mathrm{dx_2}\ldots\mathrm{dx_N} \,\Psi_0^{(b)}(x,x_2,\ldots,x_N) \nonumber \\
&& \cdot \Psi_0^{(b)}(y,x_2,\ldots,x_N)^\ast\\
&=& \mathcal{N} ^2 e^{-(A-B_N)(x^2+y^2)} \int \!\mathrm{dx_2}\ldots\mathrm{dx_N} \,e^{-2 A (x_2^2+\ldots +x_N^2)}\nonumber \\
&& e^{2 B_N (x_2+\ldots +x_N)^2}e^{2 B_N (x+ y)(x_2+\ldots +x_N) }\nonumber
\end{eqnarray}
Here we resort to the Hubbard-Stratonovich identity,
\begin{equation}\label{hubbardstrat}
\mbox{e}^{a\xi^2}= \sqrt{\frac{a}{\pi}} \int_{-\infty}^{\infty} \!\mathrm{d}y \,\mbox{e}^{-a y^2 +2 a y \xi}
\end{equation}
for $a \in \mathbb{C}$ such that $\mbox{Re}(a)>0$.
With $a=2 B_N$ and $\xi = (x_2+\ldots + x_N)$, this leads to (for the case $B_N<0$ use a modified version of Eq.~(\ref{hubbardstrat}) with $\xi \mapsto i \xi$)
\begin{eqnarray}
\rho_1^{(b)}(x,y)&=&\mathcal{N} ^2 \sqrt{\frac{2 B_N}{\pi}} e^{-(A-B_N)(x^2+y^2)} \int_{-\infty}^{\infty} \!\mathrm{d}z \, \mbox{e}^{-2 B_N z^2} \int_{-\infty}^{\infty}\!\mathrm{dx_2}\ldots\mathrm{dx_N} \,e^{-2 A (x_2^2+\ldots +x_N^2)}\nonumber \\
&&\times \, e^{2 B_N (x+y+2 z)(x_2+\ldots +x_N) }\nonumber\\
&=& \mathcal{N} ^2 \sqrt{\frac{2 B_N}{\pi}} e^{-(A-B_N)(x^2+y^2)} \int_{-\infty}^{\infty} \!\mathrm{d}z \,\mbox{e}^{-2 B_N z^2} \Big(\int\!\mathrm{du} \, e^{-2 A (u-\frac{B_N}{2 A}(x+ y+2 z))^2}\Big)^{N-1}\nonumber \\
&&\times \, e^{(N-1)\frac{B_N^2}{2A} (x+ y+2 z)^2}\nonumber\\
&=&  \mathcal{N} ^2 \sqrt{\frac{2 B_N}{\pi}} \left(\frac{\pi}{2 A}\right)^{\frac{N-1}{2}}e^{-(A-B_N)(x^2+ y^2)} \int_{-\infty}^{\infty} \!\mathrm{d}z \,\mbox{e}^{-2 B_N z^2} e^{(N-1)\frac{B_N^2}{2A} (x+ y+2 z)^2} \nonumber \\
&& \nonumber
\end{eqnarray}
Since
\begin{eqnarray}
\lefteqn{\int_{-\infty}^{\infty} \!\mathrm{d}z \,\mbox{e}^{-2B_N z^2} e^{(N-1)\frac{B_N^2}{2A} (x+ y+2z)^2}}&&\nonumber \\
&=& \sqrt{\pi}\sqrt{\frac{A C_N}{(N-1)B_N^3}} e^{B_N (x+y)^2}
\end{eqnarray}
with
\begin{equation}\label{CN}
C_N = \frac{(N-1) \frac{B_N^2}{2}} {A-(N-1)B_N}\,,
\end{equation}
we find
\begin{eqnarray}\label{reddensity}
\rho_1^{(b)}(x,y)&=& \tilde{\mathcal{N}}\,e^{-(A-B_N-C_N)(x^2+y^2)+2 C_N x y}\,,
\end{eqnarray}
where $\tilde{\mathcal{N}}$ follows from the normalization of $\rho_1^{(b)}(x,y)$.
Moreover we observe with Eqs.~(\ref{CN}), (\ref{parameterab}) that
\begin{equation}
A-B_N-C_N =a_N\,\,,\,C_N = \frac{1}{2} b_N\,.
\end{equation}
Therefore, the exponent in Eq.~(\ref{reddensity}) is identical to the one in Eq.~(\ref{1RDOb}).

In Sec.~\ref{sec:1RDObosons} we have diagonalized $\rho_1^{(b)}$ by equating it with the Gibbs state of an effective harmonic oscillator. This is equivalent to apply Mehler's formula to the expression in (\ref{reddensity}). This means to use \cite{Rob}
\begin{eqnarray}\label{Mehler}
\lefteqn{e^{-\frac{1}{4}(c^2+d^2)(z^2+\tilde{z}^2)-\frac{1}{2}(c^2-d^2) z \tilde{z}}}\nonumber \\
 &=& \sqrt{\pi}\, l (1-q^2)^{\frac{1}{2}} \sum_{k=0}^{\infty} q^k \varphi_k^{(l)}(z)\varphi_k^{(l)}(\tilde{z}) \,,
\end{eqnarray}
with $l=(c d)^{-\frac{1}{2}}$ and $q= \frac{d-c}{d+c}$. From (\ref{Mehler}) and (\ref{reddensity}) we obtain
\begin{eqnarray}
c &=&\sqrt{2(A-B_N -2 C_N)} = \sqrt{\frac{N}{\left((N-1) l_+^{\,2}+l_-^{\,2}\right)}} \nonumber \\
d &=&\sqrt{2(A-B_N)} = \sqrt{\frac{(N-1) l_-^{\,2}+ l_+^{\,2}}{ N l_-^{\,2} l_+^{\,2}}}  \nonumber \\
l &=& \sqrt{l_-  l_+} \left(\frac{(N-1)l_+^{\,2} + l_-^{\,2}}{(N-1)l_-^{\,2} + l_+^{\,2}} \right)^{\frac{1}{4}}\,.
\end{eqnarray}
Comparing with the form in Eq.~(\ref{1RDOb}) yields immediately the concrete expressions for the parameters $b_N$, $a_N$ and $L_N$ in Eq.~(\ref{parameterab}).
After all, the NONs $\lambda_k^{(b)}$ (their sum is normalized to the particle number $N$) are given by
\begin{equation}\label{NON}
\lambda_k^{(b)} = N(1-q)\,q^k \,.
\end{equation}

\section{Fermionic results}\label{app:fermionic}
\subsection{$\rho_1^{(f)}(x,y)$}\label{app:fermionicspatial}
In this section we calculate the $1$-RDO $\rho_1^{(f)}(x,y)$ of the fermionic ground state $\Psi_0^{(f)}$ in spatial representation.
Below it will prove convenient to first rearrange the Vandermonde determinant
\begin{eqnarray}
V(\vec{x})&=& \prod_{1\leq i<j\leq N} (x_i-x_j) \nonumber \\
&=& \prod_{1\leq i<j\leq N} [(x_i- s)-(x_j-s)] \nonumber \\
&=& l^{\binom{N}{2}}\,\prod_{1\leq i<j\leq N} (z_i-z_j)\qquad, z_i \equiv \frac{x_i-s}{l} \nonumber \\
&=& l^{\binom{N}{2}}\,\left|\begin{array}{lll}\,1&\ldots&\,1\\z_1&\ldots&z_N\\ \,\vdots& &\,\vdots\\ z_1^{N-1}&\ldots& z_N^{N-1} \end{array}\right|\\
&=& \left(\frac{l}{2}\right)^{\binom{N}{2}}\,\left|\begin{array}{lll}H_0(z_1)&\ldots&H_0(z_N)\\H_1(z_1)&\ldots&H_1(z_N)\\ \,\vdots& &\,\vdots\\ H_{N-1}(z_1)&\ldots&H_{N-1}(z_N) \end{array}\right|\,
\end{eqnarray}
for all $s, l \in \mathbb{C}$, where $H_k(z)$ is the $k$-th Hermite polynomial and in the last step we used the invariance of determinants under changes of a column by just linear combinations of the other ones.
Moreover, by using the orthonormalized Hermite functions $\varphi_k^{(l)}(z)$,
\begin{equation}
\varphi_k^{(l)}(z) = \frac{1}{\sqrt{2^k k!}}\,\pi^{-\frac{1}{4}}\,l^{-\frac{1}{2}}\,H_k\left(\frac{z}{l}\right) \,e^{-\frac{z^2}{2 l^2}}
\end{equation}
we find
\begin{equation}\label{VandermondeHermite}
V(\vec{x}) = const\times\left|\begin{array}{lll}\varphi_0^{(1)}(z_1)&\ldots&\varphi_0^{(1)}(z_N)\\ \varphi_1^{(1)}(z_1)&\ldots&\varphi_1^{(1)}(z_N)\\ \,\vdots& &\,\vdots\\ \varphi_{N-1}^{(1)}(z_1)&\ldots&\varphi_{N-1}^{(1)}(z_N) \end{array}\right|\,\prod_{j=1}^N \,e^{\frac{z_j^2}{2}}\,,
\end{equation}
where $z_j = z_j(x_j)$.
Note that the determinant on the rhs is nothing else but a Slater determinant.
In the following, to obtain the $1$-RDO in spatial representation we integrate out $N-1$ particle coordinates. The essential simplification used is to decouple the coordinates $x_2,\ldots,x_N$ in the exponent of the exponential function in ground state wave function (cf.~Eq.~(\ref{gsfermions})) by resorting to the Hubbard-Stratonovich identity and than afterwards using the orthogonality of the Hermite functions to make the integration trivial. In order not to confuse the reader we do not care about global constants, collect and represent them just by symbols $\mathcal{N}^{(i)},i=1,,\ldots$ and normalize the final expression for the $1$-RDO at the end.
We find
\begin{eqnarray}
\rho_1^{(f)}(x,y)&=& \int \!\mathrm{d}x_2\ldots \mathrm{d}x_N \, \Psi_N(x,x_2,\ldots,x_N) \Psi_N(y,x_2,\ldots,x_N)^\ast \nonumber \\
&=& \mathcal{N}^{(1)} \, e^{-(A-B_N)(x^2+y^2)} \, \int \!\mathrm{d}x_2\ldots \mathrm{d}x_N \, V(x,x_2,\ldots,x_N) V(y,x_2,\ldots,x_N) \nonumber \\
&&\cdot e^{-2A (x_2^2+\ldots+x_N^2)} \, e^{2B_N (x_2+\ldots+x_N)^2}\,e^{2B_N (x+y)(x_2+\ldots+x_N)}\,.
\end{eqnarray}
Now we use the Hubbard-Stratonovich identity (\ref{hubbardstrat}) with
\begin{equation}
a\equiv 2B_N\qquad,\, \xi\equiv x_2+\ldots +x_N
\end{equation}
to decouple the mixed terms in the exponent $(x_2+\ldots +x_N)^2$. This yields
\begin{eqnarray}
\rho_1^{(f)}(x,y)&=& \mathcal{N}^{(2)}\,e^{-(A-B_N)(x^2+y^2)}\, \int \!\mathrm{d}z \, \int \!\mathrm{d}x_2\ldots \mathrm{d}x_N \, V(x,x_2,\ldots,x_N) V(y,x_2,\ldots,x_N)\nonumber \\
&&\cdot\, e^{-2A (x_2^2+\ldots+x_N^2)} \,e^{2B_N (x+y)(x_2+\ldots+x_N)} \, e^{-2B_N z^2}\, e^{4B_N(x_2+\ldots+x_N)z}\nonumber \\
&=& \mathcal{N}^{(2)}\,e^{-(A-B_N)(x^2+y^2)}\, \int \!\mathrm{d}z \, e^{-2B_N z^2} \int \!\mathrm{d}x_2\ldots \mathrm{d}x_N \, V(x,x_2,\ldots,x_N) \nonumber \\
&& \cdot\, V(y,x_2,\ldots,x_N)\,\prod_{j=2}^N \,e^{-2A x_j^2  + \left(2B_N (x+y)+4B_N z\right) x_j} \nonumber \\
&=& \mathcal{N}^{(2)}\,e^{-(A-B_N)(x^2+y^2)}\, \int \!\mathrm{d}z \, e^{-2B_N z^2} \int \!\mathrm{d}x_2\ldots \mathrm{d}x_N \, V(x,x_2,\ldots,x_N) \nonumber \\
&& \cdot\, V(y,x_2,\ldots,x_N)\,\prod_{j=2}^N \,e^{-2A \big(x_j - \frac{B_N}{2A} (x+y+2z)\big)^2} \, e^{\frac{B_N^2}{2A} (x+y+2z)^2} \,.
\end{eqnarray}
Now we fix $s$ introduced above. For $\,j=2,3,\ldots,N$ we use
\begin{equation}\label{Zvariables}
z_j \equiv \frac{x_j-s}{l} = \sqrt{2A} \,\big(x_j - \frac{B_N}{2A} (x+y+2z)\big)
\end{equation}
with
\begin{equation}
l \equiv \frac{1}{\sqrt{2A}} \qquad,\, s \equiv  \frac{B_N}{2A}(x+y+2z) \,.
\end{equation}
Thus, by using (\ref{VandermondeHermite}) and $z_1^{(X)}\equiv \left(\frac{x-s}{l}\right)$, $z_1^{(Y)}\equiv \left(\frac{y-s}{l}\right)$,  we find
\begin{eqnarray}\label{Slatertrick}
\rho_1^{(f)}(x,y)&=& \mathcal{N}^{(3)}\,e^{-(A-B_N)(x^2+y^2)}\, \int \!\mathrm{d}z \, e^{-2B_N z^2} \,e^{\frac{B_N^2}{2A}(N-1)(x+y+2z)^2} \,e^{\frac{\left(z_1^{(X)}\right)^2+\left(z_1^{(Y)}\right)^2}{2}} \nonumber \\
&& \cdot \int \!\mathrm{d}z_2\ldots \mathrm{d}z_N   \left|\begin{array}{llll}\varphi_0^{(1)}\left(z_1^{(X)}\right)&\varphi_0^{(1)}(z_2)&\ldots&\varphi_0^{(1)}(z_N)\\ \varphi_1^{(1)}\left(z_1^{(X)}\right)&\varphi_1^{(1)}(z_2)&\ldots&\varphi_1^{(1)}(z_N)\\ \,\vdots&& &\,\vdots\\ \varphi_{N-1}^{(1)}\left(z_1^{(X)}\right)&\varphi_{N-1}^{(1)}(z_2)&\ldots&\varphi_{N-1}^{(1)}(z_N) \end{array}\right| \nonumber \\ &&\cdot \left|\begin{array}{llll}\varphi_0^{(1)}\left(z_1^{(Y)}\right)&\varphi_0^{(1)}(z_2)&\ldots&\varphi_0^{(1)}(z_N)\\ \varphi_1^{(1)}\left(z_1^{(Y)}\right)&\varphi_1^{(1)}(z_2)&\ldots&\varphi_1^{(1)}(z_N)\\ \,\vdots&& &\,\vdots\\ \varphi_{N-1}^{(1)}\left(z_1^{(Y)}\right)&\varphi_{N-1}^{(1)}(z_2)&\ldots&\varphi_{N-1}^{(1)}(z_N) \end{array}\right| \,.
\end{eqnarray}
The orthogonality of the Hermite functions makes the $z_2,\ldots,z_N$ integrals trivial and we find
\begin{eqnarray}
\rho_1^{(f)}(x,y) &=& \mathcal{N}^{(4)}\,e^{-(A-B_N)(x^2+y^2)}\, \int \!\mathrm{d}z \, e^{-2B_N z^2} \, e^{\frac{B_N^2}{2A}(N-1)(x+y+2z)^2} \nonumber \\
&&\cdot  \sum_{k=0}^{N-1}\,\frac{1}{2^k k!}\, H_k\left(z_1^{(X)}\right)  H_k\left(z_1^{(Y)}\right)\,.
\end{eqnarray}
Finally, we simplify the $z$-integral.
We rearrange
\begin{eqnarray}
\lefteqn{2B_N z^2- \frac{B_N^2}{2A} (N-1)(x+y+2z)^2} \nonumber \\
&=& \left(2 B_N - \frac{2B_N^2}{A}(N-1)\right)\,z^2 - 2 \frac{B_N^2}{A}(N-1)(x+y)\,z- \frac{B_N^2}{2A}(N-1)(x+y)^2 \nonumber \\
&\equiv& r \, z^2- 2t\,z+v\nonumber \\
&=& r\,\left(z-\frac{t}{r}\right)^2-\frac{t^2}{r}+v
\end{eqnarray}
with
\begin{equation}\label{parameterr}
r\equiv 2B_N\,\left(1 - \frac{B_N}{A}(N-1)\right)\,\,,\, t\equiv \frac{B_N^2}{A}(N-1)(x+y)\,\,,\, v\equiv - \frac{B_N^2}{2A}(N-1)(x+y)^2\,.
\end{equation}
From Eq.~(\ref{parameterr}) it follows with Eq.~(\ref{CN})
\begin{eqnarray}
\frac{t^2}{r}-v &=& \frac{B_N^3(N-1)^2}{2A \left(A-B_N(N-1)\right)}\,(x+y)^2+\frac{B_N^2}{2A}\,(x+y)^2\nonumber \\
&=& C_N\,(x+y)^2 \nonumber \\
\frac{t}{r} &=& \frac{B_N(N-1)}{2\left(A-B_N(N-1)\right)}\,(x+y) = \frac{C_N}{B_N}\,(x+y)
\end{eqnarray}
and we obtain
\begin{eqnarray}\label{1RDOfUint}
\rho_1^{(f)}(x,y)&=& \mathcal{N}^{(5)}\,e^{-\left(A-B_N-C_N\right)\,(x^2+y^2)+ 2 C_N\,x y} \\
&&\cdot  \int \!\mathrm{d}u \, e^{- u^2}\,\sum_{k=0}^{N-1}\,\frac{1}{2^k k!}\, H_k(p u+q(x,y))  H_k(p u+q(y,x))\,, \nonumber
\end{eqnarray}
where we defined
\begin{equation}
p\equiv \sqrt{\frac{B_N}{A-B_N(N-1)}}\qquad,\,q(x,y)= \sqrt{2A}\left[x-\frac{B_N}{2\left(A-B_N(N-1)\right)}\,(x+y)\right]\,.
\end{equation}
Note that the exponential factor in the first line Eq.~(\ref{1RDOfUint}) is identical to that in Eq.~(\ref{reddensity}) for $\rho_1^{(b)}(x,y)$.
From the fact that only even order terms in $u$ are relevant for $u$-integration in Eq.~(\ref{1RDOfUint}) and due to the structure of the Hermite polynomials it is clear that the $1$-RDO has the form
\begin{equation}
\rho_1^{(f)}(x,y) = F_N(x,y)\, \exp{\left[-a_N (x^2+y^2) +b_N x y\right]},
\end{equation}
with
\begin{equation}
F_N(x,y) = \sum_{\nu=0}^{N-1} \sum_{\mu=0}^{2 \nu} \,c_{\nu,\mu}\, x^{2\nu-\mu} y^{\mu}\,.
\end{equation}
The coefficients $c_{\nu,\mu}$ depend on the model parameters and fulfill $c_{\nu,\mu} = c_{\nu,2\nu-\mu}$ and $a_N, b_N$ are given by Eq.~(\ref{parameterab}).
\subsection{$\rho_1^{(f)}$ as matrix}\label{app:fermionicmatrix}
In this section we calculate analytically the $1$-RDO $\rho_1^{(f)}$ for $N=3$ represented w.r.t.~to the basis of bosonic NOs, the Hermite functions with natural length scale $L_3$ (recall Eq.~(\ref{parameterbhm})).
To calculate these matrix elements,
\begin{eqnarray}
\left(\rho_1^{(f)}\right)_{nm}  &\equiv&  \langle \varphi_{n}^{(L_N)}, \rho_1^{(f)} \varphi_{m}^{(L_N)} \rangle \nonumber \\
&=& \int \!\mathrm{d}x \mathrm{d}y\, \varphi_{n}^{(L_N)}(x) \rho_1^{(f)}(x,y) \varphi_{m}^{(L_N)}(y)\,,
\end{eqnarray}
it is instructive to rescale the coordinates by writing
\begin{eqnarray}\label{1RDOfpolyresc}
\tilde{x} &:=& \frac{x}{L_3}\qquad,\, \tilde{y} := \frac{y}{L_3} \nonumber \\
F_3(x,y) &=& \tilde{F}_3\big(\tilde{x},\tilde{L_3}\big) \nonumber \\
    &=:& \tilde{C}_1 (\tilde{x}^4 +\tilde{y}^4)+\tilde{C}_2 (\tilde{x}^3 \tilde{y} +\tilde{x} \tilde{y}^3)+\tilde{C}_3 \tilde{x}^2 \tilde{y}^2+\tilde{C}_4 (\tilde{x}^2 +\tilde{y}^2) \nonumber \\
    &&+\tilde{C}_5 \tilde{x} \tilde{y} +\tilde{C}_6 \,,
\end{eqnarray}
with $\tilde{C}_k = {L_3}^4\cdot C_k$ for $k=1,2,3$, $\tilde{C}_k = {L_3}^2\cdot C_k$ for $k=4,5$ and $\tilde{C}_6 = C_6$. Moreover, we
consider every monomial in $\tilde{F}_3$ separately. Consequently, we expand $\left(\rho_1^{(f)}\right)_{nm}$ as
\begin{equation}
\left(\rho_1^{(f)}\right)_{nm} = d_3\cdot \sum_{k=1}^6\, \tilde{C}_k \left(\rho_{1,k}^{(f)}\right)_{nm}
\end{equation}
Powers of the spatial coordinates $\tilde{x}$ and $\tilde{y}$ can then be combined with the Hermite functions $ \varphi_{n}^{(L_N)}(x) = \varphi_{n}^{(1)}(\tilde{x}) $ and  $\varphi_{m}^{(L_N)}(y) = \varphi_{m}^{(1)}(\tilde{y})$. For this, we calculate
\begin{eqnarray}
\tilde{x} H_n(\tilde{x}) &=& \frac{1}{2}H_{n+1}(\tilde{x}) + n H_{n-1}(\tilde{x})  \\
\tilde{x}^2 H_n(\tilde{x}) &=& \frac{1}{4}H_{n+2}(\tilde{x}) + \frac{1}{2} (2n+1) H_{n}(\tilde{x}) + n(n-1) H_{n-2}(\tilde{x}) \nonumber \\
\tilde{x}^3 H_n(\tilde{x}) &=& \frac{1}{8}H_{n+3}(\tilde{x}) + \frac{3}{4} (n+1) H_{n+1}(\tilde{x}) + \frac{3}{2} n^2 H_{n-1}(\tilde{x})\nonumber\\
&&+n(n-1)(n-2) H_{n-3}(\tilde{x}) \nonumber \\
\tilde{x}^4 H_n(\tilde{x}) &=& \frac{1}{16}H_{n+4}(\tilde{x}) + \frac{1}{4} (2n+3) H_{n+2}(\tilde{x}) + \frac{3}{4} (2n^2+2n+1) H_{n}(\tilde{x}) \nonumber \\
&&+n(n-1)(2n-1) H_{n-2}(\tilde{x})+n(n-1)(n-2)(n-3)H_{n-4}(\tilde{x}) \nonumber\,.
\end{eqnarray}
Using these identities for the coordinate $\tilde{x}$ and $\tilde{y}$ as well, the matrix elements $\left(\rho_{1,k}^{(f)}\right)_{nm}$ can symbolically by calculated by referring to the orthonormality of the Hermite functions. Exemplary, we do this for $\left(\rho_{1,5}^{(f)}\right)_{nm}$. We find
\begin{eqnarray}
\left(\rho_{1,5}^{(f)}\right)_{nm} &=& \mathcal{N}\,\int \!\mathrm{d}\tilde{x} \mathrm{d}\tilde{y}\, \tilde{x}\tilde{y} \,\varphi_{n}^{(1)}(\tilde{x}) \varphi_{m}^{(1)}(\tilde{y}) \nonumber \\
&& \cdot \sum_{k=0}^\infty {q_3}^k\,\varphi_k^{(1)}(\tilde{x})  \varphi_k^{(1)}(\tilde{y}) \,,
\end{eqnarray}
where we used the results from Sec.~\ref{sec:1RDObosons} and $\mathcal{N}= \sqrt{\pi} L_3^3 \sqrt{1-{q_3}^2}$ is the normalization constant. Then, we obtain
\begin{eqnarray}
\left(\rho_{1,5}^{(f)}\right)_{nm} &=& \mathcal{N}\,(2^{n+m} n! m!)^{-\frac{1}{2}}\,\int \!\mathrm{d}\tilde{x} \mathrm{d}\tilde{y}\, \sum_{k=0}^\infty {q_3}^k\,\varphi_k^{(1)}(\tilde{x})  \varphi_k^{(1)}(\tilde{y}) \nonumber\\
&& \big[\frac{1}{2}(2^{n+1} (n+1)!)^{\frac{1}{2}}\varphi_{n+1}^{(1)}(\tilde{x}) + n (2^{n-1} (n-1)!)^{\frac{1}{2}}\varphi_{n-1}^{(1)}(\tilde{x})\big] \nonumber \\
&& \cdot\big[n\leftrightarrow m, \tilde{x}\leftrightarrow \tilde{y} \big] \,,
\end{eqnarray}
where $\big[n\leftrightarrow m, \tilde{x}\leftrightarrow \tilde{y} \big]$ means to take the expression from the previous square bracket and swap $n$ with $m$ and also both spatial variables. The orthonormality of $\varphi_{k}^{(l)}$ finally yields
\begin{eqnarray}
\left(\rho_{1,5}^{(f)}\right)_{nm} &=& \mathcal{N}\,(2^{n+m} n! m!)^{-\frac{1}{2}}\cdot\big[ {q_3}^{N+1}  \frac{1}{2}(2^{n+1} (n+1)!)^{\frac{1}{2}} \frac{1}{2}(2^{m+1} (m+1)!)^{\frac{1}{2}}\,\delta_{nm}\nonumber\\
&& + {q_3}^{N-1} n (2^{n-1} (n-1)!)^{\frac{1}{2}}  m (2^{m-1} (m-1)!)^{\frac{1}{2}}\delta_{nm} \nonumber \\
&& +  {q_3}^{N+1} \frac{1}{2}(2^{n+1} (n+1)!)^{\frac{1}{2}}   m (2^{m-1} (m-1)!)^{\frac{1}{2}}\delta_{n m+2} \nonumber \\
&& + {q_3}^{N-1} n (2^{n-1} (n-1)!)^{\frac{1}{2}}   \frac{1}{2}(2^{m+1} (m+1)!)^{\frac{1}{2}}\delta_{n m-2} \big]\,.
\end{eqnarray}
This expression, together with those for the other indices $k\neq 5$, which are even more complicated, are then be used for a computer program.

\subsection{Natural occupation numbers via perturbation theory}\label{app:fermionicPT}
In this section we apply degenerate Rayleigh-Schr\"odinger perturbation theory to the $1$-RDO represented as matrix w.r.t.~the bosonic NOs and determine the behavior (\ref{spectrum}) of the NONs up to tenth order in the coupling strength $\delta$. By checking lecture notes and standard text books on quantum mechanics we surprisingly realized that most of the algorithms provided there are either wrong or useless. We adapt the very general strategy of J.Fr\"ohlich \cite{JFrQM2}, which we will first present in detail and afterwards apply it to our problem.

Consider a perturbed operator (just for simplicity we call it Hamiltonian) of the form
\begin{equation}
H(\delta) = H^{(0)} + \delta H^{(1)} + \delta^2 H^{(2)} + \ldots \,,
\end{equation}
acting on some separable Hilbert space $\mathcal{H}$. Let $E_{\alpha}^{(0)}$ be an isolated and degenerate eigenvalue of the unperturbed Hamiltonian $H^{(0)}$ and denote the corresponding orthogonal projection operator on that subspace by $P_{\alpha}^{(0)}$. To simplify the notation we introduce the short notation $P \equiv P_{\alpha}^{(0)}$ and $\overline{P} \equiv \mathds{1}-P$. The first step is to block-diagonalize the perturbed Hamiltonian w.r.t.~$P$ and $\overline{P}$. For this we construct a unitary operator $U(\delta)$ expressed as $e^{S(\delta)}$ with an anti-self-adjoint operator $S(\delta)$ and $U(0) = \mathds{1}$,
\begin{equation}\label{ptblockdiag}
 e^{S(\delta)} H (\delta) e^{-S(\delta)} = \tilde{H}(\delta) = P \tilde{H}(\delta) P + \overline{P} \tilde{H}(\delta) \overline{P}\,.
\end{equation}
Moreover we define for any operator $A$ the restrictions to the diagonal blocks and off-diagonal blocks,
\begin{equation}\label{ptOprestrict}
A_d \equiv P A P +\overline{P} A \overline{P}\,\,,\,\,A_{od} \equiv P A \overline{P} +\overline{P} A P\,.
\end{equation}
As an ansatz we assume $S(\delta) = S(\delta)_{od}$ (which will be justified afterwards).
We expand the lhs in Eq.~(\ref{ptblockdiag}) in a Lie-Schwinger series and find
\begin{equation}\label{PTLieSchwinger}
H(\delta) + [S(\delta),H(\delta)] + \frac{1}{2}[S(\delta),[S(\delta),H(\delta)]]+\ldots \,=\, \tilde{H}(\delta)\,.
\end{equation}
By expanding $S(\delta)$ and $\tilde{H}(\delta)$ in a Taylor series,
\begin{eqnarray}
S(\delta) &=& \delta S^{(1)} + \delta^2 S^{(2)} +\ldots \nonumber \\
\tilde{H}(\delta) &=& \tilde{H}^{(0)} + \delta \tilde{H}^{(1)} + \delta^2 \tilde{H}^{(2)} + \ldots\,,
\end{eqnarray}
and plugging in those expansions in Eq.~(\ref{PTLieSchwinger}) we can determine succinctly all orders $S^{(k)}$ and $\tilde{H}^{(k)}$ by comparing different orders in $\delta$. Exemplary, we do this for the first two orders. The off-diagonal part of Eq.~(\ref{PTLieSchwinger}) yields to following two conditions
\begin{eqnarray}\label{PTkorders}
0 &=& H^{(1)}_{od} + [S^{(1)},H^{(0)}]_{od} = H^{(1)}_{od} + [S^{(1)},H^{(0)}] \nonumber \\
0 &=& H^{(2)}_{od} + [S^{(2)},H^{(0)}]_{od}+ [S^{(1)},H^{(1)}]_{od} +\frac{1}{2} [S^{(1)},[S^{(1)},H^{(0)}]]_{od} \nonumber \\
&=& H^{(2)}_{od} + [S^{(2)},H^{(0)}] + [S^{(1)},H^{(1)}_d] +\frac{1}{2} [S^{(1)},[S^{(1)},H^{(0)}_{od}]]  \,,
\end{eqnarray}
where we used $H^{(0)} = H^{(0)}_d$. The ansatz $S^{(k)} = S^{(k)}_{od}$ and the higher order conditions can be obtained in a straightforward way.
From such conditions we can determine succinctly all orders $S^{(k)}$.
To illustrate this we define the map
\begin{equation}\label{adjointinvers}
ad_A^{\,-1}(B) \equiv \iint_{\lambda\neq \lambda'}\,(\lambda-\lambda')^{-1}\,dP_A(\lambda)\,B\,dP_A(\lambda')\,,
\end{equation}
where $dP_A(\cdot)$ is the projection valued measure of the self-adjoint operator $A$. It can easily be seen that (\ref{adjointinvers}) inverts the so-called adjoint, $ad_A(B) \equiv [A,B]$, whenever $P_A(\lambda) B P_A(\lambda) = 0$ for all $\lambda \in \mbox{spec}(A)$.
Since Eq.~(\ref{PTkorders}) contains per construction only off-diagonal parts, this is the case here and $S^{(k)}$ can be obtained by inverting conditions (\ref{PTkorders}) w.r.t.~the highest order $S^{(k)}$.
For the first two orders we find
\begin{eqnarray}\label{PTSk}
S^{(1)} &=& ad_{H^{(0)}}^{\,-1}(H^{(1)}_{od}) \nonumber \\
S^{(2)} &=& ad_{H^{(0)}}^{\,-1}\big(H^{(2)}_{od} + [S^{(1)},H^{(1)}_d] +\frac{1}{2} [S^{(1)},[S^{(1)},H^{(0)}_{od}]]  \big)\,.
\end{eqnarray}
The corresponding orders of the new Hamiltonian $\tilde{H}(\delta)$ follow immediately form the diagonal parts of Eq.~(\ref{PTLieSchwinger})
by plugging in the required orders of $S^{(k)}$. Notice that the ansatz $S(\delta) = S(\delta)_{od}$ was justified since we succeeded in block-diagonalizing $H(\delta)$ and $S(\delta)$ is indeed anti-self-adjoint, which follows from the fact that $ad^{-1}$ maps self-adjoint operators to anti-self-adjoint ones and all the arguments of $ad_{H^{(0)}}^{\,-1}$ in conditions of the type (\ref{PTSk}) are indeed self-adjoint.

In a second step to find the corrections of the degenerate eigenvalue $E_{\alpha}^{(0)} \equiv E_{\alpha}(\delta= 0)$ and the corresponding eigenfunctions we should consider the block Hamiltonian $P\tilde{H}(\delta)P$ and then use standard algorithms to diagonalize it (we do not explain those here).

Now, we apply this general strategy to the $1$-RDO represented as matrix $\rho_1^{(f)}(\delta)$. To simplify the notation we skip the index $1$ and the superscript $(f)$.
The spectrum of $\rho^{(0)} \equiv \rho(\delta=0)$ is just $\{0,1\}$. We denote the projection operator onto the three-dimensional eigenspace of $1$
by $P$ and the one onto the $0$-eigenspace by $\overline{P} = \mathds{1}-P$. Due to this simplified form of the spectrum of the unperturbed ``Hamiltonian'' $\rho^{(0)}$ the inverse of the adjoint (\ref{adjointinvers}) is much simpler and we find
\begin{equation}\label{adjointinversrho}
ad_{\rho^{(0)}}^{\,-1}(B_{od}) = P B_{od} \overline{P} - \overline{P} B_{od} P\,.
\end{equation}
As an additional simplification we have here $\rho_{nm} = 0$, whenever $n+m$ odd. Consequently, $\rho(\delta)$ splits into two density operators, $\rho_o(\delta)$ and $\rho_e(\delta)$, containing just solely the odd and even entries of $\rho(\delta)$, respectively,
\begin{equation}
\rho(\delta)= \rho_o(\delta)\oplus\rho_e(\delta)
\end{equation}
and we can apply the diagonalization procedure to both $\rho_{e/o}(\delta)$ separately.
Finally, both matrices $\rho_{e/o}(\delta)$ restricted to the first $k$ orders in $\delta$ are just finite. E.g.~by considering the first ten orders in $\delta$ we find $\rho_o(\delta) \in \R^{5\times 5}$ and $\rho_e(\delta) \in \R^{4\times 4}$. This reflects our
great choice of the reference basis for the $1$-RDO. To block-diagonalize $\rho_{e/o}(\delta)$ for the first $k$ orders we write a Mathematica program which performs the required steps to block-diagonalize $\rho_{o/e}(\delta)$. The corresponding unitarily transformed parts  $\tilde{\rho}_{o/e}(\delta)$ (recall Eq.~(\ref{ptunitary})) of the density operator can be diagonalized just by brute force due to their small dimension. This then yields the NONs up to corrections of order $\delta^{12}$, where we presented them up to order $\delta^{10}$ in (\ref{spectrum}).

\subsection{Generalized Pauli constraints for larger settings}\label{app:GPClargsettings}
In this section we present the generalized Pauli constraints for the setting $\wedge^3[\mathcal{H}_1^{(8)}]$.
The eight NONs $(\lambda_1,\ldots,\lambda_8)$ are normalized to the particle number $N=3$ and arranged decreasingly,
\begin{eqnarray}
&&\lambda_1+\lambda_2+\ldots +\lambda_8 = 3 \nonumber \\
&& \lambda_1\geq \lambda_2\geq \ldots \geq \lambda_8 \geq 0\,.
\end{eqnarray}
The generalized Pauli constraints found by Klyachko \cite{Kly3} read
\begin{eqnarray}\label{set38}
\,\,\,\,&&D^{(3,8)}_1 := 2-(\lambda_2+\lambda_3+\lambda_4 +\lambda_5) \geq 0 \label{D8.1} \\
&&D^{(3,8)}_2 := 2-(\lambda_1+\lambda_2+\lambda_4 +\lambda_7) \geq 0\label{D8.2} \\
&&D^{(3,8)}_3 := 2-(\lambda_1+\lambda_3+\lambda_4 +\lambda_6) \geq 0 \label{D8.3} \\
&&D^{(3,8)}_4 := 2-(\lambda_1+\lambda_2+\lambda_5 +\lambda_6) \geq 0 \label{D8.4} \\
&&D^{(3,8)}_5 := 1-(\lambda_1+\lambda_2-\lambda_3 ) \geq 0 \label{D8.5} \\
&&D^{(3,8)}_6 := 1-(\lambda_2+\lambda_5-\lambda_7 ) \geq 0 \label{D8.6} \\
&&D^{(3,8)}_7 := 1-(\lambda_1+\lambda_6-\lambda_7 ) \geq 0 \label{D8.7} \\
&&D^{(3,8)}_8 := 1-(\lambda_2+\lambda_4-\lambda_6 ) \geq 0 \label{D8.8} \\
&&D^{(3,8)}_9 := 1-(\lambda_1+\lambda_4-\lambda_5 ) \geq 0 \label{D8.9} \\
&&D^{(3,8)}_{10} := 1-(\lambda_3+\lambda_4-\lambda_7 ) \geq 0 \label{D8.10} \\
&&D^{(3,8)}_{11} := 1-(\lambda_1+\lambda_8) \geq 0 \label{D8.11} \\
&&D^{(3,8)}_{12} := -\lambda_2+\lambda_3+\lambda_6 +\lambda_7 \geq 0 \label{D8.12} \\
&&D^{(3,8)}_{13} := -\lambda_4+\lambda_5+\lambda_6 +\lambda_7 \geq 0\label{D8.13} \\
&&D^{(3,8)}_{14} := -\lambda_1+\lambda_3+\lambda_5 +\lambda_7 \geq 0 \label{D8.14} \\
&&D^{(3,8)}_{15} := 2-(\lambda_2+\lambda_3+2\lambda_4 -\lambda_5-\lambda_7 +\lambda_8) \geq 0 \label{D8.15} \\
&&D^{(3,8)}_{16} := 2-(\lambda_1+\lambda_3+2\lambda_4 -\lambda_5-\lambda_6 +\lambda_8) \geq 0\label{D8.16} \\
&&D^{(3,8)}_{17} := 2-(\lambda_1+2\lambda_2-\lambda_3 +\lambda_4-\lambda_5 +\lambda_8) \geq 0 \label{D8.17} \\
&&D^{(3,8)}_{18} := 2-(\lambda_1+2\lambda_2 - \lambda_3 +\lambda_5-\lambda_6 +\lambda_8) \geq 0 \label{D8.18} \\
&&D^{(3,8)}_{19} := -\lambda_1 - \lambda_2 + 2 \lambda_3 + \lambda_4 + \lambda_5  \geq 0 \label{D8.19} \\
&&D^{(3,8)}_{20} :=  -\lambda_1 +\lambda_2 + \lambda_3 -\lambda_6 +2\lambda_7 \geq 0 \label{D8.20}
\end{eqnarray}
\begin{eqnarray}
&&D^{(3,8)}_{21} := -\lambda_1+ \lambda_3+\lambda_4 +\lambda_5-\lambda_8 \geq 0 \label{D8.21} \\
&&D^{(3,8)}_{22} := -\lambda_1+ \lambda_2 + \lambda_3 +\lambda_7-\lambda_8 \geq 0 \label{D8.22} \\
&&D^{(3,8)}_{23} := 1-(2\lambda_1-\lambda_2+\lambda_4 -2\lambda_5-\lambda_6 +\lambda_8) \geq 0 \label{D8.23} \\
&&D^{(3,8)}_{24} := 1-(\lambda_3+2\lambda_4-2\lambda_5 -\lambda_6-\lambda_7 +\lambda_8) \geq 0\label{D8.24} \\
&&D^{(3,8)}_{25} := 1-(2\lambda_1-\lambda_2-\lambda_4 +\lambda_6-2\lambda_7 +\lambda_8) \geq 0 \label{D8.25} \\
&&D^{(3,8)}_{26} := 1-(2 \lambda_1+ \lambda_2 - 2 \lambda_3 - \lambda_4-\lambda_6 +\lambda_8) \geq 0 \label{D8.26} \\
&&D^{(3,8)}_{27} := 1-(\lambda_1 + 2\lambda_2 - 2 \lambda_3 - \lambda_5-\lambda_6 +\lambda_8) \geq 0 \label{D8.27} \\
&&D^{(3,8)}_{28} := -2\lambda_1 +2\lambda_2+\lambda_3 +\lambda_4-\lambda_6+3\lambda_7 -\lambda_8 \geq 0 \label{D8.28} \\
&&D^{(3,8)}_{29} :=  \lambda_1 -\lambda_3-2\lambda_4 +3\lambda_5+2\lambda_6+\lambda_7 -\lambda_8\geq 0\label{D8.29} \\
&&D^{(3,8)}_{30} :=  -2\lambda_1 -\lambda_2+3\lambda_3 +2\lambda_4+\lambda_5+\lambda_6 -\lambda_8 \geq 0 \label{D8.30} \\
&&D^{(3,8)}_{31} :=  -\lambda_1 -2\lambda_2+3\lambda_3 +\lambda_4+2\lambda_5+\lambda_6 -\lambda_8 \geq 0 \label{D8.31}
\end{eqnarray}

\subsection{Eigenvalue equation for the fermionic matrix $(\langle \varphi_m|\rho_1^{(f)}|\varphi_n\rangle)$}\label{app:NOsfermionic}
With $\hat{x}$ the position operator and recalling the representation $\rho_1^{(f)}(x,y) = \langle x| e^{-\beta_N H_{eff}}|y\rangle$ we get from Eq.~(\ref{1RDOf})
\begin{equation}\label{1RDOfposition}
\rho_1^{(f)} = \sum_{\nu=0}^{N-1} \sum_{\mu=0}^{2\nu} \,c_{\nu,\mu}\,\hat{x}^{2\nu-\mu}\, e^{-\beta_N H_{eff}} \,\hat{x}^{\mu},
\end{equation}
which is hermitian due to $c_{\nu,\mu} = c_{\nu,2\nu-\mu}$. Since $H_{eff}$ describes a harmonic oscillator with characteristic length scale $L_N$ (see Sec.~\ref{sec:1RDObosons}), $\hat{x}$ and $H_{eff}$ can elegantly be expressed by the corresponding creation and annihilation operators
\begin{eqnarray}
\hat{x}&=& \sqrt{\frac{L_N}{2}}\,(a+a^\dagger)\nonumber \\
H_{eff} &=& \hbar \Omega_N (a^\dagger a + \frac{1}{2})\,.
\end{eqnarray}
Then, $\rho_1^{(f)}$ takes the form
\begin{eqnarray}\label{1RDOfladderop}
\rho_1^{(f)} &=& \sum_{\nu=0}^{N-1}\left(\frac{L_N}{2}\right)^{\nu} \sum_{\mu=0}^{2\nu} \,c_{\nu,\mu}\,(a+a^\dagger)^{2\nu-\mu}\nonumber \\
&&\cdot e^{-\beta_N \hbar \Omega_N (a^\dagger a +\frac{1}{2})} \,(a+a^\dagger)^{\mu}\,.
\end{eqnarray}
To determine the NOs $|\chi^{(f)}\rangle$ of $\rho_1^{(f)}$ we expand them w.r.t.~the bosonic NOs, the Hermite states $|m\rangle$ with natural length scale $L_N$ ($\varphi_m^{(L_N)}(x)\equiv \langle x |m\rangle$):
\begin{equation}
|\chi^{(f)}\rangle = \sum_{m=0}^{\infty}\,\zeta_m\,|m\rangle\,.
\end{equation}
Since $a^\dagger a |m\rangle = m |m\rangle$, we find for $\mu$ fixed and $m$ sufficiently large
\begin{equation}
(a+a^\dagger)^{\mu}|m\rangle = m^{\frac{\mu}{2}}\,\left(1+O\left(\frac{1}{m}\right)\right) \sum_{\kappa=0}^{\mu}\binom{\mu}{\kappa}|m-\mu-\kappa\rangle\,.
\end{equation}
Using this asymptotic result we get for $N$ fixed and $m\rightarrow \infty$
\begin{eqnarray}\label{coefrelNOf}
\rho_1^{(f)}|m\rangle &\rightarrow& m^{N-1} e^{-\beta_N\hbar \Omega_N (m+\frac{1}{2})}\sum_{\nu=0}^{N-1}\left(\frac{L_N}{2}\right)^{\nu}\sum_{\mu=0}^{2 \nu} c_{\nu,\mu} \sum_{\kappa=0}^{\mu} \binom{\mu}{\kappa} \nonumber \\
&&e^{\beta_N \hbar \Omega_N (\mu-2\kappa)}\sum_{\tau=0}^{2\nu-\mu} \binom{2\nu-\mu}{\tau} |m-2\underbrace{(\nu-\kappa-\tau)}_{:=r}\rangle \nonumber \\
&=& m^{N-1} e^{-\beta_N\hbar \Omega_N (m+\frac{1}{2})} \sum_{r=-(N-1)}^{N-1} h_{m,m-2r} |m-2r\rangle\,,
\end{eqnarray}
where the real coefficients $h_{m,m-2r}$ depend on $L_N$ and $\beta_N \hbar \Omega_N$, but not \emph{explicitly} on $m$.

\subsection{Cubic eigenvalue problem}\label{app:cubicEigenProb}
In this section we solve the cubic eigenvalue problems (\ref{eigenvalueQ}) and (\ref{rho1down}) from Sec.\ref{sec:hubbardAnalyt}.
The corresponding characteristic polynomial $P_u(E)$ of problem (\ref{eigenvalueQ}) reads
\begin{equation}
P_u(E)= 6 u+ E \left(u^2-9\right)-2 E^2 u+E^3\,.
\end{equation}
This means nothing else but solving a cubic equation. For the canonic form $x^3+ax^2+bx+c=0$ we first consider the quantities $Q$ and $R$, defined as
\begin{equation}\label{cubicParQ}
Q \equiv \frac{a^2-3b}{9}\qquad,\,\,  R\equiv\frac{2 a^3-9ab+27c}{54}\,.
\end{equation}
Here, we have
\begin{equation}
a=-2u\,\,\,,\qquad\,b=u^2-9\,\,\,,\qquad c= 6 u\,,
\end{equation}
which leads to
\begin{equation}
Q = \frac{u^2}{9}+3\qquad,\,\,  R = \frac{u^3}{27}\,.
\end{equation}
By defining
\begin{equation}\label{cubicParTh}
\Theta \equiv \arccos{\left(R/\sqrt{Q^3}\right)}
\end{equation}
the three real roots (energy eigenvalues) of $P_u(E) = 0$ are
\begin{eqnarray}
E_1&=& -2 \sqrt{Q}\,\cos{\left(\frac{\Theta}{3}\right)} -\frac{a}{3}  \nonumber \\
&=& \frac{2}{3} \left[u-\sqrt{u^2+27} \cos \left(\frac{1}{3} \arccos\left(u^3/\sqrt{\left(u^2+27\right)^3}\right)\right)\right] \\
E_2&=& -2 \sqrt{Q}\,\cos{\left(\frac{\Theta+2\pi}{3}\right)} -\frac{a}{3} \nonumber \\
   &=& \frac{2}{3} \left[u-\sqrt{u^2+27} \cos \left(\frac{1}{3} \arccos\left(u^3/\sqrt{\left(u^2+27\right)^3}\right)-\frac{2\pi}{3}\right)\right] \nonumber \\
E_3&=& -2 \sqrt{Q}\,\cos{\left(\frac{\Theta-2\pi}{3}\right)} -\frac{a}{3} \nonumber \\
   &=& \frac{2}{3} \left[u-\sqrt{u^2+27} \cos \left(\frac{1}{3} \arccos\left(u^3/\sqrt{\left(u^2+27\right)^3}\right)+\frac{2\pi}{3}\right)\right]\nonumber\,.
\end{eqnarray}
Here, we have
\begin{equation}\label{energysymetries}
E_1(-u)= -E_2(u)\qquad \mbox{and}\qquad E_3(-u)= -E_3(u)\,.
\end{equation}
In the regime of weak interaction, i.e.~small Hubbard-$u$ we find the behavior
\begin{eqnarray}
E_1&=& -3+\frac{2 u}{3}-\frac{u^2}{18}+O\left(u^3\right)\\
E_2&=& 3+\frac{2 u}{3}+\frac{u^2}{18}+O\left(u^3\right)\\
E_3&=& \frac{2 u}{3}+\frac{2 u^3}{243}+O\left(u^4\right)
\end{eqnarray}
and the leading behavior in the regime of strong interaction
\begin{eqnarray}
E_1&\sim& -\frac{6}{u}+\frac{18}{u^3}\qquad \,\,\,\,\,\,,u\rightarrow \infty\\
E_2&\sim& u + \sqrt {3} + \frac {3} {u} \qquad ,u\rightarrow \infty\\
E_3&\sim&  u - \sqrt {3} + \frac {3} {u}  \qquad ,u\rightarrow \infty \,,
\end{eqnarray}

%
The corresponding unnormalized eigenvectors $\vec{v}_j\equiv (\tilde{\alpha}_j,\tilde{\beta}_j,\tilde{\gamma}_j), j=1,2,3$ follow directly from (\ref{eigenvalueQ}). For their three unnormalized coefficients we find
\begin{eqnarray}
\tilde{\alpha}_j &=& u+3-E_j \nonumber \\
\tilde{\beta}_j &=& u -3-E_j \nonumber \\
\tilde{\gamma}_j &=&u-4 E_j+\frac{3}{u} \left(E_j^2-9\right)\,.
\end{eqnarray}

For the ground state (index $1$, which we will skip in the following) we study the asymptotic behavior of the normalized coefficients $\alpha, \beta$ and $\gamma$ in the regimes $u\approx 0$ and $u\rightarrow \pm\infty$. We find
\\
\\
$
\begin{array}{lllll}
\qquad&\alpha_1(u)&\sim& \frac{1}{\sqrt{3}}+\sqrt{3}\,\frac{1}{u} +O\left(\left(\frac{1}{u}\right)^2\right) &\qquad,\,u\rightarrow +\infty \nonumber \\
\qquad&\beta_1(u)&\sim& \frac{1}{\sqrt{3}}-\sqrt{3}\,\frac{1}{u} +O\left(\left(\frac{1}{u}\right)^2\right)&\qquad,\,  u\rightarrow +\infty \nonumber \\
\qquad&\gamma_1(u)&\sim& \frac{1}{\sqrt{3}}-3 \sqrt{3}\,\frac{1}{u^2} +O\left(\left(\frac{1}{u}\right)^3\right)&\qquad,\, u\rightarrow +\infty \nonumber \\
\qquad&\alpha_1(u)&\sim&\frac{1}{6} \left(3+\sqrt{3}\right)-\frac{1}{4} \left(3+\sqrt{3}\right)\,\frac{1}{u} +O\left(\left(\frac{1}{u}\right)^2\right)&\qquad,\, u\rightarrow -\infty \nonumber \\
\qquad&\beta_1(u)&\sim& \frac{1}{6} \left(\sqrt{3}-3\right)+\frac{1}{4} \left(\sqrt{3}-3\right)\,\frac{1}{u}+O\left(\left(\frac{1}{u}\right)^2\right)&\qquad,\, u\rightarrow -\infty \nonumber \\
\qquad&\gamma_1(u)&\sim& -\frac{1}{\sqrt{3}}-\frac{3}{2}\,\frac{1}{u} +O\left(\left(\frac{1}{u}\right)^2\right)&\qquad,\, u\rightarrow -\infty \nonumber \\
\qquad&\alpha_1(u)&\sim& 1-\frac{5 u^2}{648} +O\left(u^3\right)& \qquad,\, u\rightarrow 0 \nonumber \\
\qquad&\beta_1(u)&\sim& \frac{u}{18}+\frac{u^2}{162}+O\left(u^3\right)&\qquad,\, u\rightarrow 0 \nonumber \\
\qquad&\gamma_1(u)&\sim& \frac{u}{9} +\frac{u^2}{162}+O\left(u^3\right)&\qquad,\, u\rightarrow 0\qquad.
\end{array}
$
\\
\\
Determining the corresponding NONs for the ground state $|\Psi_1\rangle$ is trivial.

If we instead omit the symmetry breaking between the wavenumbers $K=\pm 1$, determine the NONs is not trivial anymore.
As explained in Sec.~\ref{sec:hubbardAnalyt} we need to diagonalize the $3\times3$ matrix $\rho_1^{\downarrow}$ (\ref{rho1down}) which we will do here.

The characteristic polynomial $p_u(\lambda)$ of $\rho_1^{\downarrow}$ reads
\begin{eqnarray}
p_u(\lambda) &=& \lambda^3 + c_2\,\lambda^2+c_1\,\lambda+c_0 \,=\,0\\
c_2 &=& - 1 \nonumber \\
c_1 &=& |\alpha(u)|^2\,|\beta(u)|^2 +  |\alpha(u)|^2\,|\gamma(u)|^2 + |\beta(u)|^2\,|\gamma(u)|^2\nonumber \\
&&-(2-3 |\gamma(u)|^2)\, |\gamma(u)|^2 \,|\zeta|^2\,|\xi|^2 \nonumber \\
c_0 &=& - |\alpha(u)|^2\,|\beta(u)|^2\,|\gamma(u)|^2\,\left(1-3|\zeta|^2\,|\xi|^2 - \zeta^3 \,(\xi^*)^3-  (\zeta^*)^3 \,\xi^3\right)\,.\nonumber
\end{eqnarray}
Here, one can again pursue the elementary methods to solve cubic equations as done to solve Eq.~(\ref{eigenvalueQ}).
This then yields the eigenvalues $n_1, n_2$ and $n_3$ of block $\rho_1^{\downarrow}$ (recall (\ref{cubicParQ}) and (\ref{cubicParTh}))
\begin{eqnarray}
n_1&=& -2 \sqrt{Q}\,\cos{\left(\frac{\Theta}{3}\right)} -\frac{c_2}{3}  \nonumber \\
n_2&=& -2 \sqrt{Q}\,\cos{\left(\frac{\Theta+2\pi}{3}\right)} -\frac{c_2}{3} \nonumber \\
n_3&=& -2 \sqrt{Q}\,\cos{\left(\frac{\Theta-2\pi}{3}\right)} -\frac{c_2}{3} \,.
\end{eqnarray}
where $Q$ and the `angel' $\Theta$ depend via $c_2,c_1$ and $c_0$ on $\zeta, \xi$ and $u$. Independent of the concrete form of $c_1$ and $c_0$, just by using $c_2=-1$, we find that
\begin{equation}
n_1+n_2+n_3 = - c_2 = 1 \qquad,\,\forall \Theta, Q\,,
\end{equation}
as it should be.

\bibliography{bibliographyPhDthesis}

\newcommand{\etalchar}[1]{$^{#1}$}
\begin{thebibliography}{WWYL12}

\bibitem[AK08]{Kly3}
M.~Altunbulak and A.~Klyachko.
\newblock The {P}auli principle revisited.
\newblock {\em Commu. Math. Phys.}, 282:287--322, 2008.

\bibitem[Alt]{Altun}
M.~Altunbulak.
\newblock Ph{D} thesis on: \,the {P}auli principle, representation theory, and
  geometry of flag varieties.
\newblock \url{http://www.thesis.bilkent.edu.tr/0003572.pdf}.

\bibitem[BB]{BlBr}
F.~Block and N.~Bray.
\newblock Two languages of {S}chubert calculus: {G}rassmannians and flag
  manifolds.
\newblock \url{http://www.warwick.ac.uk/staff/F.S.Block/final_Schubert.pdf}.

\bibitem[BD72]{Borl1972}
R.E. Borland and K.~Dennis.
\newblock The conditions on the one-matrix for three-body fermion wavefunctions
  with one-rank equal to six.
\newblock {\em J. Phys. B}, 5(1):7, 1972.

\bibitem[Bla]{Blas}
J.~Blasiak.
\newblock Cohomology of the complex {G}rassmannian.
\newblock \url{http://www-personal.umich.edu/~jblasiak/grassmannian.pdf}.

\bibitem[Blo05]{Bloch05}
I.~Bloch.
\newblock {Ultracold quantum gases in optical lattices}.
\newblock {\em Nature Physics}, 1(1):23--30, 2005.

\bibitem[Bra04]{Brav}
S.~Bravyi.
\newblock Requirements for compatibility between local and multipartite quantum
  states.
\newblock {\em Quant. Inf. Comp.}, 4:012, 2004.

\bibitem[Bri05]{springerlink:10.1007/3-7643-7342-3_2}
M.~Brion.
\newblock Lectures on the geometry of flag varieties.
\newblock In Piotr Pragacz, editor, {\em Topics in Cohomological Studies of
  Algebraic Varieties}, Trends in Mathematics, pages 33--85. Birkh�user
  Basel, 2005.

\bibitem[CGL{\etalchar{+}}]{CS2013QChem}
M.~Christandl, D.~Gross, A.~Lopes, C.~Schilling, and J.~Whitfield.
\newblock Pinning of atomic ground states.
\newblock In preparation.

\bibitem[CM06]{MC}
M.~Christandl and G.~Mitchison.
\newblock The spectra of quantum states and the {K}ronecker coefficients of the
  symmetric group.
\newblock {\em Commun. Math. Phys.}, 261:789--797, 2006.

\bibitem[Col63]{Col2}
A.~J. Coleman.
\newblock Structure of fermion density matrices.
\newblock {\em Rev. Mod. Phys.}, 35:668--686, 1963.

\bibitem[CY00]{Col}
A.~J. Coleman and V.~I. Yukalov.
\newblock {\em Reduced Density Matrices: Coulson’s Challenge}.
\newblock Springer, New York, 2000.

\bibitem[Dav76]{Dav}
E.R. Davidson.
\newblock {\em Reduced Density Matrices in Quantum Chemistry}.
\newblock Academic Press, New York, 1976.

\bibitem[DH05]{Daftuar200580}
S.~Daftuar and P.~Hayden.
\newblock Quantum state transformations and the {S}chubert calculus.
\newblock {\em Annals Phys.}, 315(1):80 -- 122, 2005.

\bibitem[Dir26]{Dirac1926}
P.~A.~M. Dirac.
\newblock On the theory of quantum mechanics.
\newblock {\em Proc. R. Soc. A}, 112(762):661--677, 1926.

\bibitem[Ebl13]{Ebler}
D.~Ebler.
\newblock {\em Pinning analysis for $4$-harmonium}.
\newblock Semester thesis, ETH Zurich, 2013.

\bibitem[Fey92]{feyn}
R.P. Feynman.
\newblock {\em Statistical Mechanics}.
\newblock Addison-Wesley, Boston, 1992.

\bibitem[Fra13]{Fradkin}
E.~Fradkin.
\newblock {\em {Field theories of condensed matter physics}}.
\newblock Cambridge Univ. Press, Cambridge, 2013.

\bibitem[Fr{\"o}]{JFrQM2}
J.~Fr{\"o}hlich.
\newblock Lecture notes on quantum mechanics, part 2.
\newblock
  \url{http://www.itp.phys.ethz.ch/research/mathphys/froehlich/lecture_notes/Q%
MII-FS2010.pdf}.

\bibitem[Ful96]{Fult}
W.~Fulton.
\newblock {\em Young Tableaux: With Applications to Representation Theory and
  Geometry}.
\newblock Cambridge University Press, 1996.

\bibitem[Gro]{DGpriv}
D.~Gross.
\newblock private communication.

\bibitem[Hat01]{Hatch}
A.~Hatcher.
\newblock {\em Algebraic Topology}.
\newblock Cambridge University Press, 1 edition, 2001.

\bibitem[Hei26]{Heis1926}
W.~Heisenberg.
\newblock Mehrk\"orperproblem und {R}esonanz in der {Q}uantenmechanik.
\newblock {\em Z. Phys.}, 38:411--426, 1926.

\bibitem[HK64]{Hohenberg}
P.~Hohenberg and W.~Kohn.
\newblock Inhomogeneous electron gas.
\newblock {\em Phys. Rev.}, 136:B864--B871, 1964.

\bibitem[JSG{\etalchar{+}}08]{Esslinger2008}
R.~Jordens, N.~Strohmaier, K.~Gunter, H.~Moritz, and T.~Esslinger.
\newblock A mott insulator of fermionic atoms in an optical lattice.
\newblock {\em Nature}, 455(7210):204--207, 2008.

\bibitem[Kly04]{Kly4}
A.~Klyachko.
\newblock Quantum marginal problem and representations of the symmetric group,
  2004.

\bibitem[Kly06]{Kly2}
A.~Klyachko.
\newblock Quantum marginal problem and $n$-representability.
\newblock {\em Journal of Physics: Conference Series}, 36:72--86, 2006.

\bibitem[Kly09]{Kly1}
A.~Klyachko.
\newblock The {P}auli exclusion priciple and beyond.
\newblock {\em Preprint}, quant-ph/0904.2009, 2009.

\bibitem[Knu]{Knut2}
A.~Knutson.
\newblock Schubert calculus and quantum information.
\newblock
  \url{http://www.lelds.utoronto.ca/audio/09-10/quantmarginals/knutson1/}.

\bibitem[Knu00]{Knut1}
A.~Knutson.
\newblock The symplectic and algebraic geometry of {H}orn's problem.
\newblock 2000.

\bibitem[Lau83]{Laughlin1983}
R.~B. Laughlin.
\newblock Anomalous quantum {H}all effect: An incompressible quantum fluid with
  fractionally charged excitations.
\newblock {\em Phys. Rev. Lett.}, 50:1395--1398, 1983.

\bibitem[LCV07]{Liu}
Y.-K. Liu, M.~Christandl, and F.~Verstraete.
\newblock Quantum computational complexity of the $n$-representability problem:
  Qma complete.
\newblock {\em Phys. Rev. Lett.}, 98:110503, 2007.

\bibitem[NA11]{Nagy2011a}
I.~Nagy and I.~Aldazabal.
\newblock Metric measures of interparticle interaction in an exactly solvable
  two-electron model atom.
\newblock {\em Phys. Rev. A}, 84:032516, 2011.

\bibitem[Nag13]{Nagy2013}
I.~Nagy.
\newblock Sudden quench in a model hamiltonian with interactions: The
  time-dependent components of wigner's correlation.
\newblock {\em Phys. Rev. A}, 87:052512, 2013.

\bibitem[NAR12]{Nagy2012}
I.~Nagy, I.~Aldazabal, and A.~Rubio.
\newblock Exact time evolution of the pair distribution function for an
  entangled two-electron initial state.
\newblock {\em Phys. Rev. A}, 86:022512, 2012.

\bibitem[NNE{\etalchar{+}}01]{Nakata2001}
M.~Nakata, H.~Nakatsuji, M.~Ehara, M.~Fukuda, K.~Nakata, and K.~Fujisawa.
\newblock Variational calculations of fermion second-order reduced density
  matrices by semidefinite programming algorithm.
\newblock {\em J. Chem. Phys.}, 114(19):8282, 2001.

\bibitem[NP11]{Nagy2011b}
I.~Nagy and J.~Pipek.
\newblock Hartree-fock-like partitioning of the two-matrix for an exactly
  solvable two-electron model atom.
\newblock {\em Phys. Rev. A}, 83:034502, 2011.

\bibitem[Pau25]{Pauli1925}
W.~Pauli.
\newblock {\"U}ber den {Z}usammenhang des {A}bschlusses der {E}lektronengruppen
  im {A}tom mit der {K}omplexstruktur der {S}pektren.
\newblock {\em Z. Phys.}, 31:765--783, 1925.

\bibitem[PC99]{Peschel1999}
I.~Peschel and M.-C. Chung.
\newblock Density matrices for a chain of oscillators.
\newblock {\em J. Phys. A}, 32(48):8419, 1999.

\bibitem[Pra05]{Prag}
P.~Pragacz, editor.
\newblock {\em Topics in cohomological studies of algebraic varieties}.
\newblock Trends in Mathematics. Birkh\"{a}user Verlag, Basel, 2005.

\bibitem[Rob77]{Rob}
P.~D. Robinson.
\newblock Coupled oscillator natural orbitals.
\newblock {\em J. Chem. Phys.}, 66(7):3307--3308, 1977.

\bibitem[Sat99]{Sato}
H.~Sato.
\newblock {\em Algebraic Topology: An Intuitive Approach}.
\newblock Amer. Mathematical Society, 1999.

\bibitem[Sch05]{Schoel}
U.~Schollw\"ock.
\newblock The density-matrix renormalization group.
\newblock {\em Rev. Mod. Phys.}, 77:259--315, 2005.

\bibitem[Sch13]{CS2013NO}
C.~Schilling.
\newblock Natural orbitals and occupation numbers for harmonium: Fermions
  versus bosons.
\newblock {\em Phys. Rev. A}, 88:042105, 2013.

\bibitem[SGC13]{CS2013}
C.~Schilling, D.~Gross, and M.~Christandl.
\newblock Pinning of fermionic occupation numbers.
\newblock {\em Phys. Rev. Lett.}, 110:040404, 2013.

\bibitem[Wal07a]{Wall1}
A.H. Wallace.
\newblock {\em Algebraic Topology: Homology and Cohomology}.
\newblock Dover Publications, 2007.

\bibitem[Wal07b]{Wall2}
A.H. Wallace.
\newblock {\em An Introduction to Algebraic Topology}.
\newblock Dover Publications, 2007.

\bibitem[Wey31]{Weyl}
H.~Weyl.
\newblock {\em {The Theory of Groups and Quantum Mechanics}}.
\newblock Dover, New York, NY, 1931.

\bibitem[WWYL12]{harmOsc2012}
Z.-L. Wang, A.M. Wang, Y.~Yang, and X.C. Li.
\newblock Exact eigenfunctions of n-body system with quadratic pair potential.
\newblock {\em Comm. Theor. Phys.}, 58(5):639, 2012.

\bibitem[ZBF{\etalchar{+}}04]{PQGTZhao}
Z.~Zhao, B.J. Braams, M.~Fukuda, M.L. Overton, and J.K. Percus.
\newblock The reduced density matrix method for electronic structure
  calculations and the role of three-index representability conditions.
\newblock {\em The Journal of Chemical Physics}, 120(5):2095--2104, 2004.

\bibitem[ZWM{\etalchar{+}}13]{Jochim2013}
G.~Z\"urn, A.~N. Wenz, S.~Murmann, A.~Bergschneider, T.~Lompe, and S.~Jochim.
\newblock Pairing in few-fermion systems with attractive interactions.
\newblock {\em Phys. Rev. Lett.}, 111:175302, 2013.

\end{thebibliography}
\bibliographystyle{alpha}
\end{document}